\newcommand{\X}{\mathbf{X}}
\newcommand{\x}{\mathbf{x}}
\newcommand{\z}{\mathbf{z}}
\newcommand{\s}{\mathbf{s}}
\newcommand{\y}{\mathbf{y}}
\newcommand{\R}{\mathbb{R}}
\renewcommand{\P}{\mathbb{P}}
\newcommand{\E}{\mathbb{E}}
\renewcommand{\L}{\mathcal{L}}
\newcommand{\B}{\mathbb{B}}
\newcommand{\I}{\mathcal{I}}
\newcommand{\1}[1]{\mathbbm{1}\{#1\}}
\DeclareMathOperator*{\argmin}{arg\,min}
\newcounter{assumption}
\newenvironment{assumption}
  {\par\noindent
   \refstepcounter{assumption}%
   \textbf{A\theassumption.}~\itshape\ignorespaces}
  {\par\ignorespacesafterend}
\newcounter{assumptionB}
\newenvironment{assumptionB}
  {\par\noindent
   \refstepcounter{assumptionB}%
   \textbf{A\theassumptionB'.}~\itshape\ignorespaces}
  {\par\ignorespacesafterend}
\begin{document}

\title{Convex and Non-Convex Approaches for Statistical Inference with Class-Conditional Noisy Labels}

\author{\name Hyebin Song \email hps5320@psu.edu \\
       \addr Department of Statistics\\
       The Pennsylvania State University\\
       State College, PA 16801, USA
       \AND
       \name Ran Dai \email  ran.dai@unmc.edu \\
       \addr Department of Biostatistics\\
       University of Nebraska Medical Center\\
       Omaha, NE 68198, USA
       \AND
       \name Garvesh Raskutti \email raskutti@stat.wisc.edu \\
       \addr Department of Statistics\\
       University of Wisconsin-Madison\\
       Madison, WI 53706, USA
       \AND
       \name Rina Foygel Barber \email rina@uchicago.edu\\
       \addr Department of Statistics\\
       University of Chicago\\
       Chicago, IL 60637, USA
       }

\editor{Daniela Witten}

\maketitle

\begin{abstract}%   <- trailing '%' for backward compatibility of .sty file
We study the problem of estimation and testing in logistic regression with class-conditional noise in the observed labels, which has an important implication in the Positive-Unlabeled (PU) learning setting. With the key observation that the label noise problem belongs to a special sub-class of generalized linear models (GLM), we discuss convex and non-convex approaches that address this problem. A non-convex approach based on the maximum likelihood estimation produces an estimator with several optimal properties, but a convex approach has an obvious advantage in optimization. We demonstrate that in the low-dimensional setting, both estimators are consistent and asymptotically normal, where the asymptotic variance of the non-convex estimator is smaller than the convex counterpart. We also quantify the efficiency gap which provides insight into when the two methods are comparable. In the high-dimensional setting, we show that both estimation procedures achieve $\ell_2$-consistency at the minimax optimal $\sqrt{s\log p/n}$ rates under mild conditions. Finally, we propose an inference procedure using a de-biasing approach. We validate our theoretical findings through simulations and a real-data example.
\end{abstract}

\begin{keywords}
 generalized linear model, non-convexity, class-conditional label noise, PU-learning, regularization
\end{keywords}

\section{Introduction}
\label{sec:intro}
Label noise is a common phenomenon in a number of classification applications. For example, label noise occurs when humans are involved in labeling due to inattention or subjectivity \citep{Ipeirotis2010-te,  Smyth1995-at}. Label noise can also come from bad data entry  \citep{Sculley2008-cx} or is sometimes intentionally introduced to protect the privacy of a respondent \citep{Van_den_Hout2002-fi}. Consequently, it is important to investigate how to carry out valid statistical inference in the presence of label noise. 

An important example of the label noise problem includes the Positive-Unlabeled (PU) learning problem, where labeled samples are known to be positive, but unlabeled samples may be either positive or negative. Positive-Unlabeled learning arises in many applications, where obtaining negative responses is more expensive or intractable. One concrete example arises from deep mutational scanning (DMS) data sets in biochemistry \citep{Fowler2014-li}, where a data set consists of functional (positive) variants of a protein, together with unknown functionality (unlabeled) variants from an initial library. Numerous other applications of PU-learning arise \citep[see e.g.,][]{Liu2003-hv, Yang2014-dd, Elkan2008-aq}.

This article addresses the estimation and testing problems of a binary logistic regression model where noise is present in responses. We assume a standard logistic model between the true binary responses and the known features, and a contamination process of the true labels. In particular, we assume that labels are corrupted with asymmetric probabilities based on their values, but those probabilities are not affected by the features. The goal is to estimate, and perform inferences on, the parameter in the logistic model, which parametrizes the relationship between the features and the true labels.

\subsection{Related Work}
There is a substantial literature on the subject of learning with label noise data. Since the random classification noise model was first proposed in \citet{Angluin1988-da}, extensive studies have been conducted to develop algorithms for building a classifier that effectively separates true positive and negative samples from data with label noise, and to establish theoretical guarantees for the proposed classifiers (\citealp{Natarajan2018-su,Li2018-pf}; see also \citealp{Frenay2014-jp} for a comprehensive survey). 

Parameter estimation problems using probabilistic approaches have also been thoroughly studied by a number of authors, where the likelihood-based method with a latent variable model has been the primary approach. Both settings where the noise rates are known \citep{Magder1997-oq,Hausman1998-yd} and unknown \citep{Pepe1992-rm, Bollinger1997-il, Lyles2011-ss} have been considered. In the latter case, an additional validation set, consisting of both true and noisy labels, is assumed to be available in addition to the main data set \citep{Bollinger1997-il,Lyles2011-ss,Pepe1992-rm}, or a semi-parametric approach was used to model the function containing noise rates nonparametrically \citep{Hausman1998-yd}. 
Additionally, a general treatment of the noisy response problem as a variable with measurement errors is found in Chapter 13 of \citet{Carroll2006-si}. In particular \citet{Carroll2006-si} uses a quasi-likelihood method  which in our case is equivalent to the likelihood approach we propose later. 

On the other hand, \citet{Ward2009-nn} considered modeling of presence and absence of species, assuming that the prevalence of the positive examples in the unlabeled set was known a priori. Treating an indicator of true positive and absence of species as a latent variable, the latent variable model was fitted via the EM algorithm. The non-identification problem of the prevalence of the positives in the unlabeled set without any parametric model assumption has also been pointed out in the same paper \citep{Ward2009-nn}, and in the follow-up paper about the estimation of this prevalence parameter \citep{Hastie2013-wq}. 

In all these aforementioned works in the parametric framework, either convergence to a local optimum was established without theoretical guarantees for the obtained estimators being provided, or the maximum likelihood estimator was considered in the theoretical analysis without discussion of the feasibility of obtaining such global optimum in a non-convex problem, all in low-dimensional settings. In contrast, one of the contributions of our paper is that we demonstrate achieving the global maximum is possible with high probability.

In high dimensions, \citet{Song2018-we} studied the estimation problem in PU-learning and a case-control scheme where the prevalence of positives in an unlabeled set was assumed to be known a priori. They proposed an estimation based on the $\ell_1$ penalized likelihood and devised an algorithm for which the estimator converges to a stationary point of the objective function, where the feasible space was constrained to be an intersection of $\ell_1$ and $\ell_2$ balls. They provided a theoretical mean-squared error guarantee for any stationary point of the objective. In this work, we consider a more general non-convex noisy labels problem which includes the PU problem in \citet{Song2018-we} as a special case. Compared to the results in \citet{Song2018-we}, where the mean-squared error guarantees can only apply to the PU problem within a case-control scheme, our results can be applied to provide mean-squared error guarantees for noisy labels models in both prospective and case-control schemes, while removing the $\ell_1$ constraint in their optimization problem. We also study in this paper an estimator based on a convex objective that can serve as a good starting point for the initialization of the non-convex method. Also, compared to \citet{Song2018-we}, where only an estimation problem was studied, both estimation and testing problems have been addressed in this paper in both low and high dimensions. 

\subsection{Our Contributions}

In this paper, we study the parametric estimation and testing problem given observations where labels are observed with noise. One of the consequences of the label noise is that the maximum likelihood objective yields a non-convex minimization problem \citep{Magder1997-oq, Bootkrajang2012-tc, Song2018-we}. On the other hand, the surrogate loss based on an unbiased estimate of the original loss function leads to a convex minimization problem \citep{Chaganty2014-ep, Natarajan2018-su, Du_Plessis2015-qx}. We propose and compare these two approaches in the classical regime and the high-dimensional regime, where the number of features $p$ is fixed or grows with $n$, potentially at a faster speed. 
In this paper, we make the following contributions:
\begin{itemize}
    \item Theoretical guarantees for parameter estimates for both non-convex likelihood-based and convex surrogate approaches in the classical regime (Proposition \ref{prop:4.1} and \ref{prop:4.2}). Our guarantee is for \textit{any} local minimum, by establishing that the likelihood function has actually at most one stationary point with high probability (Proposition \ref{prop:4.3}). In contrast, prior work either proves convergence to a local minimum or proves theory for the global minimizer without any guarantee of finding this point.
    \item Quantification of the efficiency gap of the two estimators based on the conditions of the design matrix $\X$, which provides an insight into the performance of the convex versus non-convex estimators (Corollary \ref{cor:1}).
    \item Mean-squared error guarantees and valid testing procedures in high dimensions, for the two estimators based on non-convex and convex approaches (Theorem \ref{thm:5.1} and \ref{thm:5.2}). The error bounds match with the optimal $s\log p/n$ rates known as minimax optimal in the sparse regression literature \citep{Raskutti2011-ug}. The testing procedure in high dimensions is based on de-biasing a penalized estimator and to the best of our knowledge, the first such theoretical analysis of testing procedures.
    \item A simulation study and a real data analysis to empirically support our theoretical findings. Our simulation study also indicates a potential advantage of using the convex surrogate of the likelihood in a very sparse regime, in contrast with the classical regime where the likelihood-based approach is provably optimal.
\end{itemize}

Now we outline the remainder of the paper. We begin by discussing the set-up of the work in Section \ref{sec:problem_setup}. In Section \ref{sec:appr}, we discuss how our noisy logistic regression model can be represented as a generalized linear model, and introduce the convex and non-convex approaches for parameter estimation. We establish point estimation guarantees and hypothesis testing in both low dimensions (Section \ref{sec:ld}) and high dimensions (Section \ref{sec:hd}). In Section \ref{sec:emp} and \ref{sec:real_data}, we apply convex and non-convex methods to synthetic and real data and compare the performance of the two estimators. Finally, we conclude the paper with remarks in Section \ref{sec:concl}.

\section{Problem Setup}
\label{sec:problem_setup}

First we define the problem and introduce the major notation. We assume access to samples $(\x_i,\z_i)_{i=1}^n$ where $(\z_i)_{i=1}^n$ are observed labels and $\x_i\in \R^p$ is a $p$-dimensional feature vector such that $\x_i = [1,\x_{2i},\dots,\x_{pi}]$. Each observed label $\z_i$ is a corrupted version of a latent binary outcome $\y_i$, where $\y_i\sim p_{\beta_0}(y_i|\x_i)$ is a true response with p.d.f is given by a logistic model, 
\begin{align}\label{eq:py|x}
p_\beta(y|\x) = \exp (y \x^\top \beta - \log(1+\exp(\x^\top \beta))),
\end{align}
and $\z_i$ is generated by flipping the value of $\y_i$ randomly based on known noise rates $\rho_0$ and $\rho_1$, with
\begin{align*}
	\rho_0:=\mathbb{P}(\z=1|\y=0) \quad \mbox{and} \quad \rho_1:=\mathbb{P}(\z=0|\y=1),
\end{align*}
for $\rho_0+\rho_1<1$. We assume that $\z_i$ and $\x_i$ are conditionally independent given the true response $\y_i$. The goal is to estimate and perform inference on $\beta_0$. 

We note that in the case of the data sets from deep mutational scanning (DMS) experiments, which we discuss as a concrete example of the noisy labels problem in Section \ref{sec:real_data}, the conditional independence assumption between $\z$ and $\x$ given $\y$ is satisfied and the noise rates are known from the previous experiments. In other applications, however, the conditional independence or the known noise rates assumptions can be limiting. The noise rates may depend on both the true class labels $\y_i$ and the features $\x_i$ (violation of conditional independence assumption), or the noise rates may not be known a priori. 

Unfortunately, in the case where the conditional independence assumption is violated, not much can be said about the results of the estimation unless the functional dependence of the noise rates on $\x$ is known or can be estimated from external sources, e.g., using a validation set containing both $(\x_i,\y_i,\z_i)_{i=1}^{n_V}$ \citep{Neuhaus1999-vv, Lyles2011-ss}. Otherwise, we cannot identify the true mean function because different pairs $(\E[\y|\x=x], \rho_y(x):=\P(\z\neq y |\y=y,\x=x))$ can lead to the same observed mean function $\E[\z|\x=x]$, resulting in a lack of identifiability of the true mean function $\E[\y|\x=x]$ and contamination mechanism $\rho_y(x)$ (Table \ref{tab:simple_illust}). 
\begin{table}[t]
\centering
\begin{tikzpicture}
\node (a) at (0,1){
(a) \begin{tabular}{ | c | c | c|}
\hline
& $x=0$ & $x=1$ \\ \hline
$\P(\y=0|\x=x)$ & 0.5 &  0.25\\ \hline
$\P(\y=1|\x=x)$ & 0.5 &  0.75\\ \hline
\end{tabular}
        
};
\node (c) at (0,-1){
(b) \begin{tabular}{ | c | c | c|}
\hline
& $x=0$ & $x=1$ \\ \hline
$\P(\y=0|\x=x)$ & 0.5 &  0.75\\ \hline
$\P(\y=1|\x=x)$ & 0.5 &  0.25\\ \hline
\end{tabular}
};
\node (b) at (7,0)
{
(c) \begin{tabular}{ | c | c | c|}
\hline
& $x=0$ & $x=1$ \\ \hline
$\P(\z=0|\x=x)$ & 0.5 &  0.5\\ \hline
$\P(\z=1|\x=x)$ & 0.5 &  0.5\\ \hline
\end{tabular}
};
\draw[->,ultra thick](a)--(b);
\draw[->,ultra thick](c)--(b);
\end{tikzpicture}
\caption{A simple illustration that two different $(\x,\y)$ distributions (table a,b) can result in the same observed $\x$ and $\z$ distribution (table c) if the noise rates are allowed to be dependent on both $\x$ and $\y$. In this example, the observed label is contaminated only when $\x=1$ ($\rho_y(0)=0, \forall y$). The noise rates are $(\rho_0(1)=0,\rho_0(1)=\frac{1}{3})$ for (a) and $(\rho_0(1)=\frac{1}{3}, \rho_1(1) =0)$ for (b).}
\label{tab:simple_illust}
\end{table}

Similarly, in the case where the noise rates are unknown but the noise is conditionally independent of the features given the class $y$, we may consider jointly estimating $(\beta,\rho_0,\rho_1)$ where $(\rho_0,\rho_1)$ are unknown nuisance parameters. However, this parametric approach can be problematic unless there are additional sources of information available for the estimation of $(\rho_0,\rho_1)$. Although $(\rho_0,\rho_1)$ are identifiable under the logistic model assumption \eqref{eq:py|x}, $(\rho_0,\rho_1)$ are not identifiable without \eqref{eq:py|x} \citep{Hausman1998-yd,Magder1997-oq,Ward2009-nn}. The identifiability of the noise parameters depends entirely on the assumed parametric form, and slight deviations from the assumed parametric model can produce very different estimation results for $(\rho_0,\rho_1)$ \citep[e.g.,][]{Hastie2013-wq}. Hence we focus on the setting where the conditional independence assumption is satisfied and the noise rates are known.

The relationship between the conditional mean of $\y$ and the conditional mean of $\z$ can be obtained under the conditional independence assumption.  By the factorization theorem, we have
\begin{align}\label{eq:factorize}
\E[\z|\x] &= \P(\z=1|\y=1)\E[\y|\x] + \P(\z=1|\y=0)(1-\E[\y|\x])\nonumber\\
& = (1-\rho_1)\E[\y|\x] +\rho_0 (1-\E[\y|\x]) \nonumber \\
& = (1-\rho_1 -\rho_0) \E[\y|\x] +\rho_0.
\end{align}

For the remainder of the paper, we let $\P_{\beta}$ be the distribution of the data when $\y|\x \sim p_{\beta} (\cdot|\x)$ in \eqref{eq:py|x}. We will sometimes write $\P (\cdot)$ for the probability distribution evaluated at the true parameter $\beta_0$, i.e., $\P (\cdot) = \P_{\beta_0}(\cdot)$, where $\beta_0$ is the unique minimizer of the population loss function, i.e., $\beta_0:=\argmin_{\beta \in \R^p} \E[-\log p_\beta(\y|\x)]$.

\subsection*{Connection to Positive-Unlabeled Learning}
The set-up in the previous section has an important implication in Positive-Unlabeled (PU) learning. In PU learning, we learn a model with two sets of samples, where the first set consists of \textit{labeled and positive} subjects and the second set consists of \textit{unlabeled} subjects whose associated responses are unknown. 

Two schemes are considered for PU-learning: the first scheme is a single training set scheme \citep{Elkan2008-aq} whose complete observations $(\x_i,\y_i,\z_i)_{i=1}^n$ are from a single distribution and only $(\x_i,\z_i)_{i=1}^n$ are recorded. The second scheme is where observations in the positive and unlabeled set are drawn separately, with the unlabeled set drawn from the general population \citep{Ward2009-nn, Song2018-we}. A subtle but important difference between the two schemes is that a sample from the first scheme has the same distribution as the joint distribution of the population but a sample from the second scheme does not. In the second scheme, positive subjects are over-represented in the data set, since the distribution of the unlabeled sample is the same as the population distribution and the labeled set consists of only positive subjects.  Therefore, a case-control sampling model \citep{McCullagh1989-zp} is necessary in the second scheme, where different inclusion probabilities are allowed based on the value of the true responses.

We demonstrate how both PU schemes fit into the set-up of our label noise problem and also show how the error rates $\rho_1$ and $\rho_0$ are related with the number of labeled ($n_\ell$) and unlabeled samples ($n_u$), and the proportion of positives in the unlabeled set $\pi := \P(\y=1|\z=0)$. We assume a parametric logistic model between $(\x,\y)$ as in \eqref{eq:py|x}. In both schemes, flipping probabilities from $\y$ to $\z$ do not depend on $\x$. Also, $\rho_0 = \P(\z=1|\y=0) = 0$ since all labeled elements ($\z=1$) are positive ($\y=1$) by the set-up of the PU problem. On the other hand, by Bayes' theorem we have
\begin{align*}
	\rho_1 &= \frac{\P(\y=1|\z=0)\P(\z=0)}{\P(\y=1|\z=1)\P(\z=1)+\P(\y=1|\z=0)\P(\z=0)}\\
	& = \frac{\pi\P(\z=0)}{\P(\z=1)+\pi\P(\z=0)}\\
	& \approx \frac{\pi n_u}{n_\ell + \pi n_u},
	\end{align*}
where we use the definition $\pi := \P(\y=1|\z=0)$ and $\P(\z=0)/\P(\z=1) \approx n_u / n_\ell$. Thus, the knowledge of $\pi$ practically amounts to knowing error rates ($\rho_0,\rho_1$) in PU-learning.

In the case-control sampling model, only \textit{selected} subjects $(\x_i,\z_i, \s_i = 1)_{i=1}^n$ are available in the data set where  $\s_i \in \{0,1\}$ represents whether the $i$th subject is selected or not. 
It is a well-known result \citep[e.g.,][]{McCullagh1989-zp} that case-control probabilities $\P(\y=1|\x,\s=1)$ differ from $\P(\y=1|\x)$ by the intercept, whose adjustment term is given by the log ratio of the different selection probabilities. More concretely, $p_\beta(y|\x,\s=1)$ can be written as 
\begin{align*}
p_\beta(y|\x,\s=1) &= \exp \{ y (\x^\top \beta^\gamma ) - \log(1+\exp(\x^\top\beta^\gamma))\},
\end{align*}
for $\beta^\gamma \in \R^p$ such that $\beta^\gamma_1 = \beta_1+ \gamma$ and $\beta^\gamma_j = \beta_j$, $\forall j\geq 2$, and where $\gamma:= \log(\P(\s=1|\y=1)/ \P(\s=1|\y=0))$ is the log ratio of the different selection probabilities. The log ratio $\gamma$ can also be expressed as functions of $n_\ell$, $n_u$, and $\pi$. Specifically,  $\gamma = \log (1+n_\ell / \pi n_u)$ was derived in \citet{Ward2009-nn}. 

We note that in both PU schemes the conditional distribution of $\y$ follows a logistic model, with the parameter $\beta_0$ in the first scheme and $\beta_0^\gamma$ in the second scheme. Since our target of interest is the coefficients of the model and $\beta_{0j}=\beta_{0j}^\gamma, \forall j\geq 2$, from this point on we will treat both sampling models the same. Specifically, we will omit conditioning on $\s$ and dependence of $\gamma$ in $\beta_0^\gamma$, and we assume $y|\x \sim p_{\beta_0}(y|\x) = \exp \{ y (\x^\top \beta_0 ) - \log(1+\exp(\x^\top\beta_0))\}$ in both PU schemes. 

\section{Convex and Non-Convex Approaches for Inference}
\label{sec:appr}
In this section, we briefly review generalized linear models \citep{McCullagh1989-zp} and discuss how all models discussed above can be fitted into the generalized linear model (GLM) framework. 
Then we introduce two approaches to estimate the true parameter $\beta_0$, i.e., the parameter from which the data is generated. The first approach is to use a negative log-likelihood loss function, which is a non-convex function of $\beta$. In the second approach, we discuss how we can construct a convex surrogate function.

\subsection{Generalized Linear Models (GLMs)}
Generalized linear models \citep{McCullagh1989-zp} are an extension of linear models, where a response $\z \in \mathcal{Z}$ has a p.d.f of the form
\begin{align}\label{def:exp_fam}
	p_\theta(z) = c(z)\exp (z \theta - A(\theta)),
\end{align}
for $\theta\in \R$ which can depend on $\x$, and $A(\theta) = \log \int_\mathcal{Z} c(z) \exp(z\theta) dz$.
The mean and variance of $\z$ can be derived from \eqref{def:exp_fam}:
\begin{align}\label{eq:exp_fam_mean_var}
	\E_\theta(\z|\x) = \mu = A'(\theta) \quad \mbox{and}\quad \textnormal{Var}_\theta(\z|\x) = A''(\theta) = \mathcal{V}(\mu),
\end{align}
where the variance function $\mathcal{V}$ is defined as $\mathcal{V}:= A'' \circ (A')^{-1}$ so that it is a function of $\mu$. 

Another important component of the GLM is the link function $g$, which relates the linear predictor $\x^\top \beta$ to the mean of the response $\mu$ by $g(\mu) = \x^\top \beta$. 
By definition of $g$ and $\mu = A'(\theta)$, $\theta = (g \circ A')^{-1} (\x^\top \beta)$, and we can rewrite \eqref{def:exp_fam} in terms of the linear predictor $\x^\top \beta$ and the link function $g$.
Therefore, the assumed distribution and the link function are two defining components of the GLM. We define the following:
\begin{definition}[{\bf GLM}]
	We say a sample $(\x_i,\z_i)_{i=1}^n $ is from a {\bf (GLM)} with parameters $(A,g)$ if the p.d.f of $\z\in \mathcal{Z}$ has the form
	\begin{align}\label{eq:pz|x-2}
		p_\beta(z|\x) &= c(z)\exp (z h(\x^\top \beta) - A(h(\x^\top \beta))),
	\end{align}
	for some $c$ only depending on $z$ and $h := (A')^{-1} \circ g^{-1}$.
\end{definition}
We require $g$ to be strictly increasing so that responses are positively related with linear predictors. 
A GLM is called canonical if $g=(A')^{-1}$ which implies $h(\cdot) = I(\cdot)$, an identity function. Suppose a random variable $\y$ is from a canonical GLM $(A,(A')^{-1})$. Then we have
\begin{align*}
	p_\beta(y|\x) = c(y)\exp (y\x^\top \beta - A(\x^\top \beta)).
\end{align*}
For example, the logistic model \eqref{eq:py|x} is an example of a canonical GLM. 

As we will discuss shortly in more detail, the statistical models for noisy labels belong to a special class of non-canonical GLMs whose mean is linearly related to the mean $A'$ of a canonical GLM. In this type of case, the link function $g$ is determined by such linear relationship since the link function is the inverse of the mean, i.e., $\E_\beta[\z|\x] = g^{-1}(\x^\top \beta)$. More concretely, suppose we have the following linear relationship
\begin{align}\label{eq:glm-l-mean}
	\E_\beta[\z|\x] = a A'(\x^\top \beta ) +b,
\end{align}
for some $a>0$, and $b\geq 0$. Then $g$ has to satisfy the equation $g(a A'(\x^\top \beta ) +b) = \x^\top \beta$, i.e., 
\begin{align}\label{def:g_glm_l}
	g(t) = (A')^{-1} \left(\frac{t-b}{a}\right).
\end{align}
Conversely, if $g$ is taken to be as in \eqref{def:g_glm_l}, the linear relationship \eqref{eq:glm-l-mean} is satisfied.
We refer to this sub-class of GLM, where the link function $g$ follows the form in \eqref{def:g_glm_l}, as {\bf (GLM-L)} with parameters $(A,a,b)$.

\subsection{Statistical Models for Noisy Labels and GLMs}\label{sec:noiselabel}

Now we relate the statistical models for noisy labels with the GLM framework. Since we have $\z \in \{0,1\}$,
\begin{align}
	p_\beta(z|\x) &= (\E_\beta(\z|\x))^{z}(1-\E_\beta(\z|\x))^{1-z} \nonumber \\
	& = \exp \left(z \theta- \log ( 1+e^{\theta})\right)\nonumber
\end{align}
for $\theta = \log\left(\frac{\E_\beta(\z|\x)}{1-\E_\beta(\z|\x)}\right)$, and thus $p_\beta(z|\x)$ belongs to a GLM with $A(t) =\log (1+e^t )$. 
Also by \eqref{eq:py|x} and \eqref{eq:factorize},  we have the representation
\begin{align}\label{eq:ez|x}
	\E_\beta[\z|\x] &= (1-\rho_1-\rho_0) \E_\beta[\y|\x]+ \rho_0 \nonumber\\
	&= (1-\rho_1-\rho_0) \frac{e^{\x^\top \beta}}{1+e^{\x^\top \beta}}+ \rho_0.
\end{align}
From \eqref{eq:ez|x}, we obtain the link function $g_{LN}$ (label-noise) by solving $g_{LN}(\E_\beta[\z|\x]) = \x^\top \beta$ for $\x^\top \beta$:
\begin{align}\label{def:g_nl}
	g_{LN}(t) = \mbox{logit} \left( \frac{t-\rho_0}{1-\rho_1-\rho_0}\right).
\end{align}
Therefore $(\x_i,\z_i)_{i=1}^n$ belongs to {\bf (GLM-L)} with $(\log(1+\exp(\cdot)),(1-\rho_1-\rho_0),\rho_0)$. 

In the subsequent analysis the variances of a clean label $\y$ and a noisy response $\z$ will play an important role. First we define mean functions $\mu$ and $\mu_z$ as $\mu(t) := A'(t)$ and $\mu_z(t) := A'(h_{LN}(t))$, for $A(\cdot) = \log (1+ \exp(\cdot))$ and $h_{LN}:= (A')^{-1} \circ g_{LN}^{-1}$. In particular, we have $\E_\beta[\z|\x] = \mu_z(\x^\top \beta)$ and $\E_\beta[\y|\x] = \mu(\x^\top \beta)$. By the definition of $\mathcal{V}$ in \eqref{eq:exp_fam_mean_var}, we have
\begin{align*}
	\textnormal{Var}_\beta(\z|\x) &= A''(h_{LN}(\x^\top \beta)) \,\, =  \mathcal{V}(\mu_z(\x^\top \beta))\\
	\textnormal{Var}_\beta(\y|\x) &= A''(\x^\top \beta) =  \mathcal{V}(\mu(\x^\top \beta))
\end{align*}
where the last equality uses the fact that $(A')^{-1} \circ \mu = I$.

\subsection{Non-Convex Approach Using a Negative Log-likelihood Loss}
Given a sample $(\x_i,\z_i)_{i=1}^n$ from {\bf (GLM)} with $(A,g)$, a natural approach for the estimation of $\beta_0$ is to take a likelihood-based approach due to the several optimality properties of a likelihood function. A negative log-likelihood loss can be obtained directly from \eqref{eq:pz|x-2} as
\begin{align}\label{def:Ll}
\L_n^\ell(\beta) &:= \frac{1}{n}\sum_{i=1}^n A(h(\x_i^\top \beta)) -\z_i h(\x_i^\top \beta)=\frac{1}{n}\sum_{i=1}^n \ell(\x_i^\top \beta, \z_i),
\end{align}
where we define $\ell(\x^\top \beta,\z) := A(h(\x^\top \beta)) - \z h(\x^\top \beta)$. In general, the likelihood becomes a non-convex function of $\beta$ unless $g = (A')^{-1}$ i.e., $g$ is canonical and $h$ is an identity function. 

The first and second derivatives of the likelihood function are 
\begin{align*}
\triangledown \L_n^\ell(\beta)= \frac{1}{n}\sum_{i=1}^n \ell'(\x_i^\top \beta,\z_i)\x_i, \quad \triangledown^2 \L_n^\ell(\beta)= \frac{1}{n}\sum_{i=1}^n \ell''(\x_i^\top \beta,\z_i)\x_i\x_i^\top ,
\end{align*}
where we write
\begin{align}
	\ell'(t,z) &= \left(A'(h(t))-z\right) h'(t) \label{eq:l'} \\
	 \ell''(t,z) &= A''(h(t))h'(t)^2+(A'(h(t))-z) h''(t)\nonumber  \\
	 &:= \rho_I(t) + \rho_R(t,z)\label{eq:l''}
\end{align}
for $\rho_I(t) :=A''(h(t))h'(t)^2$ and $\rho_R(t,z) := (A'(h(t))-z) h''(t)$. Although $\rho_I \geq 0$, the sign of $\rho_R$ is arbitrary, and thus $\triangledown^2 \L_n^\ell(\beta)$ is not necessarily a positive semi-definite matrix.

\subsection{Construction of a Convex Surrogate Loss}
Next, we discuss an alternative approach involving a convex surrogate function when a sample is from a {\bf (GLM-L)} model with parameters $(A,a,b )$. Essentially, we construct an unbiased estimator of a convex loss function with the same minimizer, which is a well-known idea in stochastic optimization \citep{Nemirovski2009-ms} and has also been investigated in the latent variable model literature \citep{Loh2012-tl, Chaganty2014-ep,Natarajan2018-su}. More concretely, if the responses $(\y_i)_{i=1}^n$ from a canonical GLM are available, we can minimize a convex loss $\L_n^{c}(\beta)$ which we define as
\begin{align}\label{def:Lc}
	\L_n^{c}(\beta) &:=  \frac{1}{n}\sum_{i=1}^n A(\x_i^\top \beta) - \y_i(\x_i^\top \beta).
\end{align}
For example, we can take this convex approach if labels are not contaminated. 
Since $(\y_i)_{i=1}^n$ are not available, we construct a surrogate function $\L_n^{s}(\beta)$ by replacing $\z$ with a function output $T(\z)$ while keeping $h(\cdot)=I(\cdot)$:
\begin{align}\label{def:Ls}
	\L_n^{s}(\beta) &:=  \frac{1}{n}\sum_{i=1}^n A(\x_i^\top \beta) - T(\z_i)(\x_i^\top \beta).
	\end{align}
 To obtain a consistent estimator, the function $T$ needs to satisfy
  $\E_\beta[T(\z)|\x] = A'(\x^\top \beta) = \E_\beta [\y|\x] $. Such a function $T$ is available by the {\bf (GLM-L)} model class assumption. Specifically, we let $T$ be
	$T(t) := (t-b)/a$
so that $\E_\beta[T(\z)|\x]= \E_\beta[(\z-b)/a] =A'(\x^\top \beta )$ by \eqref{eq:glm-l-mean}. For a future reference we define 
\begin{align}\label{def:ls}
	\ell_s(\x^\top \beta,\z) := A(\x^\top \beta) - T(\z)(\x^\top \beta)
\end{align}
so that $\L_n^s(\beta) = n^{-1} \sum_{i=1}^n \ell_s(\x_i^\top \beta ,\z_i)$.

At any fixed parameter $\beta$, the surrogate loss \eqref{def:Ls} is an unbiased estimate of the loss \eqref{def:Lc}. We note
\begin{align*}
	\E_{\beta_0}[\L_n^{s}(\beta)] &= \E_{\beta_0} \left[ \frac{1}{n}\sum_{i=1}^n A(\x_i^\top \beta) - \E_{\beta_0}[T(\z_i)|\x_i](\x_i^\top \beta)\right]\\
	&= \E_{\beta_0} \left[ \frac{1}{n}\sum_{i=1}^n A(\x_i^\top \beta) - A'(\x_i^\top \beta_0)(\x_i^\top \beta)\right]\\
	& = \E_{\beta_0}[\L_n^{c}(\beta)],
\end{align*} 
where we use the law of iterative expectation and $\E_{\beta_0}[\y|\x] = A'(\x^\top \beta_0)$. 

%% Notation
\subsection{Notation}
Before proceeding, we pause to define some notation that will be useful in presenting our theoretical results. For $v \in \mathbb{R}^p$, we denote the $\ell_1$, $\ell_2$, and $\ell_\infty$ norms as $\|v\|_1 := \sum_{i=1}^p |v_i| $, $\|v\|_2 := \sqrt{v^\top v}$, and  $\|v\|_\infty := \sup_{1\leq j\leq p} |v_j|$. Similarly, for a function $f$, we define $\|f\|_p := (\int |f(x)|^p dx)^{1/p}$ and $\|f\|_\infty := \sup_x |f(x)|$. In the case of matrix norm, for $A \in \R^{m\times n}$, we denote a Frobenius norm as $\|A\|_F := \sqrt{\sum_{i,j}|A_{ij}|^2}$, an operator norm as $\|A\|_2 := \sigma_{\textnormal{max}}(A)$, and an element-wise max norm as $\|A\|_{\textnormal{max}} := \max_{i,j} |A_{ij}|$. We define a condition number of $A$ as $\kappa(A) := \sigma_{\textnormal{max}}(A)/\sigma_{\textnormal{min}}(A)$. 

 For a set $S$, we use $|S|$ to denote the cardinality of $S$. For $v \in \R^p$ and any subset $S \subseteq \{1,\dots,p\}$, $v_S \in \R^{|S|}$ denotes the sub-vector of the vector $v$ by selecting the components with indices in $S$. Likewise for any matrix $A \in \mathbb{R}^{m \times n}$, $A_S \in \R^{m\times|S|}$ denotes a sub-matrix having columns in $S$. For matrices $A\in \mathbb{R}^{m\times n}$ and $B\in \mathbb{R}^{m\times n}$, we say $A\succeq B$ if $A-B$ is a positive semi-definite matrix and $A\succ B$ if $A-B$ is positive definite. Also we write $\mathcal{C}(A)$ to refer to a column space of $A$. Also we use $\B_q(r;v)$ to denote a ball with radius $r$ in the $\ell_q$ norm centered at $v \in \R^p$. If $v=0$, we simply use $\B_q(r)$ to denote the ball.

For functions $f$ and $g$, we write $f(n) = O(g(n))$ if there exists a constant $C>0$ such that $f(n) \leq C g(n)$, $\forall n$, and $f(n) \asymp g(n)$ if $f(n) = O(g(n))$ and $g(n) = O(f(n))$. Also for a random variable $X_n$, we write $X_n = O_p(a_n)$ if $X_n/a_n$ is bounded in probability and $X_n = o_p(a_n)$ if $X_n/a_n$ converges to $0$ in probability. Also for simplicity, we sometimes use $\x_1^n$ to refer to the collection of random variables $(\x_i)_{i=1}^n$. We write a.s. to denote `almost surely', i.e., an event that occurs with probability $1$. Also, for a sequence of events $(\mathcal{E}_n)_{n\geq 1}$, we say $\mathcal{E}_n$ holds with high probability (w.h.p) if $\P(\mathcal{E}_n) \overset{n}{\rightarrow} 1$.

%% Point Estimation in low-d
\section{Estimation and Testing in the Classical Regime}
\label{sec:ld}
In this section, we discuss the statistical properties of two estimators from convex and non-convex approaches in the classical regime where the number of features $p$ is fixed. In particular, we demonstrate that both approaches yield consistent estimators, but the estimator based on the non-convex approach has better efficiency than the convex counterpart in the large $n$ limit. Also, we quantify the efficiency gap between the two approaches and discuss when two approaches can be comparable.

\subsection{Consistency and Relative Asymptotic Efficiency}
We define a global minimizer of $\L_n^\ell(\beta)$ and  $\L_n^{s}(\beta)$  as 
\begin{align}\label{def:low-d_estimators}
\widehat{\beta_{\ell}} \in \argmin_{\beta \in \B_2(r) } \L_n^{\ell}(\beta) \quad \mbox{and} \quad
\widehat{\beta_{s}} \in \argmin_{\beta \in \R^p } \L^{s}_{n}(\beta).
\end{align}
By definition of $\L_n^\ell$ and $\L_n^{s}$, $\widehat{\beta_{\ell}}$ is the solution of a non-convex optimization problem, whereas $\widehat{\beta_{s}}$ is based on the convex problem. In the case of the non-convex optimization problem, we limit the search space to a compact region $\B_2(r)$, where $r$ is some large number such that $r \geq \|\beta_0\|_2$. The use of a compact search space ensures that the gradient of the non-convex loss function is uniformly bounded away from zero for values of $\beta$ not near the true parameter. 

Clearly, it is not obvious whether it is feasible to obtain $\widehat{\beta_{\ell}}$ in practice, since finding a global minimizer of a non-convex function is in general a challenging problem. However, obtaining a stationary point of $\L_n^\ell(\beta)$ is in fact enough when $n$ is sufficiently large, as we will demonstrate in Proposition \ref{prop:4.3} that in the classical regime, with high probability, $\L_n^\ell(\beta)$ has a unique stationary point (i.e. the global minimizer).

In the following Proposition \ref{prop:4.1}, we show that both estimators are consistent for $\beta_0$ and also quantify their asymptotic efficiency. We first state the following minimum eigenvalue condition, which is a standard assumption in the classical regime with the fixed design \citep[e.g.,][]{Fahrmeir2001-gv, Shao2003-vx}. 
\begin{assumption}\label{a1}
There exist $C_\lambda>0$ and $C_X < \infty$ such that $\lambda_{\textnormal{min}}(n^{-1}\sum_{1\leq i\leq n} \x_i\x_i^\top ) \geq C_\lambda$ and $\sup_{1\leq i\leq n} \|\x_i\|_\infty \leq C_X, \forall n.$
\end{assumption}

\begin{proposition}\label{prop:4.1}
(Fixed design) Suppose a sample $(\x_i, \z_i)_{i=1}^n$  is from a {\bf{(GLM-L)}} with \\$(\log(1+\exp(\cdot)),(1-\rho_1-\rho_0),\rho_0)$ and $\z_i \in \{0,1\}$. Assume {\bf A\ref{a1}} and the classical regime where the number of features $p$ is fixed and the sample size $n \rightarrow \infty$.  Then,
\begin{align*}
&\sqrt{n} \I_n^\ell(\beta_0)^{1/2} (\widehat{\beta_\ell} - \beta_0) \overset{d}{\rightarrow} \mathcal{N}(0,I_p)\\
&\sqrt{n}  \I_n^{s}(\beta_0)^{1/2} (\widehat{\beta_s} - \beta_0) \overset{d}{\rightarrow} \mathcal{N}(0,I_p),
\end{align*}
for positive definite matrices $\I_n^\ell(\beta),\I_n^{s}(\beta)$ defined as
\begin{align*}
	\I_n^\ell(\beta)&:= (1-\rho_1-\rho_0)^2 \cdot \frac{1}{n} \sum_{i=1}^n \frac{\mathcal{V}(\mu(\x_i^\top \beta))^2}{\mathcal{V}(\mu_z(\x_i^\top \beta))}\x_i\x_i^\top ,\\
	\I_n^{s}(\beta)&:= (1-\rho_1-\rho_0)^2 \cdot \nonumber\\
	&\quad \left( \frac{1}{n} \sum_{i=1}^n \mathcal{V}(\mu(\x_i^\top \beta))\x_i\x_i^\top \right) \left(  \frac{1}{n} \sum_{i=1}^n  \mathcal{V} (\mu_z(\x_i^\top \beta)) \x_i\x_i^\top 
\right)^{-1}\left( \frac{1}{n} \sum_{i=1}^n  \mathcal{V}(\mu(\x_i^\top \beta))\x_i\x_i^\top \right).
\end{align*}
\end{proposition}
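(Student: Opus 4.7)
The plan is to prove each statement via the standard M-estimator route: establish consistency, perform a mean-value expansion of the first-order condition, apply a Lindeberg--Feller CLT to the score, control the averaged Hessian via a uniform LLN, and conclude with Slutsky. Both estimators satisfy $\nabla\L_n(\widehat\beta)=0$ once interior — automatic for $\widehat\beta_s$ on $\R^p$, and for $\widehat\beta_\ell$ as soon as consistency places it in the interior of $\B_2(r)$ (taking $r>\|\beta_0\|_2$ strictly).

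First I would establish consistency. The surrogate loss $\L_n^s$ is strictly convex on $\R^p$ because its Hessian $n^{-1}\sum_i A''(\x_i^\top\beta)\x_i\x_i^\top$ is positive definite by A\ref{a1} together with $A''>0$, and its population version coincides with that of the canonical likelihood $\L_n^c$ via the unbiasedness identity displayed in the excerpt, which is uniquely minimized at $\beta_0$; standard convex M-estimator arguments then yield $\widehat\beta_s\overset{p}{\to}\beta_0$. For the non-convex likelihood, a Wald-type argument on the compact set $\B_2(r)$ applies: identifiability (strict monotonicity of $g_{LN}$ combined with A\ref{a1}) and the Kullback--Leibler information inequality make $\E[\L_n^\ell(\beta)]$ uniquely minimized at $\beta_0$, and uniform convergence of $\L_n^\ell$ to its expectation on $\B_2(r)$ — available because the linear predictors $\x_i^\top\beta$ are uniformly bounded on the compact parameter set under A\ref{a1} — yields $\widehat\beta_\ell\overset{p}{\to}\beta_0$.

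Next I would expand
\begin{align*}
0=\nabla\L_n(\widehat\beta)=\nabla\L_n(\beta_0)+\bar H_n(\widehat\beta-\beta_0),
\end{align*}
where $\bar H_n$ is the Hessian averaged along the segment from $\beta_0$ to $\widehat\beta$, and verify the three requisite pieces. For the surrogate, $\ell_s'(\x^\top\beta_0,\z)\x=(A'(\x^\top\beta_0)-T(\z))\x$ has conditional mean zero since $\E[T(\z)|\x]=A'(\x^\top\beta_0)$, its conditional variance equals $(1-\rho_1-\rho_0)^{-2}\mathcal{V}(\mu_z(\x^\top\beta_0))\,\x\x^\top$, and $\nabla^2\L_n^s(\beta_0)=n^{-1}\sum_i\mathcal{V}(\mu(\x_i^\top\beta_0))\x_i\x_i^\top$, so the sandwich formula reproduces $\I_n^s(\beta_0)^{-1}$ exactly. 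For the likelihood, $\ell'(\x^\top\beta_0,\z)\x=(\mu_z(\x^\top\beta_0)-\z)h'(\x^\top\beta_0)\x$ is again conditionally mean zero, and the chain-rule identity $h'(t)=(1-\rho_1-\rho_0)\mathcal{V}(\mu(t))/\mathcal{V}(\mu_z(t))$ — obtained from $h=(A')^{-1}\circ g_{LN}^{-1}$ together with $\mu_z=(1-\rho_1-\rho_0)\mu+\rho_0$ — makes the conditional score variance match the Fisher kernel $(1-\rho_1-\rho_0)^2\mathcal{V}(\mu(\x^\top\beta_0))^2/\mathcal{V}(\mu_z(\x^\top\beta_0))\cdot\x\x^\top$. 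Lindeberg--Feller applies in both cases because $\|\x_i\|_\infty\leq C_X$ and $\z_i\in\{0,1\}$ are uniformly bounded and the information matrices are uniformly positive definite by A\ref{a1}.

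The main obstacle is showing $\bar H_n\overset{p}{\to}\I_n^\ell(\beta_0)$ in the non-convex case, since $\ell''=\rho_I+\rho_R$ carries the sign-indefinite remainder $\rho_R(t,z)=(A'(h(t))-z)h''(t)$ from \eqref{eq:l''}. However, $\E[\rho_R(\x^\top\beta_0,\z)|\x]=0$ because $\E[\z|\x]=A'(h(\x^\top\beta_0))=\mu_z(\x^\top\beta_0)$, and a uniform LLN on a shrinking neighborhood of $\beta_0$ — justified by boundedness of $h''$ on the bounded-linear-predictor range implied by consistency and A\ref{a1} — forces $\bar H_n\overset{p}{\to}\E[\nabla^2\L_n^\ell(\beta_0)]$, which by the information identity equals $\I_n^\ell(\beta_0)$. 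The sandwich then collapses to $\I_n^\ell(\beta_0)^{-1}$, the efficient variance, and Slutsky concludes both parts.
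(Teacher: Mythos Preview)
Your proposal is correct and follows essentially the same route as the paper: consistency via classical M-estimation arguments (Wald-type on the compact set for $\widehat\beta_\ell$, convex argument for $\widehat\beta_s$), followed by a Taylor expansion of the score, a CLT on $\sqrt{n}\nabla\L_n(\beta_0)$, and a LLN on the Hessian, with the sandwich formula collapsing for $\widehat\beta_\ell$ via the information identity $\E[\rho_R(\x^\top\beta_0,\z)\mid\x]=0$. The paper carries out the same computations (deriving the same expression for $h_{LN}'$ and the resulting Fisher kernel), citing Newey--McFadden, Shao, and Fahrmeir--Tutz for the general M-estimator machinery rather than spelling out Lindeberg--Feller and the uniform LLN as you do.
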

The proof essentially uses classical likelihood and generalized estimating equations theory and is provided in the Appendix \ref{supp_sec:prop:4.1}. One point that deserves special attention is the similarity between $\I_n^\ell(\beta_0)$ and $\I_n^{s}(\beta_0)$ in Proposition \ref{prop:4.1}. In particular, if $\mathcal{V}(\mu_z(\x_i^\top \beta)) \approx \mathcal{V}(\mu(\x_i^\top \beta))$ for all $i$, the two information matrices will turn out to be very similar. 

The following Corollary shows that $\I_n^\ell(\beta_0) \succeq \I_n^{s}(\beta_0)$ and quantifies the discrepancy between the two information matrices. First we define two weight matrices $W_y(\beta)$ and $W_z(\beta)$ as
\begin{align*}
	W_y(\beta):= \mbox{diag}(\{\mathcal{V}(\mu(\x_i^\top \beta))\}_{i=1}^n) \quad \mbox{and} \quad
	W_z(\beta) := \mbox{diag}(\{\mathcal{V}(\mu_z(\x_i^\top \beta))\}_{i=1}^n),
\end{align*}
whose diagonal entries consist of the conditional variances of $\y_i$ and $\z_i$ given $\x_i$, respectively. We suppress the dependence on $\beta$ if $\beta = \beta_0$ and let $W_y:= W_y(\beta_0)$ and $W_z:=W_z(\beta_0)$ for ease of notation. Also, we define the gap $\widehat{\delta}(\mathcal{M},\mathcal{N})$ between two vector subspaces $\mathcal{M}, \mathcal{N}$ as \citep[e.g.,][]{Kato2013-va}
\begin{align}\label{def:gap}
	\widehat{\delta}(\mathcal{M},\mathcal{N}):= \max \{\delta(\mathcal{M},\mathcal{N}),\delta(\mathcal{N},\mathcal{M})\}, \,\, \mbox{for } \delta(\mathcal{M},\mathcal{N}) := \sup_{u\in \mathcal{M}, \|u\|_2 = 1} \inf_{v \in \mathcal{N}} \|u-v\|_2.
\end{align}
The gap measures the distance between two subspaces, with $\widehat{\delta}(\mathcal{M},\mathcal{N}) = 0 $ if and only if $\mathcal{M} = \mathcal{N}$. Now we present the following Corollary.

\begin{corollary}\label{cor:1}
Assume the conditions as in Proposition \ref{prop:4.1}. We have $\I_n^\ell(\beta_0) \succeq \I_n^{s}(\beta_0)$ and 
\begin{align}\label{eq:cor1}
\|I_p-{\I_n^\ell(\beta_0)}^{-1/2}\I_n^{s}(\beta_0){\I_n^\ell(\beta_0)}^{-1/2}\|_2 
\leq c_n \widehat{\delta}^2 (\mathcal{C}(W_z^{-1}W_y\X),\mathcal{C}(\X) )
\end{align}
where $c_n :=\kappa(\X^\top \X/n) \kappa(W_y^2) \kappa(W_z^2) $ and $c_n = O(1)$. 
In particular,  $\widehat{\beta_s}$ achieves asymptotic efficiency if $\mathcal{C}(W_z^{-1}W_y\X) = \mathcal{C}(\X)$. 
\end{corollary}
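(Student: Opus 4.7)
The plan is to strip the common factor $(1-\rho_1-\rho_0)^2/n$ from both information matrices and work with $A := \X^\top W_y^2 W_z^{-1}\X$ and $B := \X^\top W_y \X (\X^\top W_z \X)^{-1}\X^\top W_y \X$. With the substitution $\tilde{\X} := W_z^{-1}W_y\X$ one checks that $A = \tilde{\X}^\top W_z \tilde{\X}$ and $B = \tilde{\X}^\top W_z P_{W_z}^{\X} \tilde{\X}$, where $P_{W_z}^{\X} := \X(\X^\top W_z\X)^{-1}\X^\top W_z$ is the oblique projection onto $\mathcal{C}(\X)$ in the $W_z$-weighted inner product. Setting $N := W_z^{1/2}\X$ and letting $P_N$ be the ordinary orthogonal projection onto $\mathcal{C}(N)$, the identity $W_z(I - P_{W_z}^{\X}) = W_z^{1/2}(I - P_N)W_z^{1/2}$ gives
\[
A - B \;=\; \tilde{\X}^\top W_z^{1/2}(I - P_N) W_z^{1/2} \tilde{\X},
\]
which is manifestly PSD, establishing $\I_n^\ell(\beta_0) \succeq \I_n^s(\beta_0)$.

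For the quantitative bound, note $M := W_z^{1/2}\tilde{\X}A^{-1/2}$ satisfies $M^\top M = A^{-1/2}\tilde{\X}^\top W_z\tilde{\X}A^{-1/2} = I_p$, so $M$ is an isometry with $\mathcal{C}(M) = W_z^{1/2}\mathcal{C}(\tilde{\X})$. Hence
\[
\|I_p - A^{-1/2}BA^{-1/2}\|_2 \;=\; \|M^\top(I - P_N)M\|_2 \;=\; \|(I - P_N)M\|_2^2 \;=\; \sup_{w\in\mathcal{C}(M),\,\|w\|_2=1}\|(I - P_N)w\|_2^2.
\]
Parameterizing $w = W_z^{1/2}u$ with $u \in \mathcal{C}(\tilde{\X})$ and $\|u\|_{W_z}=1$, and restricting $v' \in \mathcal{C}(N)$ to be $W_z^{1/2}v$ with $v \in \mathcal{C}(\X)$, the right-hand side equals the squared gap $\delta_{W_z}^2(\mathcal{C}(\tilde{\X}), \mathcal{C}(\X))$ in the $W_z$-weighted inner product.

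The final step converts this weighted gap into the unweighted gap $\hat{\delta}(\mathcal{C}(W_z^{-1}W_y\X), \mathcal{C}(\X))$. For any $u \in \mathcal{C}(\tilde{\X})$ with $\|u\|_{W_z} = 1$, $\|u\|_2 \leq \lambda_{\min}(W_z)^{-1/2}$, and taking $v^* \in \mathcal{C}(\X)$ to be the closest point to $u$ in the standard norm yields $\|u - v^*\|_{W_z}^2 \leq \lambda_{\max}(W_z)\hat{\delta}^2\|u\|_2^2 \leq \kappa(W_z)\hat{\delta}^2$; hence $\delta_{W_z}^2 \leq \kappa(W_z)\hat{\delta}^2 \leq c_n\hat{\delta}^2$, the last inequality because $c_n \geq \kappa(W_z^2)\geq\kappa(W_z)$ (each condition number is $\geq 1$). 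Boundedness $c_n = O(1)$ follows from Assumption A\ref{a1} (bounding $\kappa(\X^\top\X/n)$) together with the fact that $\mathcal{V}(\mu(\x_i^\top\beta_0))$ and $\mathcal{V}(\mu_z(\x_i^\top\beta_0))$ are confined to $[c, 1/4]$ uniformly, since $\|\x_i\|_\infty\leq C_X$ restricts the linear predictors to a bounded set and thus keeps the Bernoulli means away from $\{0,1\}$. When $\mathcal{C}(W_z^{-1}W_y\X) = \mathcal{C}(\X)$, the gap $\hat{\delta}$ vanishes and $\I_n^\ell(\beta_0) = \I_n^s(\beta_0)$, so $\widehat{\beta_s}$ attains the asymptotic variance of the MLE from Proposition \ref{prop:4.1} and is therefore asymptotically efficient. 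The only nontrivial step is recognizing the oblique projection structure of $A - B$; the rest is routine projection and eigenvalue bookkeeping.
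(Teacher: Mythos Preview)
Your proof is correct and takes a genuinely different route for the quantitative bound. Both your argument and the paper's establish the PSD ordering via the same projection identity $A - B = \tilde{\X}^\top W_z^{1/2}(I-P_N)W_z^{1/2}\tilde{\X}$ with $N = W_z^{1/2}\X$. Where you diverge is in controlling $\|I_p - A^{-1/2}BA^{-1/2}\|_2$: the paper first applies the loose submultiplicative step $\|A^{-1/2}(A-B)A^{-1/2}\|_2 \leq \|A^{-1}\|_2\|A-B\|_2$, bounds $\|(I-P_N)W_z^{-1/2}W_y\X\|_2^2$ by $\sigma_{\max}^2(W_z^{-1/2}W_y\X)\cdot\delta^2(\mathcal{C}(W_z^{-1/2}W_y\X),\mathcal{C}(W_z^{1/2}\X))$, and then proves a separate lemma showing $\delta(A(\mathcal{M}),A(\mathcal{N})) \leq \kappa(A)\,\delta(\mathcal{M},\mathcal{N})$ to strip the $W_z^{1/2}$ from both subspaces; collecting terms yields the full constant $c_n = \kappa(\X^\top\X/n)\kappa(W_y^2)\kappa(W_z^2)$. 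You instead observe that $M = W_z^{1/2}\tilde{\X}A^{-1/2}$ is an isometry, so $\|I_p - A^{-1/2}BA^{-1/2}\|_2 = \|(I-P_N)M\|_2^2$ \emph{exactly} equals the one-sided gap $\delta^2$ between $\mathcal{C}(\tilde{\X})$ and $\mathcal{C}(\X)$ in the $W_z$-weighted inner product, and then convert to the unweighted gap at the cost of only $\kappa(W_z)$. Your bound is therefore sharper --- you in fact prove $\|I_p - A^{-1/2}BA^{-1/2}\|_2 \leq \kappa(W_z)\,\widehat{\delta}^2$ and only then relax to $c_n\widehat{\delta}^2$ --- and avoids the auxiliary lemma entirely. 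The price is that your argument is a bit more compressed; the phrase ``restricting $v'\in\mathcal{C}(N)$ to be $W_z^{1/2}v$'' is slightly misleading since this is a bijection, not a restriction, but the mathematics is sound.
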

\begin{figure}[b]
  \begin{minipage}[c]{0.4\textwidth}
  \includegraphics[width=\textwidth]{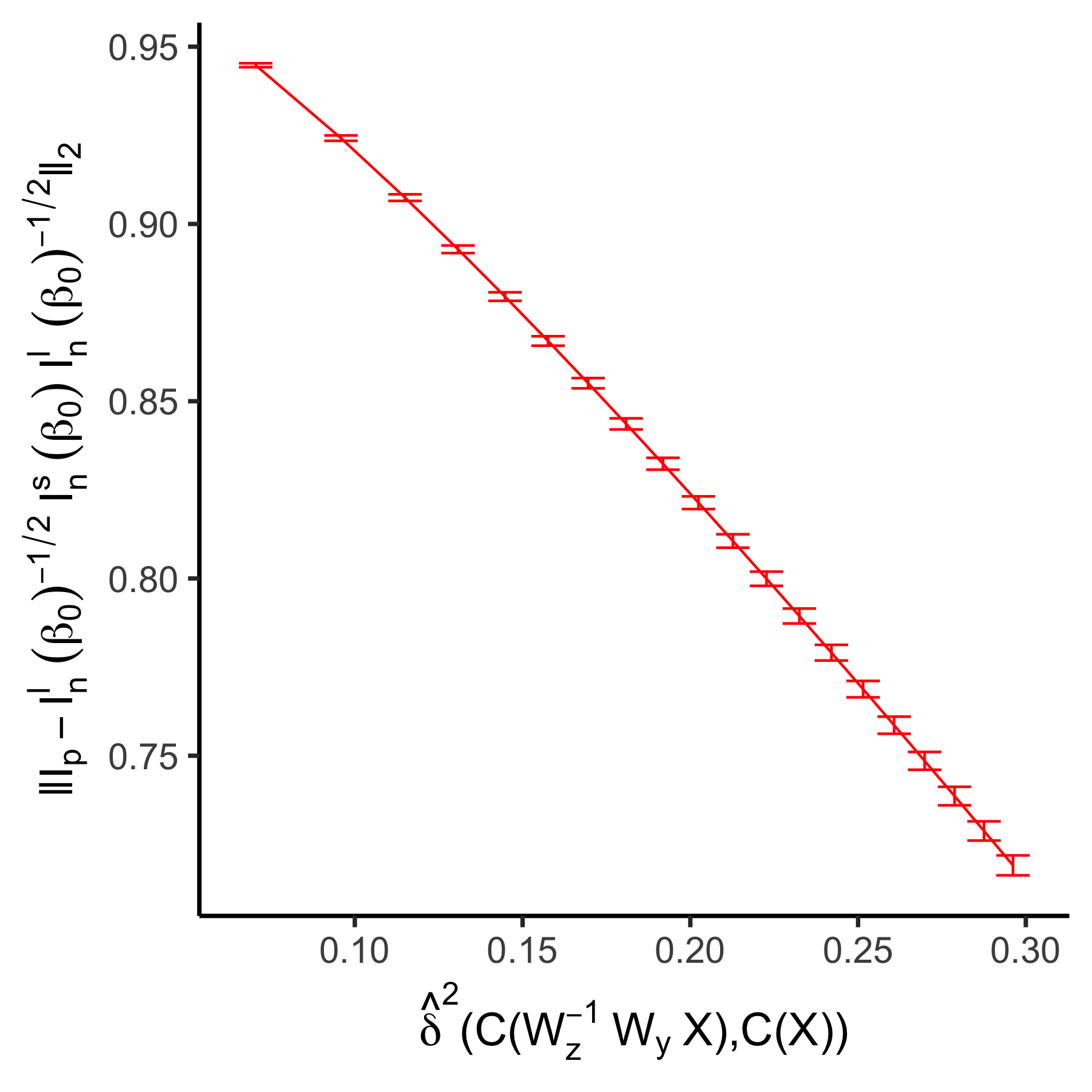}
  \end{minipage}\hfill
  \quad\quad
  \begin{minipage}[c]{0.55\textwidth}
    \caption{The plot of the relative $\ell_2$ difference between $\I_n^\ell(\beta_0)$ and $\I_n^{s}(\beta_0)$ as a function of $gap^2 = \widehat{\delta}^2 (\mathcal{C}(W_z^{-1}W_y\X),\mathcal{C}(\X))$. To generate design matrix $\X$ with various gap values, each $\x_i$ is sampled from an equal mixture of multivariate Gaussian distribution with different centers; see Section \ref{sec:emp} for details. The results were averaged over 10000 repetitions at each center, and the bars denote one standard error.
    } \label{fig:fig1}
  \end{minipage}
\end{figure}
The proof is provided in Appendix \ref{supp_sec:cor1}. We note if $p=1$, the relative $\ell_2$ difference equals to 
\begin{align*}
    \|I_p-{\I_n^\ell(\beta_0)}^{-1/2}\I_n^{s}(\beta_0){\I_n^\ell(\beta_0)}^{-1/2}\|_2  = \left\vert 1 - \frac{\I_n^\ell (\beta_0)^{-1}}{\I_n^s (\beta_0)^{-1}} \right\vert = 1 - \textnormal{ARE} (\widehat{\beta}_s, \widehat{\beta}_\ell ; \beta_0)
\end{align*}
where $\textnormal{ARE} (\widehat{\beta}_s, \widehat{\beta}_\ell ; \beta_0)$ denotes the asymptotic relative efficiency of $\widehat{\beta}_s$ with respect to $\widehat{\beta}_\ell$. In general, we can find the direction $u$ such that $\|u\|_2 = 1 $ and
\begin{align*}
     \|I_p-{\I_n^\ell(\beta_0)}^{-1/2}\I_n^{s}(\beta_0){\I_n^\ell(\beta_0)}^{-1/2}\|_2  \geq 1- \textnormal{ARE} (u^\top \widehat{\beta}_s, u^\top \widehat{\beta}_\ell ; \beta_0).
\end{align*}

The bound \eqref{eq:cor1} shows that the relative $\ell_2$ difference between $\I_n^\ell(\beta_0)$ and $\I_n^{s}(\beta_0)$ depends on how dissimilar $W_z^{-1}W_y$ is from the identity matrix. We observe that $W_z^{-1}W_y$ is a diagonal matrix where the diagonal entries are ratios of the variances of $\z$ and $\y$, i.e.,
\begin{align}\label{eq:Wz1Wy}
{(W_z^{-1}W_y)}_{ii} 
&=  \frac{\mathcal{V}(\mu(\x_i^\top \beta_0))}{\mathcal{V}(\mu_z(\x_i^\top \beta_0))} \nonumber\\
&= \frac{\mu(\x_i^\top \beta_0)(1-\mu(\x_i^\top \beta_0))}{\{(1-\rho_1-\rho_0)\mu(\x_i^\top \beta_0)+\rho_0\}\{(1-\rho_1-\rho_0)(1-\mu(\x_i^\top \beta_0))+\rho_1\}},	
\end{align}
noting $\mathcal{V}(\mu) = \mu (1-\mu)$. In light of these observations, the inefficiency of a surrogate convex loss function can be understood as the result of sub-optimal weighting of the observations due to the mis-specification of the variance matrix for $\z$. In fact, in the special case of the intercept-only model, no covariate information is available for the optimal weighting of the observations. In this case, we have  $W_y = w_1 I_n$, $W_z = w_2 I_n$ for some $w_1,w_2>0$, and thus $\mathcal{C}(W_z^{-1}W_y\X)=\mathcal{C}(\X)$ and the inequality \eqref{eq:cor1} is sharp. 

We also note that the variance ratios are also functions of the noise rates. Each diagonal entry ${(W_z^{-1}W_y)}_{ii}$ is a point on the variance ratio curve \[r(t)=\frac{\mu(t)(1-\mu(t))}{\{(1-\rho_1-\rho_0)\mu(t)+\rho_0\}\{(1-\rho_1-\rho_0)(1-\mu(t))+\rho_1\}}\] at $t = \x_i^\top \beta_0$, where the curve $r(t)$ is a function of the noise rates $\rho_1$ and $\rho_0$. When there is no noise in the labels, i.e., $\rho_0=\rho_1=0$, $r(t) \equiv 1$, all diagonal entries are $1$. In this case, we again have $\mathcal{C}(W_z^{-1}W_y\X)=\mathcal{C}(\X)$. With positive noise rates, higher noise rates tend to be associated with larger amounts of perturbation to the column space of $\X$, with the caveat that the locations of $\{\x_i^\top \beta_0\}_{i=1}^n$ also play a role in determining the amount of perturbation given the noise rates. We provide more discussion on the relationship between the amounts of noise in labels and the relative efficiency of the two estimators in Section \ref{sec:emp-2_NL}.

So far, we have considered the fixed design setting. Now we present the result equivalent to Proposition \ref{prop:4.1} in the random design. We assume that rows in the random design matrix satisfy a sub-gaussian tail condition. We define this sub-gaussian tail condition as follows:
\begin{definition}[sub-gaussian tail condition]
	We say a random vector $\x \in \R^p$ satisfies the sub-gaussian tail condition with parameter $K$ if 
	\begin{align}\label{def:sub-gaussian}
		\sup_{u \in \R^p;\|u\|_2=1} \E [\exp (u^T\x)^2/K^2] \leq 2.
	\end{align}
\end{definition}
For example, a random vector $\x\in \R^p$ with $\mu_X = \|\E[\x]\|_2$ satisfies \eqref{def:sub-gaussian} with $K=c_1 \mu_X + c_2\sigma_X$ for some absolute constants $c_1, c_2>0$ if the centered vector $\x - \E[\x]$ is sub-gaussian with parameter $\sigma_X$, i.e.,
\begin{align*}
 \sup_{u \in \R^p;\|u\|_2=1} \E[\exp(t(u^T\x-\E[u^T\x]))] \leq \exp (t^2 \sigma_X^2/2), \forall t\in \R.
\end{align*}
We replace {\bf A\ref{a1}} with the following assumption:

\begin{assumptionB}\label{a1'}
(Random design) For a random feature vector $\x \in \mathbb{R}^p$, $\x$ satisfies the sub-gaussian tail condition with parameter $K_X$ for a positive constant $K_X<\infty$. Also, there exist $C_\lambda>0$ and $C_X < \infty$ such that $\lambda_{\textnormal{min}}(\E[\x_i\x_i^\top] ) \geq C_\lambda$ and $\sup_{1\leq i\leq n} \|\x_i\|_\infty \leq C_X, \forall n.$\end{assumptionB}

\begin{proposition}\label{prop:4.2}
(Random design) Assume the conditions of Proposition \ref{prop:4.1} where {\bf A\ref{a1}} is replaced with {\bf A\ref{a1'}'}.  Then,
\begin{align*}
&\sqrt{n} (\widehat{\beta_\ell} - \beta_0) \overset{d}{\rightarrow} \mathcal{N}(0,\I^\ell(\beta_0)^{-1})\\
&\sqrt{n}(\widehat{\beta_s} - \beta_0) \overset{d}{\rightarrow} \mathcal{N}(0,\I^s(\beta_0)^{-1}),
\end{align*}
for $\I^\ell(\beta),\I^s(\beta)$ defined as 
\begin{align*}
 \I^\ell(\beta)&:=(1-\rho_1-\rho_0)^2\E_\beta\left( \frac{\mathcal{V}(\mu(\x^\top \beta))^2}{\mathcal{V}(\mu_z(\x^\top \beta))} \x\x^\top \right), \\
 \I^s(\beta)&:= (1-\rho_1 -\rho_0)^2\E_\beta \left(\mathcal{V}(\mu(\x^\top \beta))\x\x^\top \right) \E_\beta\left( \mathcal{V}(\mu_z(\x^\top \beta))\x\x^\top \right)^{-1}\E_\beta \left(\mathcal{V}(\mu(\x^\top \beta))\x\x^\top \right).
\end{align*}
Also,  $\I^\ell(\beta_0) \succeq \I^{s}(\beta_0)$. 
\end{proposition}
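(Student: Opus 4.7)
The plan is the classical M-estimator route: establish consistency of $\widehat{\beta_\ell}$ and $\widehat{\beta_s}$, Taylor-expand the score equation around $\beta_0$, then combine a multivariate CLT for the score with an LLN for the Hessian. The random design adds one layer over Proposition \ref{prop:4.1}, since fixed-design sums become sample averages of i.i.d.\ terms, but under {\bf A\ref{a1'}'} the boundedness of $\|\x\|_\infty$ together with the sub-gaussian tail furnishes all of the moment and concentration bounds we need.

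\textbf{Consistency.} For $\widehat{\beta_s}$ this is straightforward: $\L_n^s$ is convex, the pointwise LLN gives $\L_n^s(\beta) \overset{p}{\to} \E[\L_n^c(\beta)]$, and this limiting loss has unique minimizer $\beta_0$ by strict convexity of $A$ combined with the unbiasedness identity $\E[T(\z)|\x]=\mu(\x^\top\beta_0)$; a standard convexity argument then yields $\widehat{\beta_s} \overset{p}{\to} \beta_0$. For $\widehat{\beta_\ell}$, non-convexity forces a uniform LLN on the compact set $\B_2(r)$: since $\|\x\|_\infty \leq C_X$ and $\beta$ ranges over a compact set, $\beta \mapsto \ell(\x^\top\beta,\z)$ has a uniform-in-$(\x,\z)$ Lipschitz constant, so the class $\{\ell(\cdot,\cdot;\beta): \beta \in \B_2(r)\}$ is Glivenko-Cantelli. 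Combined with the strict minimization of $\E[\L_n^\ell(\beta)]$ at $\beta_0$ (Kullback-Leibler inequality for the correctly specified GLM), this gives $\widehat{\beta_\ell} \overset{p}{\to} \beta_0$.

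\textbf{Asymptotic normality.} Taylor-expanding $\nabla\L_n(\widehat{\beta})=0$ around $\beta_0$ yields
\begin{equation*}
\sqrt{n}(\widehat{\beta} - \beta_0) = -\bigl[\nabla^2 \L_n(\tilde{\beta})\bigr]^{-1}\sqrt{n}\,\nabla \L_n(\beta_0)
\end{equation*}
for some $\tilde\beta$ between $\widehat\beta$ and $\beta_0$. The multivariate CLT applied to the i.i.d.\ gradient summands gives $\sqrt{n}\,\nabla \L_n(\beta_0) \overset{d}{\to} \mathcal{N}(0,V)$; plugging in the explicit form $h'(t)=(1-\rho_1-\rho_0)\mathcal{V}(\mu(t))/\mathcal{V}(\mu_z(t))$ derived from \eqref{def:g_nl}, and using $\E[\z|\x]=\mu_z(\x^\top\beta_0)$ and $\textnormal{Var}(\z|\x)=\mathcal{V}(\mu_z(\x^\top\beta_0))$, one computes $V=\I^\ell(\beta_0)$ in the likelihood case and $V=(1-\rho_1-\rho_0)^{-2}\E[\mathcal{V}(\mu_z(\x^\top\beta_0))\x\x^\top]$ in the surrogate case. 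By the LLN and consistency of $\tilde\beta$, the Hessian converges in probability to $H=\I^\ell(\beta_0)$ in the likelihood case and to $H=\E[\mathcal{V}(\mu(\x^\top\beta_0))\x\x^\top]$ in the surrogate case (the cross term in \eqref{eq:l''} vanishes in expectation at $\beta_0$). Slutsky then delivers the two limits with sandwich variance $H^{-1}VH^{-1}$, which simplifies to $\I^\ell(\beta_0)^{-1}$ and $\I^s(\beta_0)^{-1}$ respectively.

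\textbf{Ordering and main obstacle.} For $\I^\ell(\beta_0) \succeq \I^s(\beta_0)$, apply matrix Cauchy--Schwarz with $X=\mathcal{V}(\mu_z(\x^\top\beta_0))^{1/2}\x$ and $Y=\mathcal{V}(\mu(\x^\top\beta_0))\mathcal{V}(\mu_z(\x^\top\beta_0))^{-1/2}\x$: then $\E[XY^\top]=\E[\mathcal{V}(\mu)\x\x^\top]$, $\E[XX^\top]=\E[\mathcal{V}(\mu_z)\x\x^\top]$, and $\E[YY^\top]=\E[\mathcal{V}(\mu)^2/\mathcal{V}(\mu_z)\x\x^\top]$, so the standard inequality $\E[XY^\top]^\top\E[XX^\top]^{-1}\E[XY^\top]\preceq\E[YY^\top]$, multiplied by $(1-\rho_1-\rho_0)^2$, is exactly the claimed ordering. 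The main technical obstacle is the uniform LLN step for $\widehat{\beta_\ell}$: non-convexity rules out the pointwise-LLN plus convexity shortcut available for $\widehat{\beta_s}$, so one must verify stochastic equicontinuity (or a bracketing bound) explicitly on $\B_2(r)$; everything else reduces to Proposition \ref{prop:4.1} with empirical averages replaced by expectations via the LLN, which is uniform on $\B_2(r)$ under {\bf A\ref{a1'}'}.
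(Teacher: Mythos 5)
Your proposal is correct and follows essentially the same route as the paper, which simply invokes classical M-estimation theory (consistency plus the Taylor/sandwich argument you spell out) for the two limit distributions; your explicit variance and Hessian computations, using $h_{LN}'(t)=(1-\rho_1-\rho_0)\mathcal{V}(\mu(t))/\mathcal{V}(\mu_z(t))$ and the identity $A''\circ h_{LN}=\mathcal{V}\circ\mu_z$, reproduce $\I^\ell(\beta_0)^{-1}$ and $\I^s(\beta_0)^{-1}$ correctly. The only divergence is that the paper obtains $\I^\ell(\beta_0)\succeq\I^s(\beta_0)$ by citing Theorem 1 of \citet{Morton1981-jb}, whereas you prove it directly via the matrix Cauchy--Schwarz inequality with $X=\mathcal{V}(\mu_z)^{1/2}\x$ and $Y=\mathcal{V}(\mu)\mathcal{V}(\mu_z)^{-1/2}\x$ --- a self-contained argument that captures the same underlying fact about optimal weighting of estimating equations.
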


The result follows from classical M-estimation theory \citep[see e.g.,][]{Van_der_Vaart1998-fr}. $\I^\ell(\beta_0)\succeq \I^s(\beta_0)$ follows from Theorem 1 in \citet{Morton1981-jb}. 

The final result that we will present in this section is about the comparability between the global and local minimizer in the low-dimensional setting. So far, we have only considered the global minimizer of the empirical risk function $\L_n^\ell(\beta)$ which is the MLE. However, since $\L_n^\ell(\beta)$ is non-convex, obtaining the global minimizer $\widehat{\beta_\ell}$ is in general computationally intractable, and algorithms on the optimization of non-convex functions focus on finding a stationary point of the objective function.  

The population risk, albeit non-convex, can be shown to be unimodal and also strongly convex around $\beta_0$ in some GLMs. Often, fast probability tail decay of $\x$ and $\z$, and boundedness of the derivatives of the loss function allow enough concentration of the empirical risk function around the population counterpart that the empirical risk function has a unique stationary point, which in fact is the global minimum \citep{Mei2018-ec}. We make the following assumption {\bf A\ref{a2}} about the smoothness of $\ell$, for $\ell$ defined in \eqref{def:Ll}. In Corollary \ref{cor:2}, we show that Assumption {\bf A\ref{a2}} is satisfied for the label noise model, {\bf (GLM)} with parameters $(\log(1+\exp(\cdot)), g_{LN})$.

\begin{assumption}\label{a2}
$\ell''$ is Lipschitz w.r.t its first argument, i.e., 
$|\ell''(a,t)-\ell''(a',t)|\leq L_\ell |a-a'|, \forall t$. Furthermore, there exists $C_\ell <\infty$ such that $ \max \{ \|\ell'\|_\infty , \|\rho_I\|_\infty , \|\rho_R\|_\infty  \} \leq C_\ell$.
\end{assumption}

\begin{proposition}\label{prop:4.3}
	Suppose a sample $(\x_i,\z_i)_{i=1}^n$ is from a {\bf (GLM)} with parameters $(A,g)$ and assume {\bf A\ref{a1'}'} and {\bf A\ref{a2}}. Moreover, assume $\z_i|\x_i$ satisfies the sub-gaussian tail condition with parameter $K_Z$ for a positive constant $K_Z<\infty$, i.e.,
	\begin{align*}
		\E[e^{t(\z_i-\E[\z_i|\x_i])}|\x_i] \leq e^{t^2 K_Z^2/2} \text{ a.s., } \forall t \in \R.
	\end{align*}
Then, for any given $\epsilon>0$, there exists a unique stationary point of $\L^\ell_n(\beta)$ in $\B_2(r)$ with probability at least $1-\epsilon$, given a sufficiently large $n \geq C\log(1/\epsilon)p \log n$ where the constant $C$ depends only on the model parameters in our assumptions. The unique stationary point is the global minimum of $\L^\ell_n(\beta)$.
\end{proposition}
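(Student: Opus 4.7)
The plan is to follow the standard two-stage strategy for non-convex $M$-estimators, in the spirit of \citet{Mei2018-ec}: first show the population loss $\L^\ell(\beta) := \E[\L_n^\ell(\beta)]$ has a single, well-conditioned stationary point at $\beta_0$ on $\B_2(r)$, then transfer this landscape to $\L_n^\ell$ via uniform concentration of its gradient and Hessian.

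For the population step, the gradient inherits a monotonicity from the link. Using \eqref{eq:l'} together with $\E[\z\mid\x] = \mu_z(\x^\top \beta_0)$,
\[
\triangledown \L^\ell(\beta) \;=\; \E\!\left[(\mu_z(\x^\top \beta) - \mu_z(\x^\top \beta_0))\, h'(\x^\top \beta)\, \x\right].
\]
Since $g$ is strictly increasing, so are $\mu_z = g^{-1}$ and $h = (A')^{-1}\circ g^{-1}$; combined with $\lambda_{\min}(\E[\x\x^\top]) > 0$ from \textbf{A\ref{a1'}'}, this yields $\langle \triangledown \L^\ell(\beta),\beta - \beta_0\rangle > 0$ for every $\beta \in \B_2(r)\setminus\{\beta_0\}$, ruling out any stationary point other than $\beta_0$. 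The cancellation $\E[\rho_R(\x^\top \beta_0, \z)\mid \x] = 0$ yields $\triangledown^2 \L^\ell(\beta_0) = \E[\rho_I(\x^\top \beta_0)\x\x^\top] \succ 0$, and continuity over compact $\B_2(r)$ produces constants $\delta, c_0 > 0$ such that $\triangledown^2 \L^\ell(\beta)\succeq c_0 I_p$ on $\B_2(\delta;\beta_0)$ and $\|\triangledown \L^\ell(\beta)\|_2 \ge c_0$ on $\B_2(r)\setminus \B_2(\delta;\beta_0)$.

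For concentration, the goal is to show, with probability at least $1-\epsilon$,
\[
\sup_{\beta\in \B_2(r)}\|\triangledown \L_n^\ell(\beta) - \triangledown \L^\ell(\beta)\|_2 \;+\; \sup_{\beta\in \B_2(r)}\|\triangledown^2 \L_n^\ell(\beta) - \triangledown^2 \L^\ell(\beta)\|_2 \;\le\; C'\sqrt{p\log n / n},
\]
whenever $n \ge C\log(1/\epsilon)\, p\log n$. The recipe is to (i) build an $\eta$-net of $\B_2(r)$ with $\eta \asymp 1/n$ and cardinality $(3rn)^p$; (ii) apply vector/matrix Bernstein at each net point, using $\|\ell'\|_\infty, \|\rho_I\|_\infty, \|\rho_R\|_\infty \le C_\ell$ from \textbf{A\ref{a2}} together with the sub-gaussian tails of $\x$ and $\z$ to control the summands $\ell'(\x_i^\top \beta, \z_i)\x_i$ and $\ell''(\x_i^\top \beta, \z_i)\x_i\x_i^\top$; (iii) extend from the net to all of $\B_2(r)$ via the Lipschitz property of $\ell''$ in \textbf{A\ref{a2}}, which bounds the moduli of continuity of both $\triangledown \L_n^\ell$ and $\triangledown^2 \L_n^\ell$. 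Union-bounding over the net produces the $p\log n$ factor in the sample-size requirement.

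Combining the two parts on the high-probability event and taking $n$ large enough that the concentration error is below $\min(c_0/2,\,c_0 \delta)$: outside $\B_2(\delta;\beta_0)$ the empirical gradient satisfies $\|\triangledown \L_n^\ell(\beta)\|_2 \ge c_0/2 > 0$, so no stationary point can sit there; inside $\B_2(\delta;\beta_0)$ the empirical Hessian satisfies $\triangledown^2 \L_n^\ell(\beta) \succeq (c_0/2) I_p$, so $\L_n^\ell$ restricted to this closed convex ball is strongly convex and has a unique minimizer $\widehat{\beta_\ell}^\ast$. A KKT argument rules out a boundary minimizer: by strong convexity of $\L^\ell$ on $\B_2(\delta;\beta_0)$, $\langle \triangledown \L^\ell(\beta),\beta - \beta_0\rangle \ge c_0\delta^2$ on the boundary sphere, which is inherited (via the concentration bound) by $\triangledown \L_n^\ell$ and contradicts the KKT outward-pointing condition. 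Hence $\widehat{\beta_\ell}^\ast$ is interior, is a critical point of $\L_n^\ell$, is the only critical point in $\B_2(r)$, and coincides with the global minimum. The main obstacle is step (iii) of the concentration argument: converting pointwise Bernstein bounds into a sharp uniform $\sqrt{p\log n/n}$ rate requires carefully balancing the Lipschitz constant of $\ell''$ against the linear-predictor diameter $r\sqrt{p}\,C_X$ so that the net-discretization error is dominated by the concentration error.
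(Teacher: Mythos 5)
Your proposal is correct and follows the same overall architecture as the paper's proof: establish that the population risk $\L^\ell$ has a nonvanishing gradient outside a small ball $\B_2(\epsilon_0;\beta_0)$ and a strictly positive-definite Hessian inside it (the paper's Lemma \ref{lem:A.2}), prove uniform concentration of $\triangledown\L_n^\ell$ and $\triangledown^2\L_n^\ell$ at rate $\sqrt{p\log n/n}$ (the paper simply cites Theorem 1 of \citealp{Mei2018-ec} as Lemma \ref{lem:A.4}, whereas you re-derive it by $\eta$-net plus Bernstein --- same technique), and transfer the landscape. The one substantive difference is how you certify that a stationary point actually exists inside the small ball: the paper proves a separate uniform concentration result for the loss \emph{values} (Lemma \ref{lem:A.3}, via an extension of McDiarmid's inequality plus symmetrization and contraction) in order to show $\L_n^\ell(\beta_0)<\inf_{\partial\B_2(\epsilon_0;\beta_0)}\L_n^\ell$, while you instead take the constrained minimizer over the closed ball and rule out a boundary minimizer by a first-order-optimality (KKT) argument using only the gradient concentration bound. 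Your route is leaner, since it dispenses with Lemma \ref{lem:A.3} entirely. Two minor points to tighten: (i) your compactness-and-continuity derivation of the constants $\delta, c_0$ yields constants depending on the law of $\x$ rather than only on the explicit parameters $(K_X,C_\lambda,C_\ell,L_\ell,r)$; the paper's sub-gaussian truncation argument in Lemma \ref{lem:A.2} gives the explicit dependence claimed in the statement. (ii) Your exclusion of stationary points outside $\B_2(\epsilon_0;\beta_0)$ relies only on $\|\triangledown\L_n^\ell\|_2\ge c_0/2$, which rules out zero-gradient points but not a constrained (KKT) minimizer sitting on $\partial\B_2(r)$; the paper handles this via the uniform bound $\sup_{\beta\in\partial\B_2(r)}\langle\triangledown\L^\ell(\beta),\beta_0-\beta\rangle<-\epsilon_g$, and the inner-product positivity you already establish gives the same fix immediately.
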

\begin{corollary}\label{cor:2}
	Under Assumption {\bf A\ref{a1'}'}, the log-likelihood for the label noise GLM has a unique stationary point in $\B_2(r)$, which is the MLE, with high probability.
\end{corollary}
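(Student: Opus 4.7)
The plan is to apply Proposition~\ref{prop:4.3}, so it suffices to verify its two extra hypotheses for the label-noise GLM: sub-gaussianity of $\z|\x$ and Assumption {\bf A\ref{a2}} on the loss. The first is essentially free: since $\z\in\{0,1\}$, the centered variable $\z-\E[\z|\x]$ lies in $[-1,1]$ almost surely, so Hoeffding's lemma gives the conditional sub-gaussian bound with $K_Z\leq 1/2$. The rest of the argument is devoted to {\bf A\ref{a2}}.

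For this specific GLM, $A(t)=\log(1+e^t)$ (so that all derivatives $A^{(k)}$ are bounded on $\R$ as polynomials in $\sigma:=A'$), and $g_{LN}^{-1}(u)=(1-\rho_0-\rho_1)\sigma(u)+\rho_0$, giving the explicit formulas $h(u)=\textnormal{logit}(g_{LN}^{-1}(u))$ and $A'(h(u))=g_{LN}^{-1}(u)\in[\rho_0,1-\rho_1]\subset(0,1)$. I would compute
\[
h'(u)=\frac{(1-\rho_0-\rho_1)\,\sigma(u)(1-\sigma(u))}{\bigl[(1-\rho_0-\rho_1)\sigma(u)+\rho_0\bigr]\bigl[1-(1-\rho_0-\rho_1)\sigma(u)-\rho_0\bigr]},
\]
then obtain $h''$ and $h'''$ by the quotient rule. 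Once uniform bounds on $h',h'',h'''$ are in hand, boundedness of $\ell'$, $\rho_I$, $\rho_R$ follows from $|A'(h(t))-z|\leq 1$ together with the bounds on $A^{(k)}$; and the Lipschitz claim on $\ell''$ follows from the identity $(\ell'')'(t,z)=A'''(h(t))h'(t)^3+3A''(h(t))h'(t)h''(t)+(A'(h(t))-z)h'''(t)$, which is uniformly bounded for the same reason.

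The main obstacle is uniform boundedness of $h',h'',h'''$ in the edge cases $\rho_0=0$ (the PU-learning regime) or $\rho_1=0$, where the denominator in the expression for $h'$ vanishes as $u\to-\infty$ or $u\to+\infty$, respectively. The rescue is explicit cancellation: for $\rho_0=0$, the numerator $(1-\rho_1)\sigma(u)(1-\sigma(u))$ vanishes at the same rate as the denominator, and simplification gives $h'(u)=(1-\sigma(u))/[1-(1-\rho_1)\sigma(u)]\in(0,1]$. The analogous cancellation must be carried out by hand for $h''$ and $h'''$; once it is, one obtains constants depending only on $\rho_0$ and $\rho_1$ bounding every quantity appearing in {\bf A\ref{a2}}, after which Proposition~\ref{prop:4.3} delivers the corollary.
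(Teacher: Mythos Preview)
Your proposal is correct and follows the same overall plan as the paper: verify the sub-gaussianity of $\z|\x$ (immediate from boundedness) and Assumption {\bf A\ref{a2}}, then invoke Proposition~\ref{prop:4.3}. The difference is in how {\bf A\ref{a2}} is checked.

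You propose computing $h'',h'''$ by the quotient rule and then handling the edge cases $\rho_0=0$ or $\rho_1=0$ by carrying out cancellations case by case. This works but is laborious. The paper avoids the case split entirely via two observations. First, writing $a=1-\rho_0-\rho_1$, $b=\rho_0$, and $p(t)=a\mu(t)+b$, one has
\[
h_{LN}'(t)=\frac{a\mu(t)}{a\mu(t)+b}\cdot\frac{a(1-\mu(t))}{1-a\mu(t)-b},
\]
and each factor lies in $[0,1]$ for all $t$ (the first since $b\geq 0$, the second since $1-a\mu-b=a(1-\mu)+\rho_1\geq a(1-\mu)$), so $0\leq h_{LN}'\leq 1$ uniformly, with no separate analysis at $\rho_0=0$ or $\rho_1=0$. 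Second, rather than the raw quotient rule, the paper derives the recursive identities
\[
h_{LN}''(t)=h_{LN}'(t)\bigl(1-2\mu(t)\bigr)-h_{LN}'(t)^2\bigl(1-2p(t)\bigr),
\]
and an analogous formula for $h_{LN}'''$ in terms of $h_{LN}',h_{LN}''$. Since $|1-2\mu|\leq 1$ and $|1-2p|\leq 1$, these give $|h_{LN}''|\leq 2$ and $|h_{LN}'''|\leq 7$ immediately, again with no case analysis. Your route reaches the same destination; the paper's factorization and recursion just get there more cleanly and make the uniformity in $(\rho_0,\rho_1)$ transparent.
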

Proofs of Proposition \ref{prop:4.3} and Corollary \ref{cor:2} are provided in Appendix \ref{sec:pf_prop:4.3} and \ref{supp_sec:cor2}.

%% High-d
\section{Estimation and Testing in the High-Dimensional Regime}\label{sec:hd}
\subsection{\texorpdfstring{$\ell_1$}{l1} and \texorpdfstring{$\ell_2$}{l2} Consistency}

In many modern data sets, the number of the features $p$ may be comparable to sample size $n$, or may even be substantially larger ($p \gg n$). In this section, we discuss the estimation of $\beta_0$ in the high-dimensional regime. For the non-convex optimization problem, as in the previous section, we restrict the search space to be $\B_2(r)$ for some large enough $r$ that the true parameter $\beta_0$ is an interior point of the parameter space.
We propose two estimators $\widehat{\beta}_\ell^H$, $\widehat{\beta}_s^H$ as solutions of the following optimization problems
\begin{align}\label{def:high-d_estimators}
\widehat{\beta}_{\ell}^H \in \argmin_{\beta \in \B_2(r)} \L_n^{\ell}(\beta) +\lambda_\ell\|\beta\|_1 \quad \mbox{and}\quad
\widehat{\beta}_{s}^H \in \argmin_{\beta \in \R^p} \L^s_{n}(\beta)+\lambda_s\|\beta\|_1,
\end{align}
where $\L_n^\ell(\beta)$ is a non-convex negative log-likelihood loss and $\L_n^{s}(\beta)$ is a convex surrogate loss. Here $\lambda_\ell$ and $\lambda_s$ are tuning parameters which need to be chosen appropriately, and we will discuss their choices shortly. Finally, we note that in many cases, it is common to leave a finite number of coordinates unpenalized. An important special example is when the model includes an intercept feature. The theory that we develop in this section has a straightforward extension when the $\ell_1$ penalty is modified to exclude a finite subset of features. 
In the low-dimensional setting, we established that the global minimizer can be obtained with high probability, but it is hard for a similar result to hold in the high-dimensional regime. Therefore, instead of $\widehat{\beta}_{\ell}^H$ we make use of a stationary point and define $\widetilde{\beta}_\ell^H$ to be a stationary point of the first optimization problem in \eqref{def:high-d_estimators}.

% RSC
We now study the statistical guarantees of the two estimators in the high-dimensional regime. First, we impose the standard sparsity assumption on $\beta_0$,
$s_0:= \|\beta_0\|_0$. The core condition which needs to be established is the restricted strong convexity (RSC) condition, the notion of which was first proposed by \citet{Negahban2012-bd} for convex loss functions and extended for non-convex functions by \citet{Loh2017-jk} and \citet{Loh2015-cn}. Similarly as in the definition in \citet{Loh2017-jk}, we define the RSC condition as follows.
% Definition
\begin{definition}[RSC condition] \label{def:RSC} We say $\mathcal{L}_n$ satisfies a restricted strong convexity (RSC) condition with respect to $\beta_0$ with curvature $\alpha$, a tolerance function $\tau_{n,p}(\cdot):\R \rightarrow \R$, and a region $\Omega \subseteq \R^p$ if there exist $\alpha>0$, $\tau_{n,p}(\cdot)$ such that 
	\begin{align}\label{eq:RSC}
		\langle \triangledown\mathcal{L}_n(\beta)-\triangledown\mathcal{L}_n(\beta_0), \beta- \beta_0 \rangle \geq \alpha \|\beta-\beta_0\|_2^2 -\tau_{n,p}(\|\beta-\beta_0\|_1), \quad \forall \beta \in \Omega.
	\end{align}
\end{definition}
For example, the RSC condition in \citet{Loh2017-jk} corresponds to the choices $\tau_{n,p}(t) = \tau (\log p/n)t^2$ for a constant $\tau \geq 0$ and $\Omega = \B_2(\delta;\beta_0)$ with a radius $\delta>0$.
The main idea behind the definition of RSC is that it is the relaxed version of the strong convexity; when $\alpha >0$, $\tau_{n,p} \equiv 0$ and the inequality \eqref{eq:RSC} holds for all $\beta$ and $\beta_0 \in \R^p$. Even if $\L_n$ is convex, $\L_n$ cannot be strongly convex in the high-dimensional regime due to the rank deficiency, which causes the curvature to vanish in some directions. The RSC condition guarantees that gradient information can still be exploited to direct the algorithm to the optimal point $\beta_0$ in the lack of strong convexity.

We will establish the RSC condition \eqref{eq:RSC} with the choices $\tau_{n,p}(t) = \tau_\ell \sqrt{\log p/n}t$ and $\tau_{n,p}(t) = \tau_s (\log p/n)t^2$ for $\L_n^\ell(\beta)$ and $\L_n^s(\beta)$, respectively, for some $\tau_\ell, \tau_s >0$. First, we discuss some additional conditions needed to establish the RSC condition for the negative log-likelihood loss $\L_n^\ell(\beta)$. 
\begin{assumption}\label{a3}
There exist $C_\rho>0$ such that $\sup_t |\ell''(t,z)t|\leq C_\rho$, for all $z\in \mathcal{Z}$.
\end{assumption}
 \begin{assumption}\label{a4}
 There exist $C_d, C_b <\infty$ such that $\max_{1\leq i\leq n}|\x_i^\top (\beta_0/\|\beta_0\|_2)|\leq C_d$, a.s. and $\|\beta_0\|_2\leq C_b$.
 \end{assumption}
 Assumption {\bf A\ref{a3}} is a technical assumption which ensures that $\ell''(t.z)$ decays at least on the order of $1/t$ as $t\rightarrow \pm \infty$. We will show in Corollary \ref{cor:3} that Assumption {\bf A\ref{a3}} is satisfied for the noisy labels model, where $\ell''(t,z) = A''(h_{LN}(t))h'_{LN}(t)^2 + (A'(h_{LN}(t))-z)h''_{LN}(t)$ for $A(t) = \log (1+ \exp(t))$ and $h_{LN}(\cdot)$ defined in Section \ref{sec:noiselabel}.
 Assumption {\bf A\ref{a4}} concerns the boundedness of the signal. In particular, we assume that the size of $\x$ projected onto $\beta_0$ as well as $\|\beta_0\|_2$ are bounded. 

Now we present two propositions to establish the RSC conditions with high probability for $\L_n^\ell(\beta)$ and $\L_n^{s}(\beta)$. 
%% Proposition 4.1
\begin{proposition}\label{prop:5.1}
	Suppose a sample is from a {\bf (GLM)} with $(A,g)$ which satisfies the random design condition {\bf{A\ref{a1'}'}}. We assume a high-dimensional regime where $p \gg n $ and $\log p/n = o(1)$. Also, we assume that $\ell$ is smooth and has a fast decaying tail ({\bf{A\ref{a2}-A\ref{a3}}}), and that the linear signal is bounded ({\bf A\ref{a4}}). Then for any given $\epsilon>0$, there exist positive constants $\alpha_\ell$ and $\tau_{\ell }$ such that the following event
\begin{align}
	\left(\triangledown \mathcal{L}_n^\ell(\beta) - \triangledown \mathcal{L}^\ell_n(\beta_0)\right)^\top (\beta-\beta_0) \geq
\alpha_\ell\|\beta-\beta_0\|_2^2  - \tau_{\ell}\sqrt{\frac{\log p}{n}}  \|\beta-\beta_0\|_1 ,\quad \forall \beta\in \B_2(r)\label{eq:prop5.1}
\end{align}
holds with probability at least $1-\epsilon$, given a sufficiently large sample size $n \geq C (1/\epsilon)^{1/7}$ for a constant $C$ depending only on the model parameters. 
\end{proposition}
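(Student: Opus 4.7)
I would prove the RSC bound by writing $Q(\beta) := (\triangledown\L_n^\ell(\beta)-\triangledown\L_n^\ell(\beta_0))^\top(\beta-\beta_0) = \Delta^\top H_n(\beta)\Delta$ for the integrated Hessian $H_n(\beta)$ with $\Delta := \beta-\beta_0$, decomposing $H_n = H_n^\X + H_n^\varepsilon$ into an $\X$-measurable piece and a mean-zero stochastic piece driven by $\varepsilon_i := \z_i - \E[\z_i|\x_i]$, and then (i) lower-bounding $\E[\Delta^\top H_n^\X(\beta)\Delta]$ by $\alpha\|\Delta\|_2^2$ uniformly in $\beta\in\B_2(r)$, and (ii) controlling $|\Delta^\top (H_n - \E H_n^\X)\Delta|$ by $\sqrt{\log p/n}\,\|\Delta\|_1\|\Delta\|_2$ uniformly in $\beta$. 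Using $\|\Delta\|_2 \leq 2r$ to absorb the spare $\|\Delta\|_2$ factor in the deviation term then converts the tolerance into the desired $\tau_\ell\sqrt{\log p/n}\|\Delta\|_1$ form.

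\textbf{Population lower bound.} Applying the fundamental theorem of calculus to $t\mapsto \ell'(\x_i^\top\beta_0 + t\x_i^\top\Delta, \z_i)$ together with $\ell'' = \rho_I + \rho_R$ from \eqref{eq:l''} and $\z_i = \E[\z_i|\x_i] + \varepsilon_i$, the $\X$-measurable integrand in $H_n^\X$ becomes $\rho_I(u_{it}) + (A'(h(\x_i^\top\beta_0))-A'(h(u_{it})))h''(u_{it})$ with $u_{it} := \x_i^\top\beta_0 + t\x_i^\top\Delta$. Smoothness (A2), the tail decay $\sup_t|\ell''(t,z)t|\leq C_\rho$ (A3), and boundedness of the linear signal (A4) imply that this integrand is bounded below by some $c_0 > 0$ on the truncation event $\{|\x_i^\top\beta_0|\leq M,\,|\x_i^\top\Delta|\leq M'\}$ for appropriate constants $M,M'$; the complements are rare by the sub-Gaussian condition A1'. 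Combined with the minimum-eigenvalue condition in A1', this yields $\E[H_n^\X(\beta)]\succeq \alpha I$ uniformly in $\beta\in\B_2(r)$.

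\textbf{Uniform deviation and main obstacle.} For each fixed $\beta$, Bernstein's inequality --- applied to the bounded integrand and to $\varepsilon_i$, combined with the sub-Gaussianity of $\x_i$ --- gives $|\Delta^\top(H_n(\beta) - \E H_n^\X(\beta))\Delta| \lesssim \sqrt{\log p/n}\,\|\Delta\|_1\|\Delta\|_2$ with high probability. The main technical challenge is lifting this to a bound uniform over $\B_2(r)$: since $H_n(\beta)$ depends on $\beta$ nonlinearly through the integrand, I would combine a $\delta$-net of $\B_2(r)$ with the Lipschitz continuity of $\beta\mapsto H_n(\beta)$ (A2 plus the bounded coordinates of $\x_i$ from A1'), and apply a union bound over the net, choosing $\delta$ polynomially small in $n$ so that the Lipschitz error is negligible while the dimension dependence of the union bound stays at $\log p$ rather than $p$. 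The polynomial sample-size condition $n \gtrsim (1/\epsilon)^{1/7}$ suggests that the supremum in this step is controlled by Markov on a bounded polynomial moment rather than by an exponential tail. The main obstacle is precisely this uniform concentration: the sign-indefiniteness of $\ell''$ and the nonlinear $\beta$-dependence of the integrand make a direct chaining delicate, and the truncation from the population step must be preserved throughout so that the integrand remains in the Bernstein-friendly regime uniformly in $\beta$.
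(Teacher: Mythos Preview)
Your population lower bound and the overall shape of the argument are in the right spirit, but the uniform deviation step has a genuine gap. You propose to control $\sup_{\beta\in\B_2(r)}|\Delta^\top(H_n(\beta)-\E H_n^\X(\beta))\Delta|$ by taking a $\delta$-net of $\B_2(r)\subset\R^p$ and applying Lipschitz continuity plus a union bound. But $\B_2(r)$ is a ball in $p$ dimensions with $p\gg n$: any $\delta$-net has cardinality of order $(Cr/\delta)^p$, so the union bound introduces a factor $p\log(r/\delta)$ in the exponent, not $\log p$. Your remark that one can ``choose $\delta$ so that the dimension dependence stays at $\log p$'' is precisely the step that does not go through by a direct covering of the $p$-dimensional parameter ball; some additional structure is required, and your sketch does not identify it. Relatedly, the integrated Hessian $H_n(\beta)$ is not a fixed matrix in $\Delta$ --- it depends on $\Delta$ through the integrand $\ell''(\x_i^\top\beta_0+t\x_i^\top\Delta,\z_i)$ --- so framing the deviation as a quadratic form with a single random matrix and then bounding it by a mixed $\|\Delta\|_1\|\Delta\|_2$ norm already hides a uniformity-in-$\beta$ issue.

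The paper avoids this entirely by a different decomposition: it writes
\[
\langle\triangledown\L_n^\ell(\beta)-\triangledown\L_n^\ell(\beta_0),\Delta\rangle = \langle\triangledown\L^\ell(\beta),\Delta\rangle + \langle\triangledown\L_n^\ell(\beta)-\triangledown\L^\ell(\beta),\Delta\rangle - \langle\triangledown\L_n^\ell(\beta_0),\Delta\rangle,
\]
handles the population term $\langle\triangledown\L^\ell(\beta),\Delta\rangle\geq\alpha_\ell\|\Delta\|_2^2$ deterministically (this is \eqref{eq:lemA.2-3}), bounds $|\langle\triangledown\L_n^\ell(\beta_0),\Delta\rangle|\leq\|\triangledown\L_n^\ell(\beta_0)\|_\infty\|\Delta\|_1$ by a simple sub-Gaussian maximum, and --- crucially --- controls $\sup_{\beta}|\langle\triangledown\L_n^\ell(\beta)-\triangledown\L^\ell(\beta),\Delta\rangle|/\|\Delta\|_1$ by invoking Theorem~3 of \citet{Mei2018-ec}. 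The key structural fact exploited there (and verified in the paper's Lemma~\ref{lem:A.8}) is that $\langle\ell'(\x^\top\beta,\z)\x,\Delta\rangle = g(\x^\top\Delta;(\x,\z))$ with $g(t;(\x,\z))=\ell'(t+\x^\top\beta_0,\z)\,t$: the dependence on $\beta$ enters only through the \emph{one-dimensional} projection $\x^\top\Delta$, and $g$ is Lipschitz in $t$ with constant $L_\star$ independent of $(\x,\z)$ (this is where {\bf A\ref{a3}} and {\bf A\ref{a4}} are used). This reduction to a one-dimensional index is what permits uniform concentration at the $\sqrt{\log p/n}$ rate without paying for a $p$-dimensional cover. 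Your integrated-Hessian route does not surface this structure, and without it the covering argument does not close. (As a side note, the sample-size condition $n\gtrsim(1/\epsilon)^{1/7}$ does not come from a polynomial-moment Markov bound; it arises because the $\|\triangledown\L_n^\ell(\beta_0)\|_\infty$ bound holds with probability $1-1/p^7$, and in the regime $p\gg n$ this is at most $\epsilon$ once $n\geq C(1/\epsilon)^{1/7}$.)
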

The proof is deferred to the Appendix \ref{supp_sec:prop5.1}. We also present an equivalent result for the convex surrogate loss when a sample is from a {\bf (GLM-L)} model. We recall that the convex approach discussed in the previous section is available when a sample is from {\bf (GLM-L)} model.

\begin{proposition}\label{prop:5.2}
	Suppose a sample is from a {\bf (GLM-L)} model with $(A,a,b)$ for $a>0$ and $b\geq 0$, which satisfies the random design condition {\bf A\ref{a1'}'}. Also assume the high-dimensional regime as in the Proposition \ref{prop:5.1} and $\|\beta_0\|_2 = O(1)$. Then for any given $\epsilon>0$, there exist positive constants $\alpha_s$ and $\tau_{s}$ such that for $n \geq C\log (1/\epsilon)$, it holds with probability at least $1-\epsilon$ that
\begin{align}
	\left(\triangledown \mathcal{L}_n^{s}(\beta) - \triangledown \mathcal{L}^{s}_n(\beta_0)\right)^\top (\beta-\beta_0) &\geq
\alpha_s \|\beta-\beta_0\|_2^2 - \tau_s \frac{\log p}{n} \|\beta-\beta_0\|_1^2,\quad \forall \beta\in \B_2(1;\beta_0),\label{eq:prop5.2}
\end{align}
where the constant $C$ depends only on the model parameters.
\end{proposition}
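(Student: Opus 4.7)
My plan is to reduce the RSC inequality to a restricted eigenvalue (RE) statement on a weighted sample covariance matrix. Writing $\Delta := \beta - \beta_0$ and noting that $\mathcal{L}_n^s$ is affine in each $T(\z_i)$, those terms cancel in the gradient difference, and the fundamental theorem of calculus yields
\begin{equation*}
(\nabla \mathcal{L}_n^s(\beta) - \nabla \mathcal{L}_n^s(\beta_0))^\top \Delta = \frac{1}{n}\sum_{i=1}^n \psi_i(\Delta)\,(\x_i^\top \Delta)^2, \qquad \psi_i(\Delta) := \int_0^1 A''\bigl(\x_i^\top(\beta_0 + t\Delta)\bigr)\,dt \geq 0.
\end{equation*}
Thus it suffices to lower bound a nonnegative weighted sample quadratic form uniformly in $\beta \in \B_2(1;\beta_0)$, and in particular the $\z_i$'s drop out of the analysis entirely.

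The core step is to extract a $\Delta$-free lower bound on $\psi_i(\Delta)$ on a large event. In the motivating label-noise case $A(t) = \log(1+e^t)$, a short calculation gives the multiplicative comparison $A''(u+v) \geq A''(u) e^{-|v|}$, so $\psi_i(\Delta) \geq A''(\x_i^\top\beta_0) \cdot c(|\x_i^\top\Delta|)$ with $c(r) := (1-e^{-r})/r$, positive and decreasing. Restricting to the event $\{|\x_i^\top\Delta| \leq M\}$ for a constant $M$ to be chosen, I obtain $\psi_i(\Delta) \geq c_M\, A''(\x_i^\top\beta_0)$ with $c_M > 0$, and can then decompose
\begin{equation*}
\frac{1}{n}\sum_i \psi_i(\Delta)(\x_i^\top\Delta)^2 \geq c_M \Delta^\top H(\beta_0)\Delta - \frac{c_M}{n}\sum_i A''(\x_i^\top\beta_0)\mathbbm{1}\{|\x_i^\top\Delta|>M\}(\x_i^\top\Delta)^2,
\end{equation*}
where $H(\beta_0) := \frac{1}{n}\sum_i A''(\x_i^\top\beta_0)\x_i\x_i^\top$ is the Hessian of $\mathcal{L}_n^s$ at $\beta_0$. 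The first term is a $\Delta$-free weighted sample covariance whose population version $\mathbb{E}[A''(\x^\top\beta_0)\x\x^\top]$ has minimum eigenvalue bounded below (using $A''>0$, $\|\beta_0\|_2 = O(1)$, and $\lambda_{\min}(\mathbb{E}[\x\x^\top]) \geq C_\lambda$ from \textbf{A\ref{a1'}'}); standard sub-gaussian restricted eigenvalue results in the style of Rudelson--Zhou or Raskutti--Wainwright--Yu then give $\Delta^\top H(\beta_0)\Delta \geq 2\alpha_s\|\Delta\|_2^2 - \tau_1\frac{\log p}{n}\|\Delta\|_1^2$ with high probability once $n \gtrsim \log(1/\epsilon)$.

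The principal technical obstacle is controlling the tail term $\frac{1}{n}\sum_i A''(\x_i^\top\beta_0)\mathbbm{1}\{|\x_i^\top\Delta|>M\}(\x_i^\top\Delta)^2$ uniformly over $\beta \in \B_2(1;\beta_0)$, i.e., bounding it by $\epsilon\|\Delta\|_2^2 + \eta\frac{\log p}{n}\|\Delta\|_1^2$ for some small $\epsilon$. At a fixed $\Delta$, sub-gaussian concentration of $\x^\top\Delta$ (with parameter at most $K_X\|\Delta\|_2 \leq K_X$) makes the expected contribution exponentially small in $M$, but a truly uniform bound would naively demand a covering of $\B_2(1)$ with exponentially many points in $p$. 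Here the entrywise bound $\|\x_i\|_\infty \leq C_X$ from \textbf{A\ref{a1'}'} is crucial: it gives the deterministic estimate $|\x_i^\top\Delta| \leq C_X\|\Delta\|_1$, so on the region $\{\|\Delta\|_1 \leq M/C_X\}$ the indicator vanishes identically and the tail drops out, while on $\{\|\Delta\|_1 > M/C_X\}$ the $\tau_s \frac{\log p}{n}\|\Delta\|_1^2$ slack on the right-hand side of \eqref{eq:prop5.2} can absorb the contribution via a Bernstein-type concentration bound combined with a peeling argument over dyadic shells of $\|\Delta\|_1/\|\Delta\|_2$. Combining these pieces with the RE bound for $H(\beta_0)$ and choosing $M$ as a fixed model-dependent constant yields the asserted RSC inequality with explicit constants $\alpha_s$ and $\tau_s$.
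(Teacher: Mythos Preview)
Your proposal is correct and follows the same essential route as the paper: both observe that the $T(\z_i)$'s cancel in the gradient difference, reducing the claim to a restricted-eigenvalue statement for the canonical-GLM quadratic form $\frac{1}{n}\sum_i A''(\cdot)(\x_i^\top\Delta)^2$. The paper's proof is shorter only because it outsources the remaining work, citing Proposition~2 of \citet{Negahban2012-bd} directly and then applying the arithmetic inequality $2ab\leq (a+b)^2$ to convert their bound $\kappa_1\|\Delta\|_2\bigl(\|\Delta\|_2 - \kappa_2\sqrt{\log p/n}\,\|\Delta\|_1\bigr)$ into the stated form. What you sketch---the truncation on $|\x_i^\top\Delta|$, the RE bound for $H(\beta_0)$, and the peeling control of the tail---is precisely the content of that cited proposition, so you are rederiving rather than bypassing it. One minor point: your multiplicative comparison $A''(u+v)\geq A''(u)e^{-|v|}$ is specific to the logistic $A$, whereas the proposition is stated for general $(A,a,b)$; the Negahban--Ravikumar--Wainwright--Yu argument that the paper cites handles the general canonical case under mild conditions on $A''$, so if you want full generality you should either invoke that result or replace your logistic-specific inequality with the standard lower bound $A''(\x_i^\top\beta_0 + t\x_i^\top\Delta)\geq \inf_{|s|\leq c}A''(s)$ on the truncation event, which is how the cited proof proceeds.
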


The key observation to establish the RSC result \eqref{eq:prop5.2} is that the form of $\L_n^{s}(\beta)$ coincides with the negative log-likelihood function of a generalized linear model with the canonical link. Although $\P(T(\z)|\x)$ does not belong to the GLM family, the role of $T(\z)$ is limited in establishing the restricted strong convexity, and the proof for the generalized linear model with the canonical link in \citet{Negahban2012-bd} can be almost applied directly. More details are provided in Appendix \ref{supp_sec:prop:5.2}.

Now we state the following results regarding $\ell_1$ and $\ell_2$ error bounds. The first part of the theorem---for the $\ell_1$ and $\ell_2$ error bounds of the non-convex estimator---is a modification of Theorem 1 in \citet{Loh2017-jk}, where Theorem 1 in \citet{Loh2017-jk} established error bounds for a stationary point under the RSC condition with a different tolerance function $\tau_{n,p}(t) = \tau (\log p/n)t^2$. The $\ell_1$ and $\ell_2$ error bounds for the convex estimator can be obtained by applying Theorem 1 in \citet{Negahban2012-bd}. To apply Theorem 1 in \citet{Negahban2012-bd}, we show that the RSC condition \eqref{eq:prop5.2} implies the RSC condition in \citet{Negahban2012-bd}. We defer the detailed discussion to the Appendix \ref{supp_sec:thm5.1}.

%% Theorem 5.1
\begin{theorem}[$\ell_1$ and $\ell_2$ error bound]\label{thm:5.1}
Assume $\L_n^\ell$ and $\L_n^s$ satisfy the RSC conditions \eqref{eq:prop5.1} and \eqref{eq:prop5.2} and also assume the high-dimensional regime as in the Proposition \ref{prop:5.1}. 
\begin{enumerate}
	\item If $\lambda_\ell \geq 4 \max\{\|\triangledown \L_n^\ell(\beta_0)\|_\infty, \tau_\ell \sqrt{\frac{\log p}{n}}\}$, then, 
% Inequality 1
\begin{align}
&\|\widetilde{\beta}_\ell^H-\beta_0\|_2 \leq c_1 \frac{\sqrt{s_0}\lambda_\ell}{\alpha_{\ell}} \quad  and  \quad \|\widetilde{\beta}_\ell^H-\beta_0\|_1 \leq 4c_1 \frac{s_0\lambda_\ell}{\alpha_{\ell}}\label{eq:l1_l2_errbounnds-1}.
\end{align}
\item If $\lambda_s \geq 2 \|\triangledown \L_n^s(\beta_0)\|_\infty$ and $n \geq (32 \tau_s / \alpha_s) s_0 \log p$, then
% Inequality 2
\begin{align}
&\|\widehat{\beta}_s^H-\beta_0\|_2 \leq c_2 \frac{\sqrt{s_0}\lambda_s}{\alpha_s} \quad and \quad \|\widehat{\beta}_s^H-\beta_0\|_1 \leq 4c_2 \frac{s_0\lambda_s}{\alpha_s}\label{eq:l1_l2_errbounnds-2}
\end{align}
\end{enumerate}
Here $c_1, c_2>0$ are generic constants and $s_0 := \|\beta_0\|_0$.
\end{theorem}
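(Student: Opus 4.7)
\textbf{Proof proposal for Theorem \ref{thm:5.1}.} The plan is to handle the two estimators by essentially the same template---first-order optimality, then subgradient inequality for $\|\cdot\|_1$, then the RSC condition to convert an inner-product lower bound into a quadratic lower bound on $\|v\|_2^2$ where $v:=\widehat\beta-\beta_0$---but with the structural differences between a stationary point of a non-convex objective and a global minimum of a convex one, and between the linear and quadratic tolerances in the two RSC conditions of Propositions \ref{prop:5.1} and \ref{prop:5.2}.

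For part (1), I would start from the variational characterization of a stationary point $\widetilde\beta_\ell^H\in\B_2(r)$ of $\L_n^\ell+\lambda_\ell\|\cdot\|_1$: there exists $z\in\partial\|\widetilde\beta_\ell^H\|_1$ with $\langle\nabla\L_n^\ell(\widetilde\beta_\ell^H)+\lambda_\ell z,\beta-\widetilde\beta_\ell^H\rangle\ge0$ for all $\beta\in\B_2(r)$. Taking $\beta=\beta_0$ (interior) and applying the subgradient inequality $\langle z,\beta_0-\widetilde\beta_\ell^H\rangle\le\|\beta_0\|_1-\|\widetilde\beta_\ell^H\|_1$ gives
\[
\langle\nabla\L_n^\ell(\widetilde\beta_\ell^H),v\rangle\le\lambda_\ell\bigl(\|\beta_0\|_1-\|\widetilde\beta_\ell^H\|_1\bigr).
\]
Combining with the RSC bound \eqref{eq:prop5.1} and Hölder on $\langle\nabla\L_n^\ell(\beta_0),v\rangle$ yields
\[
\alpha_\ell\|v\|_2^2\le\Bigl(\tau_\ell\sqrt{\log p/n}+\|\nabla\L_n^\ell(\beta_0)\|_\infty\Bigr)\|v\|_1+\lambda_\ell\bigl(\|\beta_0\|_1-\|\widetilde\beta_\ell^H\|_1\bigr).
\]
The choice $\lambda_\ell\ge 4\max\{\|\nabla\L_n^\ell(\beta_0)\|_\infty,\tau_\ell\sqrt{\log p/n}\}$ absorbs the tolerance term into $(\lambda_\ell/2)\|v\|_1$; then, using $\|\beta_0\|_1-\|\widetilde\beta_\ell^H\|_1\le\|v_S\|_1-\|v_{S^c}\|_1$ for $S=\mathrm{supp}(\beta_0)$, the cone inclusion $\|v_{S^c}\|_1\le 3\|v_S\|_1$ pops out by nonnegativity of the left-hand side. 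The $\ell_2$ bound then follows from $\|v\|_1\le4\sqrt{s_0}\|v\|_2$, and the $\ell_1$ bound follows by reinserting.

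For part (2), the KKT condition at the unconstrained convex minimum $\widehat\beta_s^H$ reads $\nabla\L_n^s(\widehat\beta_s^H)=-\lambda_s z$ with $z\in\partial\|\widehat\beta_s^H\|_1$, so the same subgradient manipulation gives $\langle\nabla\L_n^s(\widehat\beta_s^H),v\rangle\le\lambda_s(\|\beta_0\|_1-\|\widehat\beta_s^H\|_1)$. Apply \eqref{eq:prop5.2} together with $\lambda_s\ge2\|\nabla\L_n^s(\beta_0)\|_\infty$ to obtain
\[
\alpha_s\|v\|_2^2\le\tau_s\frac{\log p}{n}\|v\|_1^2+\frac{3\lambda_s}{2}\|v_S\|_1-\frac{\lambda_s}{2}\|v_{S^c}\|_1,
\]
which again forces $v$ into the cone, hence $\|v\|_1^2\le16 s_0\|v\|_2^2$. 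The sample-size requirement $n\ge(32\tau_s/\alpha_s)s_0\log p$ exactly makes $16\tau_s s_0\log p/n\le\alpha_s/2$, absorbing the quadratic tolerance into the RSC curvature; the remaining inequality $(\alpha_s/2)\|v\|_2^2\le(3\lambda_s/2)\sqrt{s_0}\|v\|_2$ delivers both stated bounds.

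The step I expect to be delicate is the localization for part (2): Proposition \ref{prop:5.2} only guarantees RSC on the unit ball $\B_2(1;\beta_0)$, so the chain above is valid only if $\|v\|_2\le1$. I would resolve this by the standard convex-peeling argument---use convexity of $\L_n^s+\lambda_s\|\cdot\|_1$ to show that if the unconstrained minimizer were outside $\B_2(1;\beta_0)$, then its projection onto the boundary would still satisfy the cone condition and the bound just derived, producing a strict contradiction with $\|v\|_2=1$. For part (1) the analogue is slightly cleaner because $\widetilde\beta_\ell^H$ already lives in $\B_2(r)$ by construction and the RSC \eqref{eq:prop5.1} holds throughout that ball, but I still need to check that the modification from Loh's quadratic tolerance $\tau(\log p/n)t^2$ to our linear tolerance $\tau_\ell\sqrt{\log p/n}\,t$ only changes constants in the derivation of the cone inclusion, as indicated by the authors' remark on modifying Theorem 1 of \citet{Loh2017-jk}.
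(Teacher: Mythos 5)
Your part (1) is essentially the paper's proof: first-order optimality of the stationary point against the feasible $\beta_0$, the subgradient inequality, H\"older, the linear-tolerance RSC, and absorption of $\tau_\ell\sqrt{\log p/n}+\|\triangledown\L_n^\ell(\beta_0)\|_\infty$ into $\lambda_\ell/2$, yielding $\alpha_\ell\|v\|_2^2\le\tfrac{3\lambda_\ell}{2}\|v_S\|_1-\tfrac{\lambda_\ell}{2}\|v_{S^c}\|_1$ and hence the cone, the $\ell_2$ bound, and the $\ell_1$ bound. Your worry about the linear versus quadratic tolerance is unfounded in the direction you fear: the linear tolerance only makes the absorption cleaner.

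For part (2) you take a genuinely different route. The paper does not run the gradient-form argument directly; it integrates the gradient-form RSC \eqref{eq:prop5.2} along the segment from $\beta_0$ to $\beta$ (Lemma \ref{supp_lem:RSC}) to obtain the loss-difference RSC of \citet{Negahban2012-bd} on the cone intersected with $\B_2(\delta;\beta_0)$, using $n\ge(32\tau_s/\alpha_s)s_0\log p$ to absorb the quadratic tolerance there, and then simply invokes their Theorem 1 (which internally handles localization). Your direct KKT-plus-RSC derivation reaches the same inequality and is self-contained, and your proposed peeling argument is the standard way to supply the localization that Negahban's theorem provides for free. One step of yours is stated too loosely, however: the displayed inequality $\alpha_s\|v\|_2^2\le\tau_s\tfrac{\log p}{n}\|v\|_1^2+\tfrac{3\lambda_s}{2}\|v_S\|_1-\tfrac{\lambda_s}{2}\|v_{S^c}\|_1$ does \emph{not} "again force $v$ into the cone" the way the linear-tolerance inequality does in part (1), because the nonnegativity of the left-hand side no longer kills the $\|v_{S^c}\|_1$ term---the quadratic tolerance $\tau_s(\log p/n)\|v\|_1^2$ contains $\|v_{S^c}\|_1^2$ and could a priori dominate. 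You need to establish the cone condition separately, via the standard basic inequality $\L_n^s(\widehat\beta_s^H)+\lambda_s\|\widehat\beta_s^H\|_1\le\L_n^s(\beta_0)+\lambda_s\|\beta_0\|_1$ combined with convexity of $\L_n^s$ and $\lambda_s\ge2\|\triangledown\L_n^s(\beta_0)\|_\infty$ (this is the "well known" Lasso cone fact the paper cites), and only then use $\|v\|_1\le4\sqrt{s_0}\|v\|_2$ to absorb the quadratic tolerance with the stated sample-size condition. With that repair, your argument goes through and trades the paper's appeal to an external theorem for a longer but self-contained derivation.
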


In particular, if $\|\triangledown \L_n^\ell(\beta_0)\|_\infty,\|\triangledown \L_n^s(\beta_0)\|_\infty = O (\sqrt{\frac{\log p}{n}}) $ w.h.p, both estimators achieve the minimax-optimal error rates with the choices of $\lambda_\ell, \lambda_s \asymp \sqrt{\frac{\log p}{n}}$. In the following Corollary \ref{cor:3}, we summarize the results about error bounds for the noisy labels model.
\begin{corollary}\label{cor:3}
	Suppose a sample $(\x_i,\z_i)_{i=1}^n$ is from a {\bf (GLM)} with $(\log(1+\exp(\cdot)),g_{LN})$ and $\z_i \in \{0,1\}$. Assume the high-dimensional regime as in the Proposition \ref{prop:5.1}, the random design condition {\bf A\ref{a1'}'}, and the boundedness of the signal {\bf A\ref{a4}}. For the choices of $\lambda_\ell, \lambda_s \asymp \sqrt{\frac{\log p}{n}}$, it holds that
	\begin{align*}
	\|\widetilde{\beta}_\ell^H-\beta_0\|_2 \leq c_1 \frac{\sqrt{s_0}\lambda_\ell}{\alpha_{\ell}} \quad \mbox{and}\quad
		\|\widehat{\beta}_s^H-\beta_0\|_2 \leq c_2 \frac{\sqrt{s_0}\lambda_s}{\alpha_s}
	\end{align*}
	with probability at least $1-\epsilon$, given a sufficiently large sample size $ n \geq C (1/\epsilon)^{1/7}$, for a constant $C$ which only depends on the model parameters.
\end{corollary}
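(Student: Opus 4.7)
The plan is to realize Corollary \ref{cor:3} as a straightforward specialization of Theorem \ref{thm:5.1} to the label-noise GLM, so the bulk of the work is verifying the hypotheses of Propositions \ref{prop:5.1} and \ref{prop:5.2} and then controlling the two gradient-at-truth quantities $\|\triangledown \L_n^\ell(\beta_0)\|_\infty$ and $\|\triangledown \L_n^s(\beta_0)\|_\infty$ uniformly over the $p$ coordinates.

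First I would verify the analytic hypotheses \textbf{A\ref{a2}} and \textbf{A\ref{a3}} for the label-noise model with $A(t)=\log(1+e^t)$ and the explicit $h_{LN}=(A')^{-1}\circ g_{LN}^{-1}$ from \eqref{def:g_nl}. A direct computation gives $g_{LN}^{-1}(s)=(1-\rho_1-\rho_0)\sigma(s)+\rho_0$, so $h_{LN}(s)=\mathrm{logit}\bigl((1-\rho_1-\rho_0)\sigma(s)+\rho_0\bigr)$. Writing $\mu_z(s)=g_{LN}^{-1}(s)\in[\rho_0,1-\rho_1]$, uniform boundedness of $\mu_z$ away from $0$ and $1$ (since $\rho_0+\rho_1<1$) forces $h_{LN}'$, $h_{LN}''$, $A'\circ h_{LN}$, and $A''\circ h_{LN}$ to be uniformly bounded, so $\ell'$, $\rho_I$, and $\rho_R$ from \eqref{eq:l'}--\eqref{eq:l''} are all bounded, and $\ell''$ is Lipschitz in its first argument; this gives \textbf{A\ref{a2}}. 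For \textbf{A\ref{a3}} the key observation is that both $h_{LN}'(t)$ and $h_{LN}''(t)$ decay exponentially as $|t|\to\infty$ (they involve the derivatives of $\sigma$), so $|\ell''(t,z)\cdot t|$ is uniformly bounded on $\mathbb{R}\times\{0,1\}$.

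With \textbf{A\ref{a2}}--\textbf{A\ref{a4}} established, Proposition \ref{prop:5.1} furnishes the RSC \eqref{eq:prop5.1} for $\L_n^\ell$, and because the label-noise model is exactly \textbf{(GLM-L)} with parameters $(\log(1+\exp(\cdot)),1-\rho_1-\rho_0,\rho_0)$ (as shown right after \eqref{def:g_nl}), Proposition \ref{prop:5.2} applies directly to deliver \eqref{eq:prop5.2} for $\L_n^s$, each with probability at least $1-\epsilon/3$ once $n\gtrsim(1/\epsilon)^{1/7}$. Next I would control the two gradients at the truth. For the non-convex loss, $\triangledown\L_n^\ell(\beta_0)=n^{-1}\sum_i \ell'(\x_i^\top\beta_0,\z_i)\x_i$, and since $\E[\z_i\mid\x_i]=A'(h_{LN}(\x_i^\top\beta_0))$ the summand is mean zero conditional on $\x_i$; moreover $\ell'$ is bounded (by A\ref{a2}) and $\|\x_i\|_\infty\leq C_X$ (by A\ref{a1'}'), so each coordinate of the gradient is an average of bounded, mean-zero random variables. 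A Hoeffding bound coordinate-wise, followed by a union bound over the $p$ coordinates, yields $\|\triangledown\L_n^\ell(\beta_0)\|_\infty\lesssim\sqrt{\log p/n}$ with probability $1-\epsilon/3$. The same argument applied to $\triangledown\L_n^s(\beta_0)=n^{-1}\sum_i(A'(\x_i^\top\beta_0)-T(\z_i))\x_i$, using $\E[T(\z_i)\mid\x_i]=A'(\x_i^\top\beta_0)$ and the boundedness of $T(\z_i)=(\z_i-\rho_0)/(1-\rho_1-\rho_0)$, gives the matching $\|\triangledown\L_n^s(\beta_0)\|_\infty\lesssim\sqrt{\log p/n}$ bound.

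At this point the hypotheses of Theorem \ref{thm:5.1} hold on the intersection event (of probability at least $1-\epsilon$) with $\lambda_\ell,\lambda_s\asymp\sqrt{\log p/n}$, and both $\ell_2$ bounds follow at the claimed rates. The main obstacle I anticipate is verifying \textbf{A\ref{a3}} rigorously: the cancellation that makes $|\ell''(t,z)\,t|$ bounded comes from the exponential decay of $h_{LN}'(t)^2$ and $h_{LN}''(t)$ as $|t|\to\infty$, and a careful asymptotic expansion is needed---in the regime $t\to+\infty$ one has $\mu_z(t)\to 1-\rho_1$ from below so $A'(h_{LN}(t))-z$ and $h_{LN}''(t)$ must be tracked jointly to avoid a spurious linear-in-$t$ blow-up. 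Once this bound is in hand, assembling the remaining pieces is routine.
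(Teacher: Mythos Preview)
Your overall strategy---verify \textbf{A\ref{a2}}--\textbf{A\ref{a3}}, invoke Propositions~\ref{prop:5.1}--\ref{prop:5.2}, bound $\|\triangledown\L_n^\ell(\beta_0)\|_\infty$ and $\|\triangledown\L_n^s(\beta_0)\|_\infty$ by a sub-Gaussian (or Hoeffding) plus union bound, then apply Theorem~\ref{thm:5.1}---is exactly the route the paper takes. The gradient bounds and the assembly step are fine.

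However, your verification of \textbf{A\ref{a2}} and especially \textbf{A\ref{a3}} has a real gap in the boundary cases $\rho_0=0$ or $\rho_1=0$, which include the PU-learning setting that motivates the paper. Your argument for \textbf{A\ref{a2}} hinges on ``$\mu_z\in[\rho_0,1-\rho_1]$ is bounded away from $0$ and $1$ since $\rho_0+\rho_1<1$''; this is false---boundedness away from $\{0,1\}$ requires $\rho_0>0$ \emph{and} $\rho_1>0$ separately. When $\rho_0=0$, $\mu_z(t)\to 0$ as $t\to-\infty$ and $h_{LN}(t)\to-\infty$; nonetheless $h_{LN}'$ is still bounded by~$1$, but via the algebraic inequalities $\frac{a\mu}{a\mu+b}\le 1$ and $\frac{a(1-\mu)}{1-a\mu-b}<1$ (the paper's Lemma~\ref{lem:h}), not via the mechanism you describe. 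More seriously, for \textbf{A\ref{a3}} your claim that $h_{LN}'$ and $h_{LN}''$ decay exponentially is simply wrong in those cases: e.g.\ when $\rho_1=0$ one has $h_{LN}'(t)\to 1$ as $t\to+\infty$, not $0$. The paper's proof instead rewrites $t\rho_I(t)=a\,t\mu(t)(1-\mu(t))\,h_{LN}'(t)$ and uses the elementary bound $|t\mu(t)(1-\mu(t))|\le 2$; for $t\rho_R(t,z)$ it reduces to bounding $|t\,h_{LN}''(t)|$ and then splits into three cases ($\rho_0,\rho_1>0$; $\rho_0=0$; $\rho_1=0$), exploiting cancellations in the explicit formula for $h_{LN}''$ so that in each boundary case one of two potentially troublesome terms vanishes identically. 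Once you replace your exponential-decay heuristic with this algebraic argument, the rest of your plan goes through.
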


Notably, both estimators achieve the same optimal rates although there could still be a constant gap between the two estimators due to the different multipliers. We compare the performance of the two estimators empirically in Section \ref{sec:emp}.

%% High-D HT
\subsection{Hypothesis Testing}
Sparse estimators are known to have intractable limiting distributions even in the low-dimension regime \citep{Knight2000-lj}. Nonetheless, it is of interest to quantify the uncertainty in the obtained estimators and test the significance of features. In this section, we discuss how we can carry out a test using the point estimates discussed in the previous section. 
 
We take a de-biasing approach and obtain a one-step estimator whose direction is based on an estimating equation $\psi_n(\beta):= \frac{1}{n}\sum_{i=1}^n \psi(\x_i^\top \beta,\z_i)\x_i$. For a function $\psi : \R \rightarrow \R$, we consider $\psi$ satisfying the following two properties:
\begin{enumerate}
	\item $\psi$ has an expectation of zero at $\beta=\beta_0$:
	\begin{align}\label{eq:estimating_eq}
	\E_{\beta_0}[\psi(\x^\top  \beta_0,\z) \x ] = 0,
	\end{align}
	\item  The derivative of $\psi$ with respect to its first argument can be decomposed into the sum of $\psi'_R$ and $\psi'_I$, 
\begin{align}\label{eq:psi'}
\psi'(t,z) = \psi'_I(t)+\psi'_R(t,z)
\end{align}
where $\psi'_I$ and $\psi'_R$ satisfy $\psi'_I (t) > 0, \forall t$ and $ \E_{\beta_0}[\psi'_R(\x^\top \beta_0,\z)] = 0$.
\end{enumerate}
Two particular choices of $\psi$ that we will consider subsequently will be derivatives of the log-likelihood loss and the surrogate loss,
\begin{align*}
	\psi^\ell(\x^\top \beta,\z)&:=\, \ell'\,(\x^\top \beta,\z)=\{A'(h(\x^\top \beta))-\z\}h'(\x^\top \beta)\\	\psi^s(\x^\top \beta,\z)&:= \ell'_s(\x^\top \beta,\z) = A'(\x^\top \beta)-T(\z),
\end{align*}
where $\ell'(t,z)$ is the derivative (with respect to a linear predictor) of the log-likelihood loss defined in \eqref{def:Ll}, and $\ell'_s(t,z)$ is the derivative of the surrogate loss defined in \eqref{def:ls}. Obviously, both $\psi^\ell$ and $\psi^s$ satisfy \eqref{eq:estimating_eq}. Also, when $\psi = \psi^\ell$, the choices of $\psi'_I(t) = \rho_I(t)$ and $\psi'_R(t,z) = \rho_R(t,z)$, for $\rho_I$ and $\rho_R$ are defined in \eqref{eq:l''}, will satisfy \eqref{eq:psi'}. On the other hand, if $\psi = \psi^s$, the choices of $\psi_I'(t) = A''(t)$ and $\psi'_R \equiv 0$ will satisfy \eqref{eq:psi'}. 

The derivative of the estimating equation plays an important role in determining the asymptotic variances of the generalized estimating equation~(GEE) estimators \citep{Godambe1960-rw}. We define an empirical Jacobian matrix $\psi'_{I,n}(\beta)$ of $\E[\psi_n(\beta)]$ and the 
inverse of $\E[\psi'_{I,n}(\beta_0)]$ as 
\begin{align}
\psi'_{I,n}(\beta) := \frac{1}{n} \sum_{i=1}^n \psi_I'(\x_i^\top \beta)\x_i\x_i^\top  \quad \mbox{and} \quad
\Theta(\psi) := \E[\psi_I'(\x^\top \beta_0)\x\x^\top ]^{-1},
\end{align}
We note that the minimum eigenvalue of $\E[\psi_I'(\x^\top \beta_0)\x\x^\top ]$ can be shown to be bounded above by a positive constant under our assumptions, so that $\Theta(\psi)$ is well-defined (see Appendix \ref{pf:debiasing}).
%% Subsection -- de-biasing
\subsection{De-Biasing}
For an initial estimate $\widehat{\beta}$, we define a de-biased estimator using $\psi$ as follows, 
\begin{align}\label{def:debiased_est}
\widehat{\beta}^{\textnormal{db}}(\psi) := \widehat{\beta} -\widehat{\Theta}(\psi) \psi_n(\widehat{\beta}) \end{align}
which is a one-step estimator starting at $\widehat{\beta}$. Here a matrix $\widehat{\Theta}(\psi)$ is an approximation of $\Theta(\psi)$. 

We make the following assumption about the sparsity level of $\beta_0$ and $\Theta(\psi)$ similarly as in \citet{Van_de_Geer2014-if}. We define the column sparsity level of $\Theta(\psi)$ (except the diagonal entries) as $s_* :=\max_{1\leq j\leq p} \|\Theta(\psi)_{j,-j}\|_0$, and recall the definition $s_0 := \|\beta_0\|_0$.
\begin{assumption}\label{a6}
$s_*, s_0  = o(\sqrt{n}/\log p ),$ and $\|\X \Theta(\psi)_j\|_\infty = O_p(1), \forall j$
\end{assumption}

Also we we state conditions regarding the estimation equation $\psi$. In particular, we assume that $\psi$ and $\psi'$ are bounded and  $\psi'$ is also Lipschitz continuous with respect to its first argument. Precisely,
\begin{assumption}\label{a7}
	(Lipschitz continuity of $\psi'$ and boundedness of $\psi$ and $\psi'$)
	Both $\psi_R'$ and $\psi_I'$ are Lipschitz with respect to its first argument, i.e.,
	\begin{align*}
	|\psi_R'(a,z)-\psi_R'(a',z)|&\leq L_\psi |a-a'|, \forall z \quad \mbox{and} \quad
	|\psi_I'(a)-\psi_I'(a')| \leq L_\psi |a-a'|.
\end{align*}
In particular, $\psi'$ is Lipschitz with Lipschitz constant $2L_\psi$. Furthermore, there exists $C_\psi <\infty$ such that $\max\{ \|\psi\|_\infty,\|\psi'_I\|_\infty, \|\psi'_R\|_\infty \} \leq C_\psi $.
\end{assumption}

Now we state Theorem \ref{thm:5.2} which gives the asymptotic distributions of de-biased estimators.
\begin{theorem}\label{thm:5.2} 
Assume the random design condition {\bf A\ref{a1'}'}, {\bf  A\ref{a6}-A\ref{a7}}, and $\|\beta_0\|_2 = O(1)$. Suppose $\widehat{\Theta}(\psi)$ is chosen to satisfy $\|\widehat{\Theta}(\psi)_j-\Theta(\psi)_j\|_1 = o_p(\sqrt{\frac{1}{\log p}})$ and $\|e_j - \psi'_{I,n}(\widehat{\beta}) \widehat{\Theta}(\psi)_j\|_\infty = O_p(\sqrt{\frac{\log p}{n}}), \forall j.$ 
For an initial estimate $\widehat{\beta}$ satisfying $\ell_1$ and $\ell_2$ bounds $\|\widehat{\beta}-\beta_0\|_1 = O_p(s_0 \sqrt{\frac{\log p}{n}} )$ and $\|\widehat{\beta}-\beta_0\|_2^2 = O_p(s_0 \frac{\log p}{n} )$, and $\|\widehat{\beta} - \beta_0\|_1 /\|\widehat{\beta} - \beta_0 \|_2 = O(\sqrt{s_0})$ a.s.,
we have for any $j \in \{1,\dots,p\}$,
$$\sqrt{n}(\widehat{\beta}^{\textnormal{db}}(\psi)_j -\beta_{0j})/\sigma(\psi)_j = Z_j+o_p(1)$$
for $Z_j$ which converges weakly to a $\mathcal{N}(0,1)$ distribution and for
\begin{equation*}
\sigma(\psi)_j := \sqrt{ \Theta(\psi)_j^\top  \E[\psi(\x^\top \beta_0,\z)^2\x\x^\top ]\Theta(\psi)_j}.
\end{equation*}
Moreover, if the bound in {\bf A\ref{a6}} and the conditions in the theorem statement regarding $\widehat{\Theta}(\psi)_j$ hold uniformly in $j$, then the result also holds uniformly in $j$.
\end{theorem}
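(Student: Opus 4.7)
\textbf{Proof proposal for Theorem \ref{thm:5.2}.}

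The plan is to carry out the standard de-biasing expansion, identify a sum of i.i.d.\ terms as the leading order contribution, and control every remainder using the $\ell_1$/$\ell_2$ rates on $\widehat\beta$, the approximation condition on $\widehat\Theta(\psi)$, and the Lipschitz/boundedness assumption {\bf A\ref{a7}}. Write $\Delta:=\widehat\beta-\beta_0$ and $\widehat\Theta:=\widehat\Theta(\psi)$, $\Theta:=\Theta(\psi)$. By the definition of $\widehat\beta^{\textnormal{db}}(\psi)$ and the integral form of the mean value theorem applied to $\psi_n$, we obtain
\begin{align*}
\widehat\beta^{\textnormal{db}}(\psi)-\beta_0
&= -\widehat\Theta\,\psi_n(\beta_0) \;+\; \bigl(I-\widehat\Theta\,\psi'_{I,n}(\widehat\beta)\bigr)\Delta \;+\; \widehat\Theta\bigl(\psi'_{I,n}(\widehat\beta)-M_n\bigr)\Delta,
\end{align*}
where $M_n:=\int_0^1 \nabla\psi_n(\beta_0+t\Delta)\,dt$. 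Splitting the first term as $-\Theta\,\psi_n(\beta_0)-(\widehat\Theta-\Theta)\psi_n(\beta_0)$, the $j$-th coordinate becomes
\[
\sqrt{n}(\widehat\beta^{\textnormal{db}}(\psi)_j-\beta_{0j}) = -\sqrt{n}\,\Theta_j^\top\psi_n(\beta_0) \;+\; \mathrm{T}_1 \;+\; \mathrm{T}_2 \;+\; \mathrm{T}_3,
\]
with $\mathrm{T}_1:=-\sqrt{n}(\widehat\Theta_j-\Theta_j)^\top\psi_n(\beta_0)$, $\mathrm{T}_2:=\sqrt{n}(e_j-\psi'_{I,n}(\widehat\beta)\widehat\Theta_j)^\top\Delta$, and $\mathrm{T}_3:=\sqrt{n}\,\widehat\Theta_j^\top(\psi'_{I,n}(\widehat\beta)-M_n)\Delta$.

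For the leading term, I would apply the Lindeberg--L\'evy CLT to the i.i.d.\ summands $\psi(\x_i^\top\beta_0,\z_i)(\x_i^\top\Theta_j)$, which are mean zero by \eqref{eq:estimating_eq} and have variance $\sigma(\psi)_j^2$ under the sub-Gaussian tail {\bf A\ref{a1'}'} and boundedness of $\psi$ in {\bf A\ref{a7}}; this gives $-\sqrt{n}\Theta_j^\top\psi_n(\beta_0)/\sigma(\psi)_j\overset{d}{\to}\mathcal{N}(0,1)$. For $\mathrm{T}_1$, bound via H\"older: $|\mathrm{T}_1|\le \sqrt{n}\|\widehat\Theta_j-\Theta_j\|_1\|\psi_n(\beta_0)\|_\infty$, and since $\psi_n(\beta_0)$ is a sample average of bounded mean-zero vectors, a union bound (Hoeffding) gives $\|\psi_n(\beta_0)\|_\infty=O_p(\sqrt{\log p/n})$, so $|\mathrm{T}_1|=o_p(1/\sqrt{\log p})\cdot O_p(\sqrt{\log p})=o_p(1)$. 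For $\mathrm{T}_2$, the assumed $\|e_j-\psi'_{I,n}(\widehat\beta)\widehat\Theta_j\|_\infty=O_p(\sqrt{\log p/n})$ together with $\|\Delta\|_1=O_p(s_0\sqrt{\log p/n})$ yields $|\mathrm{T}_2|=O_p(s_0\log p/\sqrt{n})=o_p(1)$ by {\bf A\ref{a6}}.

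The main obstacle is $\mathrm{T}_3$. I would split $\psi'_{I,n}(\widehat\beta)-M_n$ using $\nabla\psi_n=\psi'_{I,n}+\psi'_{R,n}$ into (i) a Lipschitz difference $\int_0^1[\psi'_{I,n}(\widehat\beta)-\psi'_{I,n}(\beta_0+t\Delta)]dt$, (ii) a centered empirical process term $\psi'_{R,n}(\beta_0)$, and (iii) another Lipschitz difference $\int_0^1[\psi'_{R,n}(\beta_0+t\Delta)-\psi'_{R,n}(\beta_0)]dt$. Contributions (i) and (iii) are handled by the Lipschitz bound in {\bf A\ref{a7}}, which yields pointwise estimates of the form $|\widehat\Theta_j^\top(\cdot)\Delta|\le L_\psi\,n^{-1}\sum_i (\x_i^\top\Delta)^2 |\x_i^\top\widehat\Theta_j|$; using $\|\x_i\|_\infty\le C_X$, the assumption $\|\X\Theta_j\|_\infty=O_p(1)$ together with $\|\widehat\Theta_j-\Theta_j\|_1=o_p(1)$ to propagate to $\widehat\Theta_j$, and the concentration $\Delta^\top\widehat\Sigma\Delta\le \Delta^\top\Sigma\Delta+\|\widehat\Sigma-\Sigma\|_{\max}\|\Delta\|_1^2=O_p(\|\Delta\|_2^2)+O_p(\|\Delta\|_1^2\sqrt{\log p/n})$, this becomes $O_p(s_0\log p/n)$, and multiplying by $\sqrt{n}$ yields $o_p(1)$ exactly under $s_0=o(\sqrt{n}/\log p)$. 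For (ii), $\psi'_{R,n}(\beta_0)$ is a sample average of conditionally mean-zero bounded entries (this is where one uses $\E_{\beta_0}[\psi'_R(\x^\top\beta_0,\z)\mid\x]=0$, which holds for both choices $\psi^\ell,\psi^s$), so $\|\psi'_{R,n}(\beta_0)\|_{\max}=O_p(\sqrt{\log p/n})$, giving $\sqrt{n}\,|\widehat\Theta_j^\top\psi'_{R,n}(\beta_0)\Delta|\le \sqrt{n}\|\widehat\Theta_j\|_1\|\psi'_{R,n}(\beta_0)\|_{\max}\|\Delta\|_1=o_p(1)$ after further splitting $\widehat\Theta_j=\Theta_j+(\widehat\Theta_j-\Theta_j)$ and using the sparse summation $\Theta_j^\top\psi'_{R,n}(\beta_0)$ (which concentrates at rate $\sqrt{\log p/n}$ entry-wise).

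Combining these four pieces, Slutsky's theorem gives the stated CLT, and uniformity in $j$ follows because every probability bound in the argument is either controlled by a union bound over $j\in\{1,\ldots,p\}$ or uses a hypothesis that the theorem already assumes uniformly in $j$.
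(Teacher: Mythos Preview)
Your decomposition and the handling of the leading CLT term, $\mathrm{T}_1$, $\mathrm{T}_2$, and the Lipschitz pieces (i) and (iii) of $\mathrm{T}_3$ are essentially the same as the paper's (the paper uses a one-point rather than an integral mean-value expansion and bounds $n^{-1}\|\X\Delta\|_2^2$ via a quadratic-form concentration from \citet{Loh2012-tl} rather than your $\|\widehat\Sigma-\Sigma\|_{\max}$ bound, but these are cosmetic). The substantive difference is your treatment of (ii), the bilinear form $\widehat\Theta_j^\top\psi'_{R,n}(\beta_0)\Delta$. The paper handles this by proving a separate sparse-bilinear concentration lemma: for $E^R=n^{-1}\sum_i\psi'_R(\x_i^\top\beta_0,\z_i)\x_i\x_i^\top$ it shows
\[
\sup_{\substack{u\in\B_1(\sqrt{s})\cap\B_2(1)\\ v\in\B_1(\sqrt{s'})\cap\B_2(1)}}|u^\top E^R v|=O_p\!\left(\sqrt{(s+s')\tfrac{\log p}{n}}\right)
\]
via Bernstein plus a covering argument, and then uses that $\widehat\Delta/\|\widehat\Delta\|_2$ and $\widehat\Theta_j/\|\widehat\Theta_j\|_2$ lie in such $\ell_1/\ell_2$ cones. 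Your route---split $\widehat\Theta_j=\Theta_j+(\widehat\Theta_j-\Theta_j)$, bound the first piece via $\|\Theta_j^\top\psi'_{R,n}(\beta_0)\|_\infty\|\Delta\|_1$ and the second via $\|\widehat\Theta_j-\Theta_j\|_1\|\psi'_{R,n}(\beta_0)\|_{\max}\|\Delta\|_1$---also works and is more elementary, but you should make one implicit step explicit: the claim that $\|\Theta_j^\top\psi'_{R,n}(\beta_0)\|_\infty=O_p(\sqrt{\log p/n})$ requires the summands $\psi'_R(\x_i^\top\beta_0,\z_i)(\x_i^\top\Theta_j)\x_{ik}$ to be sub-Gaussian with $O(1)$ norm, hence $\|\Theta_j\|_2=O(1)$. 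This is true but not assumed directly; the paper derives it by bounding the extreme eigenvalues of $\E[\psi'_I(\x^\top\beta_0)\x\x^\top]$ under {\bf A\ref{a1'}'} and {\bf A\ref{a7}}. Once that is filled in, your argument is complete; it trades the paper's lemma for a slightly more delicate rate accounting.
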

 We note that obtaining $\widehat{\Theta}(\psi)$ satisfying the conditions of Theorem \ref{thm:5.2} is possible by taking a similar approach as in \citet{Van_de_Geer2014-if} using node-wise regressions. In Appendix \ref{supp_sec:approx_Theta}, we provide more details about such construction. We also note that an initial estimate $\widehat{\beta}$ which satisfies the following cone condition, $\|(\widehat{\beta}-\beta_0)_{S^c}\|_1 \leq L\|(\widehat{\beta}-\beta_0)_S\|_1$ for some $L>0$ and $S \subseteq \{1,\dots,p\}$ such that $|S| = s_0$, also satisfies the $\ell_1$/$\ell_2$ ratio condition of the error vector in Theorem \ref{thm:5.2}, since $\|\widehat{\beta} - \beta_0\|_1 \leq (L+1)\sqrt{s_0} \|\widehat{\beta} - \beta_0\|_2$.
 The proof of Theorem \ref{thm:5.2} is deferred to Appendix \ref{pf:debiasing}. The main argument follows similar lines as in the proof of Theorem 3.1 in \citet{Van_de_Geer2014-if}, with additional arguments to handle the potential non-monotonicity of $\psi$, which can arise from a non-convex loss function (due to a non-canonical GLM). Finally, we state the following Corollary \ref{cor:4} for the asymptotic distributions of the de-biased estimators for the label noise model.

\begin{corollary}\label{cor:4}
Suppose we have a sample $(\x_i,\z_i)_{i=1}^n$ from a {\bf(GLM)} with parameters $(\log(1+\exp(\cdot)),g_{LN})$ and $\z_i \in \{0,1\}$. We assume the conditions of Proposition \ref{prop:5.1}. We also assume that $\widehat{\Theta}(\psi^\ell)$ and $\widehat{\Theta}(\psi^s)$ satisfy the conditions about $\widehat{\Theta}(\psi)$ in Theorem \ref{thm:5.2}, and {\bf A\ref{a6}} holds. We consider two de-biased estimators:
\begin{align*}
	\widehat{\beta}^{\textnormal{db}}_\ell  := \widetilde{\beta}_\ell^H-\widehat{\Theta}(\psi^\ell) \psi_n^\ell(\widetilde{\beta}_\ell^H) \quad \mbox{and} \quad
	\widehat{\beta}^{\textnormal{db}}_s     := \widehat{\beta}_s^H-\widehat{\Theta}(\psi^s) \psi_n^s(\widehat{\beta}_s^H).
\end{align*}
We then have, for any $j \in \{1,\dots,p\}$,
\begin{align*}
	\sqrt{n}(\widehat{\beta}^{\textnormal{db}}_{\ell,j} -\beta_{0j})/\sigma(\psi^\ell)_j = Z_j+o_p(1)\\
	\sqrt{n}(\widehat{\beta}^{\textnormal{db}}_{s,j} -\beta_{0j})/\sigma(\psi^s)_j = \widetilde{Z}_j+o_p(1)
\end{align*}
for $Z_j, \widetilde{Z}_j$ which converge weakly to a $\mathcal{N}(0,1)$ distribution and for,
\begin{align*}
\sigma(\psi^\ell)_j =\sqrt{ \I^\ell(\beta_0)^{-1}_{jj}} \quad \mbox{and} \quad \sigma(\psi^s)_j = \sqrt{ \I^{s}(\beta_0)^{-1}_{jj}}
\end{align*}
where $\I^\ell(\beta)$ and $\I^{s}(\beta)$ are defined in Proposition \ref{prop:4.2}.
\end{corollary}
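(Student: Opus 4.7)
The plan is to apply Theorem \ref{thm:5.2} twice, once with $\psi = \psi^\ell$ initialized at $\widetilde{\beta}_\ell^H$ and once with $\psi = \psi^s$ initialized at $\widehat{\beta}_s^H$, and then to identify the sandwich variance $\sigma(\psi)_j^2$ with the $(j,j)$-entry of the inverse information matrix from Proposition \ref{prop:4.2}. The hypotheses of the corollary already supply \textbf{A\ref{a1'}'}, \textbf{A\ref{a4}}, \textbf{A\ref{a6}}, and a suitable $\widehat{\Theta}(\psi^\ell), \widehat{\Theta}(\psi^s)$, so what remains is (i) to check that $\psi^\ell$ and $\psi^s$ satisfy \eqref{eq:estimating_eq}--\eqref{eq:psi'} and \textbf{A\ref{a7}} for the label-noise GLM, (ii) to verify that the initial estimators meet the $\ell_1$, $\ell_2$, and $\ell_1/\ell_2$-ratio requirements of Theorem \ref{thm:5.2}, and (iii) to compute the asymptotic variances explicitly.

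For (i), the zero-mean property \eqref{eq:estimating_eq} for $\psi^\ell$ holds because $\E[\z|\x] = A'(h_{LN}(\x^\top\beta_0))$, and for $\psi^s$ because $\E[T(\z)|\x] = A'(\x^\top\beta_0)$ by \eqref{eq:glm-l-mean}. The decomposition \eqref{eq:psi'} is immediate with $(\psi^{\ell\prime}_I, \psi^{\ell\prime}_R) = (\rho_I, \rho_R)$ and $(\psi^{s\prime}_I, \psi^{s\prime}_R) = (A'', 0)$. For $\psi^s$, boundedness and Lipschitz continuity of $\psi^s$ and $(\psi^s)'$ are immediate since $A'$ and $A''$ are bounded Lipschitz on $\R$ and $T(\z)$ takes only two values. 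For $\psi^\ell$, I would use the closed forms
\begin{equation*}
h'_{LN}(t) = \frac{(1-\rho_0-\rho_1)\mathcal{V}(\mu(t))}{\mathcal{V}(\mu_z(t))}, \qquad \rho_I(t) = \frac{(1-\rho_0-\rho_1)^2 \mathcal{V}(\mu(t))^2}{\mathcal{V}(\mu_z(t))},
\end{equation*}
and verify boundedness and Lipschitz continuity of $h_{LN}$, $h'_{LN}$, $h''_{LN}$ on $\R$ by an elementary analysis of the ratio $\mathcal{V}(\mu)/\mathcal{V}(\mu_z)$; under $\rho_0 + \rho_1 < 1$ these ratios stay bounded as $t \to \pm\infty$ (and under \textbf{A\ref{a4}}, $\x^\top\beta_0$ stays in a compact set). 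For (ii), Corollary \ref{cor:3} (with $\lambda_\ell, \lambda_s \asymp \sqrt{\log p/n}$) gives $\|\widehat{\beta}-\beta_0\|_2^2 = O_p(s_0\log p/n)$, and the $\ell_1$-bound in Theorem \ref{thm:5.1} gives $\|\widehat{\beta}-\beta_0\|_1 = O_p(s_0 \sqrt{\log p/n})$, for both estimators; the ratio $\|\widehat{\beta}-\beta_0\|_1/\|\widehat{\beta}-\beta_0\|_2 = O(\sqrt{s_0})$ a.s. follows from the cone condition on the error vector, which is a standard consequence of the Lasso subdifferential argument when $\lambda > 2\|\nabla\L_n(\beta_0)\|_\infty$, as pointed out after the statement of Theorem \ref{thm:5.2}.

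For (iii), a direct conditional calculation using $\mathrm{Var}(\z|\x) = \mathcal{V}(\mu_z(\x^\top\beta_0))$ and $A''(h_{LN}(t)) = \mathcal{V}(\mu_z(t))$ gives
\begin{equation*}
\E\bigl[\psi^\ell(\x^\top\beta_0,\z)^2\,\x\x^\top\bigr] = \E\bigl[h'_{LN}(\x^\top\beta_0)^2\,\mathcal{V}(\mu_z(\x^\top\beta_0))\,\x\x^\top\bigr] = \E[\rho_I(\x^\top\beta_0)\x\x^\top] = \Theta(\psi^\ell)^{-1},
\end{equation*}
so the sandwich collapses to $\sigma(\psi^\ell)_j^2 = \Theta(\psi^\ell)_{jj}$; plugging in the closed form for $\rho_I$ recovers $\Theta(\psi^\ell)^{-1} = \I^\ell(\beta_0)$. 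For the surrogate, $\E[\psi^s(\x^\top\beta_0,\z)^2 \mid \x] = \mathrm{Var}(T(\z)\mid\x) = \mathcal{V}(\mu_z(\x^\top\beta_0))/(1-\rho_0-\rho_1)^2$, so combining this with $\Theta(\psi^s)^{-1} = \E[\mathcal{V}(\mu(\x^\top\beta_0))\x\x^\top]$ produces a sandwich $\Theta(\psi^s)\,(1-\rho_0-\rho_1)^{-2} \E[\mathcal{V}(\mu_z(\x^\top\beta_0))\x\x^\top]\,\Theta(\psi^s)$ that exactly matches $\I^s(\beta_0)^{-1}$ in Proposition \ref{prop:4.2}.

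The main obstacle I anticipate is the careful verification of \textbf{A\ref{a7}} for $\psi^\ell$: because $h_{LN}$ is built from an inverse and a quotient of variance functions, each bound on $\|h'_{LN}\|_\infty$, $\|h''_{LN}\|_\infty$, and the Lipschitz constants of $\rho_I, \rho_R$ requires tracking the behavior of $\mathcal{V}(\mu_z(t))$ near its zeros; in the PU boundary case $\rho_0 = 0$ one must additionally verify that $\mathcal{V}(\mu(t))/\mathcal{V}(\mu_z(t)) \to 1/(1-\rho_1)$ as $t\to-\infty$ so that no blow-up occurs. All other pieces are routine: the rates are delivered by Corollary \ref{cor:3}, the matrices $\widehat{\Theta}(\psi^\ell), \widehat{\Theta}(\psi^s)$ are assumed to satisfy the nodewise-regression conditions, and the variance identification is a short computation. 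Once these are in place, Theorem \ref{thm:5.2} delivers both CLTs.
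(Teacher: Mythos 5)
Your proposal is correct and follows essentially the same route as the paper: the paper's proof likewise invokes Theorem \ref{thm:5.2} with $\psi^\ell$ and $\psi^s$, verifies the smoothness/boundedness conditions on $\psi$ exactly as in the proof of Corollary \ref{cor:2} (via the bounds on $h_{LN}'$, $h_{LN}''$, $h_{LN}'''$ in Lemma \ref{lem:h}, which handle the boundary cases $\rho_0=0$ or $\rho_1=0$ that you flag), obtains the rate conditions from Corollary \ref{cor:3}, and uses the cone condition on the error vectors for the $\ell_1/\ell_2$ ratio. Your explicit variance identification in step (iii) matches the computations already carried out in the proof of Proposition \ref{prop:4.1}.
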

The conditions about $\psi^\ell$ and $\psi^s$ can be checked similarly as in the proof of Corollary \ref{cor:2}. The rate conditions about the initial estimators can be checked by Corollary \ref{cor:3}. Also, it is well known that both $\widetilde{\beta}^H_\ell -\beta_0$ and $\widehat{\beta}^H_s -\beta_0$ belong to a cone $\{\Delta ; \|\Delta_S\|_1 \leq 3\|\Delta_{S^c}\|_1\}$ where $S \subseteq\{1,\dots,p\}$ is the support of $\beta_0$. We note that these results are analogous to Proposition \ref{prop:4.2} in the low-dimensional setting once penalization and de-biasing are introduced.

\section{Empirical Study}\label{sec:emp}
In this section, we present results about the empirical behavior of the non-convex likelihood-based estimator and the convex surrogate estimator. Our focus in this section is two-fold. First, we study the relative efficiency of the two estimators when different design matrices and noise rates are considered. In particular, we empirically demonstrate that the gap between $\mathcal{C}(\X)$ and $\mathcal{C}(W_z^{-1}W_y\X)$ captures well the impact of design $\X$ and noise rates $(\rho_0,\rho_1$) on the relative efficiency of the two estimators. Second, we study empirical performance of the two estimators in the low- and high-dimensional regimes, with and without regularization. As regards the estimation errors, the likelihood-based estimator is expected to perform better than the convex estimator in low dimensions. However, it is unclear whether this will continue to be true in high dimensions. Indeed, as we discuss hereafter, our simulation study shows that the convex estimator outperforms the likelihood-based estimator in terms of mean squared errors in sparse regimes, where signal strength is relatively low. 

\subsection{Methods}
Based on the regime of each simulation, we obtain non-sparse estimates $\widehat{\beta}_\ell$ and $\widehat{\beta}_s$ from \eqref{def:low-d_estimators} in the low-dimensional regime or sparse estimates $\widetilde{\beta}_\ell^H$ and $\widehat{\beta}_s^H$ from \eqref{def:high-d_estimators} in the high-dimensional regime. We recall that we define $\widetilde{\beta}_\ell^H$ as a stationary point of the optimization problem in \eqref{def:high-d_estimators} due to the non-convexity of $\L_n^\ell(\beta)$. For non-convex problems, we initialize coefficients at the null model where $\beta = [0,\dots,0]^\top $ if a problem is in the low-dimensional regime, and we use a local initialization using a convex estimate otherwise. To compare with the uncorrupted regime, the coefficient estimates $\widehat{\beta}_{\textnormal{ref}}$ and $\widehat{\beta}_{\textnormal{ref}}^H$ are computed using logistic or $\ell_1$-penalized logistic regression on the un-corrupted data $(\x_i,\y_i)_{i=1}^n$. 

In terms of optimization, we use the proximal gradient method combined with a back-tracking line search to solve optimization problems of \eqref{def:low-d_estimators} and 
\eqref{def:high-d_estimators}. This approach guarantees that iterates converge to a stationary point of the objective function if the objective function is non-convex and converge to an optimum in the convex case (e.g., Chapter 10 in \citealp{Beck2017-us}). For $\widehat{\beta}_{\textnormal{ref}}$ and $\widehat{\beta}_{\textnormal{ref}}^H$ we used the `glm()' function from R base package and `glmnet()' from R package \textbf{glmnet} respectively. 

\subsection{Impact of Design}\label{sec:emp-1_design}
To study the relative efficiency of the two estimators in various designs, we fix dimensions $(n=1000,p=10)$ and consider a mixture of multivariate normal distributions with varying distances between the two mixture components. We will demonstrate that increase in distance between the means of the two mixture components leads to an increase in the gap between  $\mathcal{C}(\X)$ and the perturbed column space $\mathcal{C}(W_z^{-1}W_y\X)$, and a larger gap between two subspaces is associated with greater efficiency differences in $\widehat{\beta}_\ell$ and $\widehat{\beta}_s$.

Now we describe our simulation set-up for this subsection. First, we generate a design matrix $\X = [\x_1^\top , \dots, \x_n^\top ]^\top $ by sampling each $\x_i$ from an equal mixture of multivariate Gaussian distribution centered at $\mu_1 = (d,\dots,d)$ and $\mu_2 = (-d,\dots,-d)$ with various $d$ and covariance matrix $\Sigma$ such that $\Sigma_{ij} = 0.2^{|i-j|}$. We let $\beta_0:= [1/\sqrt{p},\dots,1/\sqrt{p}]^\top $ so that $\|\beta_0\|_2^2 =1$. The true unobserved response $\y_i$ is drawn by $\y_i \sim \mbox{Ber}(p_{\beta_0}(\x_i))$ where $p_{\beta_0}(\x_i) = (1+\exp(-\x_i^\top \beta_0))^{-1}$, and a noisy label $\z_i$ is obtained by flipping $\y_i$ based on noise rates $\rho_0 = 10 \%$ and $\rho_1 = 5\%$.
The range of $d^2 = (0, \dots, 2.5)$ is considered so that $\textnormal{dist}^2 :=\|\mu_1-\mu_2\|_2^2 = 4pd^2$ varies from $0$ to $100$. When $\textnormal{dist} = 0$, $\x_i$ is from single Gaussian distribution, i.e., $\x_i \sim N(0, \Sigma), \forall i$. For each $d$, we repeat the experiment $B =10000$ times. At each $d$ and iteration $b=1,...,B$, 
we calculate
\begin{itemize}
    \item relative $\ell_2$ difference: $\textnormal{rd}(\I_n^\ell(\beta_0)_b,\I_n^s(\beta_0)_b):=\|I_p-{\I_n^\ell(\beta_0)_b}^{-1/2}\I_n^{s}(\beta_0)_b{\I_n^\ell(\beta_0)_b}^{-1/2}\|_2$,
    \item gap: $\widehat{\delta} (\mathcal{C}(\X_b), \mathcal{C}(W_z^{-1}W_y \X_b)) = \|\mathcal{P}_{\mathcal{C}(\X_b)} - \mathcal{P}_{\mathcal{C}(W_z^{-1}W_y\X_b)}\|_2 $ \citep{Kato2013-va},
    \item mean squared errors: $\textnormal{mse}^{\ell}_b := \|\widehat{\beta}_{\ell,b} - \beta_0\|_2^2$ and $\textnormal{mse}^s_{b} :=\|\widehat{\beta}_{s,b} - \beta_0\|_2^2$,
    \item asymptotic mean squared errors:\footnote[2]{ $\E\|\widehat{\beta} - \beta_0\|_2^2 = \textnormal{tr}(\E(\widehat{\beta} - \beta_0)(\widehat{\beta} - \beta_0)^\top ) \approx \textnormal{tr}(\I_n(\beta_0)^{-1}/n)$} $\textnormal{amse}^\ell_b :=\frac{\textnormal{tr}({\I_n^\ell}(\beta_0)_b^{-1})}{n}$ and $\textnormal{amse}^s_b :=\frac{\textnormal{tr}({\I_n^s}(\beta_0)_b^{-1})}{n}$,
\end{itemize}
where subscripts of $b$ mean corresponding quantities are from the $b$th experiment. We summarize results by taking an average of $B$ values. 

\begin{figure}[!bp]
	\centering
	\begin{minipage}[t]{0.45\linewidth}
		\includegraphics[width=\linewidth]{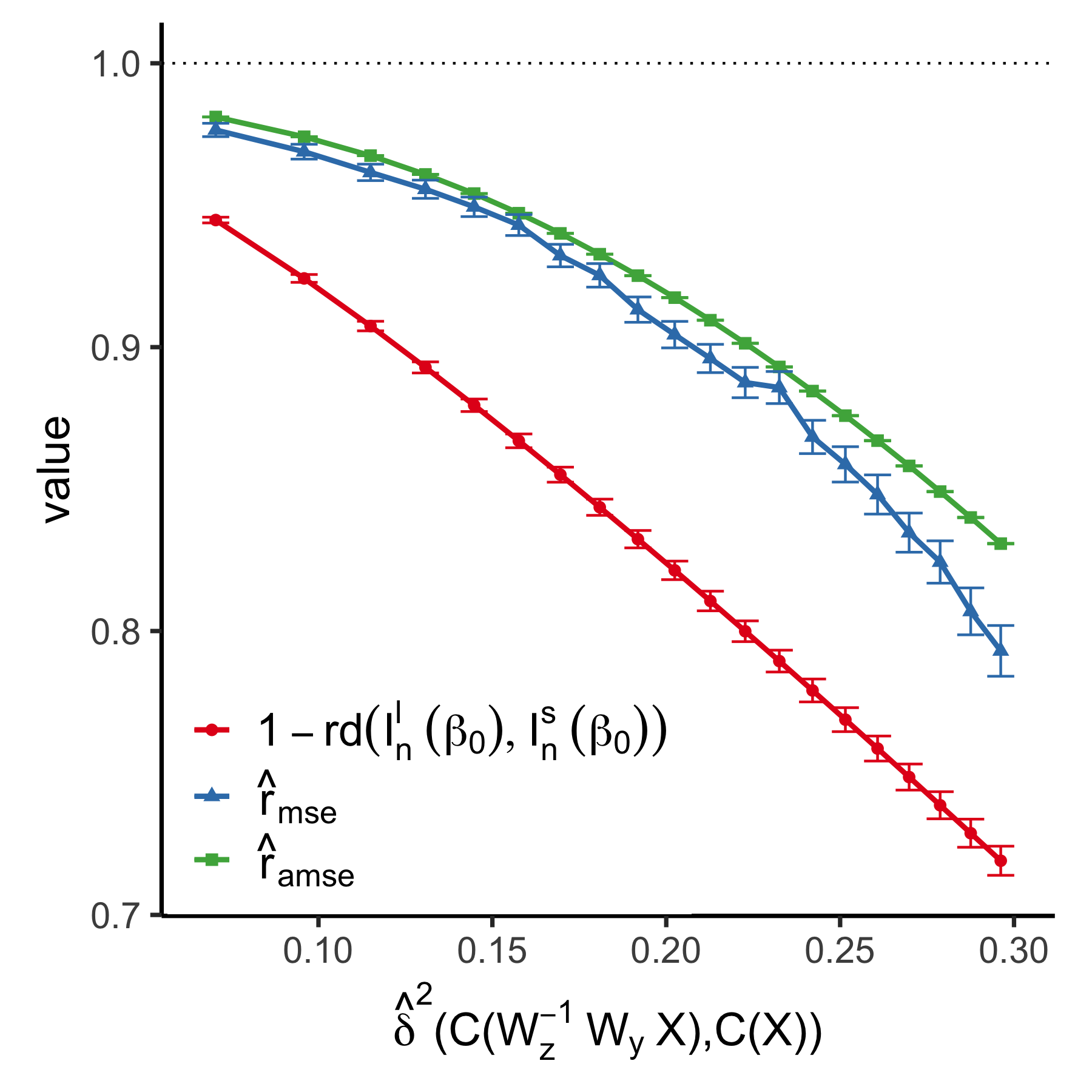}
		\caption{Ratios of mse and asymptotic mse and 1- relative $\ell_2$ difference with varying $gap^2$. Error bars refer to 1se.}\label{fig:fig2}
	\end{minipage}
	\quad
	\begin{minipage}[t]{0.45\linewidth}
		\includegraphics[width=\linewidth]{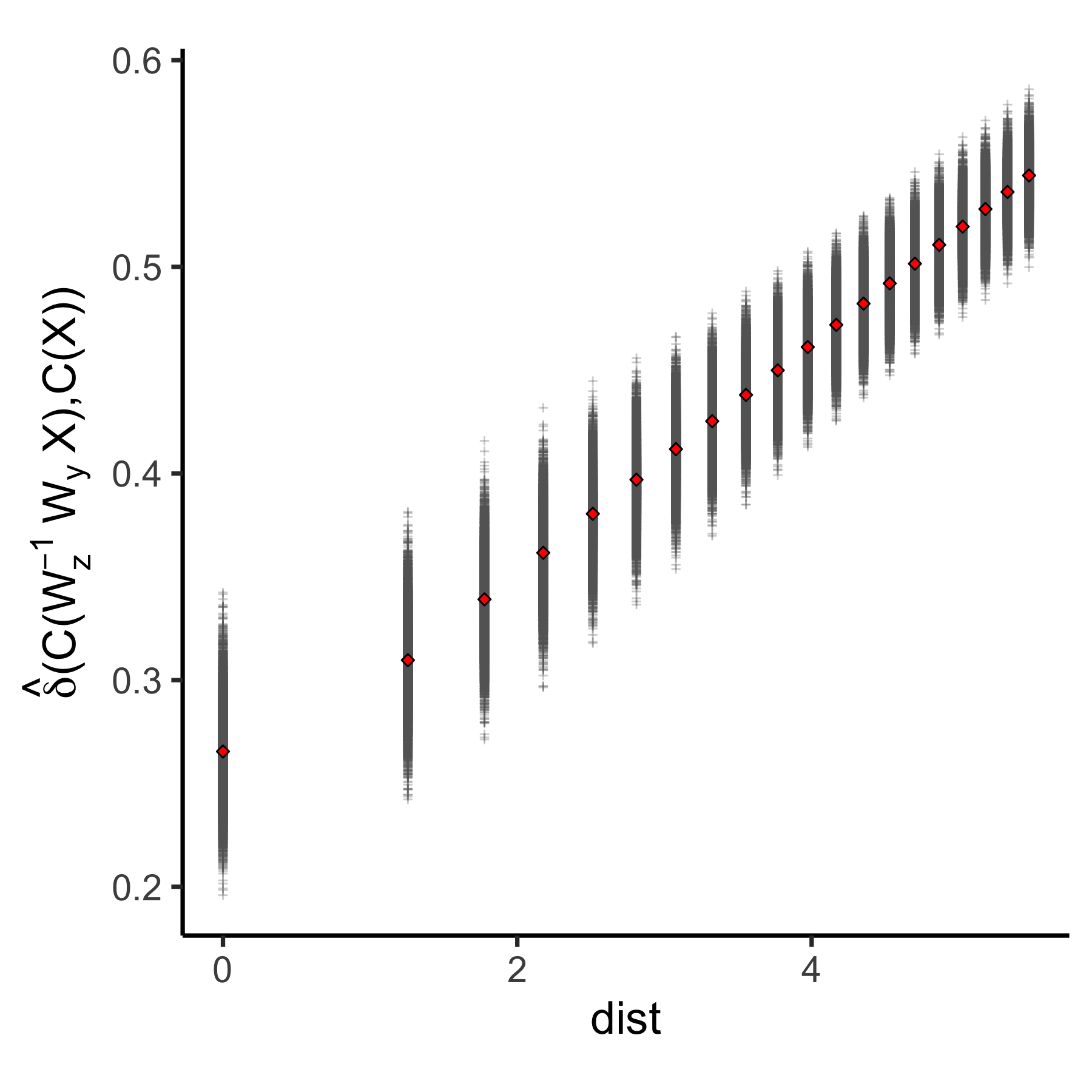}
		\caption{Plot of the distance between the means of two mixture distributions vs. the gap between the two column spaces.}\label{fig:fig3}
	\end{minipage}%
\end{figure}

To compare the efficiency of the two estimators, we calculate $\widehat{r}_{\textnormal{mse}}$, the ratio of estimated mean squared errors, and $\widehat{r}_{\textnormal{amse}}$, the ratio of asymptotic mean squared errors. More concretely, we let  $\widehat{r}_{\textnormal{mse}} := \frac{\overline{\textnormal{mse}^\ell}}{\overline{\textnormal{mse}^s}}$ and $\widehat{r}_{\textnormal{amse}} := \frac{\overline{\textnormal{amse}^\ell}}{\overline{\textnormal{amse}^s}}$. When $n$ is sufficiently large, $\widehat{r}_{\textnormal{mse}}$ is expected to be close to $\widehat{r}_{\textnormal{amse}}$, and both to be close to $r_{\textnormal{amse}} := \lim_n \frac{\E[\|\widehat{\beta}_\ell - \beta_0\|_2^2]}{\E[\|\widehat{\beta}_s - \beta_0\|_2^2]}$.
Note if the two estimators have the same efficiency, ratios will be close to 1. If the ratios are strictly less than 1, we can conclude that $\widehat{\beta}_\ell$ is more efficient than $\widehat{\beta}_s$.

Figure \ref{fig:fig2} plots the ratios of the mean squared errors and asymptotic mean squared errors, as well as $1- \textnormal{rd}(\I_n^\ell(\beta_0),\I_n^s(\beta_0))$ with varying $\textnormal{gap}^2$ values, i.e. $\widehat{\delta}^2 (\mathcal{C}(\X), \mathcal{C}(W_z^{-1}W_y \X))$. We recall that $1-\textnormal{rd}(\I_n^\ell(\beta_0),\I_n^s(\beta_0)) =1 $ iff two estimators have the same asymptotic efficiency, i.e. $\I_n^\ell(\beta_0)=\I_n^s(\beta_0)$, and $1-\textnormal{rd}(\I_n^\ell(\beta_0), \I_n^s(\beta_0))<1$ if $\I_n^\ell(\beta_0)\succ \I_n^s(\beta_0)$.
We see from Figure \ref{fig:fig2} that $1-\textnormal{rd}(\I_n^\ell(\beta_0),\I_n^s(\beta_0))$ linearly decreases with the $\textnormal{gap}^2$, which aligns with the result of Corollary \ref{cor:1}. Also, the efficiency of the surrogate estimator worsens compared to the likelihood-based estimator as the gap increases, but not in the linear fashion as in the case of $1-\textnormal{rd}(\I_n^\ell(\beta_0),\I_n^s(\beta_0))$. Unlike the relative $\ell_2$ difference where we associated the quantity with variance ratio of the two estimators with respect to a particular direction $u$, variance ratios in all directions are considered in $r_{\textnormal{amse}}$ since $\widehat{r}_{\textnormal{amse}} = \textnormal{tr}(\I_n^\ell(\beta_0)^{-1})/\textnormal{tr}(\I_n^s(\beta_0)^{-1})$. Figure \ref{fig:fig3} plots the gap $\widehat{\delta}(\mathcal{C}(\X), \mathcal{C} (W_z^{-1}W_y \X))$ as functions of $\textnormal{dist}= \|\mu_1 - \mu_2\|_2$. We see that the gap between two subspaces increases as the distance between two mixture components increases. 

\subsection{Impact of Noise Rates}\label{sec:emp-2_NL}

In this section, we study the relative efficiency of the two estimators with varying noise rates. From \eqref{eq:Wz1Wy}, for a given distribution $\x_i^\top \beta_0$, higher noise rates lead to a larger perturbation from $\mathcal{C}(\X)$ to $\mathcal{C}(W_z^{-1}W_y \X)$, and therefore a larger efficiency gap between the non-convex and the convex estimator is expected. 

The distribution of the $\x_i^\top \beta_0$ plays a role in determining the gap between the two subspaces. For example, two samples from the models with the same noise rates can have very different gap values if the distributions of $\X$ are different. To illustrate this point concretely, suppose that two samples are from the models with the same rates of $\rho_0 = 0\%$ and $\rho_1=20\%$ but from different $\X$, where most $\x_i^\top\beta_0$ are negative in the first model but most $\x_i^\top\beta_0$ are positive in the second model. A larger amount of the variance mis-specification using the convex approach will happen in the region when $\x_i^\top \beta_0$ is positive, since only the positive labels are flipped into the negative labels (Figure \ref{fig:fig_vary_varz}). This causes $W_z^{-1}W_y$ to deviate further from an identity matrix, and therefore, the gap between $\mathcal{C}(\X)$ to $\mathcal{C}(W_z^{-1}W_y \X)$ tends to be much larger for the second model despite the noise rates being the same in both models. 

\begin{figure}[b]
  \begin{minipage}[c]{0.45\textwidth}
    \includegraphics[width=\textwidth]{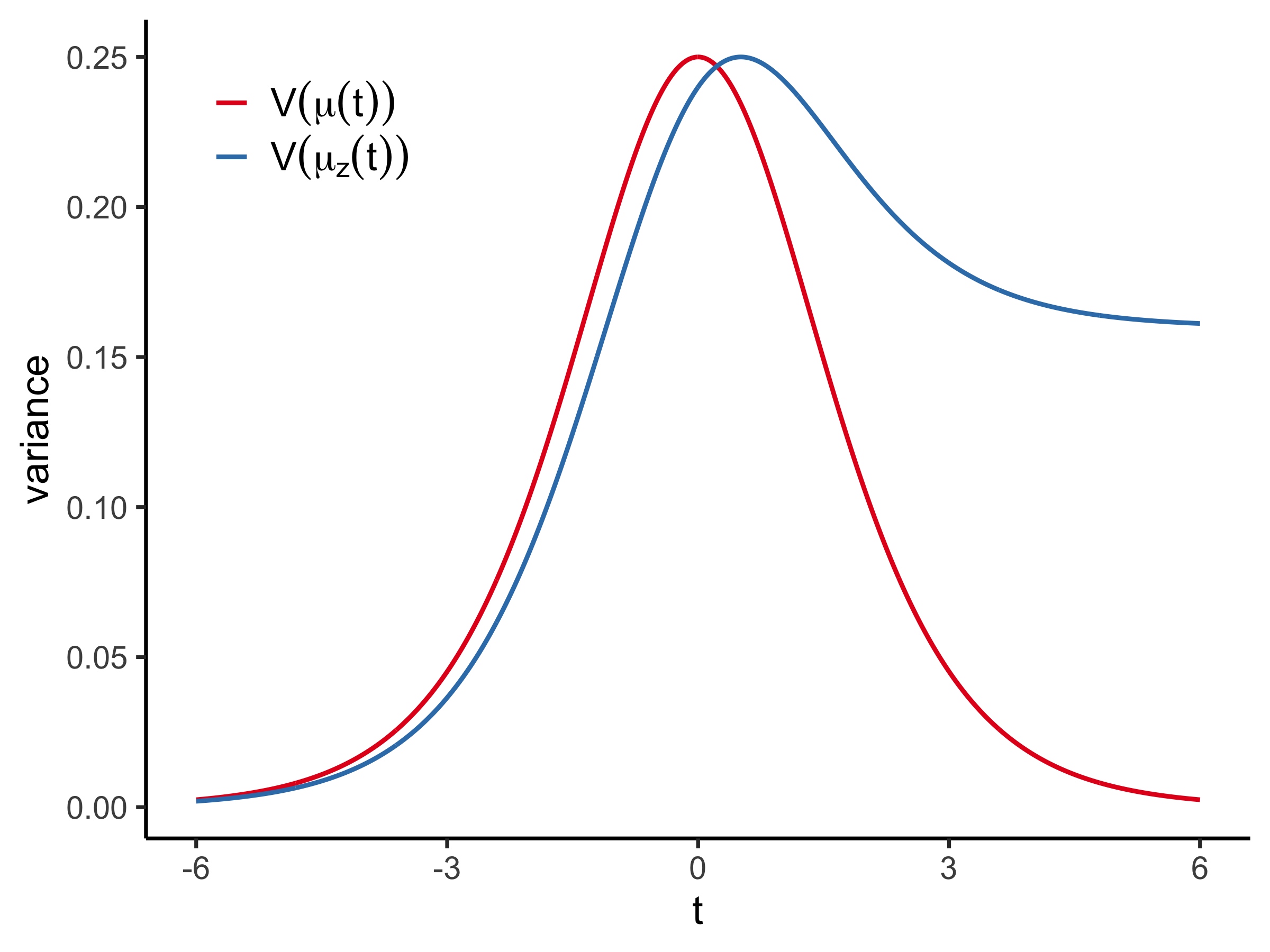}
  \end{minipage}\hfill
  \quad\quad
  \begin{minipage}[c]{0.5\textwidth}
    \caption{The plot of the variance of $\y$, $\text{Var}(\y|\x) = \mathcal{V}(\mu(\x^\top\beta))$ and $\z$, $\text{Var}(\z|\x) =\mathcal{V}_z(\mu(\x^\top\beta))= \mathcal{V}((1-\rho_1-\rho_0)\mu(\x^\top\beta)+\rho_0)$, for varying $t = \x^\top\beta$ with $\rho_1 = 0.2$ and $\rho_0=0$. The difference between the two variances is larger when $t\gg 0$.}\label{fig:fig_vary_varz}
  \end{minipage}
\end{figure}

We empirically study the impact of the noise rates using a similar simulation set-up as in the previous subsection \ref{sec:emp-1_design}, except that we fix $d$---the varying parameter in the previous subsection---to be $d=2/\sqrt{10}$ and instead vary the noise rates from 5\% to 20\%. In particular, the distribution of $\x_i^\top \beta_0$ is the same for all experiments in this section, which is an equal mixture of $\mathcal{N}(\beta_0^\top \mu_1,\beta_0^\top \Sigma \beta_0 )$ and $\mathcal{N}(\beta_0^\top \mu_2,\beta_0^\top \Sigma \beta_0 )$. We consider two settings for the noise rates to cover both symmetric and non-symmetric noise rates cases, where in the first setting we fix $\rho_1 = 5\%$ and vary $\rho_0$ from $5$ to $20\%$, and in the second setting we let $\rho=\rho_0=\rho_1$ and vary $\rho$ from $5$ to $20\%$. 
That is, for each noise rate setting and $b=1,...,B=10000$, we generate $(\x_i,\y_i,\z_i)_{i=1}^n$ and obtain the gap value $\widehat{\delta} (\mathcal{C}(\X_b), \mathcal{C}(W_z^{-1}W_y \X_b)) = \|\mathcal{P}_{\mathcal{C}(\X_b)} - \mathcal{P}_{\mathcal{C}(W_z^{-1}W_y\X_b)}\|_2 $ and the mean squared errors $\textnormal{mse}^{\ell}_b := \|\widehat{\beta}_{\ell,b} - \beta_0\|_2^2$ and $\textnormal{mse}^s_{b} :=\|\widehat{\beta}_{s,b} - \beta_0\|_2^2$, in the same way as we did in the previous subsection. The ratios of the estimated mean squared errors, $\widehat{r}_{\textnormal{mse}}:= \frac{\overline{\textnormal{mse}^\ell}}{\overline{\textnormal{mse}^s}}$ are then summarized over $B$ values at each noise rate setting. 

\begin{figure}[t]
\centering
\begin{subfigure}[h]{0.45\linewidth}
\includegraphics[width=\linewidth]{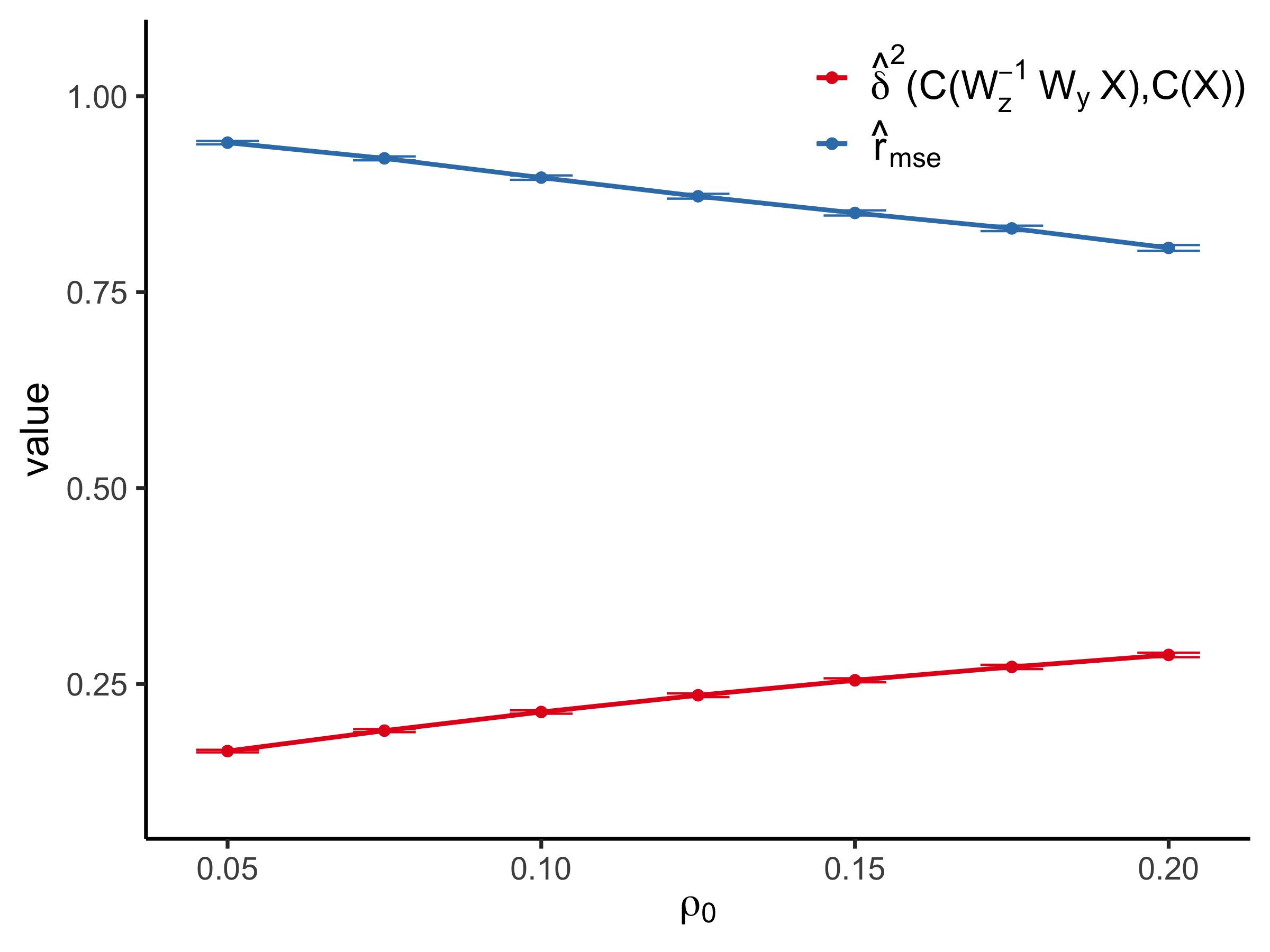}
\caption{Varying $\rho_0$ and fixed $\rho_1=5\%$}
\end{subfigure}
\quad
\begin{subfigure}[h]{0.45\linewidth}
\includegraphics[width=\linewidth]{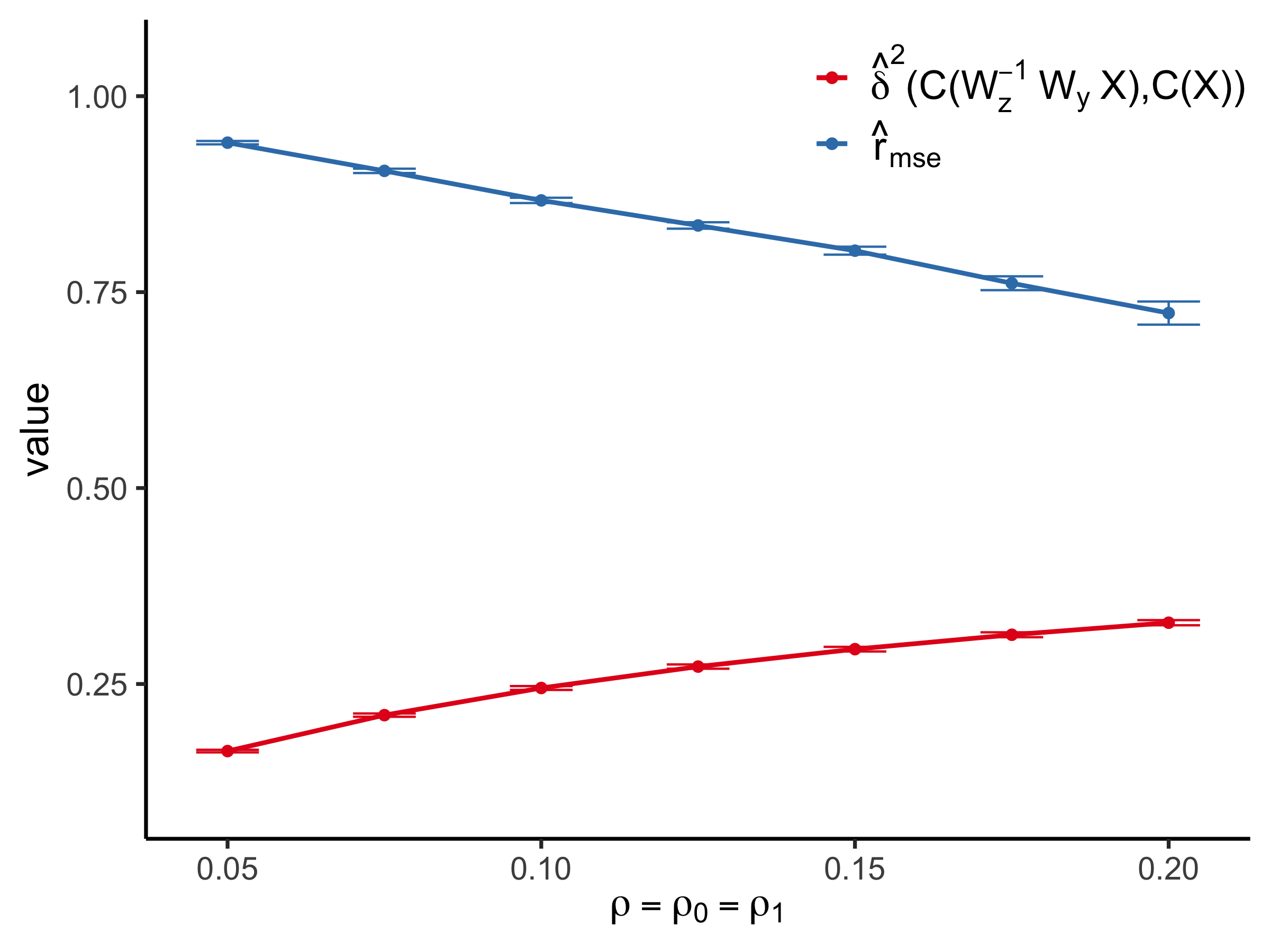}
\caption{Varying $\rho=\rho_0=\rho_1$}
\end{subfigure}%
\caption{Plots of ratios of mse and $\textnormal{gap}^2$ with (a) varying $\rho_0$ and fixed $\rho_1=5\%$ and (b) varying $\rho=\rho_0=\rho_1$. The error bars denote one standard error.}\label{fig:fig5}
\end{figure}

Figure \ref{fig:fig5} plots the ratios of the squared gaps $\widehat{\delta}^2(\mathcal{C}(\X), \mathcal{C} (W_z^{-1}W_y \X))$ and the mean squared errors as functions of noise rates. We note that $\widehat{r}_{\textnormal{mse}}<1$ implies the non-convex estimator had a smaller mse than the convex estimator. It can be seen from the plots the gap between two subspaces increases as the noise rates increase in both settings, and the convex approach performs worse than the non-convex approach.

\subsection{Comparison of Estimation Errors in Low- and High-Dimensional Settings}\label{sec:emp-3_Est}
To study estimation performances of the non-convex and convex approaches, we consider the following two regimes: (i) fixed $p = 10$ and growing $n$; (ii) growing $(n,p)$ with $p = n$. Also, we consider two noise settings, where we use $\rho_1 = 5\%$ and $\rho_0 = 10\%$ for the first noise setting (low-noise) and double the noise rates for the second noise level setting (high-noise).

A sample $\x_i \in \R^p$ is generated from multivariate gaussian distribution $\mathcal{N}(0, \Sigma)$ where $\Sigma \in \R^{p\times p}$ is given as $\Sigma_{i,j} = C_\Sigma (0.2)^{|i-j|}$, where $C_\Sigma$ is chosen so that $\textnormal{Var}(\x_i^\top \beta_0) = 5$. The sample size $n$ varies from $1000$ to $5000$ where values in between are interpolated in a log scale. In both regimes, we first let $10$ features be active ($s=10$) and true parameter be $\beta_0:= [\underbrace{1,\dots,1}_{\text{s/2}},\underbrace{-1\dots,-1}_{\text{s/2}}, 0,\dots,0]$. The true observed responses $\y_i$ and the noisy labels $\z_i$ are generated in the same way as in Section \ref{sec:emp-1_design}.
Each experiment in the low-dimensional regimes is repeated $B=300$ times and $B=50$ times in the high-dimensional regimes. The mean and standard errors of $B$ trials are reported in Figure \ref{fig:emp_study1}. Tuning parameter $\lambda$ needs to be chosen for the high-dimensional estimators. We choose $\lambda$ in each simulation based on the testing loss from 5-fold cross validation. 
\begin{figure}[t]
    \centering
    \includegraphics[width=1\linewidth]{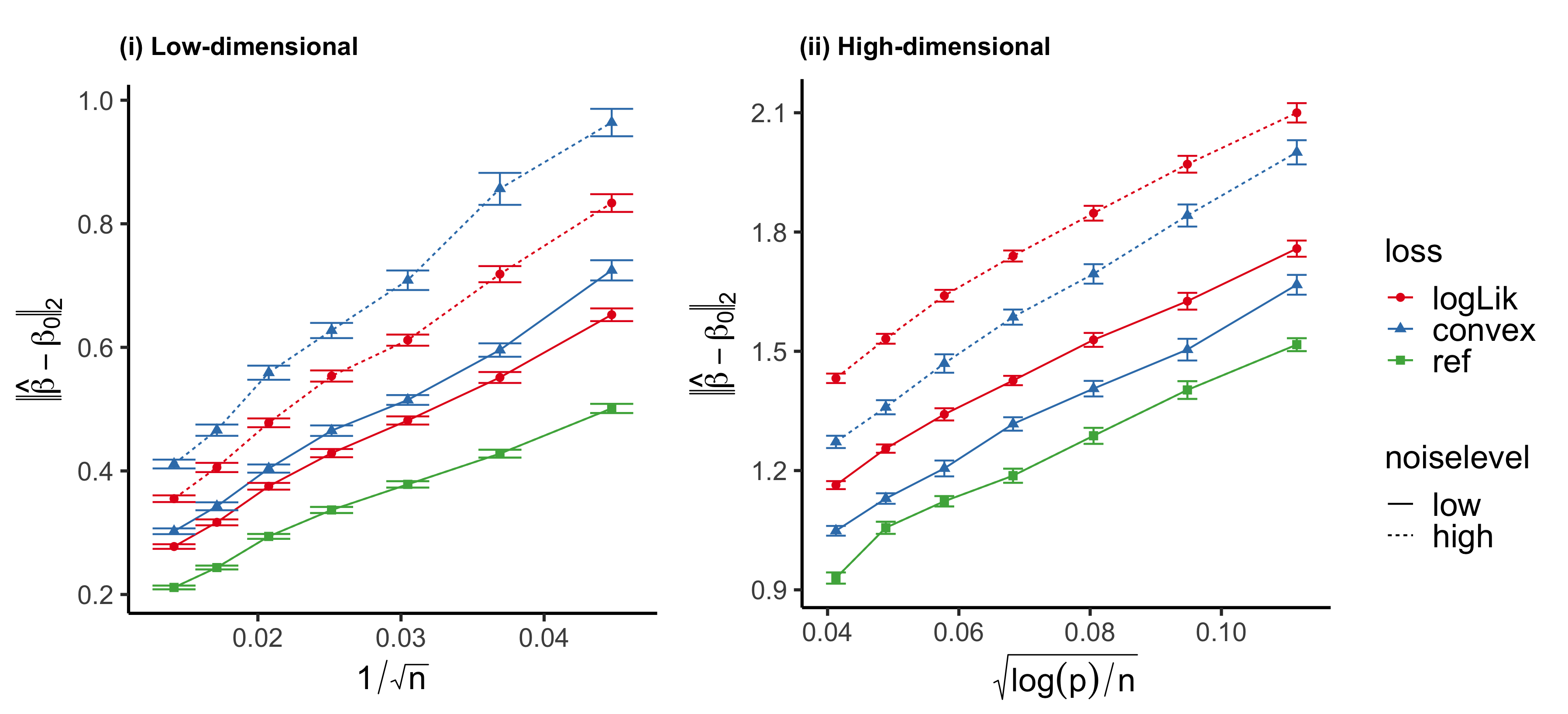}
    \caption{Comparison of the log-likelihood and surrogate loss based estimators in the low and high-dimensional regimes. Reference loss (ref) refers to the logistic loss when clean data is available.}
    \label{fig:emp_study1}
\end{figure}

Figure \ref{fig:emp_study1} shows the comparison results of the non-sparse and sparse estimators in both low and high-dimensional regimes. Not surprisingly, the likelihood-based estimator performs uniformly better than the convex estimator in the low-dimensional regime without any regularization in the both noise settings. The loss of efficiency by using a convex surrogate loss appears to be relatively small when the noise level is low. The performance of the surrogate estimator worsens when noise rates increase since the squared gap $\widehat{\delta}^2(\mathcal{C}(\X), \mathcal{C}(W_z^{-1}W_y \X))$ increases, which agrees with the results in Section \ref{sec:emp-2_NL}.
On the contrary, the convex surrogate estimator appears to perform uniformly better than the likelihood-based loss in the high-dimensional setting.

It is well known that when no regularization is introduced, the likelihood function is the best function to optimize since the procedure results in the smallest asymptotic variance matrix (in regular problems). Its optimality (more precisely, the optimality of the score function) was also argued in classical estimating equation theory, where the score function is shown to be the best estimating equation function in the sense of minimizing the asymptotic variance \citep{Godambe1960-rw}. 
The surrogate loss $\L_n^s(\beta)$ has a stronger curvature than $\L_n^\ell(\beta)$; in fact the curvature of $\L_n^s(\beta)$ is the same as $\L_n^c(\beta)$, the logistic loss from clean data. However, $\triangledown \L_n^s(\beta)$ has also a larger variance than $\triangledown \L_n^\ell(\beta)$ due to noise in the responses, resulting in the larger asymptotic variance matrix. We conjecture that in a penalized problem, especially when signal is relatively small compared to noise, regularization plays a role in reducing the variability in $\triangledown \L_n^s(\beta)$ which leads to the better performance of the convex estimator in some cases.

To test our conjecture, we carry out an additional set of simulation. The set-up of the simulation mainly follows the set-up in Section \ref{sec:emp-1_design} except we choose dimensions $(n=10000, p=20)$, fix $d = 3/\sqrt{p}$, and instead let the $\ell_1/\ell_2$ ratio of the true signal $\beta_0$ vary. More concretely, we fix $\|\beta_0\|_2 = 1$, and consider different sparsity levels $s$ of $\beta_0 = [\underbrace{1/\sqrt{s},\dots,1/\sqrt{s}}_\text{s},\underbrace{0,\dots,0}_{\text{p-s}}]$ so that $\|\beta_0\|_1/\|\beta_0\|_2$ varies from $1$ to $\sqrt{p}$. For each $s = 1,\dots,p$, we obtain both non-sparse and sparse estimates. In the case of sparse estimates, tuning is performed by minimizing test loss on a test set of size $n$. At each $s$, the experiment is repeated $B=10000$ times. Similarly as in Section \ref{sec:emp-1_design}, we calculate the two mean squared ratios each for the non-sparse and sparse estimates, 
    $\widehat{r}_{\textnormal{mse}} := \frac{\overline{\textnormal{mse}^\ell}}{\overline{\textnormal{mse}^s}}$ and $\widehat{r}_{\textnormal{smse}} := \frac{\overline{\textnormal{smse}^\ell}}{\overline{\textnormal{smse}^s}}$,
where we recall the definitions of $\textnormal{mse}^{\ell}_b := \|\widehat{\beta}_{\ell,b} - \beta_0\|_2^2$ and $\textnormal{mse}^s_{b} :=\|\widehat{\beta}_{s,b} - \beta_0\|_2^2$ and define $\textnormal{smse}^\ell_b := \|\widetilde{\beta}_{\ell,b}^H - \beta_0\|_2^2$ and $\textnormal{smse}^s_b := \|\widehat{\beta}_{s,b}^H - \beta_0\|_2^2$.
\begin{figure}[!tp]
\centering
\begin{subfigure}[h]{0.45\linewidth}
\includegraphics[width=\linewidth]{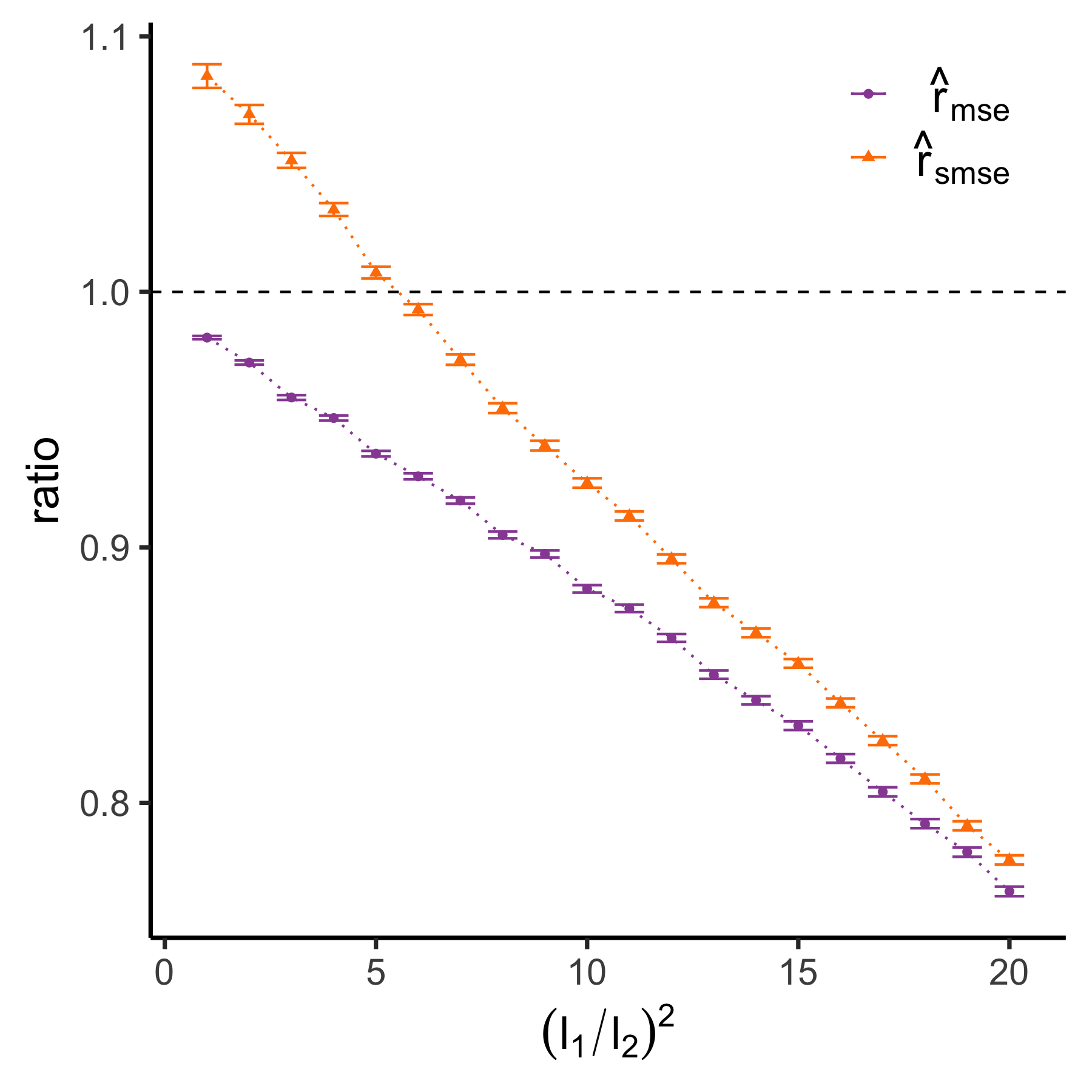}
\caption{$\widehat{r}_{\textnormal{mse}}$ and $\widehat{r}_{\textnormal{smse}}$}
\end{subfigure}
\quad
\begin{subfigure}[h]{0.45\linewidth}
\includegraphics[width=\linewidth]{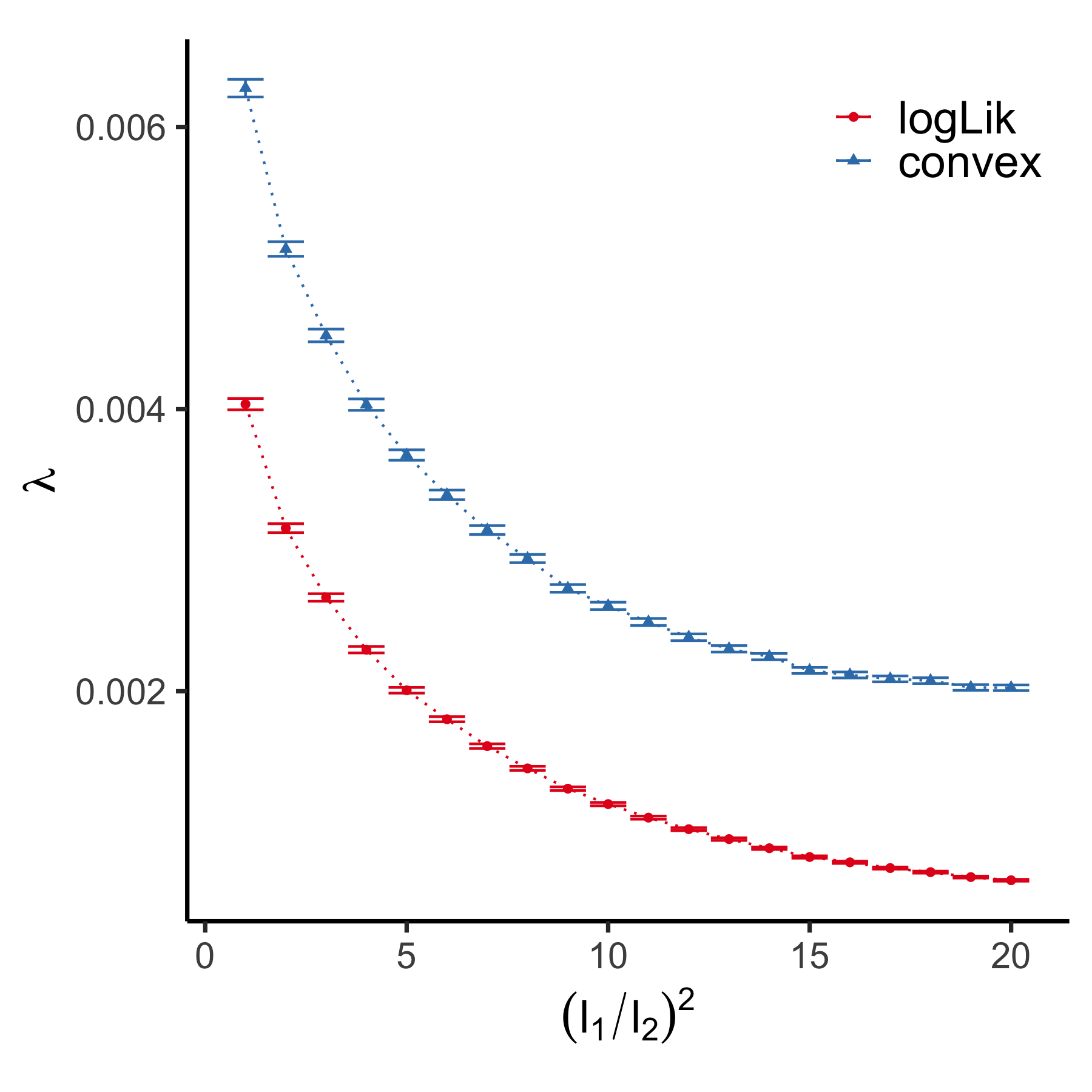}
\caption{$\lambda$ for the likelihood and the convex losses}
\end{subfigure}%
\caption{Plots of (a) $\widehat{r}_{\textnormal{mse}}$ and $\widehat{r}_{\textnormal{smse}}$ and (b) the amount of regularization chosen ($\lambda$) for the likelihood and the convex losses with varying $\ell_1/\ell_2$ ratios. The error bars
denote one standard error. The dotted line represents when the ratio = 1, where the likelihood-based estimator and the convex estimator have the same efficiency.}\label{fig:fig7}
\end{figure}

As we can see from Figure \ref{fig:fig7}, the likelihood-based estimator is always more efficient than the convex estimator in the case of non-sparse estimates. On the other hand, for the sparse estimators, the convex estimator outperforms the likelihood-based estimator when the $\ell_1/\ell_2$ ratio is small. As the $\ell_1/\ell_2$ ratio increases, we see from plot (b) that the amount of regularization decreases and the likelihood-based estimator starts to perform better than the convex estimator.

\subsection{Comparison of Empirical Confidence Interval Coverage Rates in Low- and High-Dimensional Settings}\label{sec:emp-4_Test}

In this section, we compare the empirical coverage rates of confidence intervals from the convex and non-convex methods. Similarly as in the previous sections, we consider the low and high-dimensional regimes, where we let $(n,p,s) = (2000,20,10)$ in the low-dimensional regime and $(n,p,s) = (500,1000,10)$ in the high-dimensional regime. The noise rates $(\rho_0,\rho_1)$, true parameter $\beta_0$, and features $(\x_i)_{i=1}^n$ are set up in the same way as in the previous section \ref{sec:emp-1_design}.
 
We construct $1-\alpha$ confidence intervals in low- and high-dimensional settings, where we obtain both non-sparse and sparse estimates in the low-dimensional setting and obtain sparse estimates in the high-dimensional setting. The nominal level $1-\alpha$ is set to be $0.95$. Confidence intervals for the non-sparse estimators are constructed based on their asymptotic normality results in Proposition \ref{prop:4.2}. For sparse estimators, we first de-bias the estimates and construct confidence intervals based on the asymptotic normality results for de-biased estimators in Proposition \ref{cor:4}. For the estimation of inverse Hessian matrices which are needed for de-biasing, the inverse matrix of the Hessian matrix and an approximate inverse matrix based on node-wise regressions are used in the low- and high- dimensional settings respectively. Mean empirical coverage rates for all features (all), active features (nzero), and non-active features (zero), as well as the mean lengths of the confidence intervals (CI length) are reported in Table \ref{table:CIcoverage} based on 100 realizations of confidence intervals. 
\begin{table}[!t]
\begin{center}
\subcaption{dimensions: (n=2000, p =20)}
\begin{tabular}{l c c c c}
\toprule
 & coverage (all) & coverage (nzero) & coverage (zero) & CI length \\
\midrule
logLik            & $0.951$   & $0.951$   & $0.951$   & $0.362$   \\
                  & $(0.005)$ & $(0.008)$ & $(0.007)$ & $(0.001)$ \\
convex            & $0.962$   & $0.961$   & $0.962$   & $0.387$   \\
                  & $(0.005)$ & $(0.007)$ & $(0.006)$ & $(0.002)$ \\
logLik (debiased) & $0.944$   & $0.942$   & $0.946$   & $0.340$   \\
                  & $(0.006)$ & $(0.009)$ & $(0.008)$ & $(0.001)$ \\
convex (debiased) & $0.946$   & $0.938$   & $0.953$   & $0.360$   \\
                  & $(0.005)$ & $(0.009)$ & $(0.006)$ & $(0.002)$ \\
\bottomrule
\end{tabular}\bigskip
\subcaption{dimensions: (n=500, p =1000)}
\begin{tabular}{l c c c c}
\toprule
 & coverage (all) & coverage (nzero) & coverage (zero) & CI length \\
\midrule
logLik (debiased) & $0.965$   & $0.900$   & $0.965$   & $0.368$   \\
                  & $(0.001)$ & $(0.009)$ & $(0.001)$ & $(0.001)$ \\
convex (debiased) & $0.964$   & $0.925$   & $0.964$   & $0.388$   \\
                  & $(0.001)$ & $(0.008)$ & $(0.001)$ & $(0.001)$ \\
\bottomrule
\end{tabular}
\end{center}
\caption{Average coverage rates for 100 confidence interval realizations of a noisy labels model in low and high dimensional settings for all, active (nzero), and non-active (zero) features. Numbers in parentheses represent one standard error.}\label{table:CIcoverage}
\end{table}

The overall coverage rates appear to be good for the both convex and non-convex methods in all regimes where about 95\% of the constructed intervals contained the true parameters. In all settings, both likelihood based methods (without and with penalization) result in confidence intervals with shorter lengths than those from the convex methods, which agrees with the results in Proposition \ref{prop:4.2} and \ref{cor:4}. Confidence intervals from de-biased estimators tend to be less conservative than those from the non-sparse estimators, which seem to cause lower empirical coverage rates than the nominal level for non-zero coefficients. Similar observations have also been made in \citet{Dezeure2015-jh}.

We conclude this section by remarking on the relative performance of the two approaches and some practical implications. In terms of estimation errors, the non-convex estimator performed better than the convex estimator in unpenalized, low-dimensional settings with large $n$. Also, in penalized schemes, the non-convex likelihood approach empirically performed better when the true model was not highly sparse, as shown in Figure \ref{fig:fig7}. Confidence intervals from both methods showed good empirical coverage rates and the average confidence interval length was shorter for the likelihood approach than for the convex approach. Therefore, the non-convex approach is preferred in unpenalized and large $n$ regimes, or in penalized and relatively not highly sparse regimes ($\ell_1/\ell_2$ ratio over $\sqrt{.3p}$ from Figure \ref{fig:fig7}), potentially with multiple initializations. The convex approach provides a viable alternative to the non-convex likelihood-based approach in all settings, but it can be particularly advantageous in settings where the true model is highly sparse, or when running optimization algorithms multiple times with various initializations is computationally challenging. The convex objective is computationally attractive to work with since every stationary point of the objective is a global minimum.

\section{Application to Beta-glucosidase Protein Data}\label{sec:real_data}
In this section, we describe an application of our non-convex and convex methods to beta-glucosidase (BGL) protein sequence data. Beta-glucosidase is a key enzyme present in cellulase which converts cellobiose to glucose during cellulose hydrolysis. The BGL enzyme protein plays a significant role in bioethanol production \citep{Singhania2013-af}. Due to its industrial importance, it is of great interest to understand the effects of mutations of the protein and design a protein with improved functionality. 

The data set we analyze is a positive and unlabeled beta-glucosidase protein sequence data set generated in the Romero Lab \citep{Romero2015-kl}. Large-scale data were generated by deep mutational scanning (DMS) method, which applies the high-throughput screening method to sort out functional protein variants. Screening is based only on the enzyme functionality of a sequence. Unlabeled sequences from the initial library whose associated functionality is unknown are obtained together with screened sequences to be positive. %
The data consists of $n_\ell = 2533388$ functional (positive label) and $n_u = 1500277$ unlabeled sequences.\footnote[2]{The raw data is available in https://github.com/RomeroLab/seq-fcn-data.git} A sequence consists of 500 positions which takes one of $21$ discrete values which correspond to 20 amino acid letter codes plus an additional letter for the alignment gap. 
From an alternative experiment, the prevalence of functional sequences in the unlabeled data set is known to be $0.35$. This data set is previously analyzed in \citet{Song2018-we} where the likelihood based approach was taken with the $\ell_1$ penalization to obtain a sparse estimate.
We obtained $p=3097$ features using one-hot encoding of the sequences. Because each sequence contains only a few mutations, we obtained a sparse design matrix $\X \in \R^{n \times p}$ by taking the amino acid levels in the WT sequence as the baseline levels. We note that the number of features ($p = 3097$) in the model is approximately one third of the number of maximum possible features ($10000 =20 \times 500$). Since the sequences in the data set are local sequences around the wild-type sequence, some mutations were never observed.

We apply both convex and non-convex methods to the data set to estimate each mutation effect of the BGL sequence. Estimated coefficients are obtained by fitting the model using all sequence examples. In addition, to compare predictive performance of the two methods, we split the data set into training and test sets using 90\% and 10\% of the sequence examples. The model is then refitted using .1\%, 1\%, 10\%, and 100\% of the examples in the training set to compare the performance of the two methods at various sample sizes. For a performance metric, we use the area under the ROC curve (AUC) for the comparison of classification performances. However, for positive and unlabeled data, the ROC curve and AUC value calculated using the observed labels as the responses are biased for the ROC curve and AUC value for the unobserved true responses \citep{Jain2017-cu}. Following the approach in  \citet{Jain2017-cu}, we report the corrected ROC curve and AUC values.

\begin{figure}[!bp]
	\centering
	\begin{subfigure}[h]{0.4\linewidth}
		\includegraphics[width=\linewidth]{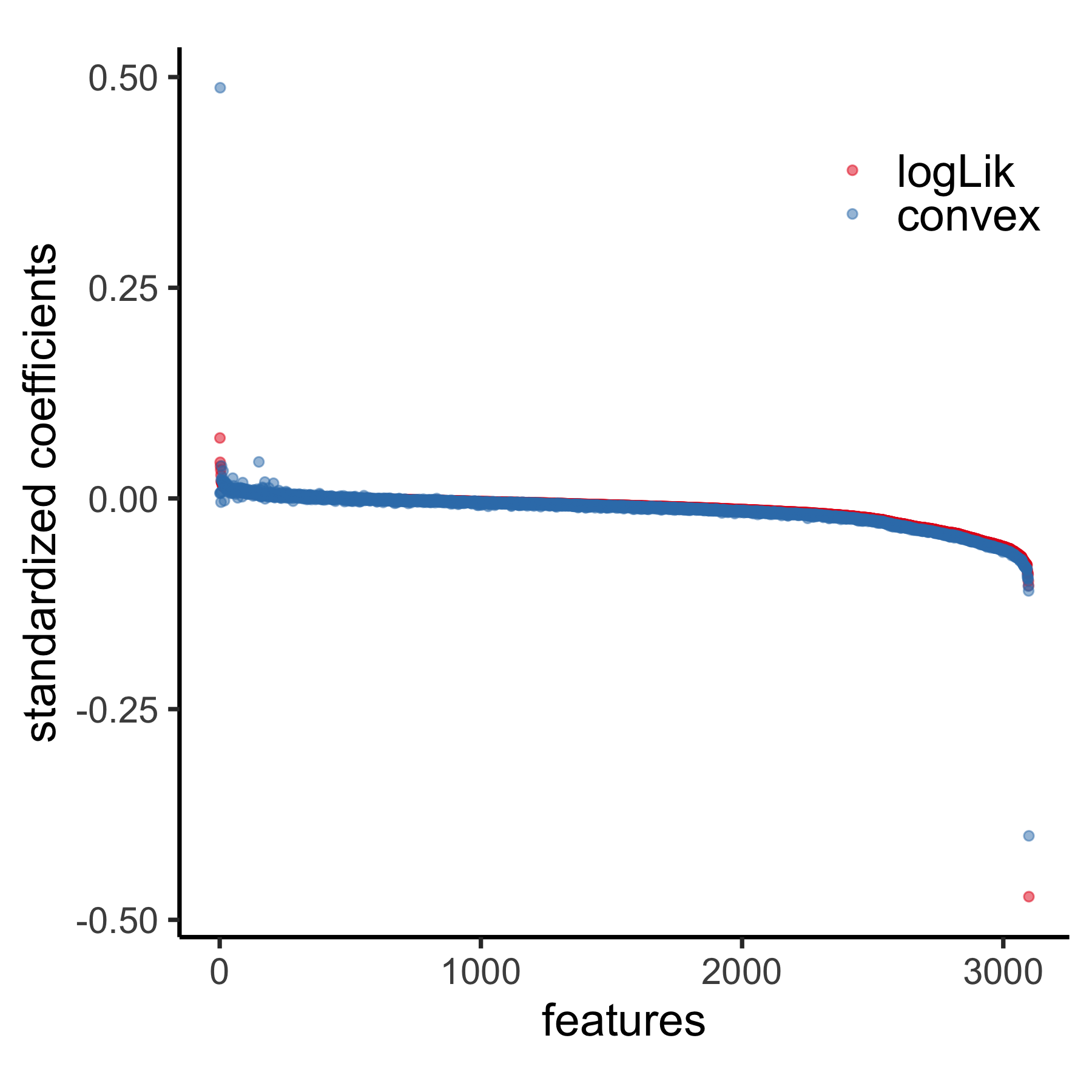}
		\caption{Estimated coefficients}
	\end{subfigure}
	\begin{subfigure}[h]{0.4\linewidth}
		\includegraphics[width=\linewidth]{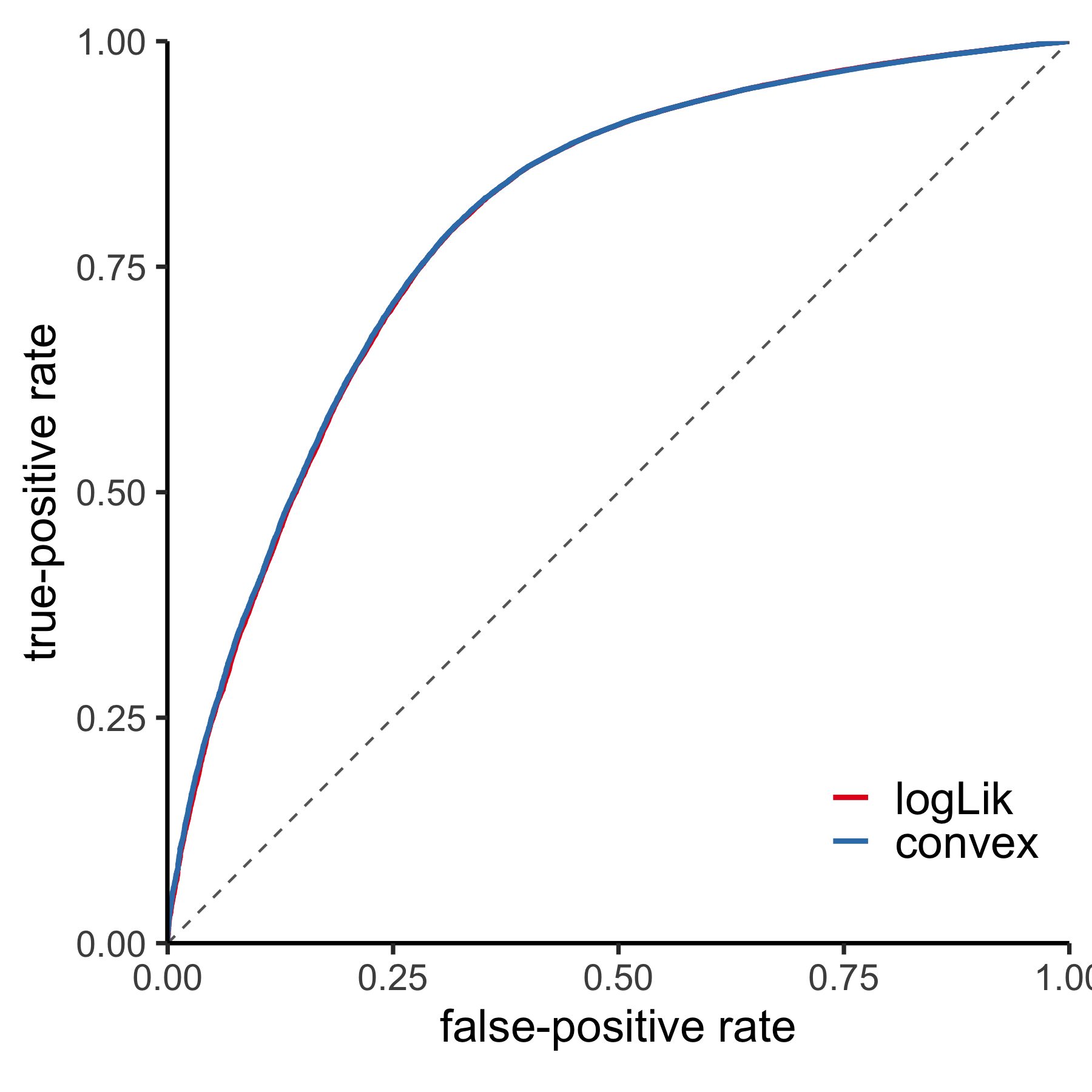}
		\caption{Corrected ROC curves}
	\end{subfigure}
	\caption{Plot of (a) the estimated coefficients and (b) the corrected ROC curves (trained using all examples in the training data set) from the non-convex and convex approaches. In (a), features are sorted based on the coefficients from the likelihood approach.}\label{fig:Bgl3_coeff}
\end{figure}

\begin{figure}
	\centering
	\includegraphics[width=.5\linewidth]{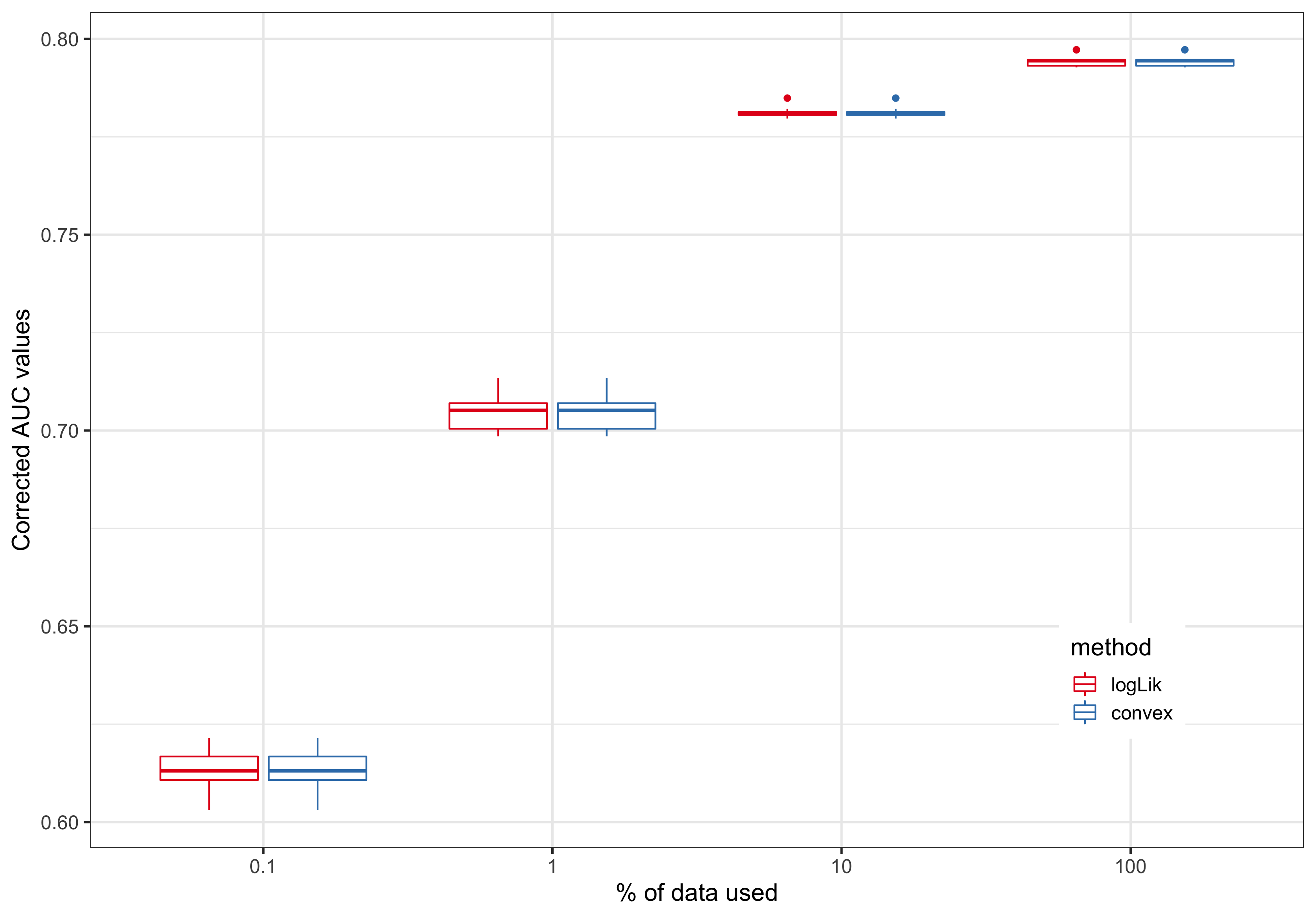}	
	\caption{The corrected AUC values from models using various training sample sizes}\label{fig:Bgl3_at_diff_frac}
\end{figure}
The results are provided in Figure \ref{fig:Bgl3_coeff} and \ref{fig:Bgl3_at_diff_frac}. We observe that the convex approach performed similarly as well as the likelihood approach, where the two approaches produce similar coefficient estimates and comparable classification performance results. Both classifiers demonstrate good classification performance. Therefore, using the convex estimator does not appear to result in a substantial loss of efficiency in this example. 
The corrected AUC value (from the full training set) is $0.7977$ from the likelihood approach and $0.7989$ from the surrogate approach, which is a significant improvement from AUC$=0.5$ in the case of random classification.

\section{Conclusion}\label{sec:concl}
We studied the binary regression problem in the presence of noise in labels in both the classic and high-dimension regimes. We demonstrated that the noisy label model belongs to a sub-class of generalized linear model family. We then discussed two approaches based on a convex surrogate loss for the sub-class of GLMs and a non-convex likelihood for the general class of GLMs. In the low-dimensional setting, the asymptotic distributions of the non-convex likelihood-based estimator and the convex surrogate estimator are derived. We also quantified the efficiency gap between the two approaches and argued that although the convex estimator is provably sub-optimal in terms of efficiency, the gap can be small in some applications.  In the high-dimensional setting, we showed that both estimators, based on regularized non-convex and convex loss functions, achieve a minimax optimal $s \log p /n$ rate for the mean squared errors and derived the asymptotic distribution of the de-biased estimators which can be used for the hypothesis testing in a high-dimensional setting. We empirically demonstrated that both methods perform well in the simulation study and the real data analysis. In particular, although the estimator from the convex approach is sub-optimal in the low-dimensional regime, the efficiency gap between the two estimators is often small. Our empirical results suggest that in sparse regimes the convex surrogate estimator performs better than the likelihood-based estimator. It remains an open question to provide a theoretical justification for this claim.

% Acknowledgements should go at the end, before appendices and references
\acks{
H.S.~was partially supported by the National Institute of Health via grant R01 GM131381-01. G.R.~was partially supported by the National Science Foundation via grant DMS-1811767 and by the National Institute of Health via grant R01 GM131381-01. R.F.B.~was partially supported by the National Science Foundation via grant DMS-1654076 and by an Alfred P.~Sloan fellowship. We thank the three anonymous reviewers whose comments helped improve and clarify this manuscript.}

% Manual newpage inserted to improve layout of sample file - not
% needed in general before appendices/bibliography.

\appendix
\section{Proofs for Results in Section \ref{sec:ld}}
\subsection{Proof of Proposition \ref{prop:4.1}}
\label{supp_sec:prop:4.1}
First, since $\widehat{\beta_{\ell}}$ and $\widehat{\beta_s}$ are the minimizers of $\L_n^\ell(\beta)$ (over a compact region) and $\L_n^s(\beta)$, and both  converge to $\E_{\beta_0}[\L_n^\ell(\beta)] $ and $\E_{\beta_0}[\L_n^s(\beta)]$ which have a unique maximizer at $\beta_0$,  both $\widehat{\beta_{\ell}}$ and $\widehat{\beta_s}$ are consistent for $\beta_0$ under the Assumption {\bf A\ref{a1}} (e.g., Theorem 2.1 and 2.7 in \citealp{Newey1994-uh}). Also, we note that $\widehat{\beta_{\ell}}$ and $\widehat{\beta_s}$ are zeros of 
\begin{align}
\triangledown \L_{n}^\ell(\beta)&= \frac{1}{n} \sum_{i=1}^n (\mu(h_{LN}(\x_i^\top \beta))-\z_i)h_{LN}'(\x_i^\top \beta)\x_i = \frac{1}{n} \sum_{i=1}^n\psi^{\ell}(\beta, (\x_i,\z_i))\x_i\label{eq:score-1}\\
\triangledown \L_{n}^{s}(\beta) &= \frac{1}{n} \sum_{i=1}^n (\mu(\x_i^\top \beta)-T(\z_i))\x_i =\frac{1}{n} \sum_{i=1}^n \psi^{s}(\beta,(\x_i,\z_i))\x_i,\label{eq:score-2}
\end{align}
where we define
\begin{align*}
	\psi^{\ell}(\beta,(\x,\z)) &:=(\mu(h_{LN}(\x^\top \beta))-\z)h_{LN}'(\x^\top \beta) \\
	\psi^{s}(\beta,(\x,\z)) &:=\mu (\x^\top \beta)-T(\z).
\end{align*}
For notational simplicity, in what follows we write $\psi^{(i)}(\cdot) := \psi(\cdot,(\x_i,\z_i))\x_i$ for any $\psi \in \{\psi^{\ell}, \psi^{s}\}$. Also we define $\psi_n:=n^{-1}\sum_{i=1}^n \psi^{(i)}$.

Then by the second order Taylor expansion, we can establish the asymptotic normality of the two estimators (e.g., Theorem 5.14 in \citealp{Shao2003-vx}, Chapter 2.3.1 in \citealp{Fahrmeir2001-gv}).
We first define the inverse of the asymptotic variance using an estimating equation $\psi$ as 
\begin{align}\label{def:Imatrix}
	&\I_n(\beta;\psi) \\
  &:= \left( \frac{1}{n}\sum_{i=1}^n \E_\beta [\frac{d}{d\beta }\psi^{(i)}(\beta)|\x_i]\right) \left( \frac{1}{n}\sum_{i=1}^n \E_\beta [\psi^{(i)}(\beta)\psi^{(i)}(\beta)^\top |\x_i]\right) ^{-1} \left( \frac{1}{n}\sum_{i=1}^n\E_\beta[\frac{d}{d\beta }\psi^{(i)}(\beta)|\x_i] \right)\nonumber.
\end{align}
Let $g_{CL}(\cdot) := \mu^{-1}(\cdot)$, which is a canonical link function. Recalling the fact $h_{LN}(t) = g_{CL} \circ g_{LN}^{-1}(t)$ and $ g_{LN}^{-1}(t) =  \mu_z (t) = (1-\rho_1-\rho_0)\mu(t) + \rho_0$, we have,
\begin{align}
		\psi^{\ell,(i)}(\beta) &= (1-\rho_1-\rho_0)\left(\mu(h_{LN}(\x_i^\top \beta))-\z_i\right) g_{CL}'(\mu_z (\x_i^\top \beta))\mu'(\x_i^\top \beta)\x_i\label{eq:psi_ell}\\
		\psi^{s,(i)}(\beta) &= \left(\mu(\x_i^\top \beta)-T(\z_i)\right)\x_i.\label{eq:psi_s}
\end{align}
Also, $\mu'(t) =  \mathcal{V}(\mu(t))$, since $\mu'(t) = A''(g_{CL}\circ \mu (t)) =  \mathcal{V}(\mu(t))$ by the definition of $\mathcal{V}$. Also $g_{CL}'(t) = 1/\mathcal{V}(t)$ since $g_{CL}'(\mu(u)) = 1/\mu'(u) = 1/\mathcal{V}(\mu(u))$ by the chain rule. Plugging these expressions in \eqref{eq:psi_ell}, we obtain
\begin{align*}
		\psi^{\ell,(i)}(\beta)
		&=  (1-\rho_1-\rho_0)\left(\mu(h_{LN}(\x_i^\top \beta))-\z_i\right) \frac{\mathcal{V}(\mu(\x_i^\top \beta))}{\mathcal{V}(\mu_z(\x_i^\top \beta))}\x_i.
	\end{align*}	
Then since $\E_\beta[\z_i|\x_i] = \mu(h_{LN}(\x_i^\top \beta))$, we get
\begin{align*}
	\E_\beta[\frac{d}{d\beta} \psi^{\ell,(i)}(\beta)|\x_i] 
	&= A''(h_{LN}(\x_i^\top \beta))\left((1-\rho_1-\rho_0) \frac{\mathcal{V}(\mu(\x_i^\top \beta))}{\mathcal{V}(\mu_z(\x_i^\top \beta))}\right)^2\x_i\x_i^\top \\
	&= \mathcal{V}(\mu_z(\x_i^\top \beta))\left((1-\rho_1-\rho_0) \frac{\mathcal{V}(\mu(\x_i^\top \beta))}{\mathcal{V}(\mu_z(\x_i^\top \beta))}\right)^2\x_i\x_i^\top \\
	&=  (1-\rho_1-\rho_0)^2\frac{\mathcal{V}(\mu(\x_i^\top \beta))^2}{\mathcal{V}(\mu_z(\x_i^\top \beta))}\x_i\x_i^\top.
\end{align*}
Also, since $\psi^{\ell,(i)}(\beta)$ is a negative score function, the variance of the score function is the same as the expected negative derivative of the score function, i.e.,
	\begin{align*}
		\E_\beta[\psi^{\ell,(i)}(\beta)\psi^{\ell,(i)}(\beta)^\top |\x_i] =\E_\beta[\frac{d}{d\beta} \psi^{\ell,(i)}(\beta)|\x_i].
	\end{align*}
Then,
	\begin{align*}
	\I_n(\beta;\psi^\ell)=  (1-\rho_1-\rho_0)^2\frac{1}{n}\sum_{i=1}^n \frac{\mathcal{V}(\mu(\x_i^\top \beta))^2}{\mathcal{V}(\mu_z(\x_i^\top \beta))}\x_i\x_i^\top .
	\end{align*}

For the surrogate function, direct calculations give
\begin{align}\label{eq:surrogate-1}
	\E_\beta[\frac{d}{d\beta} \psi^{s,(i)}(\beta)|\x_i] &= A''(\x_i^\top \beta)\x_i\x_i^\top  = \mathcal{V}(\mu (\x_i^\top \beta))\x_i\x_i^\top \\
	\E_\beta[\psi^{s,(i)}(\beta)\psi^{s,(i)}(\beta)^\top |\x_i] &=\E_\beta[ \left(\mu(\x_i^\top \beta)-T(\z_i)\right)^2 \x_i\x_i^\top |\x_i]=\textnormal{Var}_\beta[T(\z_i)|\x_i]\x_i\x_i^\top  \nonumber
\end{align}
since $\E_\beta[T(\z_i)|\x_i] = \mu(\x_i^\top \beta )$.
Recalling the definition of $T(t) = (t-\rho_0)/(1-\rho_1-\rho_0)$ and $\mathcal{V}$,
\begin{align}
	\textnormal{Var}_\beta[T(\z_i)|\x_i]  = (1-\rho_1-\rho_0)^{-2} \mathcal{V}(\mu_z(\x_i^\top \beta)).\label{eq:surrogate-2}
\end{align}
Thus we have 
\begin{align*}
	&\I_n^{s}(\beta;\psi^{s})\nonumber\\
  &=\left( \frac{1}{n}\sum_{i=1}^n \mathcal{V}(\mu(\x_i^\top \beta))\x_i\x_i^\top \right) \left(\frac{1}{n}\sum_{i=1}^n\frac{\mathcal{V}(\mu_z(\x_i^\top \beta))}{(1-\rho_1-\rho_0)^{2}}  \x_i\x_i^\top \right)^{-1}\left(\frac{1}{n}\sum_{i=1}^n\mathcal{V}(\mu(\x_i^\top \beta))\x_i\x_i^\top \right)
\end{align*}
by applying \eqref{eq:surrogate-1} and \eqref{eq:surrogate-2} to \eqref{def:Imatrix}.

\subsection{Proof of Corollary \ref{cor:1}}\label{supp_sec:cor1}

	First we show $\I_n^\ell(\beta_0) \succeq \I_n^{s}(\beta_0)$ . By the definition of $W_y$ and $W_z$, we have,
	\begin{align*}
		\I_n^\ell(\beta_0) &= (1-\rho_1-\rho_0)^2\X^\top  W_y W_z^{-1}W_y\X/n\\
		\I_n^{s}(\beta_0) &= (1-\rho_1-\rho_0)^2\X^\top W_y\X(\X^\top  W_z \X)^{-1}\X^\top  W_y\X/n.
	\end{align*}
Since the projection matrix $\mathcal{P}_{W_z^{1/2}\X}$ can be written as $\mathcal{P}_{W_z^{1/2}\X} =  W_z^{1/2}\X(\X^\top  W_z\X)^{-1}\X^\top  W_z^{1/2}$, we have 
 $\I_n^{s}(\beta_0) = (1-\rho_1-\rho_0)^2
 \X^\top  W_y W_z^{-1/2}\mathcal{P}_{W_z^{1/2}\X}W_z^{-1/2}W_y\X/n$. Then for any $v \in \R^n$,
 \begin{align}\label{eq:pf-cor1-pd}
	v^\top (\I_n^\ell(\beta_0)-\I_n^{s}(\beta_0) )v&= (1-\rho_1-\rho_0)^2 v^\top  \X^\top  W_y W_z^{-1/2}(I_n-\mathcal{P}_{W_z^{1/2}\X})W_z^{-1/2}W_y\X v/n \nonumber\\ 
	&= (1-\rho_1-\rho_0)^2\|(I_n-\mathcal{P}_{W_z^{1/2}\X})W_z^{-1/2}W_y\X v\|_2^2/n \geq 0	 
 \end{align}
 since $I_n-\mathcal{P}_{W_z^{1/2}\X}  $ is idempotent. Thus, $\I_n^\ell(\beta_0) \succeq \I_n^{s}(\beta_0)$. 
 
Now we address the inequality \eqref{eq:cor1}. First, we have
\begin{align*}
    \|{\I_n^\ell(\beta_0)}^{-1/2}(\I_n^\ell(\beta_0)-\I_n^{s}(\beta_0)){\I_n^\ell(\beta_0)}^{-1/2}\|_2 \leq \|{\I_n^\ell(\beta_0)}^{-1/2}\|_2^2 \|\I_n^\ell(\beta_0)-\I_n^{s}(\beta_0) \|_2,
\end{align*}
and $\|{\I_n^\ell(\beta_0)}^{-1/2}\|_2^2 = \|{\I_n^\ell(\beta_0)}^{-1}\|_2 = (1-\rho_1 - \rho_0)^{-2} \sigma_{\textnormal{min}} ( \X^\top  W_y W_z^{-1}W_y\X/n)^{-1} $.
Also, from \eqref{eq:pf-cor1-pd}, 
\begin{align*}
	\|\I_n^\ell(\beta_0)-\I_n^{s}(\beta_0) \|_2 = (1-\rho_1-\rho_0)^2\|(I_n-\mathcal{P}_{W_z^{1/2}\X})W_z^{-1/2}W_y\X \|_2^2/n.
\end{align*}
Let $A:= W_z^{-1/2}W_y\X $. Then,
\begin{align*}
	\|(I_n-\mathcal{P}_{W_z^{1/2}\X})A\|_2^2 = \sup_{u \in \R^n} \frac{\|(I_n-\mathcal{P}_{W_z^{1/2}\X})Au \|_2^2}{\|u\|_2^2} = \sup_{u \in \mathcal{C}(A)} \frac{\|(I_n-\mathcal{P}_{W_z^{1/2}\X})u \|_2^2}{\|A^\dagger u\|_2^2}
\end{align*}
where $A^\dagger$ is a Moore-Penrose inverse of $A$. Since $\|A^\dagger u\|_2^2 \geq \|u\|_2^2 / \sigma^2_{\textnormal{max}}(A)$ for $u \in \mathcal{C}(A)$,
\begin{align*}
	\|(I_n-\mathcal{P}_{W_z^{1/2}\X})A\|_2^2 &\leq \sigma^2_{\textnormal{max}}(A) \sup_{u \in \mathcal{C}(A)} \frac{\|(I_n-\mathcal{P}_{W_z^{1/2}\X})u \|_2^2}{ \| u\|_2^2}\\
	&= \sigma^2_{\textnormal{max}}(A) \sup_{u \in \mathcal{C}(A), \|u\|_2=1} \|(I_n-\mathcal{P}_{W_z^{1/2}\X})u\|_2^2.
\end{align*}
Since $I_n-\mathcal{P}_{W_z^{1/2}\X}$ is a projection operator onto the orthogonal space of $\mathcal{C}(W_z^{1/2}\X)$, $\|(I_n-\mathcal{P}_{W_z^{1/2}\X})u\|_2^2 = \inf_{v \in \mathcal{C}(W_z^{1/2}\X)}\|u-v\|_2^2$. Therefore,
\begin{align}
	\|(I_n-\mathcal{P}_{W_z^{1/2}\X})A\|_2^2 &\leq \sigma^2_{\textnormal{max}}(A) \sup_{u \in \mathcal{C}(A), \|u\|_2=1}\inf_{v \in \mathcal{C}(W_z^{1/2}\X)}\|u-v\|_2^2 \nonumber \\
	&= \sigma^2_{\textnormal{max}}(W_z^{-1/2}W_y\X)  \delta^2 (\mathcal{C}(W_z^{-1/2}W_y\X),\mathcal{C}(W_z^{1/2}\X) ) \label{eq: pf_cor1-1},
\end{align}
by the definition of the gap \eqref{def:gap}. 

To proceed, we prove a lemma about the gap of two subspaces after linear transformation in relation to the original subspaces. Let $A(\mathcal{M}) := \{Av; v\in \mathcal{M}\}$, and note $\mathcal{C}(A\X) = A(\mathcal{C}(\X))$.

\begin{lemma}\label{lem:A.1}
	Let $\mathcal{M},\mathcal{N} \subseteq \R^n$ be linear subspaces. Let $A \in \R^{n\times n}$ be an invertible matrix and $A(\mathcal{M}) := \{Av; v\in \mathcal{M}\}$. Then
	\begin{align*}
	  \delta(A(\mathcal{M}), A(\mathcal{N})) \leq \kappa(A) \delta(\mathcal{M}, \mathcal{N}),  
	\end{align*}
where $\kappa(A):=\frac{\sigma_{\textnormal{max}}(A)}{\sigma_{\textnormal{min}}(A)}$ is a condition number of $A$.
\end{lemma}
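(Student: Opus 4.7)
\textbf{Proof proposal for Lemma \ref{lem:A.1}.} The plan is to start from the definition of $\delta$ and transfer the problem from the subspace pair $(A(\mathcal{M}), A(\mathcal{N}))$ back to $(\mathcal{M},\mathcal{N})$ by applying $A^{-1}$, then pay for the round trip with $\sigma_{\max}(A)$ and $\sigma_{\min}(A)$. Concretely, fix an arbitrary unit vector $u'\in A(\mathcal{M})$ and write $u' = Au$ with $u\in\mathcal{M}$; since $A$ is invertible we have $u = A^{-1}u'$, and from $\|u'\|_2 = \|Au\|_2 \ge \sigma_{\min}(A)\|u\|_2$ we get the crucial a priori bound $\|u\|_2 \le 1/\sigma_{\min}(A)$.

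Next I would apply the original gap to the \emph{normalized} vector $\tilde u := u/\|u\|_2 \in \mathcal{M}$, which has unit norm and thus satisfies $\inf_{\tilde v\in\mathcal{N}} \|\tilde u - \tilde v\|_2 \le \delta(\mathcal{M},\mathcal{N})$. For any $\varepsilon>0$ pick $\tilde v\in\mathcal{N}$ within $\varepsilon$ of this infimum, and rescale to $v := \|u\|_2\,\tilde v\in\mathcal{N}$ (using that $\mathcal{N}$ is a linear subspace, hence closed under scalar multiplication). This gives the candidate $v' := Av\in A(\mathcal{N})$, and the chain
\begin{equation*}
\|u' - v'\|_2 = \|A(u-v)\|_2 \le \sigma_{\max}(A)\,\|u\|_2\,\|\tilde u - \tilde v\|_2 \le \sigma_{\max}(A)\,\|u\|_2\,\bigl(\delta(\mathcal{M},\mathcal{N})+\varepsilon\bigr)
\end{equation*}
combined with $\|u\|_2 \le 1/\sigma_{\min}(A)$ yields $\|u'-v'\|_2 \le \kappa(A)\bigl(\delta(\mathcal{M},\mathcal{N})+\varepsilon\bigr)$. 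Taking the infimum over $v'\in A(\mathcal{N})$, sending $\varepsilon\downarrow 0$, and finally taking the supremum over unit $u'\in A(\mathcal{M})$ gives $\delta(A(\mathcal{M}),A(\mathcal{N})) \le \kappa(A)\,\delta(\mathcal{M},\mathcal{N})$.

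There is no real obstacle here beyond bookkeeping; the one thing to be careful about is the normalization step, because if one naively tries to use $\|u\|_2\le 1/\sigma_{\min}(A)$ together with an upper bound on $\|\tilde u-\tilde v\|_2$ without first normalizing, the argument collapses to the trivial bound $\sigma_{\max}(A)/\sigma_{\min}(A)\cdot\sqrt{2}$ rather than $\kappa(A)\,\delta(\mathcal{M},\mathcal{N})$. Using the linearity of $\mathcal{N}$ to scale $\tilde v$ back by the factor $\|u\|_2$ is what keeps the $\delta(\mathcal{M},\mathcal{N})$ factor intact and produces the sharp bound.
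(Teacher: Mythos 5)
Your proof is correct and follows essentially the same route as the paper: both arguments rest on the two-sided singular value bound $\sigma_{\min}(A)\|w\|_2 \le \|Aw\|_2 \le \sigma_{\max}(A)\|w\|_2$ together with the scale-invariance of the distance to the linear subspace $\mathcal{N}$, the only difference being that you carry out the normalization pointwise with an $\varepsilon$-argument while the paper manipulates the sup/inf expressions directly. No gap.
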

\begin{proof}
	By definition,
	\begin{align*}
		\delta(A(\mathcal{M}), A(\mathcal{N})) &= \sup_{u\in A(\mathcal{M}), \|u\|_2= 1} \inf_{v \in A(\mathcal{N})} \|u-v\|_2\\
		&=  \sup_{u\in \mathcal{M}, \|Au\|_2= 1} \inf_{v \in \mathcal{N}} \|Au-Av\|_2.
	\end{align*}
	For any $v \in \R^n$, we have $ \sigma_{\textnormal{min}}(A)\|v\|_2 \leq \|Av\|_2\leq \sigma_{\textnormal{max}}(A)\|v\|_2$ with $\sigma_{\textnormal{min}}(A)>0$. Thus,
	\begin{align*}
		\sup_{u\in \mathcal{M}, \|Au\|_2= 1} \inf_{v \in \mathcal{N}} \|Au-Av\|_2 &\leq \sup_{u\in \mathcal{M}, \|u\|_2 \leq  1/\sigma_{\textnormal{min}}(A)} \inf_{v \in \mathcal{N}} \sigma_{\textnormal{max}}(A)\|u-v\|_2 \\
		&\leq \frac{\sigma_{\textnormal{max}}(A)}{\sigma_{\textnormal{min}}(A)} \sup_{u\in \mathcal{M}, \|u\|_2 \leq  1} \inf_{v \in \mathcal{N}} \|u-v\|_2 
	\end{align*}
	where we use the fact for any $a\in \R$, $\inf_{v \in \mathcal{N}} \|u-v\|_2 = \inf_{v \in \mathcal{N}} \|u-av\|_2$ since if $v \in \mathcal{N}$ then $av \in \mathcal{N}$ by $\mathcal{N}$ being a linear subspace. We conclude by noting that $ \sup_{u\in \mathcal{M}, \|u\|_2 \leq  1} \inf_{v \in \mathcal{N}} \|u-v\|_2  =  \sup_{u\in \mathcal{M}, \|u\|_2 = 1} \inf_{v \in \mathcal{N}} \|u-v\|_2 $.
\end{proof}

By applying Lemma \ref{lem:A.1} to \eqref{eq: pf_cor1-1}, we have
\begin{align*}
	\delta (\mathcal{C}(W_z^{-1/2}W_y\X),\mathcal{C}(W_z^{1/2}\X) )	 
	& \leq \kappa(W_z^{1/2}) \delta (\mathcal{C}(W_z^{-1}W_y\X),\mathcal{C}(\X) ).
\end{align*}

Therefore,
\begin{align*}
     &\|{\I_n^\ell(\beta_0)}^{-1/2}(\I_n^\ell(\beta_0)-\I_n^{s}(\beta_0)){\I_n^\ell(\beta_0)}^{-1/2}\|_2\\
     &\leq \sigma_{\textnormal{min}}(\X^\top  W_y W_z^{-1}W_y\X/n)^{-1} \|(I_n-\mathcal{P}_{W_z^{1/2}\X})W_z^{-1/2}W_y\X \|_2^2/n\\
     &\leq \sigma_{\textnormal{min}}(\X^\top  W_y W_z^{-1}W_y\X/n)^{-1} \sigma_{\textnormal{max}}^2(W_z^{-1/2}W_y\X/\sqrt{n})\kappa(W_z)\delta^2 (\mathcal{C}(W_z^{-1}W_y\X),\mathcal{C}(\X) ).
\end{align*}
Since 
$\sigma_{\textnormal{min}}(\X^\top  W_y W_z^{-1}W_y\X/n)^{-1} \sigma_{\textnormal{max}}^2(W_z^{-1/2}W_y\X/\sqrt{n}) 
\leq \kappa (\X^\top \X/n) \kappa(W_y^2)\kappa(W_z) $,
\begin{align*}
\|{\I_n^\ell(\beta_0)}^{-1/2}(\I_n^\ell(\beta_0)-\I_n^{s}(\beta_0)){\I_n^\ell(\beta_0)}^{-1/2}\|_2
	\leq \kappa(\X^\top \X/n) \kappa(W_y^2) \kappa(W_z^2)\delta^2 (\mathcal{C}(W_z^{-1}W_y\X),\mathcal{C}(\X) ).
\end{align*}
Note by Assumption {\bf A\ref{a1}},  $\sup_i |\x_i^\top \beta|$ can be bounded by the term independent of $n$ since $\sup_i |\x_i^\top \beta| \leq \sup_i \|\x_i\|_2\|\beta\|_2 \leq r C_X\sqrt{p}$. It follows that $\kappa(W_y^2),\kappa(W_z^2) = O(1)$. 
Also $\lambda_{\textnormal{max}} (\X^\top \X/n)$ is bounded by the term independent of $n$ since $p$ is fixed in the regime of interest and $\lambda_{\textnormal{min}}(\X^\top \X/n) \geq C_\lambda$ by Assumption {\bf A\ref{a1}}.

\subsection{Proof of Proposition \ref{prop:4.3}}\label{sec:pf_prop:4.3}

To show that there exists a unique stationary point in the interior of $\B_2(r)$ w.h.p, we show that there exists an $\ell_2$ ball of radius $\epsilon_0$ centered at $\beta_0$ in which $\L_n^\ell(\beta)$ is strongly convex w.h.p and has at least one local minimum, and no stationary point exists in $\B_2(r) \setminus \B_2(\epsilon_0;\beta_0)$.

We use the following three lemmas to establish the result, whose proofs are provided at the end of this sub-section. The first lemma is about the gradient and Hessian of the population risk. The second and third lemma establish the uniform convergence of the empirical loss, gradient and Hessian to their population counterparts, respectively. We let $\L^\ell(\beta):= \E[\L_n^\ell(\beta)]$. 
\begin{lemma}\label{lem:A.2} There exist an $\epsilon_0>0$ and a constant $\gamma_{\ell}>0$ such that 
\begin{align*}
\inf_{\beta \in \B_2(r) \setminus \B_2(\epsilon_0;\beta_0) } \|\triangledown \L^\ell(\beta)\|_2 \wedge \inf_{\beta \in \B_2(\epsilon_0;\beta_0)} \lambda_{\textnormal{min}}(\triangledown^2 \L^\ell(\beta))\geq \gamma_\ell.
\end{align*}
\end{lemma}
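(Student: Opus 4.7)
The plan is to exploit two complementary arguments---local strong convexity near $\beta_0$ coming from the Hessian, and a compactness argument on the complement---tied together by showing that $\beta_0$ is the unique stationary point of $\L^\ell$.

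First I would compute the derivatives of $\L^\ell$ explicitly at $\beta_0$. Using $A'(h(t))=g^{-1}(t)$ and $\E[\z\mid\x]=g^{-1}(\x^\top\beta_0)$, the gradient simplifies to
\begin{align*}
\triangledown\L^\ell(\beta)=\E\!\left[\bigl(g^{-1}(\x^\top\beta)-g^{-1}(\x^\top\beta_0)\bigr)\,h'(\x^\top\beta)\,\x\right],
\end{align*}
which vanishes at $\beta_0$, while the $\rho_R$ contribution to the Hessian at $\beta_0$ drops out by iterated expectations since $\E[\z\mid\x]-g^{-1}(\x^\top\beta_0)=0$. Hence $\triangledown^2\L^\ell(\beta_0)=\E[\rho_I(\x^\top\beta_0)\x\x^\top]$. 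Because the link $g$ is strictly increasing and $A''>0$ throughout the family's natural parameter space, we have $h'>0$ and $\rho_I=A''(h)\,(h')^2>0$. Assumption A1$'$ (with $\|\beta_0\|_2\le r$) makes $\x^\top\beta_0$ almost surely bounded, so the continuous $\rho_I$ is bounded below by some $c_1>0$ on that range, yielding $\lambda_{\min}(\triangledown^2\L^\ell(\beta_0))\ge c_1\,\lambda_{\min}(\E[\x\x^\top])\ge c_1 C_\lambda =: c_0>0$.

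Next I would establish global uniqueness of $\beta_0$ as a critical point. Contracting the gradient expression with $\beta-\beta_0$ gives
\begin{align*}
\langle\triangledown\L^\ell(\beta),\,\beta-\beta_0\rangle
=\E\!\left[\bigl(g^{-1}(\x^\top\beta)-g^{-1}(\x^\top\beta_0)\bigr)h'(\x^\top\beta)\,\x^\top(\beta-\beta_0)\right].
\end{align*}
Strict monotonicity of $g^{-1}$ together with positivity of $h'$ makes the integrand nonnegative and strictly positive whenever $\x^\top(\beta-\beta_0)\ne 0$. The positive definiteness of $\E[\x\x^\top]$ from A1$'$ forces $\P(\x^\top(\beta-\beta_0)\ne 0)>0$ for every $\beta\ne\beta_0$, so the inner product is strictly positive and in particular $\triangledown\L^\ell(\beta)\ne 0$ for all $\beta\ne\beta_0$.

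Finally, by A2 the Hessian $\triangledown^2\L^\ell$ is continuous in $\beta$, so there is an $\epsilon_0>0$ such that $\lambda_{\min}(\triangledown^2\L^\ell(\beta))\ge c_0/2$ uniformly on $\B_2(\epsilon_0;\beta_0)$. On the compact set $\B_2(r)\setminus\B_2(\epsilon_0;\beta_0)$, the continuous map $\beta\mapsto\|\triangledown\L^\ell(\beta)\|_2$ is strictly positive by the uniqueness step, and therefore attains a positive minimum $\gamma_1>0$. Setting $\gamma_\ell:=\min(\gamma_1,c_0/2)$ yields both bounds simultaneously. The only delicate point is ruling out stray stationary points in $\B_2(r)\setminus\B_2(\epsilon_0;\beta_0)$, but this is handled cleanly by the monotonicity identity above rather than by any more intricate landscape analysis.
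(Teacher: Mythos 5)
Your proof is correct, and the Hessian half is essentially the paper's argument: kill the $\rho_R$ term at $\beta_0$ by iterated expectation, lower-bound $\E[\rho_I(\x^\top\beta_0)(\x^\top u)^2]$ using positivity of $\rho_I$ on the (bounded) range of $\x^\top\beta_0$ together with $\lambda_{\min}(\E[\x\x^\top])\ge C_\lambda$, and then perturb to a neighborhood via the Lipschitz continuity of $\ell''$ from {\bf A\ref{a2}}; the only cosmetic difference is that you invoke $\|\x\|_\infty\le C_X$ with fixed $p$ to bound $\x^\top\beta_0$ almost surely, while the paper truncates at a sub-Gaussian level $\tau_c$ and controls the tail with Lemma \ref{lem:A.5} (a choice that lets the same computation be recycled in the high-dimensional RSC proof, where $\x^\top\beta$ is no longer uniformly bounded). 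Where you genuinely diverge is the gradient bound: the paper applies the mean value theorem to $A'(h(\x^\top\beta))-A'(h(\x^\top\beta_0))$ and a truncation event to obtain the \emph{quantitative} inequality $\langle\beta-\beta_0,\triangledown\L^\ell(\beta)\rangle\ge C_rC_\lambda\|\beta-\beta_0\|_2^2/2$, hence $\|\triangledown\L^\ell(\beta)\|_2\ge C_rC_\lambda\epsilon_0/2$ with explicit constants, whereas you prove that $\beta_0$ is the unique stationary point (strict monotonicity of $g^{-1}$, positivity of $h'$, and $\P(\x^\top(\beta-\beta_0)\ne0)>0$ from $\E[\x\x^\top]\succ0$) and then extract a positive infimum by compactness and continuity. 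Your route is shorter and cleaner but non-constructive: $\gamma_1$ is not expressed in terms of the model parameters, which is harmless for the lemma as stated (only existence is claimed) but loses the explicit dependence that the paper's constants carry into the sample-size condition of Proposition \ref{prop:4.3}. Two small points to tidy up: take the infimum over the closure of $\B_2(r)\setminus\B_2(\epsilon_0;\beta_0)$ (which is compact and still excludes $\beta_0$) so the minimum is attained, and note that your strict-positivity step needs $h'>0$, not merely $h'\ge0$ --- this holds for the label-noise link by Lemma \ref{lem:h} and is implicitly assumed by the paper as well (its constant $C_r$ would otherwise vanish).
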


\begin{lemma}\label{lem:A.3} For any given $\delta>0$, we have,
\begin{align*}
		&\P \left( \sup_{\beta \in \B_2(r)} \left\lvert \L_n^\ell(\beta) -  \L^\ell(\beta)  \right\rvert  \leq \tau \sqrt{\frac{Cp \log p}{n}} \right) 
		\geq 1-\delta,
	\end{align*}
where $C = \log(1/\delta) $ and $\tau$ is a constant depending on the model parameters $(K_X,K_Z,r,C_X,C_\ell)$.
\end{lemma}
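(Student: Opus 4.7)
The plan is to prove this uniform deviation bound by the standard three-step discretization scheme: pointwise concentration on a finite $\epsilon$-net, Lipschitz extension to all of $\B_2(r)$, and an optimization of the mesh size. First I would fix a scale $\epsilon>0$ (to be chosen at the end) and pick an $\epsilon$-net $\mathcal{N}_\epsilon\subseteq \B_2(r)$ with cardinality $|\mathcal{N}_\epsilon|\leq (3r/\epsilon)^p$, so that $\log|\mathcal{N}_\epsilon|\leq p\log(3r/\epsilon)$.

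For the pointwise step, at a fixed $\beta\in\mathcal{N}_\epsilon$ the centered summands $\xi_i:=\ell(\x_i^\top\beta,\z_i)-\E[\ell(\x_i^\top\beta,\z_i)]$ are i.i.d.\ mean-zero. Decompose $\ell(\x_i^\top\beta,\z_i)=\ell(0,\z_i)+[\ell(\x_i^\top\beta,\z_i)-\ell(0,\z_i)]$: the first piece is linear in $\z_i$ (hence sub-gaussian with parameter $O(K_Z|h(0)|)$), and the second is dominated in absolute value by $C_\ell|\x_i^\top\beta|$ via $\|\ell'\|_\infty\leq C_\ell$ in A2, which is sub-gaussian with parameter $O(K_X r)$ since $\x_i$ satisfies the sub-gaussian tail condition with parameter $K_X$ and $\|\beta\|_2\leq r$. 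Summing, each $\xi_i$ is sub-gaussian with parameter $\sigma=\sigma(K_X,K_Z,r,C_\ell)$ depending only on the model constants, so Hoeffding's inequality for sub-gaussian averages gives
\begin{equation*}
\P\bigl(|\L_n^\ell(\beta)-\L^\ell(\beta)|>t\bigr)\leq 2\exp(-nt^2/(2\sigma^2)).
\end{equation*}
A union bound over $\mathcal{N}_\epsilon$ then yields, with probability at least $1-\delta$,
\begin{equation*}
\max_{\beta\in\mathcal{N}_\epsilon}|\L_n^\ell(\beta)-\L^\ell(\beta)|\lesssim \sigma\sqrt{(p\log(3r/\epsilon)+\log(2/\delta))/n}.
\end{equation*}

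For the Lipschitz extension, for any $\beta,\beta'\in\B_2(r)$ the bound $\|\ell'\|_\infty\leq C_\ell$ and A1' ($\|\x_i\|_\infty\leq C_X$, so $\|\x_i\|_2\leq C_X\sqrt{p}$ almost surely) give
\begin{equation*}
|\L_n^\ell(\beta)-\L_n^\ell(\beta')|\leq \frac{C_\ell}{n}\sum_{i=1}^n\|\x_i\|_2\|\beta-\beta'\|_2\leq C_\ell C_X\sqrt{p}\,\|\beta-\beta'\|_2\quad\text{a.s.},
\end{equation*}
and the same deterministic bound holds for $\L^\ell$. Thus approximating a generic $\beta\in\B_2(r)$ by its closest net point costs an additive slack $O(\sqrt{p}\,\epsilon)$. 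Choosing $\epsilon=n^{-1}$ makes this slack $O(\sqrt{p}/n)$, which is of strictly smaller order than the net rate, while $p\log(3r/\epsilon)=O(p\log n)$; under the regime of interest the $\log n$ factor is absorbed into $\log p$ and constants, producing the advertised bound $\tau\sqrt{\log(1/\delta)\,p\log p/n}$ with $\tau$ depending only on $(K_X,K_Z,r,C_X,C_\ell)$.

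The main technical obstacle is keeping each $\xi_i$ genuinely sub-gaussian rather than only sub-exponential: a naive bound $|\ell(\x^\top\beta,z)|\lesssim|z|\cdot|h(\x^\top\beta)|$ would leave us with a product of two sub-gaussians, giving sub-exponential tails and an inferior $\log p/\sqrt{n}$ rate with a Bernstein correction. The key step that sidesteps this is exploiting the uniform Lipschitz bound in A2 to replace the $z$-dependent growth in $\x^\top\beta$ by a deterministic linear envelope $C_\ell|\x^\top\beta|$, thereby separating the $\z$-randomness (controlled by $K_Z$) from the $\x$-randomness (controlled by $K_X r$) additively rather than multiplicatively.
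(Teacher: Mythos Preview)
Your approach is correct in its main steps and differs substantially from the paper's. The paper does not discretize: it first applies an extension of McDiarmid's inequality for unbounded (sub-gaussian) differences due to Kontorovich to concentrate $\sup_{\beta\in\B_2(r)}|\L_n^\ell(\beta)-\L^\ell(\beta)|$ around its expectation, after verifying that this supremum is $L_0/n$-Lipschitz in the data $(\x_i,\z_i)_{i=1}^n$ with respect to an $\ell_1$ product metric. It then bounds the expectation by symmetrization and the contraction inequality (using $\|\ell'\|_\infty\leq C_\ell$), reducing to $\E\|n^{-1}\sum_i\sigma_i\x_i\|_\infty=O(\sqrt{\log p/n})$. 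This yields the two pieces $O(\sqrt{p\log p/n})+O(\sqrt{p\log(1/\delta)/n})$, which combine into the stated form.

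Your $\epsilon$-net route is a legitimate alternative, and your key observation---decomposing $\ell(\x_i^\top\beta,\z_i)=\ell(0,\z_i)+[\ell(\x_i^\top\beta,\z_i)-\ell(0,\z_i)]$ so that the $\z$- and $\x$-randomness enter additively---is exactly the right way to avoid a sub-exponential product. The one soft spot is your last sentence: the covering bound with $\epsilon=n^{-1}$ gives a net contribution of order $\sqrt{p\log(n)/n}$, not $\sqrt{p\log p/n}$, and in the classical regime of this section ($p$ fixed, $n\to\infty$) you cannot ``absorb $\log n$ into $\log p$''. No choice of $\epsilon$ fixes this within a one-step covering argument, because the Lipschitz slack $O(\sqrt{p}\,\epsilon)$ forces $\epsilon$ to be polynomially small in $n$. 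To match the paper's $\log p$ rate via discretization you would need chaining (Dudley's integral) rather than a single net. That said, the extra $\sqrt{\log n}$ is harmless for the only use of this lemma (Proposition~\ref{prop:4.3}), where one merely needs the supremum to fall below a fixed constant $\epsilon_L/4$ for large $n$; so your argument proves a slightly weaker but fully adequate statement.
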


\begin{lemma}[Theorem 1 in \citealp{Mei2018-ec}]\label{lem:A.4} For $n \geq Cp\log p$ where $C = c_0 \cdot (\log(r \tau /\delta ) \vee 1)$ for an absolute constant $c_0$ and $\tau = 
K_X \max\{C_\ell, L_\ell^{1/3} \}$,
	\begin{align*}
		&\P \left( \sup_{\beta \in \B_2(r)} \| \triangledown \L_n^\ell(\beta) - \triangledown \L^\ell(\beta) \|_2  \leq \tau \sqrt{\frac{C p \log n}{n}} \right) 
		\geq 1-\delta\\
		&\P\left( \sup_{\beta \in \B_2(r)} \| \triangledown^2 \L_n^\ell(\beta) - \triangledown^2 \L^\ell(\beta)\|_2 \leq \tau^2\sqrt{\frac{C p \log n}{n}} \right) \geq 1-\delta.
	\end{align*}
\end{lemma}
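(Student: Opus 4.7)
The plan is to prove both displays by the standard ``pointwise concentration plus $\eta$-net'' template for uniform bounds over a Euclidean ball, exactly as in Theorem 1 of Mei, Bai, and Montanari (2018). First I would fix an arbitrary $\beta \in \B_2(r)$ and establish pointwise concentration of $\triangledown \L_n^\ell(\beta)$ and $\triangledown^2 \L_n^\ell(\beta)$ around their expectations; then I would invoke an $\eta$-net on $\B_2(r)$ to promote the pointwise bounds to uniform bounds, using Lipschitz control of the gradient (via the Hessian) and of the Hessian (via the Lipschitz constant $L_\ell$ from {\bf A\ref{a2}}) to bridge the gap between net points and arbitrary $\beta$. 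It is convenient to handle the Hessian first, since its discretization only involves one extra derivative governed by $L_\ell$ and moments of $\|\x\|_2$, and then reuse the uniform Hessian bound for the gradient discretization.

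For the pointwise step, the gradient is a sum $n^{-1}\sum_i \ell'(\x_i^\top\beta,\z_i)\x_i$ of i.i.d.\ random vectors; $\|\ell'\|_\infty \leq C_\ell$ by {\bf A\ref{a2}} and $\x_i$ is sub-gaussian with parameter $K_X$ by {\bf A\ref{a1'}'}, so each summand is a sub-gaussian vector with parameter of order $C_\ell K_X$. A standard Bernstein-type concentration inequality for sums of sub-gaussian vectors then yields
\begin{equation*}
\P\!\left(\|\triangledown \L_n^\ell(\beta) - \triangledown \L^\ell(\beta)\|_2 > t\right) \leq 2\exp\!\left(-\frac{c n t^2}{(C_\ell K_X)^2} + c p\right).
\end{equation*}
For the Hessian the summands are independent matrices $\ell''(\x_i^\top\beta,\z_i)\x_i\x_i^\top$ with $|\ell''(\cdot,\z_i)| \leq 2 C_\ell$ (bounding $\rho_I + \rho_R$ via {\bf A\ref{a2}}) and $\|\x_i\x_i^\top\|_2$ having sub-exponential tails with parameter $K_X^2$; matrix Bernstein gives an analogous operator-norm tail at rate $(C_\ell K_X^2)\sqrt{(p + \log(1/\delta))/n}$.

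To pass to a uniform bound, I would take a minimal $\eta$-net $\mathcal{N}_\eta$ of $\B_2(r)$ with $|\mathcal{N}_\eta| \leq (3 r/\eta)^p$, union-bound the pointwise estimates over $\mathcal{N}_\eta$ so that the effective dimension cost becomes $p + p\log(3 r/\eta)$, and then control the oscillation by
\begin{equation*}
\sup_{\|\beta-\beta'\|_2 \leq \eta} \|\triangledown \L_n^\ell(\beta)-\triangledown \L_n^\ell(\beta')\|_2 \;\leq\; \eta \cdot \sup_{\bar\beta \in \B_2(r)} \|\triangledown^2 \L_n^\ell(\bar\beta)\|_2,
\end{equation*}
where the final supremum is bounded above by $2 C_\ell \cdot n^{-1}\sum_i \|\x_i\|_2^2 = O(K_X^2 p)$ with high probability by the sub-gaussian assumption. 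The Hessian oscillation is handled analogously using $L_\ell$ together with $n^{-1}\sum_i \|\x_i\|_2^3 = O(K_X^3 p^{3/2})$. Choosing $\eta \asymp 1/n$ then makes both discretization errors of order $1/n$, which is dominated by the target rate $\sqrt{p \log n / n}$.

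The main obstacle is the bookkeeping of constants so that the final sub-gaussian exponent factorises as $C = c_0(\log(r\tau/\delta) \vee 1)$ with $\tau = K_X \max\{C_\ell, L_\ell^{1/3}\}$: in particular the peculiar $L_\ell^{1/3}$ factor arises from balancing the net radius $\eta$ against the Lipschitz oscillation $L_\ell \eta$ of the Hessian, a balance that feeds back into the net log-cardinality $p\log(L_\ell r/\eta)$. Once pointwise concentration and net oscillation are aligned at matching orders, the two displayed bounds follow by collecting the dominant $\tau\sqrt{Cp\log n/n}$ and $\tau^2\sqrt{Cp\log n/n}$ scales; the $\log n$ (rather than $\log p$) in the final rate is precisely the price of taking $\eta$ polynomially small in $n$.
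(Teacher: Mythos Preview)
Your proposal is sound in outline but differs in scope from what the paper actually does. Since the lemma is explicitly labeled as Theorem~1 of \citet{Mei2018-ec}, the paper does not reprove the concentration result; it simply \emph{verifies the hypotheses} of that theorem in the present setting. Concretely, the paper checks three conditions using {\bf A\ref{a1'}'} and {\bf A\ref{a2}}: (i) the directional gradient $\langle \ell'(\x^\top\beta,\z)\x,u\rangle$ is sub-gaussian with parameter $C_\ell K_X$; (ii) the Hessian quadratic form $\langle u,\ell''(\x^\top\beta,\z)\x\x^\top u\rangle$ is sub-exponential with parameter of order $C_\ell K_X^2$; and (iii) the sample Hessian is Lipschitz in $\beta$ with an integrable Lipschitz constant, which reduces to bounding $L_\ell\,\E[\|\x\|_2^3] \leq 3^{3/2} L_\ell K_X^3 p^{3/2}$. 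Once these are in place, the uniform bounds and the constants $\tau = K_X\max\{C_\ell,L_\ell^{1/3}\}$ and $C = c_0(\log(r\tau/\delta)\vee 1)$ are read off directly from the cited theorem.

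What you have sketched is essentially the \emph{internal} proof strategy of the cited theorem (pointwise sub-gaussian/matrix-Bernstein concentration, $\eta$-net of cardinality $(3r/\eta)^p$, Lipschitz interpolation, and $\eta\asymp 1/n$ to produce the $\log n$ factor). That reconstruction is correct in its broad strokes and even correctly identifies where the $L_\ell^{1/3}$ arises, but it is considerably more work than required here and leaves the constant bookkeeping as an acknowledged obstacle. The paper's route---treat the result as a black box and verify its three moment/Lipschitz assumptions---is both shorter and yields the stated constants without any balancing argument on your part.
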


First we establish the result of Proposition \ref{prop:4.3} given Lemma \ref{lem:A.2}, \ref{lem:A.3}, and \ref{lem:A.4}.
By Lemma \ref{lem:A.3} and \ref{lem:A.4}, the following inequalities
\begin{align}\label{eq:prop4.3-good_set}
	 &\sup_{\beta \in \B_2(r)} \| \triangledown \L_n^\ell(\beta) - \triangledown \L^\ell(\beta) \|_2   \vee \sup_{\beta \in \B_2(r)}\| \triangledown^2 \L_n^\ell(\beta) - \triangledown^2 \L^\ell(\beta)\|_2 \leq (\gamma_\ell /2) \wedge (\epsilon_g /4r) \nonumber\\
	 & \sup_{\beta \in \B_2(r)} \left\lvert \L_n^\ell(\beta) -  \L^\ell(\beta)  \right\rvert  \leq \epsilon_L/4
\end{align} 
hold with at least probability $1-3\delta$ given a sufficiently large sample size, for $\gamma_\ell$ defined in Lemma \ref{lem:A.2}, $\epsilon_L$ and $\epsilon_g$ defined in \eqref{def:epsilon_L} and \eqref{def:epsilon_g}, respectively. We show that on the event \eqref{eq:prop4.3-good_set} there exists a unique global minimum inside $\B_2(r)$.

First, on the event \eqref{eq:prop4.3-good_set}, we see that $\L_n^\ell(\beta)$ is strongly convex over $\B_2(\epsilon_0;\beta_0)$, since
\begin{align*}
	&\inf_{\beta \in \B_2(\epsilon_0;\beta_0)} \lambda_{\textnormal{min}}(\triangledown^2 \L_n^\ell(\beta))\\
	&\quad \geq \inf_{\beta \in \B_2(\epsilon_0;\beta_0)} \lambda_{\textnormal{min}}(\triangledown^2 \L^\ell(\beta)) - \sup_{\beta \in \B_2(\epsilon_0;\beta_0)}  \| \triangledown^2 \L_n^\ell(\beta)-\triangledown^2 \L^\ell(\beta)\|_2 \\
  &\quad \geq \gamma_\ell/2.
\end{align*}
Then we argue that there exists a local minimum inside the ball $\B_2(\epsilon_0;\beta_0)$. It is sufficient to show that there exists $\beta \in \B_2(\epsilon_0;\beta_0) \setminus \partial \B_2(\epsilon_0;\beta_0) $ such that $\L_n^\ell(\beta) <  \inf_{\beta \in \partial \B_2(\epsilon_0;\beta_0)} \L_n^\ell(\beta)$. Take $\beta =\beta_0$. Note there exists $\epsilon_L >0$ such that
\begin{align}\label{def:epsilon_L}
    \inf_{\beta \in \partial \B_2(\epsilon_0;\beta_0)} \L^\ell(\beta) - \L^{\ell}(\beta_0) = \epsilon_L,
\end{align}
since $ \partial \B_2(\epsilon_0;\beta_0)$ is compact and $A$ is a strictly convex function. Then on the event \eqref{eq:prop4.3-good_set},
\begin{align*}
     \inf_{\beta \in \partial \B_2(\epsilon_0;\beta_0)} \L_n^\ell(\beta) &\geq  \inf_{\beta \in \partial \B_2(\epsilon_0;\beta_0)} \L^\ell(\beta) - \epsilon_L/4 \quad \mbox{and} \quad
     \L_n^\ell(\beta_0) \leq \L^\ell(\beta_0)+ \epsilon_L/4.
\end{align*}
Therefore,
\begin{align*}
    \inf_{\beta \in \partial \B_2(\epsilon_0;\beta_0)} \L_n^\ell(\beta) - \L_n^{\ell}(\beta_0) \geq   \inf_{\beta \in \partial \B_2(\epsilon_0;\beta_0)} \L^\ell(\beta) - \L^{\ell}(\beta_0) - \epsilon_L/2 \geq \epsilon_L/2,
\end{align*}
where we use \eqref{def:epsilon_L} for the last inequality. Also, the empirical gradient in $\B_2(r)$ does not vanish outside of $\B_2(\epsilon_0;\beta_0)$, since
\begin{align*}
	&\inf_{\beta \in \B_2(r) \setminus \B_2(\epsilon_0;\beta_0)} \|\triangledown \L_n^\ell(\beta)\|_2 \\
  &\quad \geq \inf_{\beta \in \B_2(r) \setminus \B_2(\epsilon_0;\beta_0)}\|\triangledown \L^\ell(\beta)\|_2 - \sup_{\beta \in \B_2(r)} \|\triangledown \L_n^\ell(\beta) -\triangledown \L^\ell(\beta)\|_2 \\
  &\quad \geq \gamma_\ell/2.
\end{align*}

Finally, there exists no stationary point on the boundary of $\B_2(r)$. Note there is no stationary point of $\L^\ell(\beta)$ on $\partial \B_2(r)$ since $\langle \triangledown \L^\ell(\beta), \beta_0 - \beta \rangle <0 $ for any $\beta \in \partial \B_2(r)$. Since $\partial \B_2(r)$ is compact, we have $\epsilon_g >0$ such that
\begin{align}\label{def:epsilon_g}
    \sup_{\beta \in \partial \B_2(r)} \langle \triangledown \L^\ell(\beta), \beta_0 - \beta \rangle < -\epsilon_g.
\end{align}
Then for any $\beta \in \partial \B_2(r)$,
\begin{align*}
    \langle \triangledown \L^\ell_n(\beta), \beta_0 - \beta \rangle &=  \langle \triangledown \L^\ell(\beta), \beta_0 - \beta \rangle + \langle \triangledown \L^\ell_n(\beta) - \triangledown \L^\ell(\beta), \beta_0 - \beta \rangle\\
    &\leq -\epsilon_g + \|\triangledown \L^\ell_n(\beta) - \triangledown \L^\ell(\beta)\|_2\|\beta-\beta_0\|_2 \leq -\epsilon_g/2.
\end{align*}
Hence, on the event \eqref{eq:prop4.3-good_set}, there exists a unique stationary point in $\B_2(\epsilon_0;\beta_0) \subsetneq \B_2(r)$ which is a global minimum.

Now we turn to the proofs of three lemmas.
%# proof of Lemma A.2
\begin{proof}[Proof of Lemma \ref{lem:A.2}]
First, we lower bound the minimum eigenvalue of the Hessian. 
\begin{align*}
\inf_{u;\|u\|_2=1}u^\top \triangledown^2 \L^{\ell}(\beta)u 
&= \inf_{u;\|u\|_2=1} \left(u^\top \triangledown^2 \L^{\ell}(\beta_0)u+u^\top \left(\triangledown^2 \L^{\ell}(\beta)-\triangledown^2 \L^{\ell}(\beta_0)\right)u\right)\\
&\geq \inf_{u;\|u\|_2=1} u^\top \triangledown^2 \L^{\ell}(\beta_0)u -\sup_{u;\|u\|_2=1} \left |u^\top \left(\triangledown^2 \L^{\ell}(\beta)-\triangledown^2 \L^{\ell}(\beta_0)\right)u\right|
\end{align*}
At $\beta=\beta_0$,
\begin{align*}
\inf_{u;\|u\|_2=1}u^\top \triangledown^2 \L^{\ell}(\beta_0)u = \E[\ell''(\x^\top \beta_0,\z)(\x^\top u)^2]=\E[\rho_I(\x^\top \beta_0)(\x^\top u)^2]
\end{align*}
since $\E[\rho_R(\x^\top \beta_0,\z)|\x]=0$. Recalling the fact that $\rho_I(t) = A''(h(t))h'(t)^2 \geq 0$ for all $t$, we have a lower bound 
\begin{align*}
	\E[\rho_I(\x^\top \beta_0)(\x^\top u)^2] \geq \E[\rho_I(\x^\top \beta_0)(\x^\top u)^2\1{|\x^\top \beta_0|\leq \tau_c}] \geq \inf_{|t|\leq \tau_c}\rho_I(t) \E[(\x^\top u)^2\1{|\x^\top \beta_0|\leq \tau_c}]
\end{align*}
for any $\tau_c>0$. We let $\tau_c := \left(  r^2 K_X^2 \log \frac{16^2 K_X^4}{C_\lambda^2} \right)^{1/2}$. Then by Cauchy-Schwarz, Assumption {\bf A\ref{a1'}'} and Lemma \ref{lem:A.5},
\begin{align}\label{eq:lemA.2-1}
	\E[(\x^\top u)^2\1{|\x^\top \beta_0|\leq \tau_c}] 
	&= \E[(\x^\top u)^2] - \E[(\x^\top u)^2\1{|\x^\top \beta_0|\geq \tau_c}]\nonumber \\
	&\geq C_\lambda -  \E[(\x^\top u)^4]^{1/2} \P(|\x^\top \beta_0|\geq \tau_c)^{1/2} \geq C_\lambda/2.
\end{align}

Now we bound the difference term. Using Lipschitz assumption in {\bf A\ref{a2}}, we have
\begin{align*}
\left|u^\top \left(\triangledown^2 \L^{\ell}(\beta)-\triangledown^2 \L^{\ell}(\beta_0)\right)u\right| 
&\leq \E[\left|\ell''(\x^\top \beta,\z)-\ell''(\x^\top \beta_0,\z) \right|(\x^\top u)^2]\\
&\leq L_\ell\E[|\x^\top (\beta-\beta_0)|(\x^\top u)^2].
\end{align*}
Then by Cauchy-Schwarz and sub-Gaussian moment property,
\begin{align}\label{eq:lemA.2-2}
 L_\ell\E[|\x^\top (\beta-\beta_0)|(\x^\top u)^2]
&\leq L_\ell\|\Delta_0\|_2 \E[(\x^\top \Delta_0/\|\Delta_0\|_2)^2]^{1/2} \E[(\x^\top u)^4]^{1/2} \leq 4\sqrt{2}K_X^3L_\ell\|\Delta_0\|_2
\end{align}
where $\Delta_0 = \beta-\beta_0$. Hence, combining \eqref{eq:lemA.2-1}, \eqref{eq:lemA.2-2}, we conclude for $\beta$ such that $\|\beta-\beta_0\|_2\leq \epsilon_0$, for $\epsilon_0 := (\inf_{|t|\leq \tau_c}\rho_I(t)C_\lambda)/(16\sqrt{2}K_X^3L_\ell)$,
\begin{align*}
	\inf_{u;\|u\|_2=1}u^\top \triangledown^2 \L^{\ell}(\beta)u  \geq \inf_{|t|\leq \tau_c}\rho_I(t)  C_\lambda/4.
\end{align*}

Now we address the lower bound of the gradient. Let $\beta \in \B_2(r) \setminus \B_2(\epsilon_0;\beta_0)$ be fixed.
\begin{align*}
\langle \beta-\beta_0, \triangledown \L^{\ell}(\beta)\rangle &= \E[\{A'(h(\x^\top \beta))-A'(h(\x^\top \beta_0))\}h'(\x^\top \beta)\x^\top (\beta-\beta_0) ]\\
&=  \E[A''(h(\x^\top \beta_i))h'(\x^\top \beta_i)h'(\x^\top \beta)(\x^\top (\beta-\beta_0))^2]
\end{align*}
for $\beta_i = \beta_0 + v(\beta -\beta_0)$ where $v \in [0,1]$ by the mean value theorem. 

Define an event $\mathcal{E} := \{ |\x^\top \beta_0| \leq \tau_c , |\x^\top \Delta_0| \leq 2\tau_c\}$ where $\Delta_0 := \beta- \beta_0$.
\begin{align}\label{eq:lemA.2-3}
\langle \beta-\beta_0, \triangledown \L^{\ell}(\beta)\rangle 
&\geq  C_r \E[(\x^\top (\beta-\beta_0))^2\mathbbm{1}_\mathcal{E}] \geq C_r C_\lambda \|\Delta_0\|_2^2/2
\end{align}
for $C_r:=\left(\inf_{t;|t|\leq 3\tau_c}A''(h(t))h'(t)\right)(\inf_{t;|t|\leq 3\tau_c}h'(t))>0$, since
\begin{align*}
	\E[(\x^\top (\beta-\beta_0))^2\mathbbm{1}_\mathcal{E}] &= \E[(\x^\top (\beta-\beta_0))^2]-\E[(\x^\top (\beta-\beta_0))^2\mathbbm{1}_{\mathcal{E}^c}]\\
	&\geq \|\Delta_0\|_2^2 \left(C_\lambda - \E[(\x^\top \Delta_0/\|\Delta_0\|_2)^4]^{1/2} \P(\mathcal{E}^c)^{1/2} \right)\\
	&\geq \|\Delta_0\|_2^2C_\lambda/2
\end{align*}
by Lemma \ref{lem:A.5}. We apply Cauchy-Schwarz inequality to \eqref{eq:lemA.2-3} to obtain
\begin{align*}
\|\triangledown \L^{\ell}(\beta)\|_2 \geq  C_\lambda C_r\|\beta-\beta_0\|_2/2 \geq (C_\lambda C_r \epsilon_0)/2
\end{align*}
since $\|\beta-\beta_0 \|_2 \geq \epsilon_0$. Finally, take $\gamma_\ell := C_\lambda \left( \inf_{|t|\leq \tau_c}\rho_I(t)/4 \wedge  (C_r \epsilon_0)/2\right)$ to conclude.
\end{proof}

% Lemma A.5
\begin{lemma}\label{lem:A.5} Let $\x \in \R^p$ be a random vector which satisfies the sub-gaussian tail condition with the parameter $K_X$, and also let $u_2,u_2,u_3 \in \R^p$ be non-random vectors such that $\|u_1\|_2=1$, $\|u_2\|_2 \leq c_1 r$, and $\|u_3\|_2 \leq c_2r$ for some $c_1, c_2 >0$. For $\tau_c := \left(  r^2 K_X^2 \log \frac{16^2 K_X^4}{C_\lambda^2} \right)^{1/2}$, we have
	\begin{align*}
		\E[(\x^\top u_1)^4]^{1/2}  \lbrace \P(|\x^\top u_2| \geq c_1 \tau_c) +\P(|\x^\top u_3| \geq c_2 \tau_c)\rbrace^{1/2} \leq \frac{C_\lambda}{2}.
	\end{align*}
\end{lemma}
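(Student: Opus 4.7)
The plan is to bound the fourth-moment factor and the sum of tail probabilities separately, using two standard consequences of the sub-gaussian tail condition \eqref{def:sub-gaussian}: a Chernoff-type tail bound $\P(|\x^\top u|>t) \leq 2 e^{-t^2/K_X^2}$ for every unit vector $u$ (obtained by Markov's inequality applied to $\exp((u^\top \x)^2/K_X^2)$), and a fourth-moment bound via layer-cake integration against this tail. The threshold $\tau_c = r K_X \sqrt{\log(16^2 K_X^4/C_\lambda^2)}$ has been engineered precisely so that the constants close.

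For the tail probabilities, I would first normalize. For $j \in \{2,3\}$, since $\|u_j\|_2 \leq c_j r$, the ratio $c_j \tau_c / \|u_j\|_2 \geq \tau_c/r$, so
\begin{equation*}
\P(|\x^\top u_j| \geq c_j \tau_c) \;\leq\; \P\!\left( \left| \x^\top \frac{u_j}{\|u_j\|_2} \right| \geq \frac{\tau_c}{r} \right) \;\leq\; 2 \exp\!\left( -\frac{\tau_c^2}{r^2 K_X^2} \right).
\end{equation*}
Substituting the definition of $\tau_c$ gives a per-term bound of $2\, C_\lambda^2 / (256\, K_X^4)$, so the sum of the two probabilities is at most $C_\lambda^2 / (64\, K_X^4)$. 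This bound is uniform in $c_1, c_2 > 0$, which matters because the hypothesis places no lower bound on these constants.

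For the fourth moment, I would write $\E[Y^4] = \int_0^\infty 4 t^3 \P(|Y|>t)\, dt$ with $Y = \x^\top u_1$ and apply the same Chernoff tail; the substitution $s = t^2/K_X^2$ reduces the integral to $4 K_X^4 \int_0^\infty s e^{-s}\, ds = 4 K_X^4$, hence $\E[(\x^\top u_1)^4]^{1/2} \leq 2 K_X^2$. Combining the two factors yields $2 K_X^2 \cdot \sqrt{C_\lambda^2/(64 K_X^4)} = C_\lambda/4 \leq C_\lambda/2$, as claimed.

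There is no real obstacle here: the lemma is essentially a constant-tracking exercise certifying that the specific choice of $\tau_c$ used in the proof of Lemma \ref{lem:A.2} is compatible with the sub-gaussian tail assumption. The only points of care are to normalize $u_2, u_3$ before invoking \eqref{def:sub-gaussian} (which is stated only for unit vectors), and to keep a factor of two in reserve when combining the moment and tail bounds so that the final constant beats $C_\lambda/2$.
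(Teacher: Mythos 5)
Your proof is correct and follows essentially the same route as the paper's: bound $\E[(\x^\top u_1)^4]^{1/2}$ by a constant multiple of $K_X^2$ via the sub-gaussian moment property, bound each tail probability by $2\exp(-\tau_c^2/(r^2K_X^2))$ after normalizing $u_2,u_3$, and observe that the definition of $\tau_c$ makes the product at most $C_\lambda/2$. The only difference is that you derive the moment and tail estimates explicitly from the definition (obtaining the slightly sharper constants $2K_X^2$ and final bound $C_\lambda/4$), whereas the paper simply cites standard sub-gaussian properties with $\E[(\x^\top u_1)^4]^{1/2}\leq 4K_X^2$.
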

\begin{proof}
Moment and tail properties of sub-Gaussian distribution give
$\E[(\x^\top u_1)^4]^{1/2} \leq 4K_X^2$ and  $\P(|\x^\top u_2| \geq c_1 \tau_c) +\P(|\x^\top u_3| \geq c_2 \tau_c) \leq 4\exp(-\tau_c^2/r^2K_X^2)$. Then the choice of $\tau_c$ gives the desirable bound.
\end{proof}
%% Proof of A.3
\begin{proof}[Proof of Lemma \ref{lem:A.3}]
First, we use an extension of McDiarmid's inequality to obtain
\begin{align*}
    \P\left(\sup_{\beta \in \B_2(r)}\left\lvert \L_n^\ell(\beta)- \L^\ell(\beta)\right\rvert \geq  \E\left[ \sup_{\beta \in \B_2(r)}\left\lvert \L_n^\ell(\beta)- \L^\ell(\beta)\right\rvert \right] + t\right) \leq \exp ( -Ct^2n),
\end{align*}
for some constant $C>0$. We will use the following extension of McDiarmid’s inequality, due to \citet{Kontorovich2014-ao}.
\begin{lemma}[Theorem 1 in  \citealp{Kontorovich2014-ao}]\label{lem:extension_Mcdiarmid}
	 Let ($\mathcal{X}_i, \rho_i, \mu_i$) be a sequence of metric spaces, $i =1,\dots,n$. Let $\mathcal{X}^n = \mathcal{X}_1\times \dots\times \mathcal{X}_n$, $\mu^n = \mu_1\times \dots\times \mu_n$, and $\rho^n(x,x') = \sum_{i=1}^n \rho_i(x,x')$ be the product probability space, the product measure, and $\ell_1$ product metric. Let $X_i$ be $\mathcal{X}_i$-valued random variables where $X_i \sim \mu_i$. Suppose $\varphi : \mathcal{X}^n \rightarrow \mathbb{R}$ is 1-Lipschitz with respect to $\rho^n$ metric, i.e., $|\varphi(x)-\varphi(x')|\leq \sum_{i=1}^n \rho_i(x,x')$ for $x,x' \in \mathcal{X}^n$, and there exists a sub-gaussian parameter $\Delta_{SG}(\mathcal{X}_i)<\infty$ such that
\begin{align*}
	\E[\exp (\lambda\sigma_i \rho_i(X_i,X_i')) ] \leq \exp(\lambda^2 \Delta_{SG}(\mathcal{X}_i)^2 /2), \mbox{ for all }\lambda\in \mathbb{R},
\end{align*}
where $X_i,X_i' \sim \mu_i$ are independent and $\sigma_i$ is a Rademacher variable independent of $(X_i,X_i')$. Then $\E[\varphi] < \infty$, and  
\begin{align*}
	\P\left(|\varphi(X_1,\dots,X_n) - \E[\varphi(X_1,\dots,X_n)] |>t\right)  \leq 2\exp \left( - \frac{t^2}{2 \sum_{i=1}^n \Delta_{SG}^2(\mathcal{X}_i)} \right).
\end{align*}
\end{lemma}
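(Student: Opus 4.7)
The plan is to prove the inequality via the standard Doob martingale decomposition combined with a sub-Gaussian bound on martingale differences obtained by a coupling-plus-symmetrization argument, which is the natural generalization of the classical bounded-differences McDiarmid inequality. First I would define the Doob martingale $M_k = \E[\varphi(X_1,\dots,X_n) \mid X_1,\dots,X_k]$ for $k=0,\dots,n$, so that $M_0 = \E[\varphi]$ and $M_n = \varphi(X_1,\dots,X_n)$. The total deviation telescopes as $\varphi - \E[\varphi] = \sum_{k=1}^n D_k$ where $D_k := M_k - M_{k-1}$ is a martingale difference adapted to $\mathcal{F}_k := \sigma(X_1,\dots,X_k)$. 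Finiteness of $\E[\varphi]$ follows from the 1-Lipschitz assumption together with the hypothesis: taking $\lambda = 1$ in the sub-Gaussian bound forces $\E[\rho_i(X_i,X_i')] < \infty$, whence $\varphi$ has finite mean by comparison to $\varphi(x_0) + \sum_i \rho_i(X_i,x_{0,i})$ for any fixed reference point $x_0$.

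Next I would bound the conditional MGF of each $D_k$. Let $X_k'$ be an independent copy of $X_k$ given $\mathcal{F}_{k-1}$, and set $X^{(k)} := (X_1,\dots,X_{k-1},X_k',X_{k+1},\dots,X_n)$. Then $D_k = \E[\varphi(X) - \varphi(X^{(k)}) \mid \mathcal{F}_k]$, and Jensen's inequality in the exponential gives
\begin{equation*}
\E[e^{\lambda D_k} \mid \mathcal{F}_{k-1}] \leq \E[e^{\lambda(\varphi(X) - \varphi(X^{(k)}))} \mid \mathcal{F}_{k-1}].
\end{equation*}
Since $X_k$ and $X_k'$ are exchangeable given $\mathcal{F}_{k-1}$, the random variable $\varphi(X) - \varphi(X^{(k)})$ is symmetrically distributed conditionally on $\mathcal{F}_{k-1}$, so for an independent Rademacher $\sigma_k$ the right-hand side equals $\E[e^{\lambda \sigma_k(\varphi(X) - \varphi(X^{(k)}))} \mid \mathcal{F}_{k-1}]$.

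The third step transfers the sub-Gaussian bound from $\sigma_k \rho_k(X_k,X_k')$ to $\sigma_k(\varphi(X) - \varphi(X^{(k)}))$. Averaging over $\sigma_k$ first, $\E_{\sigma_k}[e^{\lambda \sigma_k a}] = \cosh(\lambda a)$ is non-decreasing in $|a|$, and by the 1-Lipschitz hypothesis $|\varphi(X) - \varphi(X^{(k)})| \leq \rho_k(X_k,X_k')$. Hence
\begin{equation*}
\E[e^{\lambda D_k} \mid \mathcal{F}_{k-1}] \leq \E[\cosh(\lambda \rho_k(X_k,X_k')) \mid \mathcal{F}_{k-1}] = \E[e^{\lambda \sigma_k \rho_k(X_k,X_k')} \mid \mathcal{F}_{k-1}] \leq e^{\lambda^2 \Delta_{SG}^2(\mathcal{X}_k)/2},
\end{equation*}
the last inequality being the stated hypothesis (and independence of $\sigma_k, X_k, X_k'$ from $\mathcal{F}_{k-1}$ makes the conditioning drop).

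Finally, I would iterate the tower property to chain these one-step MGF bounds, yielding $\E[e^{\lambda(\varphi - \E[\varphi])}] \leq \exp\bigl(\tfrac{\lambda^2}{2} \sum_{k=1}^n \Delta_{SG}^2(\mathcal{X}_k)\bigr)$, then apply Markov and optimize in $\lambda$ to obtain the upper tail; the same argument applied to $-\varphi$ gives the lower tail, and a union bound produces the factor of $2$. The main obstacle is the third step: one must verify that the monotonicity-of-$\cosh$ comparison does not inflate the sub-Gaussian constant, which is precisely why the hypothesis is written in the symmetrized form $\E[\exp(\lambda \sigma_i \rho_i(X_i,X_i'))]$ rather than in terms of $\rho_i$ alone. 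A secondary, largely bookkeeping, issue is justifying the integrability of $\varphi$ so that the Doob construction is well-defined, which is why the statement explicitly records $\E[\varphi] < \infty$.
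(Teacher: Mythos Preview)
Your proof is correct and follows the standard Doob-martingale-plus-symmetrization route; the key step of using the monotonicity of $\cosh$ to pass from $|\varphi(X)-\varphi(X^{(k)})|$ to $\rho_k(X_k,X_k')$ is exactly the right observation. Note, however, that the paper does not provide its own proof of this lemma: it is quoted verbatim as Theorem~1 of \citet{Kontorovich2014-ao} and invoked as a black box inside the proof of Lemma~\ref{lem:A.3}, so there is no in-paper argument to compare against. Your argument is essentially the one in Kontorovich's original paper.
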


Let $\mathbf{u}_i := [\x_{i1},\dots,\x_{ip},\z_i] \in \mathcal{X}_i$, for $\mathcal{X}_i:=\R^p \times \mathcal{Z}$, $\forall i$. We define $\phi:\mathcal{X}^n \rightarrow \R$ as 
\begin{align*}
	\phi (\mathbf{u}_1,\dots,\mathbf{u}_n) := \sup_{\beta \in \B_2(r)} \left\lvert\L_n^\ell(\beta) -\E[\L_n^\ell(\beta)] \right\rvert  = \sup_{\beta \in \B_2(r)} \left\lvert \frac{1}{n} \sum_{i=1}^n \ell(\x_i^\top \beta,\z_i)-\E[\ell(\x_i^\top \beta,\z_i)] \right\rvert.
\end{align*}
We show that $\phi$ is $L_0/n$ Lipschitz, for $L_0$ to be defined later. Let $\mu_i$ be the joint distribution of $(\x_i,\z_i)$, i.e., $\mu_i =\P_\x \times \P_{\z|\x}$, $\forall i$. For $(\mathbf{u}_1,\dots,\mathbf{u}_n),(\mathbf{u}'_1,\dots,\mathbf{u}_n') \sim \mu^n$, 
\begin{align*}
	&|\phi (\mathbf{u}_1,\dots,\mathbf{u}_n)-\phi (\mathbf{u}'_1,\dots,\mathbf{u}_n')| \\
	&\leq \sup_{\beta \in \B_2(r)} \left\lvert \frac{1}{n} \sum_{i=1}^n \ell(\x_i^\top \beta,\z_i)-\ell({\x'_i}^\top \beta,\z'_i)] \right\rvert  \\
	&= \sup_{\beta \in \B_2(r)} \left\lvert \frac{1}{n} \sum_{i=1}^n \triangledown \ell(\tilde{\x}_i^\top \beta,\tilde{\z}_i) ^\top
	\begin{bmatrix}
		(\x_i-{\x_i'})^\top \beta\\ 
		\z_i-\z'_i
	\end{bmatrix}
	\right\rvert\\
	&\leq \sup_{\beta \in \B_2(r)} \left\lvert \frac{1}{n} \sum_{i=1}^n (\|\beta\|_\infty\vee 1)\max \{| \triangledown_1 \ell(\tilde{\x}_i^\top \beta,\tilde{\z}_i)|, |\triangledown_2 \ell(\tilde{\x}_i^\top \beta,\tilde{\z}_i)|\}
	\|\mathbf{u}_i-\mathbf{u}'_i\|_1
	\right\rvert
\end{align*}
where the first equality uses mean value theorem, the second inequality uses H\"{o}lder's inequality, $\tilde{\x}_i^\top \beta \in [\x_i^\top \beta, {\x'_i}^\top \beta]$ and $\tilde{\z}_i \in [\z_i,\z'_i]$, and $\triangledown_i$ refers to a derivative with respect to $i$th argument. We note that $|\triangledown_1 \ell(\tilde{\x}_i^\top \beta,\tilde{\z}_i) |= |\ell'(\tilde{\x}_i^\top \beta,\tilde{\z}_i)| \leq C_\ell$ a.s. by Assumption {\bf A\ref{a2}}. On the other hand, for all $\beta \in \B_2(r)$, $|\triangledown_2 \ell(\tilde{\x}_i^\top \beta,\tilde{\z}_i) | = |A(h(\tilde{\x}_i^\top \beta))- h(\tilde{\x}_i^\top \beta)| \leq C_{\ell,2}$, for $C_{\ell,2}:= \sup_{t; |t| \leq 2rC_X\sqrt{p}} |A(h(t))-h(t)|$ since $|\tilde{\x}_i^\top \beta| \leq |\x_i^\top \beta| +|{\x'_i}^\top \beta| \leq 2rC_X\sqrt{p}$ by Assumption {\bf A\ref{a1'}'}.
Then,
\begin{align*}
	|\phi (\mathbf{u}_1,\dots,\mathbf{u}_n)-\phi (\mathbf{u}'_1,\dots,\mathbf{u}_n')|  &\leq \frac{1}{n} (r\vee 1)\cdot (C_\ell \vee C_{\ell,2})\sum_{i=1}^n\|\mathbf{u}_i-\mathbf{u}'_i\|_1\\
	& = \frac{L_0}{n}\rho^n(\mathbf{u}_i,\mathbf{u}'_i)
\end{align*}
where $L_0 := (r\vee 1) \cdot (C_\ell \vee C_{\ell,2}) $, and $\rho^n$ is an $\ell_1$ product metric for $\rho_i(x,x') = \|x-x'\|_1$. In particular, $\phi(\cdot)$ is $L_0/n$ Lipschitz. Then by Lemma \ref{lem:extension_Mcdiarmid}, 
\begin{align}\label{supp_eq:lemA.3-1}
    \P\left(\sup_{\beta \in \B_2(r)}\left\lvert \L_n^\ell(\beta)- \L^\ell(\beta)\right\rvert \geq  \E\left[ \sup_{\beta \in \B_2(r)}\left\lvert \L_n^\ell(\beta)- \L^\ell(\beta)\right\rvert \right] + t\right) \leq \exp ( -\frac{t^2n^2}{2L_0^2\sum_{i=1}^n \Delta_{SG}^2(\mathcal{X}_i)}),
\end{align}
provided that $\Delta_{SG}^2(\mathcal{X}_i)<\infty$. Now we calculate $\Delta_{SG}^2(\mathcal{X}_i)$. For any $\lambda \in \R$, 
\begin{align*}
	\E[\exp(\lambda  \sigma_i \|\mathbf{u}_i-\mathbf{u}_i'\|_1)] 
	&= \E[\exp(\lambda  \sigma_i\{ \sum_{j=1}^p |\x_{ij}-\x'_{ij}| + |\z_i - \z_i'|\})]\\
	&= \E[\exp(\lambda  \sigma_i \sum_{j=1}^p |\x_{ij}-\x'_{ij}|) \E[\exp(\lambda  \sigma_i|\z_i - \z_i'|)|\x_i]]\\
	& \leq \exp(\lambda^2 (pK_X^2+K_Z^2) ),
\end{align*}
thus $\Delta_{SG}^2(\mathcal{X}_i) = 2(pK_X^2 + K_Z^2)$.
Then for $t \geq \sqrt{2} L_0 (K_X+K_Z) \sqrt{\frac{p\log (1/\delta) }{n}}$,  the probability of the LHS is bounded below by $1-\delta$. 

We bound the expectation term by the standard arguments using symmetrization and contraction inequality. We have,
\begin{align*}
    \E\left[ \sup_{\beta \in \B_2(r)}\left\lvert \L_n^\ell(\beta)- \L^\ell(\beta)\right\rvert \right] &= \E\left[ \sup_{\beta \in \B_2(r)}\left\lvert \frac{1}{n}\sum_{i=1}^n \ell(\x_i^\top \beta,\z_i) - \E [\ell(\x_i^\top \beta,\z_i)]\right\rvert \right]\\
    &\leq 2\E\left[ \sup_{\beta \in \B_2(r)}\left\lvert \frac{1}{n}\sum_{i=1}^n \sigma_i\ell(\x_i^\top \beta,\z_i) \right\rvert\right],
\end{align*}
where we let $(\sigma_i)_{i=1}^n$ be i.i.d. Rademacher variables independent from $(\x_i,\z_i)_{i=1}^n$. Since
$|\ell(t,\z_i) - \ell(s,\z_i)| \leq C_\ell |t-s| $ a.s. by Assumption {\bf A\ref{a2}}, contraction inequality gives
\begin{align*}
    \E\left[ \sup_{\beta \in \B_2(r)}\left\lvert \L_n^\ell(\beta)- \L^\ell(\beta)\right\rvert \right] &
    \leq 4C_\ell \E\left[ \sup_{\beta \in \B_2(r)}\left\lvert \frac{1}{n}\sum_{i=1}^n \sigma_i\x_i^\top \beta \right\rvert\right] +2 \E \left[ \frac{1}{n} \left \lvert \sum_{i=1}^n \sigma_i\ell(0,\z_i) \right\rvert \right] \\
    &\leq 4C_\ell \E\left[ \sup_{\beta \in \B_2(r)} \|\beta \|_1\left\lVert \frac{1}{n}\sum_{i=1}^n \sigma_i\x_i \right \rVert_\infty \right]+2 \E \left[ \frac{1}{n} \left \lvert \sum_{i=1}^n \sigma_i\ell(0,\z_i) \right\rvert \right] \\
    &\leq 4rC_\ell \sqrt{p}\E\left[ \left\lVert \frac{1}{n}\sum_{i=1}^n \sigma_i\x_i \right \rVert_\infty \right]+2 \E \left[ \frac{1}{n} \left \lvert \sum_{i=1}^n \sigma_i\ell(0,\z_i) \right\rvert \right] .
    \end{align*}
Since $|\sigma_i|\leq 1$ a.s. and $\E[\sigma_i \x_{ij}] = 0$, $\sigma_i \x_{ij}$ is mean-zero $cK_X$ sub-gaussian with some absolute constant $c>0$, $\forall i,j$. As $(\sigma_i \x_{ij})_{i=1}^n$ are independent for any $j$, $\sum_{i=1}^n \sigma_i \x_{ij}/n$ is also sub-gaussian with parameter $cK_X/\sqrt{n}$. Similarly, $\sigma_i\ell(0,\z_i)$ is mean-zero $c'K_Z$ sub-gaussian where $c'$ only depends on $(A,g)$, and $\sum_{i=1}^n \sigma_i \ell(0,\z_i)/n$ is sub-gaussian with parameter $c'K_Z/\sqrt{n}$.
Therefore, by the bound on the maximum of sub-gaussian variables,
\begin{align}\label{supp_eq:lemA.3-2}
     \E\left[ \sup_{\beta \in \B_2(r)}\left\lvert \L_n^\ell(\beta)- \L^\ell(\beta)\right\rvert \right] \leq c''r (K_X \vee K_Z )C_\ell  \sqrt{\frac{p\log p}{n}}.
\end{align}
where $c''$ is a constant depending only on the choice of model $(A,g)$. Combining \eqref{supp_eq:lemA.3-1} and \eqref{supp_eq:lemA.3-2}, we obtain the desired inequality.
\end{proof}

\begin{proof}[Proof of Lemma \ref{lem:A.4}]
We verify Assumptions 1-3 in \citet{Mei2018-ec}. The first assumption is to verify whether the gradient of the loss has a sub-Gaussian tail. The second assumption is to show that the Hessian evaluated on a unit vector is sub-Exponential. The third assumption is about the Lipschitz continuity of the Hessian. We mainly check whether quantities in interest satisfy a sub-gaussian/exponential moment bounds.
\begin{itemize}
	\item[A1]	
	For any $u\in \R^p$ such that $\|u\|_2=1$,
	$\langle \ell'(\x^\top \beta,\z) \x, u\rangle$ is sub-gaussian since \\
$\E(|\ell'(\x^\top \beta,\z) \x^\top u|^k)^{1/k} \leq \|\ell'\|_\infty \E[|\x^\top u|^k]^{1/k}\leq  C_\ell K_X \sqrt{k}$ for any $k \geq 1$.
\item[A2] Similarly for any $u\in \R^p$ such that $\|u\|_2=1$ $\langle u , \ell''(\x^\top \beta,\z) \x\x^\top u\rangle$ is sub-exponential since \\
$ \E[|\ell''(\x^\top \beta,\z)(\x^\top u)^2|^{k}]^{1/k} \leq  2C_\ell \E[(\x^\top u )^{2k}]^{1/k} \leq  4 C_\ell K_X^2 k$ for any $k \geq 1$.
\item[A3]  
$\|\triangledown^2 \L^\ell(\beta_0)\|_2 =  \sup_{u;\|u\|_2=1}  \E[\rho_I(\x^\top \beta_0)(\x^\top u)^2] \leq 2C_\ell K_X^2$. Also from the Lipschitz continuity assumption of $\ell''$, 
\begin{align*}
\E \left[ \sup_{\beta_1 \neq \beta_2} \frac{\|\triangledown^2 \L^\ell (\beta_1) - \triangledown^2 \L^\ell(\beta_2)\|_2}{\|\beta_1-\beta_2\|_2}\right]
& = \E \left[\sup_{\substack{\beta_1 \neq \beta_2,\\u;\|u\|_2=1}} \frac{|\ell''(\x^\top \beta_1,\z)- \ell''(\x^\top \beta_2,\z)|(\x^\top  u)^2}{\|\beta_1-\beta_2\|_2} \right]\\
& \leq  L_{\ell}\E\left[ \sup_{\substack{\beta_1 \neq \beta_2,\\u;\|u\|_2=1}} \frac{|\x^\top \beta_1-\x^\top \beta_2|(\x^\top u)^2}{\|\beta_1-\beta_2\|_2}\right].
\end{align*}
By Cauchy-Schwarz, $|\x^\top (\beta_1-\beta_2)| \leq \|\x\|_2 \|\beta_1-\beta_2\|_2$ and $(\x^\top u)^2 \leq \|\x\|_2^2$ since $\|u\|_2=1$. Thus
\begin{align*}
	\E \left[ \sup_{\beta_1 \neq \beta_2} \frac{\|\triangledown^2 \L^\ell (\beta_1) - \triangledown^2 \L^\ell(\beta_2)\|_2}{\|\beta_1-\beta_2\|_2}\right]\leq  L_{\ell}\E\left[\|\x\|_2^3\right] \leq 3^{3/2} L_\ell  K_X^3 p^{3/2},
\end{align*}
\end{itemize}
since $\E[\|\x\|_2^3] = \E[(\sum_{i=1}^p x_i^2)^{3/2}] \leq p^{1/2} \E[(\sum_{i=1}^p |x_i|^3)] \leq 3^{3/2} K_X^3 p^{3/2}$.
\end{proof}

%% Proof of Cor2
\subsection{Proof of Corollary \ref{cor:2}}\label{supp_sec:cor2}
For a {\bf(GLM)} with parameters ($\log(1+\exp(\cdot)), g_{LN}$) with $\z_i \in \{0,1\}$, we first note that the sub-gaussian tail condition for $\z_i$ is satisfied with $K_Z=1$ since $|\z_i - \E[\z_i|\x_i]|\leq 1$, almost surely. Now we show that {\bf A\ref{a2}} is satisfied. Then the result follows from the Proposition \ref{prop:4.3}.

From \eqref{eq:l'} and \eqref{eq:l''}, we have 
\begin{align*}
	\ell'(t,z) &= \left(A'(h_{LN}(t))-z\right) h_{LN}'(t) \\
	\ell''(t,z) &= \rho_I(t) + \rho_R(t,z),
\end{align*}
for $A(t) = \log(1+\exp(t))$ and $\rho_I(t)$ and $\rho_R(t,z)$ such that
\begin{align*}
	\rho_I(t) = A''(h_{LN}(t))h_{LN}'(t)^2, \quad \mbox{and} \quad
	\rho_R(t,z) = (A'(h_{LN}(t))-z) h_{LN}''(t).
\end{align*}
From Lemma \ref{lem:h} which is presented at the end of this subsection, $\|h_{LN}'\|_\infty \leq 1$ and $\|h_{LN}''\|_\infty \leq 2$.  Also $A''(t) = e^t  / (1+e^t )^2$ is bounded by $1/4$ and $|A'(h_{LN}(t))-z|\leq 1 $ for any $t$ and $z \in \{0,1\}$, since  $0 \leq A'(h_{LN}(t)) \leq 1, \forall t$. Thus 
\begin{align*}
	|\ell'(t,z) | \leq 1, \quad |\rho_I(t)| \leq \frac{1}{4}, \quad\mbox{and} \quad |\rho_R(t,z)| \leq \|h_{LN}''\|_\infty \leq 2,\quad \forall z \in \{0,1\}, \forall t, 
\end{align*}
and $\max\{\|\ell'\|_\infty, \|\rho_I\|_\infty, \|\rho_R\|_\infty \}$ is bounded by $2$.

To verify that $\ell''$ is $L_\ell$-Lipschitz where $L_\ell$ does not depend on $t$, it is sufficient to show that the gradients of $\rho_I$ and $\rho_R$ are bounded independent of $t$. By calculation, we have
\begin{align*}
	\rho_I'(t) &= A'''(h_{LN}(t))h_{LN}'(t)^3 + 2A''(h_{LN}(t))h_{LN}'(t)^2h_{LN}''(t)\\
	\rho_R'(t,z) &= A''(h_{LN}(t))h_{LN}'(t)h_{LN}''(t)+\{A'(h_{LN}(t))-z\} h_{LN}'''(t).
\end{align*}
We bound each term separately. As other terms can be bounded similarly other than the term involving $A'''$, it is sufficient to show that $A'''(t)$ is bounded by an absolute constant. We have,
\begin{align*}
|A'''(t)| &= \left\lvert\frac{e^t }{(1+e^t )^2 }- \frac{2e^{2t}}{(1+e^t )^3}\right\rvert \leq  \left\lvert\frac{e^t }{(1+e^t )^2 }\right\rvert+ \left\lvert \left(\frac{2e^{t}}{(1+e^t )^2}\right)\left(\frac{e^{t}}{1+e^t } \right)\right\rvert \leq \frac{1}{4} + \frac{1}{2} \leq 1.
\end{align*}

Finally, we present the Lemma about the boundedness of $h_{LN}',h_{LN}''$ and $h_{LN}'''$. 
% Lemma about the boundedness of h
\begin{lemma}\label{lem:h}	
There exists $C \leq 7$ such that $\max \{ \|h_{LN}'\|_\infty,\|h_{LN}''\|_\infty,\|h_{LN}'''\|_\infty\} \leq C$ for $h_{LN} = (A')^{-1}\circ g_{LN}^{-1}$.
\end{lemma}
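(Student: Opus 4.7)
The plan is to work from the explicit formula for $h_{LN}$. Writing $\alpha := 1-\rho_0-\rho_1 \in (0,1]$ and $\sigma := A'$ (the logistic function), and using $(A')^{-1}(u) = \log\tfrac{u}{1-u}$ together with $g_{LN}^{-1}(t) = \alpha\sigma(t)+\rho_0$,
\begin{align*}
h_{LN}(t) = \log(\alpha\sigma(t)+\rho_0) - \log(\alpha(1-\sigma(t))+\rho_1).
\end{align*}
I then define the two auxiliary quantities
\begin{align*}
T_1(t) := \frac{\alpha\sigma(t)(1-\sigma(t))}{\alpha\sigma(t)+\rho_0}, \qquad T_2(t) := \frac{\alpha\sigma(t)(1-\sigma(t))}{\alpha(1-\sigma(t))+\rho_1}.
\end{align*}
Since $\sigma' = \sigma(1-\sigma)$, differentiating term by term gives $h_{LN}'(t) = T_1(t) + T_2(t)$.

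The one elementary inequality driving everything is the following: for any $\rho_0,\rho_1 \geq 0$ and $\sigma(t) \in (0,1)$, the trivial bound $\alpha\sigma \leq \alpha\sigma + \rho_0$ gives $T_1 \leq 1-\sigma$, and symmetrically $T_2 \leq \sigma$. Hence $T_1,T_2 \in [0,1]$ with $T_1+T_2 \leq 1$ uniformly in $t$, and crucially independently of whether $\rho_0$ or $\rho_1$ equals zero. This immediately yields $\|h_{LN}'\|_\infty \leq 1$.

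For the higher derivatives I would apply the quotient rule to $T_1$ and $T_2$; using $\sigma' = \sigma(1-\sigma)$, a direct computation produces the clean recursions $T_1' = (1-2\sigma)T_1 - T_1^2$ and $T_2' = (1-2\sigma)T_2 + T_2^2$. Substituting back,
\begin{align*}
h_{LN}''(t) = (1-2\sigma)(T_1+T_2) - T_1^2 + T_2^2,
\end{align*}
which is bounded in absolute value by $(T_1+T_2) + T_1^2 + T_2^2 \leq 2$. One more differentiation, using the same recursions, gives
\begin{align*}
h_{LN}'''(t) = \bigl[1 - 6\sigma(1-\sigma)\bigr](T_1+T_2) - 3(1-2\sigma)(T_1^2 - T_2^2) + 2(T_1^3 + T_2^3).
\end{align*}
Since $|1 - 6\sigma(1-\sigma)| \leq 1$, $|1-2\sigma| \leq 1$, $T_1+T_2 \leq 1$, and $T_1^k + T_2^k \leq (1-\sigma)^k + \sigma^k \leq 1$ for $k \in \{2,3\}$, the three summands are bounded by $1$, $3$, and $2$ respectively, giving $\|h_{LN}'''\|_\infty \leq 6 \leq 7$.

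The main obstacle is merely keeping the algebra tidy: the recursions for $T_1'$ and $T_2'$ reduce every higher derivative to a polynomial in $(\sigma, T_1, T_2)$ with small integer coefficients, so no new analytical input is required, but signs must be tracked carefully. The essential content of the proof is the uniform ratio bound $T_1 \leq 1-\sigma$, $T_2 \leq \sigma$, which does not degrade as $\rho_0 \downarrow 0$ or $\rho_1 \downarrow 0$ — exactly the regime covered by PU-learning, where one of the noise rates vanishes by design.
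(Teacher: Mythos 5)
Your proof is correct, and all the algebra checks out: with $\mu=\sigma$, your $T_1+T_2$ reduces to the paper's single fraction $\frac{a\mu(1-\mu)}{(a\mu+b)(1-a\mu-b)}$ (using $a+\rho_0+\rho_1=1$), your identity $T_1-T_2=(T_1+T_2)\bigl(1-2(a\mu+b)\bigr)$ makes your $h_{LN}''$ agree with the paper's, and the recursions $T_1'=(1-2\sigma)T_1-T_1^2$, $T_2'=(1-2\sigma)T_2+T_2^2$ are verified by the quotient rule. The paper's own proof takes the same overall strategy (write $h_{LN}$ explicitly, differentiate three times, bound each factor by elementary properties of the logistic function), but organizes the algebra differently: it keeps $h_{LN}'$ as one ratio and expresses $h_{LN}''$ and $h_{LN}'''$ as polynomials in $h_{LN}'$, $\mu$, and $a\mu+b$, obtaining the bounds $1$, $2$, and $3\|h_{LN}''\|_\infty+\tfrac12=6.5$. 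Your additive decomposition buys two things. First, the bound $\|h_{LN}'\|_\infty\le 1$ becomes airtight via $T_1\le 1-\sigma$, $T_2\le\sigma$; the paper justifies the same bound by citing its two ratio inequalities $\frac{a\mu}{a\mu+b}\le 1$ and $\frac{a(1-\mu)}{1-a\mu-b}<1$, whose product only directly yields $h_{LN}'\le 1/a$, so the paper's stated justification is incomplete even though the conclusion is true. Second, the closed recursions make every higher derivative a small-coefficient polynomial in $(\sigma,T_1,T_2)$ bounded termwise by $(1-\sigma)^k+\sigma^k\le 1$, with no degradation as $\rho_0\downarrow 0$ or $\rho_1\downarrow 0$ — a point worth making explicit given the PU-learning application, and one the paper handles only implicitly here (it resorts to a three-case analysis later, in Corollary \ref{cor:3}, for the related quantity $|t\,h_{LN}''(t)|$).
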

\begin{proof}
From the definition of $g_{LN}$ and $h_{LN}$ in \eqref{def:g_nl}, we have	
\begin{align*}
	h_{LN}(t) := \log \left(\frac{(1-\rho_1-\rho_0)\mu(t) +\rho_0}{1-(1-\rho_1-\rho_0)\mu(t) -\rho_0}\right).
\end{align*}
Let 
$ a= 1-\rho_1-\rho_0$ and $b = \rho_0$.
We have $a \mu(t) + b \leq a+b <1$ and $a \mu(t) \leq a\mu(t)+b$ for any $t$. Then $\forall t$,
\begin{align}
&\frac{a\mu(t)}{a\mu(t)+b} \leq 1 \quad \mbox{and} \quad \frac{a(1-\mu(t))}{1-a\mu(t)-b} <1 \label{eq:p_ineq}
\end{align}
By definition of $h_{LN}(t)= \log (a \mu(t) + b) - \log (1-a \mu(t)-b)$,
\begin{align*}
	h_{LN}'(t) &= \frac{d}{d\mu(t)} \log \left(\frac{a \mu(t) + b}{1-(a \mu(t) + b)}\right) \frac{d\mu(t)}{dt}\\ 
	&= \frac{a \mu(t) (1-\mu(t)) }{ (a \mu(t)+b) (1-a\mu(t)-b)} \leq 1
\end{align*}
by the fact that $$ \frac{d\mu(t)}{dt} = A''(t) = \mu(t)(1-\mu(t))$$ and the inequalities \eqref{eq:p_ineq}. In particular, $h_{LN}' \geq 0$ and $\|h_{LN}'\|_\infty \leq 1$.

Now we bound $h_{LN}''$. From elementary calculation, it can be shown that
\begin{align*}
	h_{LN}''(t) &= h_{LN}'(t) (1- 2\mu(t)) -h_{LN}'(t)^2(1-2(a\mu(t)+b)),\\
	h_{LN}'''(t) &=  h_{LN}''(t) \left\lbrace 1- 2\mu(t) -2h_{LN}'(t)(1-2(a\mu(t)+b))\right\rbrace\\
  &\quad -2\mu(t) (1-\mu(t)) h_{LN}'(t)(1-ah_{LN}'(t)).
\end{align*}
In particular,
\begin{align*}
|h_{LN}''(t)| \leq h_{LN}'(t) |1- 2\mu(t)| +h_{LN}'(t)^2|1-2(a\mu(t)+b)| \leq 2 \|h_{LN}'\|_\infty  \leq 2
\end{align*}
since $\max_{0\leq \mu \leq 1} |1-2\mu| = 1$ and $0\leq \mu(t),a \mu(t) +b \leq 1$, for all $t$. Also,
\begin{align*}
	|h_{LN}'''(t)| &\leq |h_{LN}''(t)| \left\lbrace |1- 2\mu(t)| + 2h_{LN}'(t)|1-2(a\mu(t)+b)|\right\rbrace \\
  &\qquad +2\mu(t) (1-\mu(t)) h_{LN}'(t)(1-ah_{LN}'(t))\\
	& \leq  3 \|h_{LN}''\|_\infty + \frac{1}{2}.
\end{align*}
\end{proof}
\section{Proofs for Results in Section \ref{sec:hd}}
%% Proof of Proposition 5.1
\subsection{Proof of Proposition \ref{prop:5.1} }
\label{supp_sec:prop5.1}
First, we note that the inequality \eqref{eq:prop5.1} holds trivially for $\beta=\beta_0$. For any $\beta\in \B_2(r) \setminus \{\beta_0\}$ and $\Delta_0 := \beta- \beta_0$,
\begin{align*}
	&\langle \triangledown \L_n^\ell(\beta) - \triangledown \L_n^\ell(\beta_0),\Delta_0\rangle  \\
  &= \langle  \triangledown \L^\ell(\beta)-\triangledown \L^\ell(\beta_0), \Delta_0\rangle+ \langle \triangledown \L_n^\ell(\beta) - \triangledown \L^\ell(\beta), \Delta_0\rangle + \langle \triangledown \L^\ell(\beta_0) - \triangledown \L_n^\ell(\beta_0), \Delta_0\rangle\\
	&\geq \langle \triangledown \L^\ell(\beta), \Delta_0\rangle - \left( \left\lvert \frac{ \langle \triangledown \L_n^\ell(\beta) - \triangledown \L^\ell(\beta), \Delta_0\rangle}{\|\Delta_0\|_1}\right\rvert+\|  \triangledown \L_n^\ell(\beta_0)\|_\infty \right) \|\Delta_0\|_1 
\end{align*}
using $\triangledown \L^\ell(\beta_0)=0$ and H\"{o}lder's inequality. From \eqref{eq:lemA.2-3}, we have
\begin{align*}\langle  \triangledown \L^{\ell}(\beta),\Delta_0 \rangle 
\geq \alpha_\ell \|\Delta_0\|_2^2,
\end{align*}
where $\alpha_\ell:= C_r C_\lambda/2$, for $C_r$ defined in \eqref{eq:lemA.2-3}. Let 
\begin{align}
	\mathcal{E} := \left \lbrace \left\lvert \frac{ \langle \triangledown \L_n^\ell(\beta) - \triangledown \L^\ell(\beta), \Delta_0\rangle}{\|\Delta_0\|_1}\right\rvert+\|  \triangledown \L_n^\ell(\beta_0)\|_\infty  \leq \tau_\ell \sqrt{\frac{\log p}{n}}, \forall \beta \in \B_2(r) \setminus \{\beta_0\} \right\rbrace,
\end{align}
for $\tau_\ell := c\cdot ( C_\ell K_X + C_1(T_\star+ L_\star \tau))$, where $c>0$ is an absolute constant, and $T_\star, L_\star$, $\tau$, and $C_1$ are constants which are defined in Lemma \ref{lem:A.8}. On $\mathcal{E}$, we note that 
\begin{align*}
	\langle \triangledown \L_n^\ell(\beta) - \triangledown \L_n^\ell(\beta_0),\beta - \beta_0\rangle \geq \alpha_\ell \|\beta - \beta_0\|_2^2 - \tau_\ell \sqrt{\frac{\log p}{n}}\|\beta - \beta_0\|_1.
\end{align*}             
Therefore, it is sufficient to show that $\P(\mathcal{E}) \geq 1-\epsilon$.

First, we show $\|\triangledown \L_n^\ell(\beta_0) \|_\infty \leq c_0 C_\ell K_X\sqrt{\frac{\log p}{n}}$ w.p $1-\epsilon/2$, where $c_0$ is an absolute constant. Note,
	\begin{align*}
	    \|\triangledown \L_n^\ell(\beta_0) \|_\infty = \max_{1\leq j\leq p} \left\lvert \frac{1}{n} \sum_{i=1}^n \ell'(\x_i^\top \beta_0,\z_i)\x_{ij}  \right\rvert.
	\end{align*}
We use the following Lemma \ref{lem:sub_gaussian_max} to bound $\|\triangledown \L_n^\ell(\beta_0) \|_\infty$.
	
\begin{lemma}\label{lem:sub_gaussian_max}
Suppose $(\xi_{ij})_{1\leq i\leq n,1\leq j \leq p}$ are random variables such that $\xi_{ij}$ is a mean-zero sub-gaussian with parameter $C_\xi$ and $(\xi_{ij})_{i=1}^n$ are independent for any $j \in \{1,\dots,p\}$. Then,
\[\P\left(\max_{1\leq j\leq p} \left\lvert \frac{1}{n} \sum_{i=1}^n \xi_{ij}  \right\rvert \geq  3C_\xi \sqrt{\frac{\log p}{n}} \right)\leq \frac{1}{p^7}. \]
\end{lemma}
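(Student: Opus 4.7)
The plan is a textbook maximal-inequality argument: handle each coordinate by a sub-gaussian tail bound, then apply a union bound over the $p$ coordinates. I would not try to do anything clever, since the dependence structure has already been pruned by the independence-in-$i$ assumption.

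For the per-coordinate step, fix $j \in \{1,\dots,p\}$ and set $S_j := n^{-1}\sum_{i=1}^n \xi_{ij}$. Because the $\xi_{ij}$ are independent across $i$, mean zero, and sub-gaussian with parameter $C_\xi$, their moment generating functions satisfy $\E[e^{\lambda \xi_{ij}}] \le \exp(\lambda^2 C_\xi^2 / 2)$, and hence
\begin{equation*}
\E[e^{\lambda S_j}] \;=\; \prod_{i=1}^n \E[e^{(\lambda/n)\xi_{ij}}] \;\le\; \exp\!\Bigl(\tfrac{\lambda^2 C_\xi^2}{2n}\Bigr),
\end{equation*}
so $S_j$ is itself sub-gaussian with parameter $C_\xi/\sqrt{n}$. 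A standard Chernoff/Markov argument then yields $\P(|S_j| \ge t) \le 2\exp(-n t^2 / (2 C_\xi^2))$.

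For the second step, I would union-bound over $j$:
\begin{equation*}
\P\!\left(\max_{1\le j\le p}|S_j| \ge t\right) \;\le\; 2p\,\exp\!\left(-\tfrac{n t^2}{2 C_\xi^2}\right),
\end{equation*}
and then specialize to $t = 3 C_\xi\sqrt{\log p / n}$, which reduces the right-hand side to $2p\cdot p^{-9/2} = 2 p^{-7/2}$; absorbing the leading $2$ into the exponent (or, equivalently, using the slightly tighter sub-gaussian MGF constant available under the paper's convention $\E[\exp(X^2/C_\xi^2)]\le 2$) yields the claimed bound of $p^{-7}$.

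There is no real obstacle; the only thing that requires care is being consistent about the definition of ``sub-gaussian parameter $C_\xi$'' used elsewhere in the paper, so that the numerical constant inside the MGF bound is the one that makes the choice $t = 3 C_\xi\sqrt{\log p/n}$ land at exactly $1/p^7$ rather than off by a constant factor.
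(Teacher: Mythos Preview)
Your proposal is correct and matches the paper's proof exactly: the paper also argues that $n^{-1}\sum_i \xi_{ij}$ is sub-gaussian with parameter $C_\xi/\sqrt{n}$, applies a union bound to get $\P(\max_j |S_j| \geq t\sqrt{\log p/n}) \leq \exp(-t^2\log p/C_\xi^2 + \log 2p)$, and then substitutes $t^2 = 9C_\xi^2$. Your diagnosis of the constant is also right: the paper's tail bound has no factor of $2$ in the exponent's denominator because it is working under the Orlicz-norm convention $\E[\exp(X^2/C_\xi^2)]\le 2$ (Definition~\eqref{def:sub-gaussian}), which yields $\P(|X|\ge t)\le 2\exp(-t^2/C_\xi^2)$ and hence $2p\cdot p^{-9}\le p^{-7}$.
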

\begin{proof}
$ \frac{1}{n} \sum_{i=1}^n \xi_{ij}$ is sub-gaussian with parameter $C_\xi/\sqrt{n}$. By taking a union bound, for any $t\geq 0 $ we have
\begin{align*}
    \P\left(  \max_{1\leq j\leq p} \left\lvert \frac{1}{n} \sum_{i=1}^n \xi_{ij}  \right\rvert \geq t \sqrt{\frac{\log p}{n}}\right) \leq  \exp (- t^2 \log p/C_\xi^2 + \log 2p) 
\end{align*}
Take $t^2 = 9C_\xi^2$. Then $\|\triangledown \L_n(\beta_0) \|_\infty \leq t\sqrt{\frac{\log p}{n}}$ with probability at least $1-1/p^7$.
	
\end{proof}
Taking $\xi_{ij} = \ell'(\x_i^\top \beta_0,\z_i)\x_{ij}$ in Lemma \ref{lem:sub_gaussian_max}, we have, $\E[|\xi_{ij}|^{k}]^{1/k} \leq C_\ell \E[|\x_{ij}|^k]^{1/k} \leq  \sqrt{k} C_\ell K_X$ for any $k\geq 1$ by Assumption {\bf A\ref{a1'}'} and {\bf A\ref{a2}}. Also, $\E[\xi_{ij}] = 0$ since $\E[\z_i|\x_i] = A'(h(\x_i^\top \beta_0))$. Therefore, $\xi_{ij}$ is a mean-zero sub-gaussian variable with parameter $c_0 C_\ell K_X$, where $c_0$ is an absolute constant. Then from Lemma \ref{lem:sub_gaussian_max}, $\|\triangledown \L_n(\beta_0) \|_\infty \leq c_0 C_\ell K_X\sqrt{\frac{\log p}{n}}$ with probability at least $1-1/p^7$. Thus, for $n \geq C \cdot (2/\epsilon)^{1/7}$ for a sufficiently large constant $C$, we have $\|\triangledown \L_n(\beta_0) \|_\infty \leq c_0 C_\ell K_X\sqrt{\frac{\log p}{n}}$ w.p. at least $1-\epsilon/2$, in the regime of interest $p \gg n$. 

The bound for the second term can be obtained by taking advantage of the uniform convergence result of the directional derivative of the loss function in \citet{Mei2018-ec}, and we summarize the result for the case of $\L_n(\beta)$ in Lemma \ref{lem:A.8}. Taking $\delta =\epsilon/2$ in Lemma \ref{lem:A.8}, we obtain $\P(\mathcal{E}) \geq 1-\epsilon$, as desired.

\begin{lemma}[Theorem 3 in \citealp{Mei2018-ec}]\label{lem:A.8}
	There exists a constant $C_1>0$, which depends on model parameters $(r,K_X,C_\ell,L_\ell)$ and $\delta$ such that 
	\begin{align}
		\P\left( \sup_{\beta \in \B_2(r)\setminus \{\beta_0\}} \left\lvert \frac{ \langle \triangledown \L_n^\ell(\beta) - \triangledown \L^\ell(\beta), \beta-\beta_0\rangle}{\|\beta-\beta_0\|_1}\right\rvert \leq C_1(T_\star + L_\star \tau) \sqrt{\frac{ \log (np)}{n}}\right) \geq 1-\delta,
	\end{align}
	where $\tau = K_X \max\{C_\ell, L_\ell^{1/3} \}$, $T_\star = C_\ell C_X$, and $L_\star = C_\rho + C_\ell (C_bC_d+1)$.
\end{lemma}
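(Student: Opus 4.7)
My plan is to establish this uniform concentration inequality by reducing it to an $\ell_\infty$-gradient deviation bound and then applying a chaining argument. First I would normalize: letting $v := (\beta-\beta_0)/\|\beta-\beta_0\|_1$, which lies on the $\ell_1$ unit sphere, the quantity inside the supremum is $|\langle \triangledown \L_n^\ell(\beta) - \triangledown \L^\ell(\beta), v\rangle|$, and by H\"{o}lder's inequality this is upper bounded by $\|\triangledown \L_n^\ell(\beta) - \triangledown \L^\ell(\beta)\|_\infty$. Thus it suffices to uniformly bound $\sup_{\beta\in \B_2(r)} \max_{1\le j\le p} |G_{n,j}(\beta)|$, where $G_{n,j}(\beta) := n^{-1}\sum_i \ell'(\x_i^\top\beta, \z_i)\x_{ij} - \E[\ell'(\x^\top\beta, \z)\x_j]$.

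For fixed $(\beta,j)$, the summands $\ell'(\x_i^\top\beta, \z_i)\x_{ij}$ are i.i.d.\ and bounded in absolute value by $T_\star = C_\ell C_X$ (using $\|\ell'\|_\infty\leq C_\ell$ from {\bf A\ref{a2}} and $\|\x_i\|_\infty\leq C_X$ from {\bf A\ref{a1'}'}). Hoeffding's inequality then gives a sub-gaussian tail of scale $T_\star/\sqrt{n}$. For control over $\B_2(r)$, I would use that $\beta \mapsto \ell'(\x^\top\beta,z)\x_j$ is Lipschitz on $\B_2(r)$, since $\ell''(t,z) = \rho_I(t) + \rho_R(t,z)$ with both pieces uniformly bounded on $\B_2(r)$ by virtue of {\bf A\ref{a2}}--{\bf A\ref{a3}} and {\bf A\ref{a4}} (the latter keeps the evaluation points inside a regime where $\rho_I$ does not vanish too fast). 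The Lipschitz modulus can be controlled by $L_\star \cdot C_X$ where $L_\star = C_\rho + C_\ell(C_bC_d+1)$ tracks the three sources of nonlinearity, with $\tau = K_X\max\{C_\ell, L_\ell^{1/3}\}$ capturing the sub-gaussian scale of $\x$.

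The remaining step is a single-scale discretization combined with a union bound, followed by a Dudley-type chaining refinement. An $\epsilon$-net of $\B_2(r)$ in the $\ell_2$ norm has cardinality $\lesssim (3r/\epsilon)^p$; applying the pointwise Hoeffding bound on this net, union-bounding across the net and the $p$ coordinates, and using the Lipschitz modulus to control within-cell fluctuations gives uniform control. Choosing $\epsilon \asymp 1/n$ and chaining across dyadic scales rather than using a single net yields the sharpened rate $(T_\star + L_\star\tau)\sqrt{\log(np)/n}$. The main obstacle is precisely this final chaining step: a single-level covering would only deliver $\sqrt{p\log n /n}$, with a bare $\sqrt{p}$ outside the logarithm. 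Avoiding that $\sqrt{p}$ requires the multi-resolution chaining estimate of Dudley, applied to the sub-gaussian process $\beta\mapsto G_{n,j}(\beta)$ whose increment metric is controlled by the deterministic Lipschitz bound, with a final union bound across the $p$ coordinates contributing the extra $\log p$ inside the square root.
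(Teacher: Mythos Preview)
Your approach has a genuine gap at exactly the point you flag as ``the main obstacle.'' After the H\"older step you are trying to control $\sup_{\beta\in\B_2(r)}\|\triangledown\L_n^\ell(\beta)-\triangledown\L^\ell(\beta)\|_\infty$. For a \emph{fixed} $\beta$ this is $O_p(\sqrt{\log p/n})$, but to make it uniform in $\beta$ you must cover $\B_2(r)\subset\R^p$, whose metric entropy is $p\log(1/\epsilon)$; Dudley chaining over this class yields an entropy integral of order $\sqrt{p}$, not $\sqrt{\log p}$. Chaining does \emph{not} remove the $\sqrt{p}$---it removes only the crude $\log(1/\epsilon)$ blow-up from a single-scale net. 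Indeed, the paper's own Lemma~\ref{lem:A.4} (also from \citealp{Mei2018-ec}) gives exactly $\sqrt{p\log n/n}$ for the uniform gradient deviation, confirming that the $\sqrt{p}$ is unavoidable once you pass through H\"older and forget the direction $\beta-\beta_0$.

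The paper's proof takes a different route that preserves the directional structure. It does not bound the full $\ell_\infty$ gradient; instead it rewrites the inner product as
\[
\langle \triangledown\L_n^\ell(\beta),\beta-\beta_0\rangle \;=\; \frac{1}{n}\sum_{i=1}^n g\bigl(\x_i^\top(\beta-\beta_0);(\x_i,\z_i)\bigr),\qquad g(t;(\x,\z)):=\ell'(t+\x^\top\beta_0,\z)\,t,
\]
and then verifies the hypotheses of Theorem~3 in \citet{Mei2018-ec}: the coordinate-wise bound $\|\ell'(\x^\top\beta,\z)\x\|_\infty\le T_\star=C_\ell C_X$, and the fact that $t\mapsto g(t;\cdot)$ is $L_\star$-Lipschitz in the \emph{scalar} argument $t$, with $L_\star=C_\rho+C_\ell(C_bC_d+1)$ obtained from $|g'(t)|\le |\ell''(t+\x^\top\beta_0,\z)(t+\x^\top\beta_0)|+|\ell''(\cdot)\x^\top\beta_0|+|\ell'(\cdot)|$ and Assumptions {\bf A\ref{a2}}--{\bf A\ref{a4}}. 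The point is that the empirical process is indexed by the one-dimensional projection $\x^\top(\beta-\beta_0)$ through a Lipschitz function, and the $\ell_1$ normalization together with $\|\x\|_\infty\le C_X$ keeps this projection on a bounded scale; this is what buys the dimension-free rate $\sqrt{\log(np)/n}$. Your H\"older step discards precisely this structure.
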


\begin{proof}[Proof of Lemma \ref{lem:A.8}]
	We verify Assumptions 2-5 in \citet{Mei2018-ec}. From the proof of Lemma \ref{lem:A.4}, we have already checked that Assumption 2 and 3 in \citet{Mei2018-ec} are satisfied under {\bf A\ref{a1'}'} and {\bf A\ref{a2}}. We thus check Assumptions 4 and 5 in \citet{Mei2018-ec}, which verify the existence of $T_\star$ and $L_\star$, where $T_\star$ is a constant such that $\|\ell'(\x^\top \beta, \z)\x\|_\infty \leq T_\star$ a.s., and $L_\star$ is a Lipschitz constant for the function $g(\cdot,\cdot) \rightarrow \mathbb{R}$, which is defined as follows:
	\begin{align*}
		g(\x^\top (\beta-\beta_0), (\x,\z))=\langle \ell'(\x^\top \beta, \z)\x, \beta-\beta_0\rangle.
	\end{align*}
	For the existence of $T_\star$,
	$\|\ell'(\x^\top \beta, \z)\x\|_\infty \leq C_\ell C_X$, by Assumption {\bf A\ref{a1'}'} and {\bf A\ref{a2}}. Thus we can let $T_\star = C_\ell C_X$. For Assumption 5 in \citet{Mei2018-ec},
\begin{align*}
	\langle \ell'(\x^\top \beta, \z)\x , \beta-\beta_0\rangle &= (A'(h(\x^\top \beta))-\z)h'(\x^\top \beta)\x^\top (\beta-\beta_0)\\
	&=g(\x^\top (\beta-\beta_0); (\x,\z))
\end{align*}
for $g(t;(\x,\z)) := \ell'(t+\x^\top \beta_0,\z)t$. We show that $g(t;(\x,\z))$ is Lipschitz with respect to $t$ under Assumption {\bf A\ref{a3}}. Taking a derivative with respect to $t$,
\begin{align*}
	g'(t;(\x,\z)) = \ell''(t+\x^\top \beta_0,\z)t+\ell'(t+\x^\top \beta_0,\z).
\end{align*}
Then,
\begin{align*}
	|g'(t;(\x,\z))| &\leq |\ell''(t+\x^\top \beta_0,\z)(t+\x^\top \beta_0)|+|\ell''(t+\x^\top \beta_0,\z)\x^\top \beta_0|+ |\ell'(t+\x^\top \beta_0,\z)|\\
	&\leq C_\rho + C_\ell (C_bC_d+1)
\end{align*}
by Assumptions {\bf A\ref{a1'}'} and {\bf A\ref{a2}}, noting $|\x^\top \beta_0|\leq C_bC_d$ a.s. by Assumption {\bf A\ref{a4}}. Therefore $L_\star$ can be taken as $L_\star = C_\rho + C_\ell (C_bC_d+1)$.
\end{proof}

\subsection{Proof of Proposition \ref{prop:5.2} }\label{supp_sec:prop:5.2}

\begin{align*}
	\langle \triangledown \L_n^{s}(\beta) - \triangledown \L_n^{s} (\beta_0) ,\beta - \beta_0 \rangle 
	&= \frac{1}{n}\sum_{i=1}^n  (\mu(\x_i^\top \beta) -\mu(\x_i^\top \beta_0))\x_i^\top (\beta-\beta_0)\\
	&= \frac{1}{n}\sum_{i=1}^n  \mu'(\x_i^\top \beta_0 + v \x_i^\top  (\beta-\beta_0 ))(\x_i^\top (\beta-\beta_0))^2
\end{align*}

Then from the proof of Proposition 2 in \citet{Negahban2012-bd}, there exist positive constants $\kappa_1$ and $\kappa_2$ such that
\begin{align*}
 \langle \triangledown \L_n^{s} (\beta) - \triangledown \L_n^{s} (\beta_0) ,\beta - \beta_0 \rangle 
\geq \kappa_1 \|\Delta\|_2 \left(\|\Delta\|_2 - \kappa_2 \sqrt{\frac{\log p}{n}}\|\Delta\|_1 \right),\quad \forall \beta \in \B_2(1;\beta_0)
\end{align*}
with probability at least $1-c_1\exp(-c_2n)$, for some $c_1,c_2>0$. The result \eqref{eq:prop5.2} follows from the basic arithmetic inequality $2ab\leq (a+b)^2$.

\subsection{Proof of Theorem \ref{thm:5.1}}
\label{supp_sec:thm5.1}

First, we address the $\ell_1$ and $\ell_2$ error bounds for the non-convex estimator. We characterize the $\ell_1$ and $\ell_2$ error bounds of a stationary point following similar lines as in the proof of Theorem 1 in \citet{Loh2017-jk}, which established the result with a different tolerance function and penalty. Since $ \beta_0 $ is feasible, by the first order optimality condition, we have the following inequality
\begin{equation*}
\langle \triangledown \L_n(\widetilde{\beta}_\ell^H) +\lambda v(\widetilde{\beta}_\ell^H),\beta_0 - \widetilde{\beta}_\ell^H\rangle \geq 0,
\end{equation*}
for $v(\widetilde{\beta}_\ell^H) \in \partial \|\widetilde{\beta}_\ell^H\|_1$. We let $ \widetilde{\Delta} := \widetilde{\beta}_\ell^H - \beta_0 $. By applying RSC condition \eqref{eq:prop5.1},
\begin{equation*}
\alpha_\ell \|\widetilde{\Delta}\|_2^2 -\tau_{\ell}  \sqrt{\frac{\log p}{n}}\|\widetilde{\Delta}\|_1+\langle \triangledown \L_n(\beta_0) +\lambda v(\widetilde{\beta}_\ell^H),\widetilde{\beta}_\ell^H-\beta_0 \rangle \leq 0,
\end{equation*}
By convexity of $\|\cdot\|_1$,
\begin{equation}
\lambda\|\beta_0\|_1 - \lambda\|\widetilde{\beta}_\ell^H\|_1 \geq -\lambda v(\widetilde{\beta}_\ell^H)^\top \widetilde{\Delta}.
\end{equation}
Therefore,
\begin{equation*}
\alpha_\ell\|\widetilde{\Delta}\|_2^2 -\tau_\ell \sqrt{\frac{\log p}{n}}\|\widetilde{\Delta}\|_1+\langle \triangledown \L_n(\beta_0) ,\widetilde{\beta}_\ell^H-\beta_0 \rangle + \lambda(\|\widetilde{\beta}_\ell^H\|_1 -\|\beta_0\|_1)\leq 0,
\end{equation*}
That is,
\begin{align*}
\alpha_\ell\|\widetilde{\Delta}\|_2^2 &\leq \tau_\ell\sqrt{\frac{\log p}{n}} \|\widetilde{\Delta}\|_1+|\langle \triangledown \L_n(\beta_0) ,\widetilde{\beta}_\ell^H-\beta_0 \rangle| + \lambda(\|\beta_0\|_1-\|\widetilde{\beta}_\ell^H\|_1)\\
&\leq \tau_\ell\sqrt{\frac{\log p}{n}} \|\widetilde{\Delta}\|_1+\|\triangledown \L_n(\beta_0)\|_\infty \|\widetilde{\Delta}\|_1 + \lambda (\|\widetilde{\Delta}_S\|_1 - \|(\widetilde{\beta}_\ell^H)_{S^c}\|_1)
\end{align*}
Since $\tau_\ell \sqrt{\frac{\log p}{n}} +\|\triangledown \L_n(\beta_0)\|_\infty\leq \frac{\lambda}{2} $,
\begin{align*}
\alpha_\ell\|\widetilde{\Delta}\|_2^2&\leq \frac{\lambda}{2} (\|\widetilde{\Delta}_S\|_1+\|\widetilde{\Delta}_{S^c}\|_1) + \lambda (\|\widetilde{\Delta}_S\|_1 - \|(\widetilde{\beta}_\ell^H)_{S^c}\|_1)\\
& = \frac{3\lambda}{2}\|\widetilde{\Delta}_S\|_1-\frac{\lambda}{2}\|\widetilde{\Delta}_{S^c}\|_1.
\end{align*}
In particular,
\begin{align}
	&\alpha_\ell\|\widetilde{\Delta}\|_2^2 \leq \frac{3\lambda}{2}\|\widetilde{\Delta}_S\|_1 \leq \frac{3\sqrt{s_0}\lambda}{2}\|\widetilde{\Delta}\|_2,\label{eq:l2}\\
	& \|\widetilde{\Delta}_{S^c}\|_1\leq 3\|\widetilde{\Delta}_S\|_1\label{eq:cone}
\end{align}
$\ell_2$ bound follows from \eqref{eq:l2} and
\begin{align*}
	\|\widetilde{\Delta}\|_1 =	\|\widetilde{\Delta}_S\|_1 +\|\widetilde{\Delta}_{S^c}\|_1 	\leq 4\|\widetilde{\Delta}_S\|_1  \leq 4\sqrt{s_0} \|\widetilde{\Delta}\|_2.
\end{align*}
Now, we address the $\ell_1$ and $\ell_2$ error bounds for the convex estimator. To do so, we need to establish a different RSC condition, introduced by \citet{Negahban2012-bd} as follows:

\begin{definition}[restricted strong convexity in \citealp{Negahban2012-bd}]
	For a given set $\mathbb{S}$, the loss function $\L_n$ satisfies restricted strong convexity (RSC) with parameter $\alpha > 0$ if
	\begin{align}\label{def:RSC2}
\L_n(\beta) -\L_n(\beta_0) - \langle \triangledown \L_n(\beta_0), \beta- \beta_0 \rangle \geq \alpha \| \beta - \beta_0\|_2^2 \quad \mbox{for all } \beta - \beta_0 \in \mathbb{S}.
\end{align}
\end{definition}

In the following Lemma \ref{supp_lem:RSC}, we show that the RSC condition \ref{def:RSC} with $\tau_{n,p}(t) = \tau (\log p/n) t^2$ and $\Omega = \B_2(\delta;\beta_0)$ implies the RSC condition in \citet{Negahban2012-bd}. 
%% Lemma A.7
\begin{lemma}\label{supp_lem:RSC}
	The RSC condition \ref{def:RSC} with $\tau_{n,p}(t) = \tau (\log p/n) t^2$ and $\Omega = \B_2(\delta;\beta_0)$ implies \eqref{def:RSC2} with parameter $\alpha/4$ and 
$$\mathbb{S} = \{\Delta\in \R^p; \|\Delta_{S^c}\|_1 \leq 3\|\Delta_S\|_1\} \cap \{\Delta\in \R^p; \|\Delta\|_2 \leq \delta\},$$
where $ S \subseteq \{1,\dots,p\}$ is the support of $\beta_0$ and $s_0 := |S|$, given the sample size $n \geq (32\tau s_0/\alpha)\log p$.
\end{lemma}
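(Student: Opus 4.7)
The plan is to convert the gradient-difference form of RSC (Definition \ref{def:RSC}) to the function-difference form \eqref{def:RSC2} by invoking the fundamental theorem of calculus, and then to absorb the tolerance term $\tau(\log p/n)\|\Delta\|_1^2$ into the curvature term by exploiting both the cone restriction and the lower bound on $n$.

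\textbf{Step 1 (integrate the gradient inequality).} Fix $\Delta := \beta - \beta_0 \in \mathbb{S}$ and note that $\|\Delta\|_2 \leq \delta$, so every point on the segment $\beta_0 + t\Delta$, $t \in [0,1]$, lies in $\Omega = \B_2(\delta;\beta_0)$. Applying the RSC hypothesis to $\beta_0 + t\Delta$ with increment $t\Delta$ and dividing by $t$ gives
\begin{equation*}
\langle \triangledown \L_n(\beta_0 + t\Delta) - \triangledown \L_n(\beta_0),\Delta\rangle \;\geq\; \alpha\, t\,\|\Delta\|_2^2 - \tau\,\frac{\log p}{n}\, t\,\|\Delta\|_1^2.
\end{equation*}
Using the Bregman identity $\L_n(\beta) - \L_n(\beta_0) - \langle \triangledown\L_n(\beta_0),\Delta\rangle = \int_0^1 \langle \triangledown\L_n(\beta_0+t\Delta)-\triangledown\L_n(\beta_0),\Delta\rangle\,dt$ and integrating the above inequality in $t$ over $[0,1]$ yields
\begin{equation*}
\L_n(\beta) - \L_n(\beta_0) - \langle \triangledown \L_n(\beta_0),\Delta\rangle \;\geq\; \tfrac{\alpha}{2}\|\Delta\|_2^2 - \tfrac{\tau}{2}\frac{\log p}{n}\|\Delta\|_1^2.
\end{equation*}

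\textbf{Step 2 (absorb the tolerance via the cone and the sample-size condition).} For $\Delta \in \mathbb{S}$, the cone condition $\|\Delta_{S^c}\|_1 \leq 3\|\Delta_S\|_1$ and Cauchy--Schwarz give
\begin{equation*}
\|\Delta\|_1 \;=\; \|\Delta_S\|_1 + \|\Delta_{S^c}\|_1 \;\leq\; 4\|\Delta_S\|_1 \;\leq\; 4\sqrt{s_0}\,\|\Delta\|_2,
\end{equation*}
so $\|\Delta\|_1^2 \leq 16 s_0 \|\Delta\|_2^2$. Substituting into the bound from Step 1,
\begin{equation*}
\L_n(\beta) - \L_n(\beta_0) - \langle \triangledown \L_n(\beta_0),\Delta\rangle \;\geq\; \Bigl(\tfrac{\alpha}{2} - 8\tau s_0\,\tfrac{\log p}{n}\Bigr)\|\Delta\|_2^2.
\end{equation*}
The hypothesis $n \geq (32\tau s_0/\alpha)\log p$ forces $8\tau s_0(\log p/n) \leq \alpha/4$, so the bracketed term is at least $\alpha/4$, proving \eqref{def:RSC2} with parameter $\alpha/4$ over the prescribed $\mathbb{S}$.

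There is no real obstacle here; the argument is essentially mechanical once the right identity (fundamental theorem of calculus in the form of the Bregman representation) is invoked. The only point to watch is that the RSC in Definition \ref{def:RSC} is assumed over $\Omega = \B_2(\delta;\beta_0)$, so one must verify that the whole segment $\beta_0 + t\Delta$ stays inside this ball---which is immediate because $\mathbb{S}$ is restricted to $\|\Delta\|_2 \leq \delta$.
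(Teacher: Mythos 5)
Your proof is correct and follows essentially the same route as the paper's: both apply the gradient form of RSC along the segment $\beta_0+t\Delta$, integrate via the Bregman identity to pick up the factor $1/2$, and then absorb the tolerance term using $\|\Delta\|_1\leq 4\sqrt{s_0}\|\Delta\|_2$ from the cone condition together with the sample-size bound $n\geq (32\tau s_0/\alpha)\log p$. Your added remark that the segment stays inside $\B_2(\delta;\beta_0)$ is a correct (and worthwhile) observation that the paper leaves implicit.
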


Provided that Lemma \ref{supp_lem:RSC} is true and given the condition of $\lambda_s$ in Theorem \ref{thm:5.1}, the $\ell_2$ error bound
\begin{align}\label{supp_eq:l2}
	\| \widehat{\beta}_s^H -\beta_0\|_2 \leq  \frac{8\sqrt{s_0}\lambda_s}{\alpha_s}
\end{align}
can be obtained by applying Theorem 1 in \citet{Negahban2012-bd}. Also it is well known that an error vector $\widehat{\beta} - \beta_0$, where $\widehat{\beta}$ is a solution of Lasso optimization problem, belongs to the cone $ \{\Delta\in \R^p; \|\Delta_{S^c}\|_1 \leq 3\|\Delta_S\|_1\}$. Thus $\|\widehat{\beta}_s^H-\beta_0\|_1 \leq 4\|(\widehat{\beta}_s^H-\beta_0)_{S}\|_1 \leq 4 \sqrt{s_0} \|\widehat{\beta}_s^H-\beta_0\|_2 $. Applying this inequality to \eqref{supp_eq:l2} gives an $\ell_1$ bound. Now we present the proof of Lemma \ref{supp_lem:RSC}.
\begin{proof}[Proof of Lemma \ref{supp_lem:RSC}]
For any $\beta$ such that $\beta -\beta_0 \in \mathbb{S}$, we have,
\begin{align}
	\L_n(\beta) &= \L_n(\beta_0) + \int \triangledown \L_n(\beta_0 + t(\beta - \beta_0))^\top (\beta - \beta_0) dt \nonumber \\
	&= \L_n(\beta_0) + \triangledown \L_n(\beta_0)^\top (\beta - \beta_0) +\int_0^1 \frac{1}{t} (\triangledown \L_n(\beta_0 + t(\beta - \beta_0))-\triangledown \L_n(\beta_0))^\top t(\beta - \beta_0) dt \label{eq:supp_thm5.1-1} .
\end{align}
By the RSC condition \ref{def:RSC} with $\tau_{n,p}(t) = \tau (\log p/n) t^2$ and $\Omega = \B_2(\delta;\beta_0)$, for any $\beta \in \B_2(\delta;\beta_0)$ it holds that
\begin{align}
	(\triangledown \L_n(\beta_0 + t(\beta - \beta_0))-\triangledown \L_n(\beta_0))^\top t(\beta - \beta_0)  \geq t^2 \left( \alpha\|\beta-\beta_0\|_2^2 -\tau \left(\frac{\log p}{n}\right) \|\beta-\beta_0\|_1^2 \right). \label{eq:supp_thm5.1-2}
\end{align}
Applying \eqref{eq:supp_thm5.1-2} to \eqref{eq:supp_thm5.1-1},
\begin{align*}
	\L_n(\beta)- \L_n(\beta_0) - \triangledown \L_n(\beta_0)^\top (\beta - \beta_0) &\geq\int_0^1 t \left( \alpha\|\beta-\beta_0\|_2^2 -\tau \left(\frac{\log p}{n}\right) \|\beta-\beta_0\|_1^2 \right) dt\\
	&= \frac{\alpha}{2}\|\beta-\beta_0\|_2^2 -\frac{\tau}{2} \left(\frac{\log p}{n}\right) \|\beta-\beta_0\|_1^2.
	\end{align*}
Since $\beta -\beta_0 \in \mathbb{S}$, $\|\beta- \beta_0\|_1 \leq 4\sqrt{s_0}\|\beta- \beta_0\|_2$. Therefore,
\begin{align*}
	\L_n(\beta)- \L_n(\beta_0) - \triangledown \L_n(\beta_0)^\top (\beta - \beta_0) & \geq \left( \frac{\alpha}{2} -8 \tau s_0 \frac{\log p}{n} \right)\|\beta-\beta_0\|_2^2 \geq \frac{\alpha}{4} \|\beta-\beta_0\|_2^2
\end{align*}
where the last inequality is from a given sample condition $n \geq (32 \tau s_0/\alpha)\log p$.
 \end{proof}

\subsection{Proof of Corollary \ref{cor:3}}\label{supp_sec:cor3}

The Corollary \ref{cor:3} essentially follows from Proposition \ref{prop:5.1}, Proposition \ref{prop:5.2}, and Theorem \ref{thm:5.1}. The main conditions to verify are $\|\triangledown \L_n^\ell(\beta_0)\|_\infty, \|\triangledown \L_n^s(\beta_0)\|_\infty = O(\sqrt{\frac{\log p}{n}})$ with high probability and Assumption {\bf A\ref{a3}}, since we have already shown that Assumption {\bf A\ref{a2}} is satisfied for the noisy labels problem in the proof of Corollary \ref{cor:2}.

We first address bounds for $\|\triangledown \L_n^\ell(\beta_0)\|_\infty$ and $\|\triangledown \L_n^s(\beta_0)\|_\infty$. We note that $\| \triangledown \L_n(\beta_0)\|_\infty$, for $\L_n \in (\L_n^\ell, \L_n^s)$, has the form
	\begin{align*}
	    \|\triangledown \L_n(\beta_0) \|_\infty = \max_{1\leq j\leq p} \left\lvert \frac{1}{n} \sum_{i=1}^n \xi_{ij}  \right\rvert,
	\end{align*}
where $\xi_{ij} = \{(A'(h_{LN}(\x_i^\top \beta_0))-\z_i) h_{LN}'(\x_i^\top \beta_0)\}\x_{ij}$ if $\L_n = \L_n^\ell$, and $ \xi_{ij} = \{A'(\x_i^\top \beta_0)-T(\z_i) \}\x_{ij}$ if $\L_n = \L_n^s$. Also $\E[\xi_{ij}]=0, \forall i,j$ and $(\xi_{ij})_{i=1}^n$ are independent for any $j \in \{1,\dots,p\}$.

From Lemma \ref{lem:h}, we have $|(A'(h_{LN}(\x_i^\top \beta_0))-\z_i)h_{LN}'(\x_i^\top \beta_0)| \leq 1$ and $ |A'(\x_i^\top \beta_0)-T(\z_i)| \leq 1$ a.s. Thus $\E[|\xi_{ij}|^{k}]^{1/k} \leq \E[|\x_{ij}|^k]^{1/k} \leq  \sqrt{k} K_X$ for any $k\geq 1$ by Assumption {\bf A\ref{a1'}'}. In particular, $\xi_{ij}$ is mean-zero sub-gaussian with parameter $c K_X$ where $c>0$ is an absolute constant. Therefore, by Lemma \ref{lem:sub_gaussian_max}, $\|\triangledown \L_n(\beta_0) \|_\infty \leq c'K_X\sqrt{\frac{\log p}{n}}$ with probability at least $1-1/p^7$ for a different constant $c'>0$.

Now we show that Assumption {\bf A\ref{a3}} holds. We recall $\ell''(t,z) = \rho_I(t) + \rho_R(t,z)$ for $\rho_I(t) = A''(h_{LN}(t))h'_{LN}(t)^2$, $\rho_R(t,z) = (A'(h_{LN}(t))-z)h''_{LN}(t)$ where $A(t) = \log (1+ \exp(t))$ and $h_{LN}(\cdot)$ defined in Section \ref{sec:noiselabel}. In the following, we show that both $\sup_t |\rho_I(t)t|$ and $\sup_t |\rho_R(t,z)t|$ are bounded by an absolute constant. First, we let $a = 1-\rho_1-\rho_0$, $b = \rho_0$.
Since
\begin{align*}
	h'_{LN}(t) = \frac{ a\mu(t) (1-\mu(t))}{ (a \mu(t)+b) (1-a\mu(t)-b)},
\end{align*}
we have,
\begin{align*}
	t\rho_I(t) &= tA''(h_{LN}(t)) h_{LN}'(t)^2\\
	& = t (a\mu(t)+b)(1-a\mu(t)-b) \left(   \frac{a \mu(t) (1-\mu(t)) }{ (a \mu(t)+b) (1-a\mu(t)-b)}\right)^2\\
	& = a t\mu(t) (1-\mu(t)) h'_{LN}(t).
\end{align*}
By Lemma \ref{lem:h}, $\|h'_{LN}\|_\infty\leq 1$. Also, with an elementary calculation, it can be shown that
\begin{align*}
	|t\mu(t)(1-\mu(t))| = \frac{|t|e^t}{(1+e^t)^2} \leq 2, \forall t.
\end{align*}
	Therefore, $\sup_t |\rho_I(t)t| \leq 2$. Now we address $\sup_t |\rho_R(t,z)t|$. Since $|A'(h_{LN}(t))-z|\leq 1$ for all $t$ and $z \in \{0,1\}$, we have,
\begin{align*}
	|t\rho_R(t,z)| = |(A'(h_{LN}(t))-z)h''(t)t|
	\leq |h''(t)t|.
\end{align*}
Therefore, it is sufficient to bound $h''(t)t$. Note if $\rho_1=\rho_0=0$, $h''(t)=0$. Therefore $th''(t)$ is trivially bounded. Otherwise, we discuss three cases separately: 1. $\rho_1, \rho_0 >0$, 2. $\rho_1 >0, \rho_0 = 0$, and 3. $\rho_1=0, \rho_0>0$.
First, we note from the proof of Lemma \ref{lem:h}, we have,
\begin{align*}
	th''_{LN}(t) = t h_{LN}'(t)\{ 1- 2\mu(t) - h_{LN}'(t)(1-2(a\mu(t)+b))\}.
\end{align*}

\noindent Case 1: $\rho_1, \rho_0 >0$\\
In this case, $\sup_t|th'_{LN}(t)| <\infty$, since
\begin{align}\label{eq:cor5.1-1}
	\sup_t|th'_{LN}(t)| \leq  \frac{a\sup_t| t\mu(t) (1-\mu(t)) |}{\inf_t (a \mu(t)+b) (1-a\mu(t)-b)} \leq \frac{2a}{\inf_{(\rho_0 \wedge \rho_1)\leq p\leq 1-(\rho_0 \wedge \rho_1)}x(1-x)}<\infty.
\end{align}
Also, $ |1- 2\mu(t) - h_{LN}'(t)(1-2(a\mu(t)+b) | \leq 1+ \|h_{LN}'\|_\infty \leq 2$.

For Case 2 and 3, we cannot use the bound \eqref{eq:cor5.1-1} since the denominator becomes zero. With elementary calculations, we can obtain
\begin{align}\label{eq:cor5.1-2}
	th''_{LN}(t) 
	%&=th'_{LN}(t)\{1-2\mu(t) -h_{LN}'(t)(1-2(a\mu(t)+b)) \}\\
	&=-\frac{a(1-b-a)t\mu(t)^2(1-\mu(t))}{ (a \mu(t)+b) (1-a\mu(t)-b)^2}  +\frac{ abt\mu(t)(1-\mu(t))^2}{(a \mu(t)+b)^2(1-a\mu(t)-b)}
\end{align}

\noindent Case 2: $\rho_1 >0, \rho_0 = 0$\\
Equivalently, $b=0, a=1-\rho_1$, therefore the second term in \eqref{eq:cor5.1-2} does not exist, and
\begin{align*}
	th''_{LN}(t)=-\frac{(1-a)t\mu(t)(1-\mu(t))}{ (1-a\mu(t))^2}  \end{align*}
Therefore $|th''_{LN}(t) |\leq 2/(1-a) = 2/\rho_1$.

\noindent Case 3: $\rho_1 =0, \rho_0 >0$\\
In Case 3, $a=1-\rho_0, b= \rho_0, a+b =1$. The first term in \eqref{eq:cor5.1-2} does not exist, and
\begin{align*}
	th''_{LN}(t)=\frac{ bt\mu(t)(1-\mu(t))}{(a \mu(t)+b)^2},
	\end{align*}
	noting $1-a\mu(t)-b = a(1-\mu(t))$. Therefore, $|th''_{LN}(t) |\leq 2/b = 2/\rho_0$.

%%% Proof of Normality of de-biasing estimator
\subsection{Proof of Theorem \ref{thm:5.2}}\label{pf:debiasing}
First, for a given $\psi$, we let $\widehat{\beta}^{\textnormal{db}} = \widehat{\beta}^{\textnormal{db}}(\psi)$, $\widehat{\Theta} = \widehat{\Theta}(\psi)$, and $\Theta = \Theta(\psi)$ for ease of notation. For any fixed $j \in \{1,\dots,p\}$, we have
\begin{align}\label{eq:de-basing_estimator}
\widehat{\beta}^{\textnormal{db}}_j -\beta_{0j} &= \widehat{\beta}_j-\beta_{0j} -\widehat{\Theta}_j^\top  \left( \frac{1}{n}\sum_{i=1}^n \psi(\x_i^\top \widehat{\beta},\z_i)\x_i\right).
\end{align}
Let $\widehat{\Delta}:= \widehat{\beta}-\beta_0$. By the Taylor expansion,
\begin{align*}
\widehat{\Theta}_j^\top  &\left( \frac{1}{n}\sum_{i=1}^n \psi(\x_i^\top \widehat{\beta},\z_i)\x_i\right)\\
&= n^{-1}\sum_{i=1}^n \left( \psi(\x_i^\top \beta_0,\z_i)+ \psi'(v_i,\z_i)(\x_i^\top \widehat{\beta}-\x_i^\top \beta_0)\right)\widehat{\Theta}_j^\top  \x_i\\
&=n^{-1}\sum_{i=1}^n \left( \psi(\x_i^\top \beta_0,\z_i)+ \psi'(\x_i^\top \widehat{\beta},\z_i)\x_i^\top \widehat{\Delta}+\{\psi'(v_i,\z_i)-\psi'(\x_i^\top \widehat{\beta},\z_i)\}\x_i^\top \widehat{\Delta}\right)\widehat{\Theta}_j^\top \x_i
\end{align*}
for $v_i$ such that $|v_i -\x_i^\top \widehat{\beta}| \leq |\x_i^\top (\widehat{\beta} - \beta_0)|$. 

First, we address the last term and show that it is $o_p(n^{-1/2})$. 
\begin{align}\label{eq:28}
&n^{-1}\sum_{i=1}^n \{\psi'(v_i,\z_i)-\psi'(\x_i^\top \widehat{\beta},\z_i)\}\x_i^\top \widehat{\Delta}\widehat{\Theta}_j^\top  \x_i\nonumber \\
&\leq  n^{-1}\sum_{i=1}^n |\psi'(v_i,\z_i)-\psi'(\x_i^\top \widehat{\beta},\z_i)||\x_i^\top \widehat{\Delta}| |\widehat{\Theta}_j^\top  \x_i|
\end{align}
From {\bf A\ref{a7}}, $\psi'(t,z)$ is Lipschitz in $t$ with the Lipschitz constant $2L_\psi$,$\forall z$. Thus we have,
\begin{align}\label{eq:29}
	|\psi'(v_i,\z_i)-\psi'(\x_i^\top \widehat{\beta},\z_i)| \leq 2L_\psi |v_i-\x_i^\top \widehat{\beta}|\leq 2L_\psi |\x_i^\top \beta_0-\x_i^\top \widehat{\beta}|,
\end{align}
and by combining \eqref{eq:28}, \eqref{eq:29}, we obtain
\begin{align*}
\frac{1}{n}\sum_{i=1}^n (\psi'(v_i,\z_i)-\psi'(\x_i^\top \widehat{\beta},\z_i))\x_i^\top \widehat{\Delta}\widehat{\Theta}_j^\top  \x_i
&\leq \frac{2L_\psi}{n} \sum_{i=1}^n (\x_i^\top \widehat{\Delta})^2|\widehat{\Theta}_j^\top  \x_i|\\ 
&\leq \frac{2L_\psi}{n}\| \X\widehat{\Delta}\|_2^2  \max_{1\leq i \leq n} |\widehat{\Theta}_j^\top  \x_i|.
\end{align*}
To bound $\| \X\widehat{\Delta}\|_2^2$, we use the following result, which can be obtained by combining Lemma 12 and 15 in \citet{Loh2012-tl}. 
\begin{lemma}\label{lem:subexp_unif}
Suppose $\x_i$ satisfies the sub-gaussian tail condition with the parameter $K_X$, for all $i=1,\dots,n$. For any $u >0$, the following inequality holds with probability at least $1 - 2\exp(- c'n u (1 \wedge u)/2)$,
\begin{align}
\sup_{\|v\|_1\leq \sqrt{s(u)} \|v\|_2}\left\lvert v^\top \left(\sum_{i=1}^n\frac{\x_i\x_i^\top }{n}-\E(\x_i\x_i^\top )  \right)v\right\rvert \leq 27 u K_X^2 \|v\|_2^2
\end{align}
where $ s(u): = (c'n/4 \log p) (u \wedge u^2)$ and $c'$ is a universal constant in Bernstein's inequality (see Corollary 2.8.3 in \citealp{Vershynin2018-xl}), given a sufficient sample size $n \geq (4\log p /c') \max \{(u\wedge u^2), (u\wedge u^2)^{-1}\}$.
\end{lemma}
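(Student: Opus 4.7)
The plan is to derive this uniform concentration bound by composing two standard ingredients from \citet{Loh2012-tl}, namely their Lemma 12 (entrywise deviation of the empirical second-moment matrix) and their Lemma 15 (extension of a sparse concentration bound to the $\ell_1$-controlled cone). I'll outline the strategy and then point to where the nontrivial bookkeeping lives.

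First I would establish an entrywise concentration bound. Since each $\x_i$ satisfies the sub-gaussian tail condition with parameter $K_X$, every product $\x_{ij}\x_{ik}$ is sub-exponential with parameter on the order of $K_X^2$ (product of two sub-gaussians). Applying Bernstein's inequality to $n^{-1}\sum_i (\x_{ij}\x_{ik} - \E[\x_{ij}\x_{ik}])$ entry by entry and then union-bounding over the $p^2$ pairs yields, for any $t>0$,
\[
\P\!\left(\left\|n^{-1}\sum_{i=1}^n \x_i\x_i^\top  - \E[\x\x^\top ]\right\|_{\textnormal{max}} > t K_X^2\right) \leq 2 \exp\!\left(-c'n \min(t^2,t) + 2\log p\right),
\]
which is essentially Lemma 12 of \citet{Loh2012-tl} with $t = u$.

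Next I would upgrade the entrywise bound to uniform concentration over $k$-sparse unit vectors. For a fixed support $S$ of size $k$, a $1/4$-net of the unit sphere in $\R^{|S|}$ has cardinality at most $9^k$, and there are $\binom{p}{k} \leq (ep/k)^k$ possible supports; a union bound over this net combined with the entrywise estimate produces uniform control over all $k$-sparse unit vectors with entropy $\lesssim k\log p$. The approximately sparse case---vectors satisfying $\|v\|_1 \leq \sqrt{k}\|v\|_2$---is then handled by Lemma 15 of \citet{Loh2012-tl}, whose core idea is a peeling argument: decompose $v$ into magnitude-ordered blocks of size $k$ and bound each block using the sparse inequality, with the $\ell_1/\ell_2$ ratio constraint controlling the geometric sum across blocks.

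The threshold $s(u) = c'n u(1\wedge u)/(4\log p)$ is precisely calibrated so that the entropy term $s(u)\log p$ is dominated by the Bernstein exponent $c'n u(1\wedge u)$, leaving a net tail probability of $2\exp(-c'n u(1\wedge u)/2)$ after absorbing the entropy into a factor of two in the exponent. The main obstacle is purely bookkeeping: tracking constants through the entrywise bound, the net, and the peeling decomposition to land on the explicit factor $27$ in front of $u K_X^2 \|v\|_2^2$ and the sample-size condition $n \geq (4\log p/c') \max\{u \wedge u^2, (u\wedge u^2)^{-1}\}$ that ensures $s(u) \geq 1$ so the approximately sparse cone is nontrivial. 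Once Lemmas 12 and 15 of \citet{Loh2012-tl} are invoked with $t = u$, the conclusion follows by direct substitution.
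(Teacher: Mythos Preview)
Your proposal is correct and matches the paper's own treatment: the paper does not give a standalone proof of this lemma but simply states that it ``can be obtained by combining Lemma 12 and 15 in \citet{Loh2012-tl},'' and your outline is precisely that combination (Bernstein plus covering for the sparse-eigenvalue bound, then the $\ell_1/\ell_2$-cone extension via peeling or the convex-hull inclusion). One minor note: in \citet{Loh2012-tl}, Lemma 12 already packages the sparse-eigenvalue bound (Bernstein on unit vectors together with the covering argument over $k$-sparse supports) rather than just the entrywise bound, so your first two steps are in fact a single lemma there, but the substance of the argument is unchanged.
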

Then from an application of Lemma \ref{lem:subexp_unif}, we have \begin{align}\label{eq:quadratic_emp_bound}
	\frac{1}{n} \sum_{i=1}^n (\x_i^\top \Delta)^2	&\leq \alpha' \|\Delta\|_2^2 + \tau' \frac{\log p}{n } \|\Delta\|_1^2, \quad \forall \Delta \in \R^p
	\end{align} 
with probability at least $1-2\exp(-c'n)$
	where $\alpha' , \tau' $, and $c'$ are constants which only depend on model parameter $K_X$ and not dimensions $(n,p)$.
Thus we have $ \|\X\widehat{\Delta}\|_2^2/n = O_p(s_0(\log p/n))+O_p(s_0^2(\log p/n)^2) = o_p(n^{-1/2})$ by the rate assumption of $s_0$ in {\bf A\ref{a6}}. Also,
\begin{align*}
\max_{1\leq i \leq n} | \x_i^\top \widehat{\Theta}_j|
&\leq \max_{1\leq i \leq n}|\x_i^\top  (\widehat{\Theta}_j-\Theta_j)| +\max_{1\leq i \leq n}|\x_i^\top \Theta_j|\\
&\leq \max_{1\leq i \leq n}\|\x_i\|_\infty \|\widehat{\Theta}_j-\Theta_j\|_1 + \|\X \Theta_j\|_\infty =O_p(1).
\end{align*}
It holds because $\max_{i,j} |\x_{ij}| \leq C_X$ by {\bf A\ref{a1'}'}, $\|\widehat{\Theta}_j-\Theta_j\|_1=o_p(\sqrt{1/\log p)}$ from the assumption about $\widehat{\Theta}$, and $\|\X \Theta_j\|_\infty =O_p(1)$ from {\bf A\ref{a6}}. Therefore,
\begin{align}\label{eq:82}
	\frac{1}{n}\| \X\widehat{\Delta}\|_2^2  \|\X\widehat{\Theta}_j\|_\infty = o_p(n^{-1/2}),
\end{align}
and we have,
\begin{align}\label{eq:31}
\widehat{\Theta}_j^\top  \psi_n(\widehat{\beta})
=\frac{1}{n} \sum_{i=1}^n \left( \psi(\x_i^\top \beta_0,\z_i)+ \psi'(\x_i^\top \widehat{\beta},\z_i)\x_i^\top \widehat{\Delta}\right)\widehat{\Theta}_j^\top \x_i+o_p(n^{-1/2}).
\end{align}
Combining \eqref{eq:de-basing_estimator} with \eqref{eq:31}, 
\begin{align*}
\widehat{\beta}^{\textnormal{db}}_j -\beta_{0j} 
&= \widehat{\beta}_j-\beta_{0j} - n^{-1}\sum_{i=1}^n \left( \psi(\x_i^\top \beta_0,\z_i)+ \psi'(\x_i^\top \widehat{\beta},\z_i)\x_i^\top \widehat{\Delta}\right)\widehat{\Theta}_j^\top \x_i+o_p(n^{-1/2})\\
&= e_j^\top  \widehat{\Delta}-\widehat{\Theta}_j^\top \psi_n(\beta_0)- n^{-1}\sum_{i=1}^n \left( \psi'_I(\x_i^\top \widehat{\beta})+\psi'_R(\x_i^\top \widehat{\beta},\z_i)\right)\widehat{\Theta}_j^\top \x_i\x_i^\top \widehat{\Delta}+o_p(n^{-1/2}),
\end{align*}
where we use the relationship $\psi'(t,z) = \psi_I'(t) + \psi'_R(t,z)$ in \eqref{eq:psi'}.
Recalling the definition 
$\psi'_{I,n}(\beta) := n^{-1} \sum_{i=1}^n \psi_I'(\x_i^\top \beta)\x_i\x_i^\top $,
\begin{align*}
	n^{-1}\sum_{i=1}^n \psi'_I(\x_i^\top \widehat{\beta})\widehat{\Theta}_j^\top \x_i\x_i^\top \widehat{\Delta} =\widehat{\Theta}_j^\top \left( n^{-1}\sum_{i=1}^n \psi'_I(\x_i^\top \widehat{\beta})\x_i\x_i^\top \right)\widehat{\Delta} = \widehat{\Theta}_j^\top \psi'_{I,n}(\widehat{\beta})\widehat{\Delta}
\end{align*}
thus we have
\begin{align*}
\widehat{\beta}^{\textnormal{db}}_j -\beta_{0j} = -\widehat{\Theta}_j^\top \psi_n(\beta_0) -\underbrace{ n^{-1}\sum_{i=1}^n \psi'_R(\x_i^\top \widehat{\beta},\z_i)\widehat{\Theta}_j^\top \x_i\x_i^\top \widehat{\Delta}}_\textrm{Term-I} + \underbrace{\widehat{\Delta}^\top (e_j - \psi'_{I,n}(\widehat{\beta})\widehat{\Theta}_j)}_\textrm{Term-II}+o_p(n^{-1/2}).
\end{align*}

We will show that the first term $\sqrt{n} \widehat{\Theta}_j^\top \psi_n(\beta_0)$  will converge to the normal distribution. Both remainder terms (Term-I and Term-II) need to be $o_p(n^{-1/2})$. For the second remainder term (Term-II), we have $|\widehat{\Delta}^\top (e_j - \psi'_{I,n}(\widehat{\beta})\widehat{\Theta}_j)| \leq
\|\widehat{\Delta}\|_1 \|e_j -\psi'_{I,n}(\widehat{\beta})\widehat{\Theta}_j\|_\infty =O_p(s_0\sqrt{\log p/n})O_p(\sqrt{\log p/n}) = o_p(n^{-1/2})$ by the rate condition {\bf A\ref{a6}} and the assumptions in the theorem. Now we address the first remainder term (Term-I):
\begin{align}\label{eq:term1}
&n^{-1}\sum_{i=1}^n \psi'_R(\x_i^\top \widehat{\beta},\z_i)\widehat{\Theta}_j^\top \x_i\x_i^\top \widehat{\Delta}\nonumber\\
&=n^{-1}\sum_{i=1}^n \left\lbrace \psi'_R(\x_i^\top \beta_0,\z_i)+\left(\psi'_R(\x_i^\top \widehat{\beta},\z_i)-\psi'_R(\x_i^\top \beta_0,\z_i)\right)\right\rbrace\widehat{\Theta}_j^\top \x_i\x_i^\top \widehat{\Delta}.
\end{align}

We need the following Lemma which establishes a kind of sparse eigenvalue condition. 
\begin{lemma}\label{lem:A.9}
	Let $E \in \R^{n\times n}$ be a random matrix which has a representation $E = \frac{1}{n}\sum_{i=1}^n e_i \x_i\x_i^\top $, for random $(e_i)_{i=1}^n$ such that $\E[e_i|\x_i] = 0$ and $|e_i| \leq c_e$ a.s., and $\x_i$ satisfies the sub-gaussian tail condition with the parameter $K_X$ for all $i$.  Then for any $s,s'\geq 1,$ if $n\geq C (s+s')\log p$ for an absolute constant $C$, there exist constants $c_1, c_2>0$ such that
	\begin{align*}
    P\left( \sup_{\substack{ u \in \B_1(\sqrt{s}) \cap \B_2(1), \\ v \in \B_1(\sqrt{s'}) \cap \B_2(1)}}|u^\top  E v|    \geq  c_1 \sqrt{(s+s') \frac{\log p}{n}} \right) \leq  \frac{c_2}{p^{s+s'}}.
\end{align*}
\end{lemma}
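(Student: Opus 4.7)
The plan is to reduce the bilinear supremum to a quadratic form via polarization (exploiting the symmetry of $E$), bound the quadratic form at a fixed sparse unit vector using Bernstein's inequality, lift this to a uniform bound over sparse unit vectors via an $\varepsilon$-net, and finally extend from sparse vectors to the full set $\B_1(\sqrt{s})\cap \B_2(1)$ using Lemma 12 of \citet{Loh2012-tl}. The polarization step uses $4u^\top E v = (u+v)^\top E(u+v) - (u-v)^\top E(u-v)$, which is valid because $E$ is symmetric; the vectors $\widetilde{w}_\pm := (u\pm v)/2$ satisfy $\|\widetilde{w}_\pm\|_2 \leq 1$ and, via $(\sqrt{s}+\sqrt{s'})/2 \leq \sqrt{(s+s')/2}$, $\|\widetilde{w}_\pm\|_1 \leq \sqrt{(s+s')/2}$, so that $|u^\top E v| \leq |\widetilde{w}_+^\top E \widetilde{w}_+| + |\widetilde{w}_-^\top E \widetilde{w}_-|$. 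Hence it suffices to control $\sup_{w \in \B_1(\sqrt{s_0})\cap \B_2(1)} |w^\top E w|$ with $s_0 := s+s'$.

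For a fixed $s_0$-sparse $w$ with $\|w\|_2 \leq 1$, $w^\top E w = n^{-1}\sum_i e_i(w^\top \x_i)^2$ is a sum of independent, conditionally mean-zero (since $\E[e_i|\x_i]=0$) random variables. Since $w^\top \x_i$ is sub-gaussian with parameter $K_X$, $(w^\top \x_i)^2$ is sub-exponential with parameter $O(K_X^2)$, and since $|e_i|\leq c_e$, Bernstein's inequality yields $\P(|w^\top E w| > t) \leq 2\exp(-cn\min(t^2/K_X^4, t/K_X^2))$. The set of $s_0$-sparse unit vectors in $\R^p$ admits a $\tfrac{1}{4}$-net of cardinality at most $\binom{p}{s_0}\cdot 9^{s_0}$, whose logarithm is $O(s_0 \log p)$. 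Choosing $t = c_1\sqrt{s_0 \log p/n}$ with $c_1$ sufficiently large (and $n \geq C s_0 \log p$ so that we are in the sub-gaussian regime of Bernstein), a union bound over the net together with a standard discretization step (which loses at most a constant factor) yields $\sup_{\|w\|_0 \leq s_0,\,\|w\|_2 \leq 1}|w^\top E w| \leq c'\sqrt{s_0 \log p/n}$ with probability at least $1 - 2/p^{s_0}$.

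The final step applies Lemma 12 of \citet{Loh2012-tl} to the symmetric matrix $E$, obtaining $\sup_{w \in \B_1(\sqrt{s_0})\cap \B_2(1)} |w^\top E w| \leq 27 \sup_{\|w\|_0 \leq s_0,\,\|w\|_2\leq 1} |w^\top E w|$; combining this with the preceding sparse bound and the polarization reduction proves the claim with appropriate constants $c_1,c_2$ depending on $K_X$ and $c_e$. The main subtlety lies in the extension step, which relies on the Loh--Wainwright decomposition of an $\ell_1$-bounded vector into convex combinations of sparse vectors together with careful bookkeeping of the resulting cross-terms; however, since that result is already established and cleanly applicable to our symmetric $E$, it can be invoked directly, leaving only a routine Bernstein-plus-net calculation to be done from scratch.
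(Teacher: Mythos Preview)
Your proposal is correct and uses the same ingredients as the paper's proof: Bernstein at a fixed vector, an $\varepsilon$-net plus union bound over $s_0$-sparse supports, and the Loh--Wainwright decomposition to pass from $\ell_0$- to $\ell_1$-constrained vectors. The only cosmetic difference is the order of operations: you polarize first to reduce the bilinear form to a single quadratic form on $\B_1(\sqrt{s+s'})\cap\B_2(1)$ and then invoke Loh--Wainwright's Lemma~12 directly, whereas the paper first decomposes $u$ and $v$ separately via the convex-hull containment (their Lemma~11) and applies a polarization-type inequality to the resulting cross-terms inside a double sum; your ordering is a touch cleaner but the substance is identical.
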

The proof of the Lemma is presented at the end of this section. Now we apply Lemma \ref{lem:A.9} to show that $n^{-1}\sum_{i=1}^n \psi'_R(\x_i^\top \widehat{\beta},\z_i)\widehat{\Theta}_j^\top \x_i\x_i^\top \widehat{\Delta}$ is $o_p(n^{-1/2})$. We have,
\begin{align*}
	 n^{-1}\sum_{i=1}^n \psi'_R(\x_i^\top \beta_0,\z_i )\widehat{\Theta}_j^\top \x_i\x_i^\top \widehat{\Delta} = \widehat{\Theta}_j^\top  \left(n^{-1}\sum_{i=1}^n \psi'_R(\x_i^\top \beta_0,\z_i )\x_i\x_i^\top \right)\widehat{\Delta} = \widehat{\Theta}_j^\top  E^R\widehat{\Delta}
\end{align*}
where we define $E^R:=n^{-1}\sum_{i=1}^n \psi'_R(\x_i^\top \beta_0,\z_i )\x_i\x_i^\top $. From the condition of $\widehat{\beta}$ in Theorem \ref{thm:5.2}, we have  $\widehat{\Delta} /\|\widehat{\Delta} \|_2 \in \B_1(\sqrt{c s_0}) \cap \B_2(1)$ for a constant $c>0$. Also, $\|\widehat{\Theta}_j\|_1 \leq \|\widehat{\Theta}_j - \Theta_j\|_1+ \|\Theta_j\|_1$ and $\|\widehat{\Theta}_j\|_2 \geq \|\Theta_j\|_2-\|\widehat{\Theta}_j - \Theta_j\|_2\geq  \|\Theta_j\|_2-\|\widehat{\Theta}_j - \Theta_j\|_1 $. Define an event $\mathcal{E}_n:= \{\|\widehat{\Theta}_j - \Theta_j\|_1  \leq 0.5\|\Theta_j\|_2\}$.
Then
\begin{align*}
    \frac{\|\widehat{\Theta}_j\|_1 }{\|\widehat{\Theta}_j\|_2} \leq \frac{ \|\Theta_j\|_1+\|\widehat{\Theta}_j - \Theta_j\|_1}{\|\Theta_j\|_2-\|\widehat{\Theta}_j - \Theta_j\|_1} \leq 3 \frac{ \|\Theta_j\|_1}{\|\Theta_j\|_2}
\end{align*}
on $\mathcal{E}_n$. We note that $\Theta_j$ is at most $s_*+1$ sparse vector, recalling the definition $s_* := \max_{1\leq j \leq p} \|\Theta_{j,-j}\|_0$. Also, $\|\Theta\|_2 \asymp 1$, since $\|\Theta\|_2 = \lambda_{\textnormal{min}} ^{-1}(\E[\psi'_I(\x^\top\beta_0)\x\x^\top])$ and the minimum eigenvalue of $\E[\psi'_I(\x^\top\beta_0)\x\x^\top]$ can be shown to be bounded above and also bounded below by a positive constant. More concretely, for any unit vector $u$,
\begin{align*}
    u^T \E [\psi_I'(\x^T\beta_0)\x\x^\top] u \geq   \E [\psi_I'(\x^\top\beta_0)(\x^\top u)^2 \mathbbm{1} \{ |\x^\top \beta_0 | \leq \tau_c\}]\geq \inf_{|t|\leq \tau_c} \psi'_I(t) C_\lambda/2, 
\end{align*}
and
\begin{align*}
    u^T \E [\psi_I'(\x^T\beta_0)\x\x^\top] u \leq C_\psi \E[(\x^\top u)^2] \leq 2C_\psi K_X^2
\end{align*}
for $\tau_c := (2 c_b^2 K_X^2 \log (16K_X^2/C_\lambda))^{1/2}$ using Lemma \ref{lem:A.5}, Assumptions {\bf A\ref{a1'}'} and {\bf A\ref{a7}}, where $c_b$ is a constant such that $\|\beta_0\|_2 \leq c_b$, which exists by the condition $\|\beta_0\|_2= O(1)$.

 Thus $\|\Theta_j\|_1 \leq \sqrt{ s_*+1 } \|\Theta_j\|_2$, $\widehat{\Theta}_j/\|\widehat{\Theta}_j\|_2  \in \B_1(\sqrt{9(s_*+1)}) \cap \B_2(1)$ on $\mathcal{E}_n$. Also, we have $\P(\mathcal{E}_n) \underset{n}{\rightarrow} 1$ by $\| \widehat{\Theta}_j - \Theta_j\|_1 = o_p(1/\sqrt{\log p})$ and $\|\Theta_j\|_2  \asymp 1$. Then on $\mathcal{E}_n$,
\begin{align*}
    | \widehat{\Delta} ^\top   E^R \widehat{\Theta}_j| \leq \|\widehat{\Delta}\|_2 \|\widehat{\Theta}_j\|_2   \sup_{\substack{u \in \B_1(\sqrt{s}) \cap \B_2(1), \\ v \in \B_1(\sqrt{s'}) \cap \B_2(1)}} |u^\top E v| ,
\end{align*}
for $s = cs_0$ and $s' = 9(s_*+1)$. Since $\|\widehat{\Delta}\|_2 = O_p (\sqrt{s_0 \log p/n})$ and $\|\widehat{\Theta}_j\|_2   = O_p(1)$,
\begin{align*}
    \|\widehat{\Delta}\|_2 \|\widehat{\Theta}_j\|_2   \sup_{\substack{u \in \B_1(\sqrt{s}) \cap \B_2(1), \\ v \in \B_1(\sqrt{s'}) \cap \B_2(1)}}|u^\top E v| 
    & = O_p ( \sqrt{\frac{s_0 \log p}{n}}) \cdot O_p( \sqrt{\frac{ (s_0 + s_*) \log p}{n}})= o_p(n^{-1/2})
\end{align*}
on $\mathcal{E}_n$, where the last inequality is from the rate conditions $s_0, s_*  = o(\sqrt{n}/\log p)$ from {\bf A\ref{a6}}. Since $\P(\mathcal{E}_n) \underset{n}{\rightarrow} 1$, we conclude $| \widehat{\Delta} ^\top   E^R \widehat{\Theta}_j|  = o_p(n^{-1/2})$. 

For the second term in \eqref{eq:term1},
\begin{align*}
&n^{-1}\sum_{i=1}^n \left\lvert\psi'_R(\x_i^\top \widehat{\beta},\z_i)-\psi'_R(\x_i^\top \beta_0,\z_i)\right\rvert|\widehat{\Theta}_j^\top \x_i||\x_i^\top \widehat{\Delta}|
\leq L_\psi \|\X\widehat{\Theta}_j\|_\infty \frac{1}{n}\|\X \widehat{\Delta}\|_2^2
\end{align*}
where we use {\bf A\ref{a7}} that $\psi'_R(t,z)$ is $L_\psi$-Lipschitz in $t$ for any $z$. Then from \eqref{eq:82}, we have that the second term is $ o_p(n^{-1/2})$. Therefore, combining the results we obtain 
$$n^{-1}\sum_{i=1}^n \psi'_R(x_i^\top \widehat{\beta},\z_i)\widehat{\Theta}_j^\top x_ix_i^\top \widehat{\Delta} =  o_p(n^{-1/2}).$$
So far, we have obtained,
\begin{align*}
\widehat{\beta}^{\textnormal{db}}_j -\beta_{0j} =-\widehat{\Theta}_j^\top \psi_n(\beta_0) +o_p(n^{-1/2}).
\end{align*}
It remains to show that 
\begin{align*}
\frac{\sqrt{n}\widehat{\Theta}_j^\top \psi_n(\beta_0)}{\sqrt{(\Theta^\top \E[\psi(\x^\top \beta_0,\z)^2\x\x^\top ] \Theta})_{jj}} \overset{d}{\rightarrow} \mathcal{N}(0,1).
\end{align*}
By CLT, 
\begin{align*}
	\frac{1}{\sigma \sqrt{n}}\sum_{i=1}^n \psi(\x_i^\top \beta_0,\z_i)\x_i^\top \Theta_j \rightarrow \mathcal{N}(0,1)
\end{align*}
where 
\begin{align*}
	\sigma^2 &= \textnormal{Var}(\psi(\x^\top \beta_0,\z)\x^\top \Theta_j) =  \E[\psi(\x^\top \beta_0,\z)^2(\x^\top \Theta_j)^2]
\end{align*}
since  $\E[\psi(\x^\top \beta_0,\z)] = 0$ by \eqref{eq:estimating_eq}. Thus it is sufficient to show
$\sqrt{n}\widehat{\Theta}_j^\top \psi_n(\beta_0)=\sqrt{n}\Theta_j^\top \psi_n(\beta_0) + o_p(1)$ to conclude. Indeed, we have,
\begin{align*}
|\sqrt{n}(\widehat{\Theta}_j
-\Theta_j)^\top \psi_n(\beta_0)| 
\leq \sqrt{n}\|\widehat{\Theta}_j-\Theta_j\|_1\|\psi_n(\beta_0)\|_\infty = o_p(1). \end{align*}
This holds because, by the condition of $\widehat{\Theta}$, $\|\widehat{\Theta}_j-\Theta_j\|_1= o_p(\sqrt{1/\log p})$ and $\|\psi_n(\beta_0)\|_\infty =O_p(\sqrt{\log p/n})$. Recalling the definition of $\psi_n$, we have,
\begin{align*}
\|\psi_n(\beta_0)\|_\infty =  \max_{1\leq j \leq p} \left\lvert\frac{1}{n} \sum_{i=1}^n \psi(\x_i^\top \beta_0, \z_i)\x_{ij} \right\rvert.
\end{align*}
From {\bf A\ref{a7}}, we have $\|\psi\|_\infty \leq C_\psi$. Also $\E[\psi(\x_i^\top \beta_0, \z_i)\x_{ij}]=0$ by \eqref{eq:estimating_eq}. Thus $\frac{1}{n}\sum_{i=1}^n\psi(\x_i^\top \beta_0, \z_i)\x_{ij}$ is mean-zero sub-gaussian with a parameter $c C_\psi K_X/\sqrt{n}$ for an absolute constant $c>0$. Thus from Lemma \ref{lem:sub_gaussian_max}, $ \|\psi_n(\beta_0)\|_\infty= O_p(\sqrt{\log p/n}).$

\begin{proof}[Proof of Lemma \ref{lem:A.9}]
First we establish the following inequality. For any $\tilde{s} \geq 1$, there exists $c_0>0$ which do not depend on dimensions $(n,p)$ such that 
\begin{align}\label{eq:lema7-bound1}
    \P\left( \sup_{u \in \B_0 (\tilde{s}) \cap \B_2(1)} |u^\top E u|  \geq c_0 \sqrt{\tilde{s} \frac{\log p}{n}}\right) \leq c_2 /p^{\tilde{s}},
\end{align}
holds where $c_2$ is an absolute constant.

Since for any unit vector $u \in \R^p$ and $i$, $\E[e_i (\x_i^\top u)^2] =0$ and $\E[|e_i(\x_i^\top u)^2|^k]^{1/k} \leq c_e \E[(\x_i^\top u)^{2k}]^{1/k} \leq 2c_eK_X^2 k$, $\forall k\geq 1$, $e_i(\x_i^\top u)^2$ is mean-zero sub-exponential whose parameter is $cc_eK_X^2$ for an absolute constant $c$. From Bernstein's inequality, for every $t\geq 0$, we have
\begin{align}\label{eq:lema7-bernstein}
    \P(|u^\top  E u| \geq tc_eK_X^2) \leq \exp(-c'n (t^2 \wedge t) ),
\end{align}
where $c'>0$ is an absolute constant.
Note,
\begin{align*}
    \B_0 (\tilde{s}) \cap \B_2(1)  &= \bigcup_{k=0}^{\tilde{s}}  \{ v \in \B_2(1); \|v\|_0 = k\}\\
    & = \bigcup_{k=0}^{\tilde{s}} \bigcup_{S; |S| = k}  \{ v\in \B_2(1); \textnormal{supp}(v) = S\}.
\end{align*}
Taking a union bound,
\begin{align*}
   \P( \sup_{u \in \B_0 (\tilde{s}) \cap \B_2(1)} |u^\top E u|  \geq t c_eK_X^2) \leq  \sum_{k=0}^{\tilde{s}} \sum_{S; |S| = k}P(\|E_{S,S}\|_2 \geq t c_eK_X^2),
\end{align*}
where $E_{S,S}$ is a sub-matrix of $E$ supported on $S$.
Letting $\mathcal{N}_\epsilon$ is an $\epsilon$-net of the sphere $\mathcal{S}^{|S|-1}$, we have
\begin{align*}
    \|E_{S,S}\|_2 \leq \frac{1}{1-2\epsilon} \sup_{v \in \mathcal{N}_\epsilon} |v^\top  E_{S,S} v|
\end{align*}
by the covering argument (e.g., \citealp{Vershynin2018-xl}). Take $\epsilon = 1/4$. Then, 
\begin{align*}
    \P( \sup_{u \in \B_0 (\tilde{s}) \cap \B_2(1)} |u^\top E u|  \geq t c_eK_X^2) &\leq  \sum_{k=0}^{\tilde{s}} \binom{p}{k} 9^{k}  P(|v^\top  E_{S,S} v| \geq tc_eK_X^2/2) \\
    &\leq   2  \exp(-c''n (t \wedge t^2 )+\tilde{s}\log (9p)),
\end{align*}
where we use the bounds $|\mathcal{N}_{1/4}|\leq 9^{|S|}$, $\binom{p}{k} \leq p^k$, and \eqref{eq:lema7-bernstein}, and $c''$ is a universal constant. Taking $t^2 = 2\tilde{s}\log (9p) /c''n$, we have
\begin{align*}
    \P( \sup_{u \in \B_0 (\tilde{s}) \cap \B_2(1)} |u^\top E u|  \geq c_eK_X^2 \sqrt{2\tilde{s}\frac{\log 9p}{n}})\leq 2/(9p^{\tilde{s}}),
\end{align*}
given a sample size condition $n \geq 2\tilde{s} \log 9p / c''$. The we obtain the inequality \eqref{eq:lema7-bound1} with $c_0 = 4c_e K_X^2$ and $c_2 = 2/9$, where we use the inequality $p^5 \geq 9p$ for $p \geq 2$.

Now we show that on the event that 
\begin{align}\label{eq:l0bound}
     \sup_{u \in \B_0 (s+s') \cap \B_2(1)} |u^\top E u|  \leq c_0 \sqrt{(s+s') \frac{\log p}{n}},
\end{align}
we have
\begin{align}\label{eq:l1bound}
      \sup_{\substack{u \in \B_1(\sqrt{s}) \cap \B_2(1), \\ v \in \B_1(\sqrt{s'}) \cap \B_2(1)}} |u^\top E v|  \leq c_1 \sqrt{(s+s') \frac{\log p}{n}}
\end{align}
where $c_1$ is a multiple of $c_0$.

From Lemma 11 in \citet{Loh2012-tl}, we have
\begin{align*}
    \B_1(\sqrt{s}) \cap \B_2(1) \subseteq 3 \overline{ \textnormal{conv}(\B_0(s) \cap \B_2(1))}
\end{align*}
where $\textnormal{conv}(D)$ denotes a convex hull of $D \subseteq \R^p$. Using this Lemma, for any $u \in  \B_1(\sqrt{s}) \cap \B_2(1)$ and $v \in  \B_1(\sqrt{s'}) \cap \B_2(1)$ , we have the following representation
\begin{align*}
    u = \sum_{i=1}^p \alpha_i u_i \quad \mbox{and} \quad  v = \sum_{j=1}^p \beta_j v_j
\end{align*}
for $\alpha_i\geq 0, \beta_j\geq 0, u_i, v_j $ such that $\sum_i \alpha_i = \sum_j \beta_j = 1$, $u_i \in \B_0(s) \cap \B_2(3) $ and $v_j \in \B_0(s') \cap \B_2(3)$, $\forall i,j$. Then,
\begin{align*}
    u^\top  E v = (\sum_{i=1}^p \alpha_i u_i )^\top  E (\sum_{j=1}^p \beta_j v_j ) = \sum_{i,j} 9\alpha_i \beta_j \tilde{u}_i^\top  E \tilde{v}_j
\end{align*}
for $\tilde{u}_i,\tilde{v}_j \in \B_0(s+s') \cap \B_2(1), \forall i,j$.
Since,
\begin{align*}
    |\tilde{u}_i^\top  E \tilde{v}_j| &\leq \frac{1}{2 }\left\lbrace |(\tilde{u}_i+\tilde{v}_j)^\top  E (\tilde{u}_i+\tilde{v}_j) | + |\tilde{u}_i^\top  E \tilde{u}_i | +|\tilde{v}_j^\top  E \tilde{v}_j | \right\rbrace, 
\end{align*}
by the basic inequality $2x^\top E y= (x+y)^\top E (x+y)-x^\top Ex - y^\top Ey $ for any $x,y \in \R^p$, we have
\begin{align*}
    \sup_{\substack{u \in \B_1(\sqrt{s}) \cap \B_2(1), \\ v \in \B_1(\sqrt{s'}) \cap \B_2(1)}} |u^\top E v|  &\leq \sum_{i,j}  \frac{27}{2 }(\alpha_i \beta_j) \sup_{u \in \B_0 (s+s') \cap \B_2(1)} |u^\top E u|\\
    &\leq \sum_{i,j} (\alpha_i \beta_j) \left( \frac{27}{2}c_0 \sqrt{(s+s')  \frac{\log p}{n}} \right) = c_1 \sqrt{(s+s') \frac{\log p}{n} }
\end{align*}
where for the second inequality we use \eqref{eq:l0bound} and $c_1 = 27c_0/2$. Thus \eqref{eq:l1bound} holds with probability at least $1-c_2/p^{s+s'}$. 
\end{proof}

\subsection{Construction of an Approximate Inverse of Fisher Information Matrix Using Node-Wise Regression}
\label{supp_sec:approx_Theta}
First we let $W(\beta) := \textnormal{diag}(\{\psi'_I(\x_i^\top \beta)\}_{i=1}^n)$. We note the square root of $W(\beta)$ exists since $\psi_I(t) \geq 0$ for all $t$. Following the node-wise lasso construction in \citet{Van_de_Geer2014-if}, we define
\begin{align*}
\widehat{\gamma}_j &:= \argmin_{\gamma \in \R^p }\frac{1}{2n}\|W(\widehat{\beta})^{1/2}\X_j -  W(\widehat{\beta})^{1/2}\X_{-j}\gamma\|_2 +  \lambda_j \|\gamma\|_1\\
\widehat{\tau}_j^2 &:= \|W(\widehat{\beta})^{1/2}\X_j -  W(\widehat{\beta})^{1/2}\X_{-j}\widehat{\gamma}\|_2^2/n + \lambda_j \|\widehat{\gamma}_j\|_1.
\end{align*}
We construct $\widehat{\Theta}(\psi)$ by taking  $\widehat{\Theta}(\psi)_j^\top  :=\widehat{\tau}_j^{-2} [-\widehat{\gamma}_{j,1},\dots,1,-\widehat{\gamma}_{j,p}] \in \R^{1\times p}.$

\begin{lemma}[Theorem 3.2 in \citealp{Van_de_Geer2014-if}]\label{lem:4.3} Assume {\bf A\ref{a1'}', {\bf A\ref{a3}}, A\ref{a6}-A\ref{a7}} and $\lambda_j \asymp \sqrt{\log p/n}$ for all $j$. In addition we assume there exists $C_X>0$ such that $\|\x_i\|_\infty \leq C_X$ a.s. for all $i$. Then for any $j \in \{1,\dots,p\}$, we have
\begin{align*}
\|\widehat{\Theta}_j(\psi) - \Theta_j(\psi)\|_1 &= o_p(1/\sqrt{\log p}), \qquad
\|\widehat{\Theta}_j(\psi) - \Theta_j(\psi)\|_2 = o_p(n^{-1/4}).
\end{align*}
\end{lemma}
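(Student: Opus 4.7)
The plan is to adapt the node-wise Lasso argument of \citet{Van_de_Geer2014-if} to our setting, where two complications arise compared with their canonical-GLM setup: (i) the weight function $\psi'_I$ need not be the second derivative of a convex log-likelihood, and (ii) the weights are evaluated at the initial estimate $\widehat{\beta}$ rather than at $\beta_0$. I will first introduce population targets $\gamma_j^0$ and $\tau_j^{0,2}$, defined by
$$\gamma_j^0 := \argmin_{\gamma} \E\bigl[\psi'_I(\x^\top \beta_0)(\x_j-\x_{-j}^\top \gamma)^2\bigr],\qquad \tau_j^{0,2}:=\E\bigl[\psi'_I(\x^\top \beta_0)(\x_j-\x_{-j}^\top \gamma_j^0)^2\bigr],$$
so that $\Theta(\psi)_j^\top = \tau_j^{0,-2}[-\gamma_{j,1}^0,\dots,1,\dots,-\gamma_{j,p}^0]$. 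Under {\bf A\ref{a6}}, $\|\gamma_j^0\|_0 \leq s_*$. The goal reduces to controlling $\|\widehat{\gamma}_j-\gamma_j^0\|_1$ and $|\widehat{\tau}_j^2-\tau_j^{0,2}|$, from which $\|\widehat{\Theta}_j(\psi)-\Theta_j(\psi)\|_1$ and $\|\widehat{\Theta}_j(\psi)-\Theta_j(\psi)\|_2$ bounds follow by an elementary calculation.

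Next I would establish the deterministic properties needed to apply the standard Lasso oracle inequality to the weighted regression of $\X_j$ on $\X_{-j}$: (a) a restricted eigenvalue / compatibility condition for $\widehat{\Sigma}^W(\widehat{\beta}):=\X_{-j}^\top W(\widehat{\beta})\X_{-j}/n$ restricted to the cone $\|v_{S^c}\|_1\leq 3\|v_S\|_1$ with $S = \textnormal{supp}(\gamma_j^0)$; and (b) the tuning-parameter condition $\lambda_j \geq 2\|\X_{-j}^\top W(\widehat{\beta})(\X_j-\X_{-j}\gamma_j^0)/n\|_\infty$. For (a), I would replace $W(\widehat{\beta})$ by $W(\beta_0)$, use $\inf_{|t|\leq \tau_c}\psi'_I(t)>0$ together with the truncation argument in the proof of Lemma~\ref{lem:A.2} (bounding $\E[\psi'_I(\x^\top\beta_0)(\x^\top u)^2]$ below by a positive constant for $\|u\|_2=1$), and then invoke Lemma~\ref{lem:subexp_unif} to transfer the lower bound to the empirical version. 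For (b), I would use the zero-gradient condition $\E[\psi'_I(\x^\top\beta_0)\x_{-j}(\x_j-\x_{-j}^\top \gamma_j^0)] = 0$ combined with sub-Gaussian concentration applied to the bounded random variable $\psi'_I(\x^\top\beta_0)\x_{-j,k}(\x_j-\x_{-j}^\top \gamma_j^0)$ (Lemma~\ref{lem:sub_gaussian_max}), yielding $O_p(\sqrt{\log p/n})$.

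The main obstacle is controlling the discrepancy between $W(\widehat{\beta})$ and $W(\beta_0)$, which enters both (a) and (b). I would exploit {\bf A\ref{a7}}: $\psi'_I$ is Lipschitz with constant $L_\psi$ and bounded by $C_\psi$. Then, uniformly in $i$, $|\psi'_I(\x_i^\top \widehat{\beta})-\psi'_I(\x_i^\top \beta_0)|\leq L_\psi|\x_i^\top (\widehat{\beta}-\beta_0)|\leq L_\psi C_X\|\widehat{\beta}-\beta_0\|_1$, and the rate $\|\widehat{\beta}-\beta_0\|_1 = O_p(s_0\sqrt{\log p/n})=o_p(1)$ from Corollary~\ref{cor:3} makes the perturbation to the Gram matrix negligible in both operator and max-norm by a triangle inequality together with Lemma~\ref{lem:subexp_unif} applied to $\x_i\x_i^\top$. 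Combining these ingredients via the standard Lasso KKT argument yields $\|\widehat{\gamma}_j-\gamma_j^0\|_1 = O_p(s_*\sqrt{\log p/n})$ and $\|\widehat{\gamma}_j-\gamma_j^0\|_2^2 = O_p(s_*\log p/n)$, and a similar computation (also using that $\widehat{\tau}_j^2$ is a basic inequality residual as in \citealp{Van_de_Geer2014-if}) gives $|\widehat{\tau}_j^2-\tau_j^{0,2}| = O_p(s_*\log p/n)$. Under {\bf A\ref{a6}} the product with $\sqrt{\log p}$ vanishes, yielding $\|\widehat{\Theta}_j(\psi)-\Theta_j(\psi)\|_1 = o_p(1/\sqrt{\log p})$ and $\|\widehat{\Theta}_j(\psi)-\Theta_j(\psi)\|_2 = o_p(n^{-1/4})$.
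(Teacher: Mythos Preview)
Your proposal is correct and follows the same route as the paper, which simply defers by stating that the result follows from checking the conditions of Theorem~3.2 in \citet{Van_de_Geer2014-if}; you have unpacked that citation by spelling out the node-wise Lasso argument and the perturbation from $W(\widehat{\beta})$ to $W(\beta_0)$. One small refinement: when you call $\psi'_I(\x^\top\beta_0)\x_{-j,k}(\x_j-\x_{-j}^\top \gamma_j^0)$ ``bounded,'' the boundedness of the last factor comes from the identity $\x_j-\x_{-j}^\top\gamma_j^0 = \Theta_{jj}^{-1}\x^\top\Theta_j$ together with $\|\X\Theta_j\|_\infty = O_p(1)$ from {\bf A\ref{a6}}, so strictly speaking you should work on the high-probability event where this bound holds.
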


\begin{proof}
The result follows by checking the conditions of Theorem 3.2 in \citet{Van_de_Geer2014-if}. \end{proof}

% Note: in this sample, the section number is hard-coded in. Following
% proper LaTeX conventions, it should properly be coded as a reference:

\vskip 0.2in
\bibliography{references}

\end{document}